\documentclass[11pt,fleqn,a4paper]{article} 

\usepackage{amsmath,amssymb,amsthm,amsfonts} 
\usepackage[mathscr]{eucal}
\usepackage{graphicx}
\usepackage{caption}
\usepackage{subcaption}

\usepackage{hyperref}
\hypersetup{colorlinks, linkcolor=blue, citecolor=blue, urlcolor=blue}

\allowdisplaybreaks

\makeatletter
\long\def\@makecaption#1#2{%
  \vskip\abovecaptionskip\footnotesize
  \sbox\@tempboxa{#1. #2}%
  \ifdim \wd\@tempboxa >\hsize
    #1. #2\par
  \else
    \global \@minipagefalse
    \hb@xt@\hsize{\hfil\box\@tempboxa\hfil}%
  \fi
  \vskip\belowcaptionskip}
\makeatother

\newcommand{\p}{\partial}

\newcommand{\rank}{\mathop{\rm rank}\nolimits}
\newcommand{\im}{\mathop{\rm im}\nolimits}
\newcommand{\const}{{\rm const}}

\newlength{\mylength}
\newcommand{\solution}{\hspace*{-\mylength}\bullet\quad}
\newcommand{\solutionRedEq}{\hspace*{-\mylength}\circ\quad}

\newtheorem{theorem}{Theorem}
\newtheorem{lemma}[theorem]{Lemma}
\newtheorem{corollary}[theorem]{Corollary}

{\theoremstyle{definition}
	
	\newtheorem{remark}[theorem]{Remark}
    \newtheorem*{notation}{Notation}
}

\newcommand{\todo}[1][\null]{\ensuremath{\clubsuit}}

\newcommand{\noprint}[1]{}
\newcommand{\lsemioplus}{\mathbin{\mbox{$\lefteqn{\hspace{.77ex}\rule{.4pt}{1.2ex}}{\in}$}}}

\usepackage{xcolor}

\begin{document}
\par\noindent {\LARGE\bf
Symmetry analysis and exact solutions\\
of multi-layer quasi-geostrophic problem
\par}

\vspace{5.5mm}\par\noindent{\large
Serhii D. Koval$^{\dag\ddag}$, Alex Bihlo$^\dag$ and Roman O. Popovych$^{\S\ddag}$
}

\vspace{5.5mm}\par\noindent{\it\small
$^\dag$Department of Mathematics and Statistics, Memorial University of Newfoundland,\\
$\phantom{^\dag}$\,St.\ John's (NL) A1C 5S7, Canada
\par}

\vspace{2mm}\par\noindent{\it\small
$^\S$\,Mathematical Institute, Silesian University in Opava, Na Rybn\'\i{}\v{c}ku 1, 746 01 Opava, Czech Republic
\par}

\vspace{2mm}\par\noindent{\it\small
$^\ddag$Institute of Mathematics of NAS of Ukraine, 3 Tereshchenkivska Str., 01024 Kyiv, Ukraine
\par}

\vspace{5mm}\par\noindent
E-mails:
skoval@mun.ca, abihlo@mun.ca, rop@imath.kiev.ua
	
\vspace{6mm}\par\noindent\hspace*{10mm}\parbox{140mm}{\small
We carry out an extended symmetry analysis of the multi-layer quasi-geostrophic problem.
This model is given by a system of an arbitrary number of coupled barotropic vorticity equations.
Conservation laws and a Hamiltonian structure for the general case of the model
are correctly described for the first time.
Using original methods, we compute the maximal Lie invariance algebra
and the complete point-symmetry pseudogroup of the model.
After classifying one- and two-dimensional subalgebras of the Lie invariance algebra,
we exhaustively study codimension-one, -two and -three Lie reductions.
Notably, among invariant submodels of the original nonlinear model,
we obtain uncoupled systems of well-known linear equations, including
the Helmholtz, modified Helmholtz, Laplace, Klein--Gordon, Whittaker, Bessel
and linearized Benjamin--Bona--Mahony equations.
Integration of these systems significantly depends on spectral properties
of the model's vertical coupling matrix, which we also revisit in detail.
As a result, we construct wide families of exact solutions,
including rediscovered representations of stationary and travelling baroclinic Rossby waves,
coherent baroclinic eddies, hetons and localized dipolar vortices.
We illustrate the physical relevance of obtained solutions
using real-world geophysical data for a three-layer ocean model.
}\par\vspace{4mm}
	
\noprint{
Keywords:
multi-layer quasi-geostrophic problem;
point-symmetry pseudogroup;
Lie symmetry;
Lie reductions;
conservation laws;
Hamiltonian structure;
exact solutions

MSC: 76M60, 17B81, 35A30, 35B06, 35C05, 37K06

76-XX Fluid mechanics
  76Mxx Basic methods in fluid mechanics
    76M60 Symmetry analysis, Lie group and Lie algebra methods applied to problems in fluid mechanics

17-XX  Nonassociative rings and algebras
  17Bxx Lie algebras and Lie superalgebras
    17B81 Applications of Lie (super)algebras to physics, etc.

35-XX Partial differential equations
  35Axx General topics
    35A30 Geometric theory, characteristics, transformations [See also 58J70, 58J72]
  35Bxx Qualitative properties of solutions
    35B06 Symmetries, invariants, etc.
  35Cxx Representations of solutions
    35C05 Solutions in closed form
    35C06 Self-similar solutions
    35C07 Traveling wave solutions

15-XX  Linear and multilinear algebra; matrix theory
  15Bxx  Special matrices
    15B05  Toeplitz, Cauchy, and related matrices

37-XX Dynamical systems and ergodic theory
  37Kxx  Dynamical system aspects of infinite-dimensional Hamiltonian and Lagrangian systems
 	37K06  General theory of infinite-dimensional Hamiltonian and Lagrangian systems, Hamiltonian and Lagrangian structures, symmetries, conservation laws
}

\tableofcontents

\section{Introduction}

In ocean or atmosphere modeling,
the amount of literature devoted to numerical investigation of dynamics of fluid motion
greatly outweighs that focused on finding explicit exact closed-form solutions.
This trend is certainly not surprising,
given that the underlying systems of partial differential equations
are nonlinear and strongly coupled.
However, even within numerical analysis,
exact solutions remain important as they provide a reference point
for developing and verifying finite difference schemes or other numerical techniques.
Among various mathematical approaches,
Lie group analysis stands as one of the most
powerful frameworks for constructing exact solutions
of systems of nonlinear partial differential equations.

Based on analysis of horizontal and vertical characteristic length scales,
the complete set of governing hydro-thermodynamical equations for meteorology and oceanology
can be reduced to a hierarchy of simplified models.
Each model from this hierarchy accounts only for selected phenomena in dynamics of atmosphere or ocean.

The most elementary model is the famous Lorenz system,
which is a system of three ordinary differential equations
that captures nonlinear chaos but sacrifices spatial resolution~\cite{lore1963a}.
The next, more complicated model is the barotropic vorticity equation,
which simplifies the fluid to a single constant-density layer,
while still allowing for spatial dynamics.
Results on Lie symmetries and exact solutions of the barotropic vorticity equation
were collected in~\cite[Section~9.6]{CRChandbook1995V2};
this equation was called the equation of geopotential forecast therein.
Its point symmetry group was computed in~\cite{bihl2012a},
see also~\cite{card2013a} for similar results on the vorticity equation on the rotating sphere.

Analogously to the barotropic vorticity equation,
the shallow water equations model a single constant-density fluid layer,
but take into account depth variations and gravity wave dynamics for cases where
the horizontal scale is significantly greater than the vertical one.
The shallow water equations with flat bottom topography coincide with
a particular case of the equations of motion of polytropic gas with the parameter $\gamma=2$;
see~\cite{ibra1985A} for Lie symmetries, some exact solutions and some conservation laws of these equations.
Lie symmetries and zeroth-order conservation laws
of the shallow water equations with variable bottom topography were classified
in~\cite{bihl2020a} and in~\cite{bihl2020b}, respectively.

Increasing the precision of the approximation to the multi-layer quasi-geostrophic model,
one considers the fluid as stacked layers of constant densities
allowing for simulating baroclinic instability and synoptic-scale weather patterns.
The classical symmetry analysis for the two-layer model was carried out in~\cite{bihl2011a},
while its Lax representation and its conservation laws were constructed in~\cite{moro2023a}.

The primitive equations consist of three main sets of balance equations:
the continuity, momentum and thermal energy equations.
This is a standard model for global forecasting.
The partial classical symmetry analysis of the primitive equations in~\cite{card2021a}
included computing their maximal Lie invariance algebra, point-symmetry pseudogroup
and some families of Lie invariant solutions.
The most complicated and complete model for meteorology and oceanology
is given by the full nonhydrostatic Euler equations,
which resolves vertical accelerations and acoustic waves to account for small-scale,
high-energy phenomena like deep convection and turbulence.

For other relevant results on application of Lie group analysis in fluid dynamics,
see \cite[Chapters~9--11]{CRChandbook1995V2},
\cite{andr1998A,chev2014a,ibra2011A,kelb2013a,levi1989a,popo2002b,szat2014a}
and references therein.

The purpose of the present paper is to carry out extended Lie symmetry analysis
of the quasi-geostrophic model with an arbitrary number of layers,
essentially generalizing the study of the case of two layers
initiated in~\cite{bihl2011a} and continued in~\cite{moro2023a}.
This includes the construction of conservation laws and a Hamiltonian structure of the model,
the computation of its maximal Lie invariance algebra~$\mathfrak g$ and its complete pseudogroup~$G$ of point symmetries,
the classification of one- and two-dimensional subalgebras of the algebra~$\mathfrak g$,
the exhaustive study of codimension-one, -two and -three Lie reductions and hidden Lie symmetries
and finding exact invariant solutions of the model.
Beyond its clear physical relevance to large-scale oceanic and atmospheric dynamics,
the model possesses distinguishing symmetry features that have never been represented
in the literature on Lie group analysis of differential equations.
The multi-layer quasi-geostrophic model is governed
by a system of $m\in\mathbb N$ coupled barotropic vorticity equations,
where the coupling between layers is given by a specific tridiagonal matrix.
To the best of our knowledge,
systems of nonlinear partial differential equations involving an arbitrary number of coupled equations have not yet been
considered in the literature from the perspective of Lie group analysis.
The arbitrariness of $m$ in particular makes it impossible to directly apply specialized computer algebra packages
for finding the algebra~$\mathfrak g$.

This paper is organized as follows.
In Section~\ref{sec:mLaysMod}, we provide a comprehensive description of
the multi-layer quasi-geostrophic problem,
which is given by the system of equations~\eqref{eq:mLaysMod}.
In particular, we present its vector representation~\eqref{eq:mLaysModMatrixForm}
and study structural and spectral properties of the vertical coupling matrix~$\mathsf F$.
We show that the matrix~$\mathsf F$ has rank $m-1$,
is diagonalizable and has pairwise distinct negative eigenvalues,
except one eigenvalue, which is zero.
These results are of great relevance to the physical interpretation of the model
since even distinguishing between barotropic and baroclinic modes is based on
the $\mathsf F$-invariant subspace decomposition of the underlying space
for values of the tuple of layer stream functions.
We also present real geophysical data from \cite{nauw2004a} for parameters
of the model of a stratified ocean with three layers,
which are later used for physically relevant illustration of found invariant solutions.

Furthermore, in Section~\ref{sec:HamiltonianStructure}, we correctly construct for the first time
families of conservation laws of the general case of the model~\eqref{eq:mLaysMod}
and its Hamiltonian structure.

Section~\ref{sec:LieInvAlg} is devoted to computing the maximal Lie invariance algebra~$\mathfrak g$
of the multi-layer quasi-geostrophic model~\eqref{eq:mLaysMod}.
This computation is made possible due to two original tricks.
The first trick is to formally change the independent spatial variables $(x,y)$
to the complex conjugated ones $z=x+{\rm i}y$ and $\bar z=x-{\rm i}y$,
which allows us to introduce a natural ordering for jet variables
and reduce the size of involved expressions.
The second trick is to consider the class of systems of general form~\eqref{eq:mLaysMod}
as the lowest member in a chain of nested classes of systems of differential equations.
In this way, the derivation of determining equations for Lie symmetries
is simplified via the successive application of the infinitesimal invariance criterion
to the classes in the chain starting from the largest one.
Since a general system from a superclass has less constraints on its arbitrary elements
than one from a proper subclass,
it is easier to treat the corresponding invariance condition
and split it only with respect to higher-order jet variables
that are not involved in the arbitrary elements of the superclass as their arguments.
At the same time, the determining equations for the superclass are a subsystem
of the analogous system for the subclass.
Higher-level classes in the hierarchy give more important determining equations;
taking them into account significantly simplifies
deriving the complete systems of determining equations in lower levels.

In Section~\ref{sec:mLaysPointSymGroup}, using the megaideal-based version of the algebraic method,
we compute the pseudogroup~$G$ of point symmetries of the system~\eqref{eq:mLaysMod}
and thus exhaustively classify its discrete elements.

Then in Section~\ref{sec:GroupClassification},
we consider the collection of systems of the form~\eqref{eq:mLaysMod},
where the vertical coupling matrix~$\mathsf F$ and the Rossby parameter~$\beta$
run through the sets of admitted values, as a class $\mathcal M$ of systems of differential equations.
Based on the results of Sections~\ref{sec:LieInvAlg} and~\ref{sec:mLaysPointSymGroup},
we compute the equivalence groupoid, the equivalence group and the equivalence algebra
of this class and carry out its exhaustive group classification.
In particular, we show that the class $\mathcal M$ is normalized in the usual sense
and describe its generalized equivalence group.

Towards the study of Lie reductions of the multi-layer quasi-geostrophic problem~\eqref{eq:mLaysMod},
in Section~\ref{sec:Subalgebras}
we classify one- and two-dimensional subalgebras of the Lie algebra~$\mathfrak g$
with respect to the action of the pseudogroup~$G$.
We also show that codimension-three Lie reductions of~\eqref{eq:mLaysMod}
lead to no interesting exact solutions of~\eqref{eq:mLaysMod}.

Section~\ref{sec:CodimOneLieReds} is devoted to the systematic investigation
of codimension-one Lie reductions of~\eqref{eq:mLaysMod}.
We carry out the group classification of each class of the reduced systems,
thus computing their maximal Lie invariance algebras,
and determine their induced Lie symmetries.
This allows us to exhaustively describe hidden Lie symmetries of the original nonlinear system~\eqref{eq:mLaysMod}.
We thoroughly study the cases when the codimension-one reduced systems are linear and completely decoupled
(at least under additional differential constraints).
In this setting, we derive solutions of the multi-layer quasi-geostrophic problem~\eqref{eq:mLaysMod}
expressed in terms of solutions of decoupled systems of well-known linear equations, including
Helmholtz, modified Helmholtz, Laplace, Klein--Gordon, Whittaker, Bessel and linearized Benjamin--Bona--Mahony equations.
Among them, the most physically prominent solutions are represented in terms of {\it Herglotz wave functions};
such functions associated with finite measures on the circle constitute
the class of entirely bounded solutions of the Helmholtz equations.
Particular cases of this representation include baroclinic Rossby waves,
coherent baroclinic eddies, coherent hetons and their superpositions.
Furthermore, by partitioning the domain and merging different solutions
(even those originating from different reduced systems),
we reconstruct representations for dipolar vortices, also known in the oceanological literature as modons.
The illustrations of the above physically relevant solutions using real-world geophysical data from~\cite{nauw2004a}
for a three-layered ocean are presented as well.
We also consider the generically coupled case of codimension-one reduced systems
and further reduce them using appropriate one-dimensional subalgebras of their Lie invariance algebras,
which, in some cases, allowed us to construct explicit exact solutions.

The study of codimension-two Lie reductions of the system~\eqref{eq:mLaysMod}
is carried out in Section~\ref{sec:CodimTwoLieReds}.
It turns out that each of them is just a two-step reduction,
where the first step is a codimension-one Lie reduction of~\eqref{eq:mLaysMod}
and the second step is the further Lie reduction
of the obtained reduced systems with respect to their induced Lie symmetry.
Nevertheless, we present results on codimension-two reductions for the following two reasons.
First and foremost, the codimension-two reduced systems
are systems of linear ordinary differential equations (of order at most three)
with constant coefficients, and thus they can be completely integrated explicitly or at least in quadratures.
At the same time, it is optimal not to consider them as two-step reductions~\cite[Section~2]{vinn2024a}.
Given that the codimension-one reduced systems in Section~\ref{sec:CodimOneLieReds} are integrated
in terms of elementary or special functions only for certain special cases or specific values of subalgebra parameters,
the codimension-two Lie reduction may result in solutions that are not found in Section~\ref{sec:CodimOneLieReds},
and this is indeed the case for integrating three out of four families of corresponding reduced systems.
The second reason is that including the results on codimension-two reductions in the paper
maintains the systematic and comprehensive nature of the present symmetry study
and provides a complete solution to the problem of Lie reductions
for the multi-layer quasi-geostrophic problem~\eqref{eq:mLaysMod}.

We summarize the results of the present paper
and outline research perspectives on the multi-layer quasi-geostrophic problem~\eqref{eq:mLaysMod} in Section~\ref{sec:Conclusion}.

For readers' convenience,
the constructed exact solutions of the multi-layer quasi-geostrophic problem~\eqref{eq:mLaysMod}
are marked by the bullet symbol~$\bullet$\,
and the constructed exact solutions of the reduced systems of~\eqref{eq:mLaysMod}
are marked by the circle symbol~$\circ$\,.

\section{Multi-layer quasi-geostrophic problem}\label{sec:mLaysMod}

The multi-layer quasi-geostrophic problem describes the dynamics of a stratified incompressible fluid with~$m$ layers
of constant densities $\rho_1<\dots<\rho_m$,
stacked according to increasing density (i.e., the density of the top layer is~$\rho_1$),
in the ``rigid-lid'' and flat-bottom approximation
(i.e., the upper surface of the fluid is assumed to be fixed \cite[Section~16.A.2]{vall2006A} and the variation of bottom is negligible).

\subsection{Mathematical model}

The quasi-geostrophic approximation assumes that the vertical component of the velocity field is negligibly small
and in the horizontal direction the motion satisfies
a system of~$m$ ($m\geqslant2$) coupled vorticity equations, that is, the third-order partial differential equations
\begin{gather}\label{eq:mLaysMod}
\begin{split}
&q^i_t+\{\psi^i,q^i\}=0,\\
&q^i:=\psi^i_{xx}+\psi^i_{yy}+f_{i,i-1}(\psi^{i-1}-\psi^i)-f_{i,i+1}(\psi^i-\psi^{i+1})+\beta y,\quad
i=1,\dots, m.
\end{split}
\end{gather}
Here $\psi^i=\psi^i(t,x,y)$ and $q^i=q^i(t,x,y)$
are the stream function and the quasi-geostrophic potential vorticity for the $i$th layer, respectively,
and their Poisson bracket is given by their Jacobian with respect to~$(x,y)$, $\{\psi^i,q^i\}:=\psi^i_xq^i_y-\psi^i_yq^i_x$.
We define the stream function as in meteorology and oceanography
such that the tuple $(-\psi^i_y,\psi^i_x)$ is the velocity field on the $i$th~layer.
The $i$th layer of the stratified fluid is characterized by its mean thickness $H_i>0$,
the fluid density~$\rho_i$ in this layer
and the corresponding reduced gravity $g_i'=g(\rho_{i+1}-\rho_i)/\rho_0>0$, ${i=1,\dots,m-1}$,
where~$\rho_0$ denotes the reference density%
\footnote{%
In practice, there are several ways to choose the reference density depending on physical context.
An option is to take the weighted average of the layer densities with weights given by the layer thicknesses.
In some ocean models, the density of the densest (bottom) layer is used as the reference one.
Alternatively, depending on the medium under study,
one may use a standard constant like
the density of fresh water $1000\,{\rm kg}/{\rm m}^3$,
the density of salt water $1025\,{\rm kg}/{\rm m}^3$ or
the air density $1.225\,{\rm kg}/{\rm m}^3$
at the sea level under the International Standard Atmosphere (ISA) model.
Note that within the multi-layer quasi-geostrophic framework, which is based on the Boussinesq approximation,
the relative variations of the layer densities in comparison with~$\rho_0$ are usually small ($\approx 1\%$),
hence the results are quite insensitive to the specific choice of $\rho_0$.
The differences between the individual layer densities~$\rho_i$
are essential in the vertical direction in the interactions between neighboring layers because of gravity,
which is captured via the definition of the reduced gravities or, equivalently, of the vertical coupling coefficients.
In the horizontal direction, the density differences affect the inertia of the fluid negligibly,
and thus the layer densities~$\rho_i$ are then replaced by the reference density~$\rho_0$.
}
of the fluid.
The coupling between $i$th and $(i-1)$th (resp.\ $(i+1)$th) layer
is provided by the term $f_{i,i-1}(\psi^i-\psi^{i-1})$ with $f_{i,i-1}:=f_0^{\,2}/(H_ig_{i-1}')>0$
(resp.\ $f_{i,i+1}(\psi^{i+1}-\psi^i)$ with $f_{i,i+1}:=f_0^{\,2}/(H_ig_i')>0$),
where $f_{1,0}:=0$ and $f_{m,m+1}:=0$.
Therefore, for each~$i$ both the constants~$f_{i,i-1}$ and~$f_{i,i+1}$ are positive real numbers,
except for~$f_{1,0}$ and~$f_{m,m+1}$, which are zero.
Here, $f_0$ and $\beta$ are the coefficients in the beta-plane approximation
of the Coriolis parameter $f=2\Omega\sin\vartheta$ about the latitude $\vartheta=\vartheta_0$, $f=f_0+\beta y$.
More specifically, the Coriolis parameter at $\vartheta_0$ is $f_0=2\Omega\sin\vartheta_0$,
the Rossby parameter at $\vartheta_0$ is $\beta=2\Omega\cos\vartheta_0/R>0$, $y=R(\vartheta-\vartheta_0)$,
where~$R$ is the Earth's radius and~$\Omega$ is the angular rotation rate of the Earth.
It is worth to note that constants $f_{i,i\pm1}$ and $H_i$ satisfy a nontrivial relation,
$H_if_{i,i+1}=H_{i+1}f_{i+1,i}$.
See \cite[Section~16.A.2]{vall2006A} and \cite[Section~6.16]{pedo1987A}
for the more detailed description of the model
as well as Fig.~\ref{fig:3LayerModel} for graphic illustration
in the particular case of three-layer models, $m=3$.

\begin{figure}[!ht]
\centering
\includegraphics[width=\linewidth]{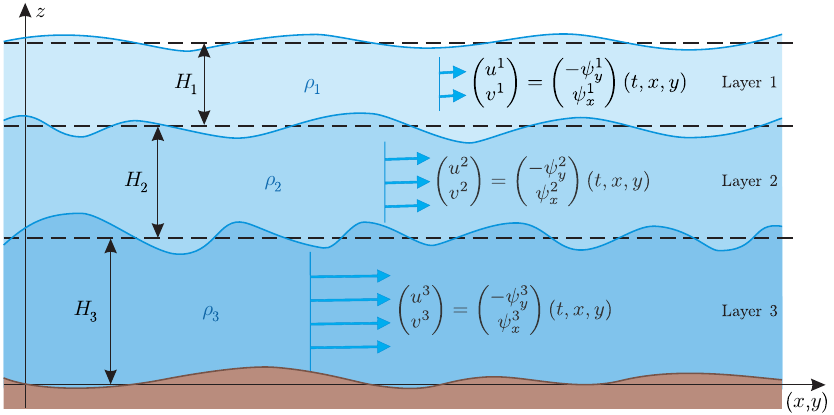}
\caption{Three-layer model}
\label{fig:3LayerModel}
\end{figure}

\begin{notation}
Throughout the paper,
the Latin indices $i$, $j$ and $k$ run through the sets $\{1,\dots,m\}$ unless the other range is specified.
The summation over the repeated indices whenever they appear in the different multipliers is understood
unless the other is specified, except for the index~$i$.
Capital sans serif letters denote, usually $m\times m$, matrices.
\end{notation}

The expanded form of the system~\eqref{eq:mLaysMod} is
\begin{gather*}
\begin{split}
&\psi^i_{txx}+\psi^i_{tyy}+\psi^i_x(\psi^i_{xxy}+\psi^i_{yyy})-\psi^i_y(\psi^i_{xxx}+\psi^i_{yyx})
\\
&\qquad{}+f_{i,i-1}\big(\psi^{i-1}_t-\psi^i_t+\psi^i_x(\psi^{i-1}_y-\psi^i_y)-\psi^i_y(\psi^{i-1}_x-\psi^i_x)\big)
\\
&\qquad{}-f_{i,i+1}\big(\psi^i_t-\psi^{i+1}_t+\psi^i_x(\psi^i_y-\psi^{i+1}_y)-\psi^i_y(\psi^i_x-\psi^{i+1}_x)\big)+\beta\psi^i_x=0.
\end{split}
\end{gather*}
More rigorously, this represents the class~$\mathcal M$ of systems of
$m$ third-order partial differential equations for $m$ unknown functions $\psi:=(\psi^1,\dots,\psi^m)^{\mathsf T}$
in the three independent variables $(t,x,y)$,
where the constant parameters $f_{i,i-1}$, $f_{i,i+1}$,
and~$\beta$ plays the role of arbitrary elements of the class and are positive,
except for the zero artificial parameters~$f_{1,0}$ and~$f_{m,m+1}$.
In other words, the arbitrary-element tuple of the class~$\mathcal M$ is
\[\theta:=(f_{i+1,i},f_{i,i+1},i=1,\dots,m-1,\beta),\]
and the system from~$\mathcal M$ that corresponds to a value~$\theta$
of the arbitrary-element tuple is denoted by~$\mathcal M_\theta$.
This is why all problems of finding symmetry-like objects associated with the system~\eqref{eq:mLaysMod}
should be considered classification problems for such objects
within the class of systems of the form~\eqref{eq:mLaysMod}.
Nevertheless, it turns out that all these systems have the same maximal Lie invariance algebra
and the point symmetry pseudogroups, see Theorems~\ref{thm:mLaysLieInvAlg} and~\ref{thm:mLaysPointSymGroup},
respectively, which leads to the degeneration of the corresponding classification problems
to the regular computations of the above structures as for a single system.
Note that the physical constraint of the parameters' positivity
can be weakened in the course of these computations
in Sections~\ref{sec:LieInvAlg} and~\ref{sec:mLaysPointSymGroup}
to the condition of being nonzero.

\subsection{Vector representation of the model}\label{sec:VectorFormOfSyst}

One more convenient representation of the system~\eqref{eq:mLaysMod},
in particular for studying properties of invariant submodels,
is that in the matrix form:
\begin{gather}\label{eq:mLaysModMatrixForm}
\begin{split}
&q_t+\{\psi,q\}=0,\\
&q:=\psi_{xx}+\psi_{yy}+{\mathsf F}\psi+\beta y\bar1,
\end{split}
\end{gather}
where $\psi:=(\psi^1,\dots,\psi^m)^{\mathsf T}$, $q:=(q^1,\dots,q^m)^{\mathsf T}$,
the Poisson bracket $\{\psi,q\}$ is defined componentwise,
$\{\psi,q\}:=(\{\psi^i,q^i\})^{\mathsf T}=\psi_x\odot q_y-\psi_y\odot q_x$,
``$\odot$'' is the Hadamard (componentwise) product of matrices,
the vertical coupling matrix
$\mathsf F:=(f_{ij})$ is a constant $m$-by-$m$ tridiagonal matrix with $f_{ii}:=-f_{i,i-1}-f_{i,i+1}$,
 $f_{10}:=0$, $f_{m,m+1}:=0$ and $f_{ij}:=0$ if $|i-j|>1$,
\begin{gather*}
\mathsf F:=\begin{pmatrix}
-f_{12} & f_{12}        & 0             & \cdots&0           & 0                     & 0\\
 f_{21} &-f_{21}-f_{23} & f_{23}        & \cdots&0           & 0                     & 0\\
\vdots  & \vdots        & \vdots        &\ddots &\vdots      &\vdots                 &\vdots\\
0       & 0             & 0             &\cdots &f_{m-1,m-2} &-f_{m-1,m-2}-f_{m-1,m} &f_{m-1,m}\\
0       & 0             & 0             &\cdots &0            & f_{m,m-1}             &-f_{m,m-1}
\end{pmatrix},
\end{gather*}
and $\bar 1$ denotes the all-ones $m$-column,
$\bar 1:=(1,\dots,1)^{\mathsf T}$.
We call the components~$f_{i+1,i}$ and~$f_{i,i+1}$, $i=1,\dots,m-1$, of the matrix~$\mathsf F$ the essential ones.
In what follows, $\mathsf E$ denotes the identity matrix of the same size as $\mathsf F$.

\subsection{Properties of the vertical coupling matrix}\label{sec:PropertiesOfMatrixF}

The matrix~$\mathsf F$ is identified with the corresponding linear operator on the space $\mathbb R^m$.
Under the minimal constraints of nonvanishing for the essential components of $\mathsf F$,
its rank is equal to $m-1$
and its kernel~$\ker\mathsf F$ and image~$\im\mathsf F$
are spanned by the vector~$\bar 1$ and by the columns of~$\mathsf F$, respectively.
In other words, $0$ is an eigenvalue of~$\mathsf F$ whose eigenspace is $\langle\bar 1\rangle$,
and $m-1$ columns of~$\mathsf F$ can be chosen as a basis of $\im\mathsf F$
since they are linearly independent.

In the course of carrying out Lie reductions and finding exact solutions
of systems from the class~$\mathcal M$,
we also need less trivial properties of the matrix~$\mathsf F$.
Under the physical constraints of positivity on the essential components of $\mathsf F$,
we can reduce the matrix~$\mathsf F$ by the similarity transformation $\mathsf F\to\mathsf D^{-1}\mathsf F\mathsf D$
with
\begin{gather*}
\mathsf D:=\mathop{\rm diag}(d_1,\dots,d_m),\quad
d_i:=\begin{cases}
1,                                                            &i=1,\\
\sqrt{\dfrac{f_{21}\cdots f_{i,i-1}}{f_{12}\cdots f_{i-1,i}}}
=\sqrt{\dfrac{H_1}{H_i}}, &i=2,\dots, m,
\end{cases}
\end{gather*}
where we use the relation $H_if_{i,i+1}=H_{i+1}f_{i+1,i}$,
to the symmetric tridiagonal matrix $\mathsf S$ with components
\begin{gather*}
s_{ii}=-f_{i,i-1}-f_{i,i+1},\quad
s_{i,i+1}=s_{i+1,i}=\sqrt{f_{i,i+1}f_{i+1,i}},\quad
s_{ij}=0,\ \ |i-j|>1.
\end{gather*}
(We can use the dimensional matrices~$\hat{\mathsf D}:=\mathop{\rm diag}(1/\sqrt{H_1},\dots,1/\sqrt{H_m})$
instead of the dimensionless matrix~$\mathsf D$.)
This implies that $\mathsf F$ is diagonalizable and its eigenvalues are real.
Moreover, since $f_{i,i+1}$ or $f_{i,i-1}$ are nonzero,
for any $\lambda\in\mathbb R$, the matrix $\mathsf F-\lambda\mathsf E$ contains a submatrix of rank $m-1$
(e.g., the submatrix of the matrix $\mathsf F-\lambda\mathsf E$ obtained by deleting its first row and its last column),
and thus by the rank-nullity theorem,
the dimension of the kernel of $\mathsf F-\lambda\mathsf E$ is at most $m-(m-1)=1$.
If $\lambda$ is an eigenvalue of $\mathsf F$, then this dimension must be $1$,
i.e., the geometric multiplicity of each eigenvalue of $\mathsf F$ is equal to one.
Since the matrix~$\mathsf F$ is diagonalizable, for any eigenvalue of~$\mathsf F$,
its algebraic multiplicity coincides with its geometric multiplicity and is thus equal to one as well.
In other words, eigenvalues of~$\mathsf F$ are pairwise distinct.

In fact, we can say more about the properties of eigenvalues of $\mathsf F$.
For this purpose, we need to study the characteristic polynomial $P(\lambda):=\det(\lambda\mathsf E-\mathsf F)$
of the matrix $\mathsf F$.
The above discussion implies that $P$ is divisible by $\lambda$ and has no multiple roots.

\begin{lemma}\label{lem:PositivityCharPoly}
If the physical constraints of positivity
on the essential components~$f_{i+1,i}$ and~$f_{i,i+1}$, $i=1,\dots,m-1$, of the matrix~$\mathsf F$
is imposed,
the coefficients of its characteristic polynomial $P(\lambda):=\det(\lambda\mathsf E-\mathsf F)$ are positive,
except the zero-degree one, which is equal to zero.
\end{lemma}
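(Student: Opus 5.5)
We prove the statement via the eigenvalues of~$\mathsf F$, using the spectral facts already established in this section. By the similarity $\mathsf F\to\mathsf D^{-1}\mathsf F\mathsf D$, the matrix~$\mathsf F$ is similar to the real symmetric tridiagonal matrix~$\mathsf S$. Hence~$\mathsf F$ and~$\mathsf S$ have the same characteristic polynomial~$P$, all roots of~$P$ are real, and they are pairwise distinct. We also know that $0$ is a simple root, since $\ker\mathsf F=\langle\bar1\rangle$. It therefore remains to show that the other $m-1$ roots are negative. Once this holds, we can write
\[
P(\lambda)=\lambda\prod_{k=1}^{m-1}(\lambda+\mu_k),\qquad \mu_k>0.
\]
Expanding, the coefficient of $\lambda^{j}$ for $j=1,\dots,m$ is an elementary symmetric polynomial $e_{m-j}(\mu_1,\dots,\mu_{m-1})$ of positive numbers, and so it is positive, with $e_0=1$. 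The constant term is zero because of the factor~$\lambda$.

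The main step is to show that~$\mathsf S$ is negative semidefinite. We check this with the quadratic form. Since $\mathsf D^{-1}\mathsf F\mathsf D=\mathsf S$, setting $\mathsf S=\mathsf D\mathsf F\mathsf D^{-1}$-type conjugation aside, it is simpler to work with a symmetrized form of~$\mathsf F$ directly. Multiplying~$\mathsf F$ from the left by $\mathsf H:=\mathop{\rm diag}(H_1,\dots,H_m)$ gives a symmetric matrix, because $H_if_{i,i+1}=H_{i+1}f_{i+1,i}=:c_i>0$. Its quadratic form is
\[
v^{\mathsf T}\mathsf H\mathsf F v=-\sum_{i=1}^{m-1}c_i\,(v_i-v_{i+1})^2\leqslant0 .
\]
This is obtained by regrouping the diagonal entries $-H_if_{i,i-1}-H_if_{i,i+1}$ into the pairwise terms.

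Now let $\mathsf Fv=\lambda v$ with $v\ne0$ real; such a~$v$ exists because the eigenvalues are real. Then
\[
\lambda\, v^{\mathsf T}\mathsf Hv=v^{\mathsf T}\mathsf H\mathsf Fv\leqslant0,
\]
and $v^{\mathsf T}\mathsf Hv>0$, so $\lambda\leqslant0$. Equality $\lambda=0$ forces $v_i=v_{i+1}$ for all~$i$, that is, $v\in\langle\bar1\rangle$. Since the eigenvalues are simple, all nonzero eigenvalues are therefore strictly negative.

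An alternative route, if one prefers to avoid eigenvalues, is the principal-minor expansion. The coefficient of $\lambda^{m-k}$ in $\det(\lambda\mathsf E-\mathsf F)$ equals the sum of the $k\times k$ principal minors of~$-\mathsf F$. One would then show that each proper principal minor of~$-\mathsf F$ is nonnegative and at least one is positive, since $-\mathsf F$ is a singular M-matrix (a weakly diagonally dominant Z-matrix). This is more combinatorial, so I would use the eigenvalue argument.

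The only delicate point is verifying the quadratic-form identity for $\mathsf H\mathsf F$, including the boundary rows $i=1$ and $i=m$ where $f_{1,0}=f_{m,m+1}=0$. This is routine bookkeeping.
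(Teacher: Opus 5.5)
Your proof is correct, but it runs in the opposite logical direction from the paper's. The paper proves the lemma purely algebraically. It inducts on the leading principal submatrices $\hat{\mathsf F}_k$ and their truncated versions $\mathsf F_k$ (last diagonal entry $-f_{k,k-1}$), using three-term recurrences with nonnegative coefficients. In correct form these read $P_k=(\lambda+f_{k,k-1})P_{k-1}+\lambda f_{k-1,k}\hat P_{k-2}$ and $\hat P_k=P_k+f_{k,k+1}\hat P_{k-1}$; as printed, the paper drops $f_{k-1,k}$ and writes $f_{k,k-1}$ for $f_{k,k+1}$, which its own $P_2$, $\hat P_2$ expose. Positivity of the coefficients propagates, and $P_k$ inherits the zero constant term. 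Only afterwards does the paper use the real spectrum to deduce Corollary~\ref{cor:EigenvaluesAreNegaive}.

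You instead prove the corollary first, via the weighted Dirichlet-form identity $v^{\mathsf T}\mathsf H\mathsf Fv=-\sum_i c_i(v_i-v_{i+1})^2$. This identity is correct, boundary rows included. You then read off the coefficients by Vieta's formulas. The paper's route needs no spectral input for the lemma and also gives positivity of the coefficients of $\hat P_k$, but it is bookkeeping-heavy. Your route is more conceptual: it shows $-\mathsf W\mathsf F$ is positive semidefinite with kernel exactly $\langle\bar1\rangle$. That is the same fact that makes the interaction part $-(\psi,\mathsf F\psi)_{\mathsf W}$ of the energy in Section~\ref{sec:HamiltonianStructure} nonnegative. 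The price is reliance on the real spectrum obtained via $\mathsf D$, which is already available.

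Three small fixes:
\begin{itemize}
\item The lemma assumes only positivity of the essential components. Use $\mathsf W=\mathsf D^{-2}$ as the symmetrizer (equivalently, define $H_{i+1}:=H_if_{i,i+1}/f_{i+1,i}$) rather than invoking the physical relation $H_if_{i,i+1}=H_{i+1}f_{i+1,i}$.
\item Delete the garbled sentence about ``$\mathsf S=\mathsf D\mathsf F\mathsf D^{-1}$-type conjugation''.
\item Strict negativity of nonzero eigenvalues is immediate from $\lambda\leqslant0$. What simplicity actually gives you is that $0$ has multiplicity exactly one, so exactly $m-1$ factors $\lambda+\mu_k$ with $\mu_k>0$ remain.
\end{itemize}
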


\begin{proof}
Denote by $\hat{\mathsf F}_k$ the $k$th order leading principal submatrix of~$\mathsf F$,
\[
\mathsf F_k:=\hat{\mathsf F}_k|^{}_{f_{k,k+1}=0},\quad
\hat P_k(\lambda):=\det(\lambda\mathsf E-\hat{\mathsf F}_k),\quad
P_k(\lambda):=\det(\lambda\mathsf E-\mathsf F_k).
\]
Thus, $\hat{\mathsf F}_m=\mathsf F_m=\mathsf F$ and $\hat P_m=P_m=P$.
We prove by induction with respect to the submatrix size~$k$
that the coefficients of the polynomials $\hat P_k$, $k=1,\dots,m-1$,
and the coefficients of the polynomials~$P_k$, $k=1,\dots,m$, except the zero-degree one, are positive,
and the zero-degree coefficient of~$P_k$ is equal to zero.

The induction base, where $k\leqslant2$, is obvious since
\begin{gather*}
P_1(\lambda)=\lambda,\quad
\hat P_1(\lambda)=\lambda+f_{12},\\
P_2(\lambda)=\lambda^2+(f_{12}+f_{21})\lambda,\quad
\hat P_2(\lambda)=\lambda^2+(f_{12}+f_{21}+f_{23})\lambda+f_{12}f_{23}.
\end{gather*}
For the induction step, suppose that the statement holds true for the submatrix sizes less than~$k$.
Then the recurrent formulas
\begin{gather*}
P_k(\lambda)=(\lambda+f_{k,k-1})P_{k-1}(\lambda)+\lambda\hat P_{k-2}(\lambda),\\
\hat P_k(\lambda)=P_k(\lambda)+f_{k,k-1}\hat P_{k-1}(\lambda).
\end{gather*}
imply the statement for the submatrix size equal to~$k$.
\end{proof}

\begin{corollary}\label{cor:EigenvaluesAreNegaive}
All nonzero eigenvalues of $\mathsf F$ are negative real.%
\footnote{%
In some oceanological literature, see, e.g., \cite[p.~56]{kame1986A},
the paper~\cite{kame1981a} is indicated as the source
with the first proof of Corollary~\ref{cor:EigenvaluesAreNegaive}
for an arbitrary value of the number of layers~$m$,
but, unfortunately, this paper is not well accessible.
This is why we decided to include the proofs of Lemma~\ref{lem:PositivityCharPoly}
and Corollary~\ref{cor:EigenvaluesAreNegaive} in the present paper.
The statement on the diagonalizability of tridiagonal matrices is commonly known.
}
\end{corollary}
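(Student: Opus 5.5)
We combine two facts already established above. First, the similarity transformation $\mathsf F\to\mathsf D^{-1}\mathsf F\mathsf D$ maps $\mathsf F$ to the real symmetric matrix~$\mathsf S$. Hence all eigenvalues of~$\mathsf F$ are real; this is where the positivity of the essential components is needed, so that the square roots defining~$\mathsf D$ and $s_{i,i+1}$ are real. Second, by Lemma~\ref{lem:PositivityCharPoly}, the characteristic polynomial has the form
\[
P(\lambda)=\lambda^m+a_{m-1}\lambda^{m-1}+\dots+a_1\lambda
\]
with all $a_k>0$. It therefore remains to show that $P$ has no positive real roots.

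This is a one-line sign argument. For any real $\lambda>0$, every term $a_k\lambda^k$ and the leading term $\lambda^m$ are strictly positive, so $P(\lambda)>0$. Thus no positive number is an eigenvalue. Since all eigenvalues are real, each nonzero eigenvalue must be negative.

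There is no genuine obstacle. The only point needing care is to cite reality of the spectrum explicitly, because positivity of the coefficients alone excludes only positive real roots and says nothing about complex ones. This is supplied by the symmetrization via~$\mathsf D$. As a side remark, we may also note that $a_1>0$ reconfirms that $0$ is a simple root, consistent with the earlier rank discussion.
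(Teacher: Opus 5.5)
Your proof is correct and is essentially the paper's argument: the spectrum is real by the symmetrization $\mathsf D^{-1}\mathsf F\mathsf D=\mathsf S$, and Lemma~\ref{lem:PositivityCharPoly} rules out positive roots. The paper phrases the sign step as strict positivity of the coefficients of $\lambda^{-1}P(\lambda)$ rather than evaluating $P(\lambda)$ at $\lambda>0$, which makes no difference.
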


\begin{proof}
Since all the eigenvalues of~$\mathsf F$ are real and
the polynomial $\lambda^{-1}P(\lambda)$ has strictly positive coefficients,
the nonzero eigenvalues should be negative.
\end{proof}

\begin{remark}
Corollary~\ref{cor:EigenvaluesAreNegaive} also follows from the Gershgorin circle theorem.
From this theorem, we can derive even a stronger statement that
any eigenvalue of the matrix~$\mathsf F$ belongs to the interval $[-R,0]$,
where $R:=\min\big(2\max_i(f_{i,i-1}+f_{i,i+1}),\max_i(f_{i,i-1}+f_{i,i+1}+f_{i-1,i}+f_{i+1,i})\big)$,
$f_{01}:=0$ and $f_{m+1,m}:=0$.
In other words, the spectral radius $\rho(\mathsf F)=|\lambda_1|$ of~$\mathsf F$ is not greater than $R$,
and, moreover, $-R\leqslant\lambda_1<\dots<\lambda_{m-1}<\lambda_m=0$.
At the same time, $\lambda_1<-\max_i(f_{i,i-1}+f_{i,i+1})$ according to the Rayleigh quotient argument.
\end{remark}

To present the solutions of~\eqref{eq:mLaysMod} derived via Lie reductions in a convenient form,
we introduce the following useful notation.
Denote by $e_1$,~\dots, $e_m$ the eigenvectors of the matrix~$\mathsf F$
corresponding to its ordered eigenvalues~$\lambda_1$, \dots $\lambda_m$ of~$\mathsf F$,
$\lambda_1<\dots<\lambda_m$,
and let $\mathsf\Lambda:=\mathop{\rm diag}(\lambda_1,\dots,\lambda_m)$ be the matrix
constituted by these eigenvalues.
Thus, $\lambda_m=0$, and we can choose $e_m=\bar 1$.
Since the matrix~$\mathsf F$ is tridiagonal and irreducible,
the first and last components of any of its eigenvectors are guaranteed to be nonzero.
We define $(\cdot,\cdot)_{\mathsf W}^{}$ to be the weighted inner product on the space~$\mathbb R^m$
with the weight matrix $\mathsf W:=\mathsf D^{-2}=\mathop{\rm diag}(d_1^{-2},\dots,d_m^{-2})$.
The dimensional analogue of~$\mathsf W$ is
$\hat{\mathsf W}:=\hat{\mathsf D}^{-2}=\mathop{\rm diag}(H_1,\dots,H_m)$.
In this Euclidean space, the eigenvectors of~$\mathsf F$ are orthogonal, $(e_i,e_j)_{\mathsf W}=0$ for $i\ne j$.
The eigenvectors~$e_1$,~\dots, $e_m$ are columns of the transition matrix~$\mathsf P$
from the matrix~$\mathsf F$ to the matrix~$\mathsf\Lambda$, $\mathsf\Lambda:=\mathsf P^{-1}\mathsf F\mathsf P$.
It is obvious that
$\mathop{\rm im}\mathsf F=\langle e_1,\dots,e_{m-1}\rangle$,
$\ker\mathsf F=\langle e_m\rangle$,
$\mathbb R^m=\mathop{\rm im}\mathsf F\oplus\ker\mathsf F$ and, moreover,
$\ker\mathsf F\perp_{\mathsf W}^{}\mathop{\rm im}\mathsf F$.
Due to this decomposition, the Moore--Penrose inverse%
\footnote{\label{fnt:Moore-PenroseInverse}
Given a linear operator~$\mathsf A$ on a finite-dimensional space~$V$ with inner product,
its Moore--Penrose inverse~$\mathsf A^+$ can be defined
in terms of the inverse of the restriction~$\hat{\mathsf A}$ of~$\mathsf A$ to $(\ker\mathsf F)^\perp$
with codomain restricted to $\mathop{\rm im}\mathsf A$,
${\mathsf A^+v=\hat{\mathsf A}^{-1}\mathop{\rm proj}_{\mathop{\rm im}\mathsf A}^{}v}$ for any $v\in V$,
where $\mathop{\rm proj}_{\mathop{\rm im}\mathsf A}$ denotes
the operator of orthogonal projection on~$\mathop{\rm im}\mathsf A$,
and hence $\mathsf A^+(\mathop{\rm im}\mathsf A)^\perp=\{0\}$.
If $\ker\mathsf A\perp\mathop{\rm im}\mathsf A$, then we have simpler expression for
the Moore--Penrose inverse of~$\mathsf A$,
$\mathsf A^+=(\mathsf A|_{\mathop{\rm im}\mathsf A}^{})^{-1}\oplus\mathsf0|_{\ker\mathsf F}^{}$,
where $\mathsf0$ is the zero operator on~$V$.
Algorithms for finding the Moore--Penrose inverse of an operator are implemented in
different packages for symbolic or numerical computations,
like {\sf Maple} and {\sf Sage} or the package {\sf Numpy} for Python.
}%
~$\mathsf F^+$ of the matrix~$\mathsf F$ interpreted as an operator on $\mathbb R^m$
can be simply represented in the decomposed form,
\begin{gather}\label{eq:MNInverse}
\mathsf F^+=(\mathsf F|_{\mathop{\rm im}\mathsf F}^{})^{-1}\oplus\mathsf0|_{\ker\mathsf F}^{},
\end{gather}
where $\mathsf0$ denotes the zero $m\times m$ matrix or, equivalently, the zero operator on~$\mathbb R^m$.

\begin{remark}
For real-world multi-layer quasi-geostrophic problems,
the vertical coupling coefficients $f_{i,i-1}$ and $f_{i,i+1}$ are all in general pairwise different
and, moreover, the parameter $f_{12}$ is usually much larger than $f_{m,m-1}$,
cf.\ Section~\ref{sec:NumericalExample} below.
For simplifying the interaction terms for each single inner layer while allowing the model to capture the asymmetric global structure (e.g., a highly stratified surface layer overlying a weakly stratified deep layer),
the constraint $f_{i,i-1}=f_{i,i+1}:=\alpha_i$, $i=2,\dots,m-1$ is often set, see, e.g., \cite{cott2020a}.
Additionally denoting $\alpha_1:=f_{12}$ and $\alpha_m:=f_{m,m-1}$,
we can decompose the matrix~$\mathsf F$ in the form $\mathsf F=\mathsf A\mathsf L$,
where
$\mathsf L$ is the $m\times m$ tridiagonal matrix with the main diagonal $(-1,-2,\dots ,-2,-1)$
and the same sub- and superdiagonals $(1,\dots ,1)$, i.e.,
the matrix of the discrete Laplacian operator for an interval with Neumann boundary conditions at both ends,
and the matrix $\mathsf A:=\mathop{\rm diag}(\alpha_1,\dots,\alpha_m)$
can be interpreted as a weight matrix to this operator.
Then $d_i=\sqrt{\alpha_i/\alpha_1}$,
but there are still no explicit closed-form expressions
for nonzero eigenvalues and corresponding eigenvectors of the matrix~$\mathsf F$
for general $(\alpha_1,\dots,\alpha_m)$.
In the course of discretizing a uniform fluid model in the vertical direction with equal steps
for further numerical analysis,
the constructed multi-layer quasi-geostrophic problem is even more simplified.
More specifically, since then $H_i=H_j$ and $\rho_{i+1}-\rho_i=\rho_{j+1}-\rho_j$, $i,j=1,\dots,m-1$,
all the vertical coupling coefficients $f_{i,i-1}$ and $f_{i,i+1}$ are the same,
except the artificial ones~$f_{1,0}$ and~$f_{m,m+1}$,
and hence the matrix~$\mathsf F$, its eigenvalues and eigenvectors take the form
\[
\mathsf F=f_{12}\mathsf L,\quad
\lambda_i=-4 f_{12}\sin^2\frac{m-i}{2m}\pi,\quad
e_i=\left(\cos\frac{m-i}m\Big(j-\frac12\Big)\pi,\,j=1,\dots,m\right)^{\mathsf T}.
\]
\end{remark}

\subsection{Barotropic and baroclinic modes}

It is often convenient to expand $\psi$ and~$q$ with respect to the eigenbasis of the matrix~$\mathsf F$,
\[
\psi=\tilde\psi^ke_k,\quad q=\tilde q^ke_k,\quad\mbox{where}\quad
\tilde\psi^k=\frac{(\psi,e_k)_{\mathsf W}^{}}{(e_k,e_k)_{\mathsf W}^{}},\quad
\tilde   q^k=\frac{(   q,e_k)_{\mathsf W}^{}}{(e_k,e_k)_{\mathsf W}^{}}.
\]
In other words,
$\tilde\psi:=(\tilde\psi^1,\dots,\tilde\psi^m)=\mathsf P^{-1}\psi$ and
$\tilde q:=(\tilde q^1,\dots,\tilde q^m)=\mathsf P^{-1}q$.
In contrast to the pairs $(\psi^i,q^i)$,
the pairs $(\tilde\psi^i,\tilde q^i)$ are decoupled in the sense that
for each fixed~$i$, the expression for the $i$th recombined vorticity~$\tilde q^i$ involves
the $i$th recombined stream function~$\tilde\psi^i$ only,
\begin{gather*}
\tilde q^i=\tilde\psi^i_{xx}+\tilde\psi^i_{yy}+\lambda_i\tilde\psi^i,\ \ i=1,\dots,m-1,\quad
\tilde q^m=\tilde\psi^m_{xx}+\tilde\psi^m_{yy}+\beta y.
\end{gather*}
The expansion with respect to the eigenbasis of~$\mathsf F$ allows one
to separate the \emph{barotropic mode}, which corresponds to the zero eigenvalue~$\lambda_m$
and represents the depth-averaged flow of the stratified fluid,
from the \emph{baroclinic modes}, which correspond to the nonzero eigenvalues~$\lambda_1$,~\dots $\lambda_{m-1}$
and represent movements of the fluid layers with respect to each other.
In terms of $\tilde\psi$ and~$\tilde q$, the system~\eqref{eq:mLaysModMatrixForm} takes the form
\begin{gather}\label{eq:mLaysModMatrixFormMod}
\begin{split}
&\tilde q_t+\mathsf P^{-1}\{\mathsf P\tilde\psi,\mathsf P\tilde q\}=0,\\
&\tilde q:=\tilde\psi_{xx}+\tilde\psi_{yy}+{\mathsf\Lambda}\tilde\psi+(0,\dots,0,\beta y)^{\mathsf T}.
\end{split}
\end{gather}
The decoupling between the components in the expression for~$\tilde q$
leads to the decoupling in derivatives with respect to~$t$,
but this is not the case for the nonlinearity given by the Poisson bracket,
where the coupling becomes even more complicated.

At the same time, most of the Lie submodels of the multi-layer quasi-geostrophic problem~\eqref{eq:mLaysModMatrixForm}
are linear, and the vertical couplings for a number of them is given by modifications
of the matrix~$\mathsf F$ with diagonal summands.
Eigenvectors of these modified matrices are in general different from those of the matrix~$\mathsf F$,
but the corresponding eigenbases are orthogonal in the same weighted inner product $(\cdot,\cdot)_{\mathsf W}^{}$.
For each submodel with such vertical couplings,
the expansion of the tuple of unknown invariant functions with respect to the associated eigenbasis
leads to the total decoupling of the submodel,
see Sections~\ref{sec:CodimOneLieReds} and~\ref{sec:CodimTwoLieReds} below.

\subsection{Numerical example}\label{sec:NumericalExample}

As a specific illustrative example of the multi-layer quasi-geostrophic problem,
throughout the paper we consider the case of three layers ($m=3$)
with realistic geophysical data from \cite[Table~1]{nauw2004a} for model parameters,
\begin{gather}\label{eq:3LaysModData}\arraycolsep=0ex
\begin{array}{lll}
f_0   =1.0\cdot10^{-4}\   \mbox{s}^{-1},                     & H_1=6.0\cdot10^2\ \mbox{m},\qquad
&g_1'=2.0\cdot10^{-2}\ \mbox{m}\cdot\mbox{s}^{-2},\\[1ex]
\beta =1.6\cdot10^{-11}\  (\mbox{m}\cdot\mbox{s})^{-1},\qquad& H_2=1.4\cdot10^3\ \mbox{m},
&g_2'=3.0\cdot10^{-2}\ \mbox{m}\cdot\mbox{s}^{-2},\\[1ex]
\rho_0=1.0\cdot10^3\      \mbox{kg}\cdot\mbox{m}^{-3},       & H_3=2.0\cdot10^3\ \mbox{m}.&
\end{array}
\end{gather}
In particular, the plots of found exact solutions are constructed for the values~\eqref{eq:3LaysModData}.
If $m=3$, there are only four essential vertical coupling parameters, $f_{12}$, $f_{21}$, $f_{23}$ and~$f_{32}$.
For the specific model data~\eqref{eq:3LaysModData}, they take the values
\begin{gather*}
f_{12}=\frac{f_0^2}{H_1g_1'}\approx8.33\cdot10^{-10}\ \text{m}^{-2},
\quad
f_{21}=\frac{f_0^2}{H_2g_1'}\approx3.57\cdot10^{-10}\ \text{m}^{-2},
\\
f_{23}=\frac{f_0^2}{H_2g_2'}\approx2.38\cdot10^{-10}\ \text{m}^{-2},
\quad
f_{32}=\frac{f_0^2}{H_3g_2'}\approx1.67\cdot10^{-10}\ \text{m}^{-2}.
\end{gather*}
Thus, the vertical coupling matrix~$\mathsf F$, its eigenvalues, the corresponding eigenvectors,
the diagonal entries of the matrix $\mathsf D:=\mathop{\rm diag}(d_1,d_2,d_3)$
and the weighted inner product $(\cdot,\cdot)_{\mathsf W}^{}$
are given by
\begin{gather*}
\mathsf F\approx10^{-10}\cdot\!
\begin{pmatrix}
-8.33 &  8.33 & 0\\
 3.57 & \!\!\!-5.95\!\!\! & 2.38\\
  0  &  1.67 &-1.67
\end{pmatrix}\!,\quad
\begin{array}{l}
\lambda_1\approx-12.9\cdot10^{-10},\\[1ex]
\lambda_2\approx -3.1\cdot10^{-10},\\[1ex]
\lambda_3=0,
\end{array}\ \,
\begin{array}{l}
e_1\approx(1,-0.54,0.08)^{\mathsf T},\\[1ex]
e_2\approx(1,0.63,-0.74)^{\mathsf T},\!\\[1ex]
e_3:=(1,1,1)^{\mathsf T},
\end{array}
\\[1ex]
d_1=1,\ \ d_2\approx0.65,\ \ d_3\approx0.55,
\quad
(b,c)_{\mathsf W}^{}\approx b_1c_1+2.33b_2c_2+3.33b_3c_3,
\end{gather*}
respectively, where $b:=(b_1,b_2,b_3)^{\mathsf T}$ and $c:=(c_1,c_2,c_3)^{\mathsf T}$.

\section{Conservation laws and Hamiltonian structure}\label{sec:HamiltonianStructure}

We extend the well-known results on local conservation laws and the Hamiltonian structure
of the $\beta$-plane vorticity equation~\cite{shep1990a}
to the multi-layer quasi-geostrophic problem~\eqref{eq:mLaysModMatrixForm}.
Below we list the names of conservation laws of~\eqref{eq:mLaysModMatrixForm}
jointly with related objects, which include their canonical characteristics,%
\footnote{%
Conservation-law characteristics~\cite[Section~4.3]{olve1993A} are also called multipliers for conservation laws,
see, e.g., \cite{anco2002a,anco2002b} and~\cite[Section~1.3.3]{blum2010A}.
}
equivalent canonical forms of their conserved currents, which are usually not unique,
and the corresponding conserved values.

In this section, we interpret functionals in the spirit of~\cite[Section~7.1]{olve1993A}
as equivalence classes of differential functions up to adding total divergences
and use the notation $\iint F[\psi]\,{\rm d}x{\rm d}y$ for them,
where $F[\psi]$ is a differential function from the corresponding equivalence class.
For such a functional to be turned into the associated physical value,
we need to fix a domain of integration and reflect this choice in the notation.

\medskip\par\noindent
{\it Conservation of generalized total weighted circulations.} \ $\lambda=\kappa(t)\mathsf W\bar1$,
\begin{gather*}
\kappa\big(0,\,(\bar1,\psi_{xt})_{\mathsf W}^{}-(\psi_y,q)_{\mathsf W}^{},\,
(\bar1,\psi_{yt})_{\mathsf W}^{}+(\psi_x,q)_{\mathsf W}^{}\big)
\\[.5ex]
\sim\big(\kappa(\bar1,\triangle\psi)_{\mathsf W}^{},\,
-\kappa_t(\bar1,\psi_x)_{\mathsf W}^{}-\kappa(\psi_y,q)_{\mathsf W}^{},\,
-\kappa_t(\bar1,\psi_y)_{\mathsf W}^{}+\kappa(\psi_x,q)_{\mathsf W}^{}\big)
\\[.5ex]
\sim\big(\kappa(\bar1,q)_{\mathsf W}^{},\,
-\kappa(\psi_y,q)_{\mathsf W}^{}-\kappa_t(\bar1,\psi_x)_{\mathsf W}^{},\,
\kappa(\psi_x,q)_{\mathsf W}^{}-(\kappa_t\bar1,\psi_y+\tfrac12\beta y^2\bar1)_{\mathsf W}^{}\big).
\end{gather*}
The first conserved current shows
that this conservation law for each fixed~$\kappa$ can be interpreted as a ``short'' one whose density is zero.
This explains the presence and the negligibility of the multiplier~$\kappa$
that is an arbitrary sufficiently smooth function of~$t$.
The second conserved current multiplied by $H_1/(H_1+\dots+H_m)$
can be interpreted as that representing the conservation of the generalized total weighted circulation,
which is the weighted average of the generalized total circulations on single layers,
\[
\iint\frac{\kappa(\bar1,\triangle\psi)_{\hat{\mathsf W}}}{H_1+\dots+H_m}\,{\rm d}x{\rm d}y
=
\sum_{j=1}^m\frac{H_j}{H_1+\dots+H_m}\iint\kappa\triangle\psi^j\,{\rm d}x{\rm d}y.
\]
Here and in what follows, we use the obvious equality
$(\cdot,\cdot)_{\hat{\mathsf W}}=H_1(\cdot,\cdot)_{\mathsf W\vphantom{\hat{\mathsf W}}}$.
The third conserved current leads to the representation of this conserved functional
in terms of the potential vorticity,
and thus this functional can be re-interpreted as a time-dependent distinguished functional
\cite[Section~7.2]{olve1993A}
for the Hamiltonian operator~$\mathfrak H$ of the system~\eqref{eq:mLaysModMatrixForm},
see below on a Hamiltonian structure for~\eqref{eq:mLaysModMatrixForm}.

\medskip\par\noindent
{\it Conservation of generalized total zonal momentums.} \ $\lambda=-\chi(t)y\mathsf W\bar1$,
\begin{gather*}
\chi\big(0,\,
y(\psi_y,q)_{\mathsf W}^{}+\tfrac12|\psi_x|_{\mathsf W}^2+\tfrac12|\psi_y|_{\mathsf W}^2
+\tfrac12(\psi,\mathsf F\psi)_{\mathsf W}^{}+y(\bar1,\beta\psi-\psi_{tx})_{\mathsf W}^{},\\\quad
-(\psi_x,y q-\psi_y)_{\mathsf W}^{}-(\bar1,y\psi_{ty}-\psi_t)_{\mathsf W}^{}\big)
\\[1ex]
\sim\big(-(\chi y\bar1,\triangle\psi)_{\mathsf W}^{},\,
\chi y(\psi_y,q)_{\mathsf W}^{}
+\tfrac12\chi(|\psi_x|_{\mathsf W}^2+|\psi_y|_{\mathsf W}^2+(\psi,\mathsf F\psi)_{\mathsf W}^{})
+y(\bar1,\beta\chi\psi+\chi_t\psi_x)_{\mathsf W}^{},\\\ \quad
-\chi(\psi_x,yq-\psi_y)_{\mathsf W}^{}+\chi_t(\bar1,y\psi_y-\psi)_{\mathsf W}^{}\big)
\\[1ex]
\sim\big(\chi(\bar1,\psi_y)_{\mathsf W}^{},\,
\chi\big(y(\psi_y,q)_{\mathsf W}^{}-\tfrac12|\psi_x|_{\mathsf W}^2-\tfrac12|\psi_y|_{\mathsf W}^2
-\tfrac12(\psi,\mathsf F\psi)_{\mathsf W}^{}-y(\bar1,\psi_{tx}+\beta\psi)_{\mathsf W}^{}\big),\\\ \quad
-\chi(\psi_x,yq-\psi_y)_{\mathsf W}^{}-(\bar1,\chi y\psi_{ty}+\chi_t\psi)_{\mathsf W}^{}\big).
\end{gather*}
Analogously, the first conserved current shows
that this conservation law  for each fixed~$\chi$ can be interpreted as a ``short'' one whose density is zero.
This again explains the presence and the negligibility of the multiplier~$\chi$
that is an arbitrary sufficiently smooth function of~$t$.
At the same time, in contrast to the previous family of conservation laws,
the short form for this conservation law exists only in terms of the stream function,
becoming nonlocal in terms of the fluid velocity.
The density of the second conserved current is nicely represented in terms of the vorticity.
For proper physical interpretation, we multiply the third conserved current by $\rho_0H_1$.
The first component of the obtained tuple is the density of the generalized total zonal momentum,
which is the sum of the total zonal momenta on single layers,
\[
\iint\chi\rho_0(\bar1,\psi_y)_{\hat{\mathsf W}}^{}\,{\rm d}x{\rm d}y
=
\sum_{i=1}^m\iint\chi\rho_0 H_i\psi^i_y\,{\rm d}x{\rm d}y.
\]

\medskip\par\noindent
{\it Conservation of total energy.} \ $\lambda=-\mathsf W\psi$,
\begin{gather*}
\big(\tfrac12(|\psi_x|_{\mathsf W}^2+|\psi_y|_{\mathsf W}^2-(\psi,\mathsf F\psi)_{\mathsf W}^{}),\,
   -(\psi,\psi_{tx}+\tfrac12\psi\odot q_y)_{\mathsf W}^{},\,
   -(\psi,\psi_{ty}-\tfrac12\psi\odot q_x)_{\mathsf W}^{}\big)
\\[.5ex]
\sim\tfrac12\big(-(\psi,q-\beta y\bar1)_{\mathsf W}^{},\,
   -(\psi,\psi\odot q_y+\psi_{xt})_{\mathsf W}^{}+(\psi_x,\psi_t)_{\mathsf W}^{},\,
   (\psi,\psi\odot q_x+\psi_{yt})_{\mathsf W}^{}-(\psi_y,\psi_t)_{\mathsf W}^{}\big).\!
\end{gather*}
After multiplying the first conserved current by $\rho_0H_1$,
its first component is the density of the total energy
\begin{gather*}
\mathcal E:=\iint\frac{\rho_0}2\big(|\psi_x|_{\hat{\mathsf W}}^2+|\psi_y|_{\hat{\mathsf W}}^2
-(\psi,\mathsf F\psi)_{\hat{\mathsf W}}^{}\big){\rm d}x{\rm d}y
=\sum_{i=1}^m\iint\frac{\rho_0 H_i}2\big((\nabla\psi^i)^2-\psi^i(\mathsf F\psi)^i\big){\rm d}x{\rm d}y,
\end{gather*}
which can of course be represented in several ways, e.g.,
as the sum of the total kinetic energy and the total energy of interaction between neighboring layers or
as the sum of total energies of single layers.

The system~\eqref{eq:mLaysMod} is Hamiltonian with a non-canonical Hamiltonian structure
under treating the potential vorticity~$q$ as the tuple of unknown functions of this system,
which is a common approach to models of fluid dynamics, cf.\ \cite[Example~7.10]{olve1993A}.
Thus, a Hamiltonian representations for~\eqref{eq:mLaysMod} is
\[
q_t=\mathfrak H\,\frac{\delta\mathcal H}{\delta q},
\]
where $\delta/\delta q:=(\delta/\delta q^1,\dots,\delta/\delta q^m)^{\mathsf T}$,
$\delta/\delta q^i$ denotes the variational derivative with respect to~$q^i$,
\[\mathfrak H:=\mathsf W^{-1}\mathop{\rm diag}(q_y\mathrm D_x-q_x\mathrm D_y)
=\mathop{\rm diag}\big((\mathsf D^2q_y)\mathrm D_x-(\mathsf D^2q_x)\mathrm D_y\big)\]
is the associated Hamiltonian differential operator with $\mathsf D:=\mathop{\rm diag}(d_1,\dots,d_m)$
and the total derivative operators~$\mathrm D_x$ and~$\mathrm D_y$ with respect to~$x$ and~$y$, respectively,
and the Hamiltonian~$\mathcal H$ of is given by the ``mathematical total energy''
(without the multiplier $\rho_0H_1$),
\begin{gather*}
\begin{split}
\mathcal H:=&\iint\frac12\big(|\psi_x|_{\mathsf W}^2+|\psi_y|_{\mathsf W}^2
-(\psi,\mathsf F\psi)_{\mathsf W}^{}\big){\rm d}x{\rm d}y
=\iint\sum_{i=1}^m\frac1{2d_i^2}\big((\nabla\psi^i)^2-\psi^i(\mathsf F\psi)^i\big){\rm d}x{\rm d}y\\
=&-\frac12\iint(q-\beta y\bar 1,\psi)_{\mathsf W}^{}\,{\rm d}x{\rm d}y
\end{split}
\end{gather*}
with the constant parameters $d_i$ defined in Section~\ref{sec:PropertiesOfMatrixF},
cf.\ \cite[Section~2.1]{cott2020a}.
(The last expression of the Hamiltonian~$\mathcal H$, which involves the potential vorticity~$q$,
justifies the presentation of the second conserved current.)
Hence $\delta\mathcal H/\delta q=-\mathsf W\psi$.
The corresponding Lie--Poisson vorticity bracket is given by
\[
\{\mathcal F,\mathcal G\}:=\sum_{i=1}^m d_i^2\iint
q^i\left\{\frac{\delta\mathcal F}{\delta q^i},\frac{\delta\mathcal G}{\delta q^i}\right\}{\rm d}x{\rm d}y,
\]
where $\mathcal F$ and $\mathcal G$ are interpreted as functionals of $q$.
The constructed Hamiltonian structure extends the (one-layer) vorticity equation~\cite{shep1990a}
to the general multi-layer model~\eqref{eq:mLaysMod}.

\medskip\par\noindent
{\it Conservation of generalized potential enstrophies on the $i$th layer, $i\in\{1,\dots,m\}$.}
\[
\lambda=\frac{{\rm d}\Phi^i}{{\rm d}q^i}\delta_i,\quad
(\Phi^i,-\psi^i_y\Phi^i,\psi^i_x\Phi^i),
\]
where $\delta_i=(\delta_{ij},j=1,\dots,m)^{\mathsf T}$ with the Kronecker delta~$\delta_{ij}$,
$\Phi^i$ is an arbitrary smooth function of~$q^i$.
For each fixed~$i$, particular elements of the corresponding family of conserved functionals
\[\mathcal C_{i,\Phi^i}:=\iint\Phi^i(q^i){\rm d}x{\rm d}y\]
are the total potential circulation on the $i$th layer, where $\Phi^i=q^i$,
and the potential enstrophy on the $i$th layer, where $\Phi^i=\frac12(q^i)^2$.
For any $(i,\Phi^i)$, the conserved functional~$\mathcal C_{i,\Phi^i}$ is
a Casimir (or, in terminology of~\cite[Section~7.2]{olve1993A}, distinguished)
functional for the Hamiltonian operator~$\mathfrak H$,
and the space of Casimir functionals of~$\mathfrak H$
is exhausted by the sums $\mathcal C_{1,\Phi^1}+\dots+\mathcal C_{m,\Phi^m}$,
where~$\Phi^i$ runs through the set of arbitrary smooth functions of~$q^i$ for each $i\in\{1,\dots,m\}$.
The space of time-dependent distinguished functionals of the Hamiltonian operator~$\mathfrak H$
that are conserved for the system~\eqref{eq:mLaysModMatrixForm}
is exhausted by the sums of generalized potential enstrophies and generalized total weighted circulations,
$\mathcal C_{1,\Phi^1}+\dots+\mathcal C_{m,\Phi^m}
+\iint\kappa(\bar1,q)_{\mathsf W}^{}\,{\rm d}x{\rm d}y$,
where in addition $\kappa$ runs through the set of arbitrary smooth functions of~$t$.

\section{Lie invariance algebra and Hamiltonian symmetries}\label{sec:LieInvAlg}

To find the maximal Lie invariance algebra~$\mathfrak g$ of the multi-layer quasi-geostrophic problem~\eqref{eq:mLaysMod},
we employ the infinitesimal invariance criterion, see, e.g., \cite[Theorem~4.1.1-1]{blum2002A} or \cite[Theorem~2.31]{olve1993A}.
In general, finding symmetry-like objects for a system of differential equations
involves sophisticated, intricate and cumbersome computations.
For low values of the number of layers~$m$,
the Lie invariance algebra~$\mathfrak g$
can be found by one of the numerous packages for symbolically computing Lie symmetries,
like {\sf DESOLV}~\cite{carm2000a} or GeM~\cite{chev2007a},
but this is not the case for an arbitrary value of~$m$.
Below we construct the algebra~$\mathfrak g$ by hand.
Since the straightforward performance of this construction is highly inefficient,
we use two tricks that significantly arrange and simplify the computations.

The first of these tricks is the formal replacement of the real independent variables~$(x,y)$
by the complex conjugate variables $z=x+{\rm i}y$ and $\bar z=x-{\rm i}y$, where ${\rm i}$ is the imaginary unit.%
\footnote{%
In fact, the variables $z$ and $\bar z$ are not independent to each other in the canonical sense
since they are uniquely related by the complex conjugation.
However, the equalities $\p_{\bar z}z=\p_z\bar z=0$ and $\p_zz=\p_{\bar z}\bar z=1$
with the Wirtinger derivatives $\p_z=\tfrac12(\p_x-{\rm i}\p_y)$ and $\p_{\bar z}=\tfrac12(\p_x+{\rm i}\p_y)$
allow one to formally treat $z$ and $\bar z$ as independent variables
and to simultaneously indicate them as arguments of relevant functions instead of $(x,y)$.
}
Throughout this section, bar denotes the complex conjugation.
In the independent variables $(t,z,\bar z)$, the system~\eqref{eq:mLaysMod} takes the form
\begin{gather}\label{eq:mLaysModCompl}
\begin{split}
&q^i_t-2{\rm i}(\psi^i_zq^i_{\bar z}-\psi^i_{\bar z}q^i_z)=0,\\
&q^i:=4\psi^i_{z\bar z}+f_{i,i-1}(\psi^i-\psi^{i-1})-f_{i,i+1}(\psi^{i+1}-\psi^i)+\frac\beta{2\rm i}(z-\bar z),\quad
i=1,\dots, m,	
\end{split}
\end{gather}
where $\psi^i=\psi^i(t,z,\bar z)$ are real-valued functions of their arguments,
$f_{1,0}=f_{m,m+1}=0$ and the rest of the constants $f_{i,i-1}$, $f_{i,i+1}$ are nonzero.
Having found the maximal Lie invariance algebra for the system~\eqref{eq:mLaysModCompl}
and pulling it back with respect to the change of variables $(t,x,y,\psi)\mapsto(t,z,\bar z,\psi)$,
we obtain the maximal Lie invariance algebra of the system~\eqref{eq:mLaysMod}.
There are two advantages of the system~\eqref{eq:mLaysModCompl} in comparison with the system~\eqref{eq:mLaysMod}.
The expansion of the former is more concise
due to replacing each of the Laplacians $\psi^i_{xx}+\psi^i_{yy}$ by the single mixed derivative $\psi^i_{z\bar z}$
(up to the multiplier~4).
Moreover, in contrast to the system~\eqref{eq:mLaysMod}, its counterpart~\eqref{eq:mLaysModCompl}
allows for confining to its solution set while preserving the parity of the variables~$z$ and~$\bar z$.

\begin{notation}
In addition in this section,
the Greek indices $\mu$, $\nu$, $\kappa$ and $\lambda$ run through the set $\{t,z,\bar z\}$.
\end{notation}

The second trick consists in introducing a chain of nested superclasses that contain the class~\eqref{eq:mLaysModCompl}.
The largest considered class $\mathcal V$ is constituted by the systems of third-order partial differential equations of the form
\begin{gather}\label{eq:GenVorClass}
\mathcal V_{F^i}\colon\quad
\psi^i_{tz\bar z}+F^i(t,z,\bar z,\psi,\psi_t,\psi_z,\psi_{\bar z},\psi_{z\bar z},\psi_{zz\bar z},\psi_{z\bar z\bar z})=0
\end{gather}
for the real-valued unknown functions $\psi=(\psi^1,\dots,\psi^m)$ of independent variables $(t,z,\bar z)$.
In view of their dependence at most on
$(t,z,\bar z,\psi,\psi_t,\psi_z,\psi_{\bar z},\psi_{z\bar z},\psi_{zz\bar z},\psi_{z\bar z\bar z})$,
the arbitrary elements $F^1$, \dots, $F^m$ can be interpreted as
real-valued third-order differential functions of~$\psi$
that run through the solution set of the auxiliary system of differential equations
\begin{gather*}
F^i_{\psi^j_{\mu\mu}}=F^i_{\psi^j_{t\mu}}=F^i_{\psi^j_{t\nu\lambda}}=F^i_{\psi^j_{zzz}}=F^i_{\psi^j_{\bar z\bar z\bar z}}=0.
\end{gather*}
The second, narrower class~$\hat{\mathcal V}$ consists of the systems of the form
\begin{gather}\label{eq:HatGenVorClass}
\hat{\mathcal V}_{(a_i,\hat F^i)}\colon\quad
\psi^i_{tz\bar z}+a_i\psi^i_{\bar z}\psi^i_{zz\bar z}+\bar a_i\psi^i_z\psi^i_{z\bar z\bar z}
+\hat F^i(t,z,\bar z,\psi,\psi_t,\psi_z,\psi_{\bar z},\psi_{z\bar z})=0.
\end{gather}
The class~$\hat{\mathcal V}$ can be considered as the subclass of the class~$\mathcal V$
that is singled out by the auxiliary system
\begin{gather*}
F^i_{\psi^j_{zz\bar z}}=F^i_{\psi^j_{z\bar z\bar z}}=0,\quad i\ne j,
\qquad
\left(\frac{F^i_{\psi^i_{zz\bar z}}}{\psi^i_{\bar z}}\right)_\upsilon=
\left(\frac{F^i_{\psi^i_{z\bar z\bar z}}}{\psi^i_z}\right)_\upsilon=0,\quad
\upsilon\in\{\mu,\psi_\mu,\psi_{\mu\nu},\psi_{\mu\nu\kappa}\}.
\end{gather*}
We can also reparameterize the class~$\hat{\mathcal V}$
and declare, instead of $F^i=a_i\psi^i_{\bar z}\psi^i_{zz\bar z}+\bar a_i\psi^i_z\psi^i_{z\bar z\bar z}+\hat F^i$,
the nonzero complex constants~$a_i$ and the real-valued second-order differential functions $\hat F^i$ to be the arbitrary elements,
which thus satisfy the auxiliary system
\begin{gather*}
a_{i,\upsilon}=0,\quad \upsilon\in\{\mu,\psi_\mu,\psi_{\mu\nu},\psi_{\mu\nu\kappa}\},\quad a_i\ne0, \quad
\hat F^i_{\psi_{\mu\nu\kappa}}=\hat F^i_{\psi_{t\mu}}=\hat F^i_{\psi_{zz}}=\hat F^i_{\psi_{\bar z\bar z}}=0.
\end{gather*}

The introduction of the classes $\hat{\mathcal V}$ and $\mathcal V$
allows us to simplify the computation of the maximal Lie invariance algebra
of the system of the form~\eqref{eq:mLaysModCompl} via splitting it into natural steps in the following way.
Applying the infinitesimal invariance criterion to an arbitrary system from the most general class~$\mathcal V$,
we derive the principal constraints on the components of Lie symmetry vector fields,
which hold for any system from this class and, consequently, for any system belonging to~$\hat{\mathcal V}$.
Using the obtained constraints, we proceed with successively specifying the form of Lie symmetry vector fields
for systems from the class~$\hat{\mathcal V}$ and for systems of the form~\eqref{eq:mLaysModCompl}.
As a result, we need to expand the cumbersome expressions for~$F^i$ in the latter systems
only on the very last stage of computing, which is one of the sources of the caused simplification.
Moreover, sequentially deriving more and more restrictions on the components of the Lie symmetry vector fields
allows us to avoid solving at once a vast overdetermined system of determining equations,
which arise in the straightforward application of the infinitesimal invariance criterion
to systems of the form~\eqref{eq:mLaysModCompl}.

The suggested approach results in the next chain of lemmas.

\begin{lemma}\label{lem:LieSymVecFieldClassV}
Any Lie symmetry vector field of an arbitrary system from the class $\mathcal V$ is of the form
\[
Q=\tau(t)\p_t+\xi(t,z)\p_z+\bar\xi(t,\bar z)\p_{\bar z}+\big(\Psi^{jk}(t)\psi^k+\Phi^j(t,z,\bar z)\big)\p_{\psi^j},
\]
where $\tau$, $\Psi^{jk}$ and $\Phi^j$ are smooth real-valued functions of their arguments,
and $\xi$ is a smooth complex-valued function of $(t,z)$.
\end{lemma}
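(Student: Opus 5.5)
We apply the infinitesimal invariance criterion to a general system~$\mathcal V_{F^i}$ and split only with respect to jet variables that cannot occur among the arguments of the~$F^i$. Let
\[
Q=\tau\p_t+\xi\p_z+\zeta\p_{\bar z}+\eta^j\p_{\psi^j},
\]
where all components are functions of $(t,z,\bar z,\psi)$. The conditions $\bar\tau=\tau$, $\bar\eta^j=\eta^j$ and $\zeta=\bar\xi$ follow at the end from the requirement that~$Q$ be the complexification of a real vector field. The third prolongation of~$Q$ gives the invariance condition
\[
\eta^{i,tz\bar z}+Q^{(3)}F^i=0
\quad\text{on the solution set of }\mathcal V_{F^i}.
\]
We eliminate $\psi^i_{tz\bar z}$ via the system. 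The $F^i$ do not depend on the parametric derivatives
\[
\psi^j_{tt},\ \psi^j_{zz},\ \psi^j_{\bar z\bar z},\ \psi^j_{tz},\ \psi^j_{t\bar z},\ \psi^j_{ttt},\ \psi^j_{ttz},\ \psi^j_{tt\bar z},\ \psi^j_{tzz},\ \psi^j_{t\bar z\bar z},\ \psi^j_{zzz},\ \psi^j_{\bar z\bar z\bar z}.
\]
We may therefore split with respect to these. Only $\eta^{i,tz\bar z}$ and the terms $\eta^{k,\cdot}F^i_{\psi^k_\cdot}$ with lower-order coefficients contribute there, and the latter are linear in third-order jets, with coefficients built from derivatives of~$\tau,\xi,\zeta$.

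The steps go in the following order. First, the coefficients of third-order jets such as $\psi^i_{ttt}$, $\psi^i_{tt z}$, $\psi^i_{tt\bar z}$ in $\eta^{i,tz\bar z}$ arise from the terms $-\tau_z\psi^i_{tt\bar z}$, $-\tau_{\bar z}\psi^i_{ttz}$, and so on. Collecting the coefficients of $\psi^i_{tt\bar z}$, $\psi^i_{ttz}$, $\psi^i_{tzz}$ and $\psi^i_{t\bar z\bar z}$ gives
\[
\tau_z=\tau_{\bar z}=0,\qquad \xi_{\bar z}=0,\qquad \zeta_z=0.
\]
This needs some care, since the $F^i$ may depend on $\psi_{zz\bar z}$ and $\psi_{z\bar z\bar z}$: a term like $\xi_{\bar z}\psi^i_{zz\bar z}$ can be absorbed, but a term like $\xi_t\psi^i_{tzz}$-type or $\zeta_{z}$ times a $\psi_{tzz}$-type jet cannot. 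The same splitting applied to cross-component jets $\psi^j_{t\cdot\cdot}$ with $j\ne i$ kills $\tau_{\psi^j}$, $\xi_{\psi^j}$ and $\zeta_{\psi^j}$; the coefficients here are products of first-order jets with $\tau_{\psi^j}$ etc. Second, the coefficients of $\psi^j_{tt}$, $\psi^j_{tz}$, $\psi^j_{t\bar z}$ multiplied by third-order jets in $\eta^{i,tz\bar z}$ force $\tau_{\psi}=0$ and $\tau=\tau(t)$. Third, for the $\eta$ component, the terms quadratic in second derivatives, such as $\eta^i_{\psi^j\psi^k}\psi^j_{t}\psi^k_{z\bar z}$-type and $\eta^i_{\psi^j\psi^k}\psi^j_{tz}\psi^k_{\bar z}$, contain jets $\psi_{tz}$ and $\psi_{t\bar z}$ absent from~$F^i$. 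This yields $\eta^i_{\psi^j\psi^k}=0$, so that
\[
\eta^i=\Psi^{ik}(t,z,\bar z)\psi^k+\Phi^i(t,z,\bar z).
\]
The coefficients of $\psi^k_{tz}$ and $\psi^k_{t\bar z}$, which arise as $\Psi^{ik}_{\bar z}\psi^k_{tz}$, $\Psi^{ik}_z\psi^k_{t\bar z}$ together with the term $\Psi^{ik}\psi^k_{tz\bar z}$ replaced by $-\Psi^{ik}F^k$, then give $\Psi^{ik}_z=\Psi^{ik}_{\bar z}=0$. Finally, reality of~$Q$ gives $\zeta=\bar\xi(t,\bar z)$ and real $\tau,\Psi,\Phi$.

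The main obstacle is bookkeeping. After substitution of $\psi^i_{tz\bar z}=-F^i$ and of its differential consequences, the remaining terms still contain $Q^{(3)}F^i$ with arbitrary dependence on $\psi_{zz\bar z}$ and $\psi_{z\bar z\bar z}$. So one must verify that each coefficient used for splitting involves a jet that really is absent from the arguments of~$F^i$ and is not produced by $\eta^{\cdot,zz\bar z}$ or $\eta^{\cdot,z\bar z\bar z}$. The safest route is to first split with respect to the third-order jets containing $t$ at least twice, or containing a pure $zzz$ or $\bar z\bar z\bar z$, and only afterwards with respect to second-order jets with a $t$-index. The $z$/$\bar z$ ordering makes this tractable.
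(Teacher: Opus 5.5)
Your proposal follows essentially the paper's route: you split the invariance condition with respect to the jets $\psi^i_{tt\bar z}$ and $\psi^i_{ttz}$ to get $\tau=\tau(t)$, then $\psi^i_{tzz}$ and $\psi^i_{t\bar z\bar z}$ to get $\xi=\xi(t,z)$, then $\psi^j_{tz}$ and $\psi^j_{t\bar z}$ to get that $\eta$ is affine in $\psi$ with $t$-dependent coefficients, which is exactly the paper's splitting (writing $\zeta$ and imposing reality at the end is only cosmetic). One point to tidy: the coefficients of these jets are the total derivatives ${\rm D}_z\tau$, ${\rm D}_{\bar z}\tau$, ${\rm D}_{\bar z}\xi$, ${\rm D}_z\zeta$, so $\tau_\psi=\xi_\psi=\zeta_\psi=0$ follows from splitting these coefficients with respect to $\psi_z$ and $\psi_{\bar z}$, not from cross-component jets; indeed $\psi^j_{ttz}$ and $\psi^j_{tt\bar z}$ with $j\ne i$ never occur in the $i$th condition.
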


\begin{proof}
In order to find the above form,
we start from a vector field~$Q$ of the most general form on the space with the coordinates $(t,z,\bar z,\psi)$,
\[
Q=\tau(t,z,\bar z,\psi)\p_t+\xi(t,z,\bar z,\psi)\p_z+\bar\xi(t,z,\bar z,\psi)\p_{\bar z}+\eta^j(t,z,\bar z,\psi)\p_{\psi^j},
\]
where $\tau$ and~$\eta^1$, \dots,~$\eta^m$ (resp.\ $\xi$) are real-valued (resp. is complex-valued) smooth function(s) of $(t,z,\bar z,\psi)$.
The necessary and sufficient condition for $Q$ to be a Lie symmetry vector field of a fixed system
$\mathcal V_F$: $\psi^i_{tz\bar z}+F^i=0$ from the class $\mathcal V$ with a fixed tuple of parameter-functions $F=(F^1,\dots,F^m)$
is given by the infinitesimal invariance criterion, see, e.g.,~\cite[Theorem~2.31]{olve1993A}.
In the specific case of $Q$ and $\mathcal V_F$, the criterion implies the condition
$Q_{(3)}(\psi^i_{tz\bar z}+F^i)|_{\mathcal V_F}^{}=0$, where~$Q_{(3)}$ is the third prolongation of~$Q$,
\[
Q_{(3)}=Q+\eta^{j,\mu}\p_{\psi^j_\mu}+\eta^{j,\mu\nu}\p_{\psi^j_{\mu\nu}}+\eta^{j,\mu\nu\kappa}\p_{\psi^j_{\mu\nu\kappa}},
\]
and the notation~\smash{$|_{\mathcal V_F}^{}$} means that the condition holds only on the solution set of the system~$\mathcal V_F$.
Here and in what follows, tuples of Greek indices are assumed unordered.
The coefficients $\eta^{j,\mu}$, $\eta^{j,\mu\nu}$ and $\eta^{j,\mu\nu\kappa}$
are defined by the general prolongation formulas for vector fields.
In particular,
\[
\eta^{i,\mu\nu\kappa}={\rm D_\mu D_\nu D_\kappa}(\eta^i-\tau\psi^i_t-\xi\psi^i_z-\bar\xi\psi^i_{\bar z})
+\tau\psi^i_{t\mu\nu\kappa}+\xi\psi^i_{z\mu\nu\kappa}+\bar\xi\psi^i_{\bar z\mu\nu\kappa}.
\]
Here ${\rm D_\mu}$ denotes the total derivative operator with respect to $\mu\in\{t,z,\bar z\}$,
\[
{\rm D_\mu}=\p_\mu+\psi^j_\mu\p_{\psi^j}+\psi^j_{\mu\nu}\p_{\psi^j_\nu}
+\psi^j_{\mu\nu\kappa}\p_{\psi^j_{\nu\kappa}}+\psi^j_{\mu\nu\kappa\lambda}\p_{\psi^j_{\nu\kappa\lambda}}+\cdots.
\]
Expanding the invariance condition for~$\mathcal V_F$, we obtain
\begin{gather}\label{eq:InvCondGenVortClass}
\eta^{i,tz\bar z}\!
+\eta^{j,zz\bar z}F^i_{\psi^j_{zz\bar z}}\!\!+\eta^{j,z\bar z\bar z}F^i_{\psi^j_{z\bar z\bar z}}\!\!
+\eta^{j,z\bar z}F^i_{\psi^j_{z\bar z}}\!\!
+\eta^{j,\mu}F^i_{\psi^j_\mu}+\eta^jF^i_{\psi^j}\!
+\tau F^i_t+\xi F^i_z+\bar\xi F^i_{\bar z}=0
\end{gather}
whenever $\psi^i_{tz\bar z}+F^i=0$.
After further expanding the system~\eqref{eq:InvCondGenVortClass} and substituting $-F^i$ for $\psi^i_{tz\bar z}$,
we split it with respect to the third- and second-order derivatives of $\psi$ that are not among the arguments of $F$;
the other jet variables are not appropriate to be used in the course of splitting.
More specifically, separately collecting coefficients of summands involving
$\psi^i_{tt\bar z}$ and $\psi^i_{ttz}$,
$\psi^i_{t\bar z\bar z}$ and $\psi^i_{tzz}$,
$\psi^i_{t\bar z}$ and $\psi^i_{tz}$,
each of which originates only from the expansion of~$\eta^{i,tz\bar z}$,
results in the determining equations
\begin{gather*}
{\rm D}_z\tau={\rm D}_{\bar z}\tau=0,\quad
{\rm D}_z\bar\xi={\rm D}_{\bar z}\xi=0,\quad
\eta^i_{\psi^i\psi^j}=\eta^i_{z\psi^i}=\eta^i_{\bar z\psi^i}=0,
\end{gather*}
respectively.
The simultaneous integration of these equations gives the required form of the components of~$Q$.
\end{proof}

In other words, for an arbitrary system from the class~$\mathcal V$,
any of its Lie symmetry vector fields is projectable to the space coordinatized by the independent variables $(t,z,\bar z)$
and is affine in $\psi$.

\begin{lemma}\label{lem:LieSymVecFieldClassHatV}
Any Lie symmetry vector field of an arbitrary system from the class~$\hat{\mathcal V}$ is of the form
\begin{gather*}
Q=\tau\p_t+\big(\gamma z+\delta\big)\p_z+\big(\bar\gamma\bar z+\bar\delta\big)\p_{\bar z}\\
\hphantom{Q=}
+\big((\gamma+\bar\gamma-\tau_t)\psi^j+\tfrac12(a_j^{\,-1}\gamma_t+\bar a_j^{\,-1}\bar\gamma_t)z\bar z+a_j^{\,-1}\delta_t\bar z+\bar a_j^{\,-1}\bar\delta_tz+\phi^j\big)\p_{\psi^j},
\end{gather*}
where $\tau=\tau(t)$ and $\phi^j=\phi^j(t)$ (resp.\ $\gamma=\gamma(t)$ and $\delta=\delta(t)$)
are real-valued (resp.\ complex-valued) smooth functions of~$t$,
and $\mathop{\rm Arg}\gamma_t=\pm\mathop{\rm Arg}a_j$
with $\mathop{\rm Arg}$ denoting the principal value of the corresponding argument.
\end{lemma}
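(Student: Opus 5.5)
Since $\hat{\mathcal V}\subset\mathcal V$, Lemma~\ref{lem:LieSymVecFieldClassV} already gives $Q=\tau(t)\p_t+\xi(t,z)\p_z+\bar\xi(t,\bar z)\p_{\bar z}+(\Psi^{jk}(t)\psi^k+\Phi^j(t,z,\bar z))\p_{\psi^j}$. The plan is to substitute this form into the invariance condition~\eqref{eq:InvCondGenVortClass}. We specify $F^i=a_i\psi^i_{\bar z}\psi^i_{zz\bar z}+\bar a_i\psi^i_z\psi^i_{z\bar z\bar z}+\hat F^i$ and replace $\psi^i_{tz\bar z}$ by $-F^i$. We then split with respect to the third-order jet variables $\psi^j_{zz\bar z}$ and $\psi^j_{z\bar z\bar z}$, which are now absent from the arbitrary elements~$\hat F^i$, and, where legitimate, with respect to the second-order variables $\psi^j_{zz}$ and $\psi^j_{\bar z\bar z}$, which also do not enter $\hat F^i$.

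First, the prolongation coefficients are computed for the projectable, fibre-affine field. One has $\eta^{j,zz\bar z}=(\Psi^{jk}-2\xi_z-\bar\xi_{\bar z})\psi^k_{zz\bar z}-\xi_{zz}\psi^j_{z\bar z}+\Phi^j_{zz\bar z}$ and the analogous expression for $\eta^{j,z\bar z\bar z}$. The coefficient $\eta^{i,tz\bar z}$ contains $(\Psi^{ik}-\tau_t-\xi_z-\bar\xi_{\bar z})\psi^k_{tz\bar z}$, plus terms $-\xi_t\psi^i_{zz\bar z}-\bar\xi_t\psi^i_{z\bar z\bar z}$, terms with $\xi_{tz}\psi^i_{z\bar z}$ and $\xi_{zz}\psi^i_{t\bar z}$, and $\Phi^i_{tz\bar z}$. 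After substitution of $\psi^k_{tz\bar z}=-F^k$, we collect the coefficients of $\psi^k_{zz\bar z}$ for $k\ne i$. These give $\Psi^{ik}a_k\psi^k_{\bar z}=0$, hence $\Psi^{ik}=0$ for $i\ne k$ since $a_k\ne0$.

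Next, we collect the coefficient of $\psi^i_{zz\bar z}$, which must vanish identically in the remaining jet variables. The part multiplying $\psi^i_{\bar z}\psi^i_{zz\bar z}$ yields, after using $\eta^{i,\bar z}=\Psi^{ii}\psi^i_{\bar z}+\Phi^i_{\bar z}-\bar\xi_{\bar z}\psi^i_{\bar z}$, the relation $a_i(\Psi^{ii}-\bar\xi_{\bar z}-2\xi_z-\bar\xi_{\bar z}+\Psi^{ii})=a_i(\Psi^{ii}-\tau_t-\xi_z-\bar\xi_{\bar z})$, i.e.\ $\Psi^{ii}=\xi_z+\bar\xi_{\bar z}-\tau_t$ (the conjugate relation is consistent). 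Since $\Psi^{ii}$ depends on~$t$ only, $\xi_z$ is a function of~$t$ alone, giving $\xi=\gamma(t)z+\delta(t)$. The $\psi^i$-independent part of the coefficient gives $a_i\Phi^i_{\bar z}-\xi_t=0$, i.e.\ $\Phi^i_{\bar z}=a_i^{-1}(\gamma_tz+\delta_t)$, and the conjugate equation gives $\Phi^i_z=\bar a_i^{-1}(\bar\gamma_t\bar z+\bar\delta_t)$.

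The main obstacle is the compatibility of these two equations with reality of~$\Phi^i$. Cross-differentiation gives $a_i^{-1}\gamma_t=\bar a_i^{-1}\bar\gamma_t$, i.e.\ $a_i^{-1}\gamma_t$ is real, which is exactly the condition $\mathop{\rm Arg}\gamma_t=\pm\mathop{\rm Arg}a_i$ (up to the principal-value convention, including $\gamma_t=0$). Integration then gives $\Phi^i=\tfrac12(a_i^{-1}\gamma_t+\bar a_i^{-1}\bar\gamma_t)z\bar z+a_i^{-1}\delta_t\bar z+\bar a_i^{-1}\bar\delta_tz+\phi^i(t)$, with $\phi^i$ real. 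Some care will be needed to make sure the splitting is legitimate, i.e.\ that no hidden terms from $\hat F^i$ contaminate the coefficients of $\psi^i_{zz\bar z}$. This holds because $\hat F^i$ and its derivatives involve only jet variables of order at most two, excluding $\psi_{zz}$ and $\psi_{\bar z\bar z}$. The remaining terms of the invariance condition involve $\hat F^i$ and are left as classifying conditions for the subclass, so no further constraints are imposed at this stage.
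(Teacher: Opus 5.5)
Your proposal is correct and follows the paper's proof essentially step for step. Like the paper, you substitute the form from Lemma~\ref{lem:LieSymVecFieldClassV}, split with respect to $\psi^j_{zz\bar z}$, $\psi^j_{z\bar z\bar z}$ and then $(\psi^k_z,\psi^k_{\bar z})$ to get $\Psi^{ij}=\delta_{ij}(\xi_z+\bar\xi_{\bar z}-\tau_t)$ and $a_i\Phi^i_{\bar z}=\xi_t$, deduce $\xi_{zz}=0$, and integrate under the compatibility condition $\bar a_i\gamma_t\in\mathbb R$. One bookkeeping slip: for $k\ne i$ the coefficient of $\psi^k_{zz\bar z}$ is $\Psi^{ik}(a_i\psi^i_{\bar z}-a_k\psi^k_{\bar z})$, since you dropped the contribution of $a_i\eta^{i,zz\bar z}\psi^i_{\bar z}$, but splitting this still gives $\Psi^{ik}=0$.
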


\begin{proof}
The class~$\hat{\mathcal V}$ is a subclass of~$\mathcal V$.
Hence we can substitute the expressions for the components of Lie symmetry vector fields from Lemma~\ref{lem:LieSymVecFieldClassV}
for the arbitrary elements~$F^i$ within the subclass~$\hat{\mathcal V}$,
$F^i=a_i\psi^i_{\bar z}\psi^i_{zz\bar z}+\bar a_i\psi^i_z\psi^i_{z\bar z\bar z}+\hat F^i$, into~\eqref{eq:InvCondGenVortClass}
to derive the condition of invariance of a system in~$\hat{\mathcal V}$ with respect a vector field~$Q$,
\begin{gather}\label{eq:InvCondHatGenVortClass}
\begin{split}
&\eta^{i,tz\bar z}
+a_i\eta^{i,zz\bar z}\psi^i_{\bar z}+\bar a_i\eta^{i,z\bar z\bar z}\psi^i_z
+a_i\eta^{i,\bar z}\psi^i_{zz\bar z}+\bar a_i\eta^{i,z}\psi^i_{z\bar z\bar z}
\\
&
\hphantom{\eta^{i,tz\bar z}}
+\eta^{j,z\bar z}\hat F^i_{\psi^j_{z\bar z}}+\eta^{j,\alpha}\hat F^i_{\psi^j_\alpha}
+\eta^j\hat F^i_{\psi^j}
+\tau\hat F^i_t+\xi\hat F^i_z+\bar\xi\hat F^i_{\bar z}=0
\end{split}
\end{gather}
whenever $\psi^i_{tz\bar z}+a_i\psi^i_{\bar z}\psi^i_{zz\bar z}+\bar a_i\psi^i_z\psi^i_{z\bar z\bar z}+\hat F^i=0$.

We expand the invariance condition~\eqref{eq:InvCondHatGenVortClass},
substitute $-a_i\psi^i_{\bar z}\psi^i_{zz\bar z}-\bar a_i\psi^i_z\psi^i_{z\bar z\bar z}-\hat F^i$
for the derivative~$\psi^i_{tz\bar z}$ whenever it appears
and split the obtained equality with respect to the third-order derivatives~$\psi^j_{zz\bar z}$ and~$\psi^j_{z\bar z\bar z}$.
(The rest of the jet variables are not appropriate for splitting since they are in the arguments of $\hat{F}^i$
or have been used in the course of splitting in Lemma~\ref{lem:LieSymVecFieldClassV}.)
The equations corresponding summands involving~$\psi^j_{zz\bar z}$ or~$\psi^j_{z\bar z\bar z}$
can be further split with respect to~$(\psi^k_z,\psi^k_{\bar z})$,
which leads to the following system on the parameters in $\eta^i$:
\begin{subequations}
\begin{gather}\label{eq:constPsi}
\Psi^{ij}=-\tau_t+\xi_z+\bar{\xi}_{\bar z},\quad i=j,\qquad \Psi^{ij}=0,\quad i\ne j,
\\
a_i\Phi^i_{\bar z}=\xi_t,\quad
\bar a_i\Phi^i_z=\bar{\xi}_t.\label{eq:Phii}
\end{gather}
Since $\Psi^{ii}$ depends only on $t$,
differentiating~\eqref{eq:constPsi} with respect to~$z$ results in the equation~$\xi_{zz}=0$,
whose general solution is $\xi=\gamma z+\delta$,
where $\gamma=\gamma(t)$ and~$\delta=\delta(t)$ are smooth complex-valued functions.
Therefore, $\Psi^{ii}=-\tau_t+\gamma+\bar\gamma$,
and the equations~\eqref{eq:Phii} are expanded as
\begin{gather}\label{eq:PhiiExpanded}
a_i\Phi^i_{\bar z}=\gamma_tz+\delta_t,\quad
\bar a_i\Phi^i_z=\bar\gamma_t\bar z+\bar\delta_t.
\end{gather}
\end{subequations}
The compatibility condition $\Phi^i_{\bar zz}=\Phi^i_{z\bar z}$ reduces to
$\bar a_i\gamma_t=a_i\bar\gamma_t$, which means that $\bar a_i\gamma_t\in\mathbb R$,
i.e., $\mathop{\rm Arg}\gamma_t=\pm\mathop{\rm Arg}a_i$.
Integrating~\eqref{eq:PhiiExpanded} leads to the required form of the vector field~$Q$.
\end{proof}

Any system of the form~\eqref{eq:mLaysModCompl} belongs to the class~$\hat{\mathcal V}$ and expands into
\begin{gather}\label{eq:MLayModEnhanced}
\begin{split}
\mathcal M^{\rm c}_\theta\colon\quad
&\psi^i_{tz\bar z}-2{\rm i}(\psi^i_z\psi^i_{z\bar z\bar z}-\psi^i_{\bar z}\psi^i_{zz\bar z})+\check F^i=0,\quad
i=1,\dots,m,\\
&\check F^i:=\frac\beta4(\psi^i_z+\psi^i_{\bar z})
+\frac14\sum_{k\in I_i}f_{ik}\big(\psi^i_t-\psi^k_t-2{\rm i}(\psi^k_z\psi^i_{\bar z}-\psi^i_z\psi^k_{\bar z})\big)
\end{split}
\end{gather}
with $I_i:=\{i-1,i+1\}$, $\beta,f_{i,i-1},f_{i,i+1}\ne0$, except $f_{1,0}:=0$ and~$f_{m,m+1}=0$.
Recall that $\theta:=(f_{i+1,i},f_{i,i+1},i=1,\dots,m-1,\beta)$.	

\begin{lemma}\label{lem:LieInvAlgMLaysModelCompl}
The maximal Lie invariance algebra~$\mathfrak g^{\rm c}$ of any system~$\mathcal M^{\rm c}_\theta$ of the form~\eqref{eq:MLayModEnhanced} is spanned by the vector fields
\begin{gather*}
\p_t,\quad
\p_z-\p_{\bar z},\quad
\chi(t)(\p_z+\p_{\bar z})-\frac1{2\rm i}(z-\bar z)\chi_t\big(\p_{\psi^1}+\dots+\p_{\psi^m}\big),\\
\p_{\psi^1},\ \dots,\ \p_{\psi^m},\quad
\kappa(t)\big(\p_{\psi^1}+\dots+\p_{\psi^m}\big),
\end{gather*}
where $\chi$ and $\kappa$ are arbitrary real-valued smooth functions of~$t$.
\end{lemma}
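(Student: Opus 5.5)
The plan is to specialize Lemma~\ref{lem:LieSymVecFieldClassHatV} to the systems~\eqref{eq:MLayModEnhanced} and then split the remaining invariance condition with respect to the jet variables that enter~$\check F^i$. Comparing~\eqref{eq:MLayModEnhanced} with~\eqref{eq:HatGenVorClass}, we have $a_i=2{\rm i}$ and $\bar a_i=-2{\rm i}$ for all~$i$, and $\hat F^i=\check F^i$. The compatibility condition $\bar a_i\gamma_t\in\mathbb R$ then says that $\gamma_t$ is purely imaginary. With $a_j^{-1}=-\tfrac{\rm i}2$, Lemma~\ref{lem:LieSymVecFieldClassHatV} gives
\[
\eta^j=(\gamma+\bar\gamma-\tau_t)\psi^j+\Phi,\qquad
\Phi:=-\tfrac{\rm i}4(\gamma_t-\bar\gamma_t)z\bar z-\tfrac{\rm i}2\delta_t\bar z+\tfrac{\rm i}2\bar\delta_tz,
\]
together with an additional summand~$\phi^j(t)$. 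The key simplification is that the $(z,\bar z)$-dependent part~$\Phi$ is the same for every~$j$.

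Next I will substitute this form into~\eqref{eq:InvCondHatGenVortClass} with $\hat F^i=\check F^i$, replace~$\psi^i_{tz\bar z}$ using the system, and split with respect to $\psi^k_t$, $\psi^k_z$, $\psi^k_{\bar z}$, $\psi^k_{z\bar z}$ and their products. I will organize the computation as a weight count under the scalings induced by~$Q$.

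First, the principal part $\psi^i_{tz\bar z}$ acquires the factor $\Psi^{ii}-\tau_t-\gamma-\bar\gamma=-2\tau_t$. The coupling terms $f_{ik}(\psi^i_t-\psi^k_t)$ acquire $\Psi^{ii}-\tau_t=\gamma+\bar\gamma-2\tau_t$. Equating these coefficients of~$\psi^k_t$, which is possible because $f_{ik}\ne0$, gives $\gamma+\bar\gamma=0$.

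Second, the $\beta$-term $\tfrac\beta4(\psi^i_z+\psi^i_{\bar z})$ must scale with the same factor $-2\tau_t$. Its two summands acquire $-\tau_t-\gamma$ and $-\tau_t-\bar\gamma$. This forces $\tau_t=\gamma=\bar\gamma$, so with $\gamma+\bar\gamma=0$ we obtain $\tau_t=0$ and $\gamma=0$; here $\beta\ne0$ is used.

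Third, the quadratic coupling terms $\psi^k_z\psi^i_{\bar z}-\psi^i_z\psi^k_{\bar z}$ are invariant once $\gamma=0$. Their prolongation contributions $\eta^{i,\bar z}$ and $\eta^{k,z}$ involve $\Phi_z$ and $\Phi_{\bar z}$, and these coincide for $i$ and~$k$, so the linear terms in~$\psi_z$ and~$\psi_{\bar z}$ cancel after antisymmetrization.

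This leaves the terms free of derivatives of~$\psi$. They are
\[
\Phi_{tz\bar z}+\tfrac\beta4(\Phi_z+\Phi_{\bar z})+\tfrac14\sum_{k\in I_i}f_{ik}(\phi^i_t-\phi^k_t)=0,
\]
together with any leftover products of $\Phi$-derivatives from the Poisson bracket, which I will check vanish for $\gamma=0$. With $\gamma=0$ we have $\Phi_{tz\bar z}=0$ and $\Phi_z+\Phi_{\bar z}=\tfrac{\rm i}2(\bar\delta_t-\delta_t)=\mathop{\rm Im}\delta_t$. Since this equation must hold for every~$i$, I will first sum it with the weights~$\mathsf W$. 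By the $\mathsf W$-symmetry of~$\mathsf F$ (Section~\ref{sec:PropertiesOfMatrixF}) the coupling terms cancel, which gives $\beta\mathop{\rm Im}\delta_t=0$, so $\mathop{\rm Im}\delta$ is constant. Then $\mathsf F\phi_t=0$, hence $\phi_t\in\ker\mathsf F=\langle\bar1\rangle$ and $\phi^j=\kappa(t)+c_j$. Writing $\delta=\chi(t)+{\rm i}c$ gives exactly $\tau=c_0$, the field $\chi(\p_z+\p_{\bar z})$ with the $\psi$-shift $\tfrac{\rm i}2\chi_t(z-\bar z)\bar1=-\tfrac1{2\rm i}(z-\bar z)\chi_t\bar1$, the field $\p_z-\p_{\bar z}$ up to a constant multiple, and the fields $\p_{\psi^j}$ and $\kappa\bar1\cdot\p_\psi$. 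Conversely, each of the listed fields is verified directly to be a symmetry.

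The main obstacle is the careful bookkeeping in the third step. The prolongations of the $z\bar z$-, $z$- and $\bar z$-terms in~$\eta^j$ generate many cross terms with the nonlinear coupling $f_{ik}(\psi^k_z\psi^i_{\bar z}-\psi^i_z\psi^k_{\bar z})$ and with the bracket term, and one must confirm that they cancel because~$\Phi$ does not depend on~$j$. I would first verify this cancellation symbolically on the terms bilinear in first derivatives before carrying out the remaining splitting.
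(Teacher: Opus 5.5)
There is a genuine error in your third step. Apply $Q_{(1)}$ to the coupling bracket $-\tfrac{\rm i}2f_{ik}(\psi^k_z\psi^i_{\bar z}-\psi^i_z\psi^k_{\bar z})$ in~$\check F^i$. The $\Phi$-parts of $\eta^{i,z}$, $\eta^{i,\bar z}$, $\eta^{k,z}$ and $\eta^{k,\bar z}$ give
\[
-\tfrac{\rm i}2f_{ik}\big(\Phi_z(\psi^i_{\bar z}-\psi^k_{\bar z})-\Phi_{\bar z}(\psi^i_z-\psi^k_z)\big),
\]
and this is nonzero whenever $\delta_t\ne0$. The $j$-independence of $\Phi$ only turns these terms into differences $\psi^i-\psi^k$; antisymmetry removes nothing at linear order. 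There are also no products of $\Phi$-derivatives to check, since the invariance condition is linear in~$Q$.

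What actually cancels these terms is missing from your bookkeeping. Your weight count treats $\tau$, $\gamma$, $\delta$ as constants, but $\xi$ depends on~$t$, so $\eta^{j,t}$ contains $-\xi_t\psi^j_z-\bar\xi_t\psi^j_{\bar z}$. Hence the linear coupling $\tfrac14f_{ik}(\eta^{i,t}-\eta^{k,t})$ contributes $-\tfrac14f_{ik}\big(\xi_t(\psi^i_z-\psi^k_z)+\bar\xi_t(\psi^i_{\bar z}-\psi^k_{\bar z})\big)$. The sum vanishes exactly because $2{\rm i}\Phi_{\bar z}=\xi_t$ and $-2{\rm i}\Phi_z=\bar\xi_t$, which is~\eqref{eq:Phii} with $a_i=2{\rm i}$.

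You make two omissions: you drop the $\xi_t$-terms, and you declare the bracket's $\Phi$-terms zero. These happen to compensate, so the residual equations you write down are the right ones, but the mechanism you state is false. If you kept only one of the two sets of terms, the coefficient of $\psi^k_z$ would give $f_{ik}\xi_t=0$, hence $\delta_t=0$. You would then lose the generalized boosts $\mathcal P^x(\chi)$ with nonconstant~$\chi$.

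The same $\xi_t$-terms also enter the coefficients of $\psi^i_z$ and $\psi^i_{\bar z}$ that you use in step two, so that step relies on this cancellation too. The other $t$-derivative terms missed by the weight count, $-\tau_{tt}\psi^i_{z\bar z}$ and $\tfrac14f_{ik}\Psi_t(\psi^i-\psi^k)$, are harmless.

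Once this is repaired, the rest of your plan works. Steps one and two give the equations $f_{ik}(\gamma+\bar\gamma)=0$ and $\beta(\gamma+\tau_t)=\beta(\bar\gamma+\tau_t)=0$ from~\eqref{eq:SplitInvCond}. Your residual equation is~\eqref{eq:ReducedInvCond} with $\gamma=0$.

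Your closing argument differs from the paper's. The paper shows $\bar1\notin\mathop{\rm im}\mathsf F$ by solving the triangular system recursively and reaching a sign contradiction. You instead pair with $\mathsf W\bar1$, which is a left null vector of~$\mathsf F$ because $w_if_{i,i+1}=w_{i+1}f_{i+1,i}$ for $w_i=d_i^{-2}$. This is shorter and isolates exactly what is needed, namely $\sum_iw_i\ne0$. That condition holds under the positivity assumption, which the paper's recursion also uses.
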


\begin{proof}
We use the fact the system~$\mathcal M^{\rm c}_\theta$ belongs to the class~$\hat{\mathcal V}$
and specify Lemma~\ref{lem:LieSymVecFieldClassHatV} for~$\mathcal M^{\rm c}_\theta$,
substituting $\check F^i$ and $2{\rm i}$ for $\hat F^i$ and~$a_i$, respectively.
In particular, we proceed with further deriving the constraints on the parameter functions~$\tau$, $\gamma$, $\delta$ and $\phi^j$
from the specified invariance condition~\eqref{eq:InvCondHatGenVortClass},
\begin{gather}\label{eq:InvCondMLayers}
\eta^{i,tz\bar z}
+2{\rm i}\eta^{i,zz\bar z}\psi^i_{\bar z}-2{\rm i}\eta^{i,z\bar z\bar z}\psi^i_z
+2{\rm i}\eta^{i,\bar z}\psi^i_{zz\bar z}-2{\rm i}\eta^{i,z}\psi^i_{z\bar z\bar z}
+\sum_{j=i-1}^{i+1}\eta^{j,\mu}\check F^i_{\psi^j_\mu}=0
\end{gather}
whenever $\psi^i_{tz\bar z}+2{\rm i}\psi^i_{zz\bar z}-2{\rm i}\psi^i_{z\bar z\bar z}+\check F^i=0$.

After expanding, substituting $-2{\rm i}\psi^i_{zz\bar z}+2{\rm i}\psi^i_{z\bar z\bar z}-\check F^i$ for $\psi^i_{tz\bar z}$
and simplifying, the system~\eqref{eq:InvCondMLayers} does not contain the second- and third-order derivatives of~$\psi$
and can be split with respect to the first-order derivatives of~$\psi$,
which results in the equations
\begin{gather}\label{eq:SplitInvCond}
\begin{split}&
f_{i,i-1}(\bar\gamma+\tau_t)=0,\quad
f_{i,i+1}(\gamma+\tau_t)=0,\quad
\beta(\bar\gamma+\tau_t)=0,\quad
\beta(\gamma+\tau_t)=0,\\&
f_{i,i-1}(\gamma+\bar\gamma)=0,\quad
f_{i,i+1}(\gamma+\bar\gamma)=0.
\end{split}
\\\label{eq:ReducedInvCond}
f_{i,i-1}(\phi^{i-1}_t-\phi^i_t)
+f_{i,i+1}(\phi^{i+1}_t-\phi^i_t)=
\frac{\rm i}2\beta(\bar\gamma_t\bar z-\gamma_tz+\bar\delta_t-\delta_t).
\end{gather}
In view of the fact that all constants $f_{i,i-1}$, $f_{i,i+1}$ (positive) and $\beta$ are nonzero,
the equations~\eqref{eq:SplitInvCond} imply that $\gamma=-\tau_t$, $\bar\gamma=-\tau_t$ and $\gamma=-\bar\gamma$,
which in turn means $\gamma=\bar\gamma=\tau_t=0$.

We consider the system~\eqref{eq:ReducedInvCond} as a system of linear (algebraic) equations
with respect to the unknowns $(\phi^1_t,\phi^2_t,\dots,\phi^m_t)$.
Its coefficient matrix $A$ is given by $A:=(f_{ij})_{i,j=1}^m$,
where $f_{ii}=-f_{i,i-1}-f_{i,i+1}$ and $f_{ij}=0$ when $|i-j|>1$.
The rank of~$A$ is equal to $m-1$,
and the linear span of the columns of $A$ over~$\mathbb R$
coincides with that of the columns $v_j$, $j\in\{1,\dots,m-1\}$
with the only nonzero entries $v_{j,j}=-f_{j,j+1}$ and $v_{j,j+1}=f_{j+1,j}$.
In view of the Rouch\'e--Capelli theorem and the consistency of the system~\eqref{eq:ReducedInvCond},
the rank of the augmented matrix of this system is equal to the rank of~$A$, i.e.,
the column $\frac{\rm i}2\beta(\bar\delta_t-\delta_t)(1,\dots,1)^{\mathsf T}$,
whose entries are real, belongs to the linear span $\langle v_1,\dots,v_{m-1}\rangle$ over~$\mathbb R$.
Suppose that $\bar\delta_t\ne\delta_t$.
Then the above claim also holds true for the column $(1,\dots,1)^{\mathsf T}$,
which is equivalent to the consistency of the system
\begin{gather*}
-\lambda_1f_{1,2}=1,\quad
-\lambda_2f_{2,3}+\lambda_1f_{2,1}=1,\quad \dots,
\\ \qquad
-\lambda_{m-1}f_{m-1,m}+\lambda_{m-2}f_{m-1,m-2}=1,\quad
\lambda_{m-1}f_{m,m-1}=1
\end{gather*}
with respect to the unknowns $\lambda_1$, \dots, $\lambda_{m-1}$.
Successively solving the first $(m-1)$ equations, we obtain the solution in the recursive form
\[
\lambda_1=-\frac1{f_{1,2}},\quad
\lambda_j=-\frac{1-\lambda_{j-1}f_{j,j-1}}{f_{j,j+1}},\quad j\in\{2,\dots,m-1\},
\]
which inductively implies $\lambda_1\leqslant 0$, \dots, $\lambda_{m-1}\leqslant 0$.
However, this contradicts the last equation of the system.
Hence $\delta_t=\bar\delta_t$, i.e., $\delta=\chi+{\rm i}c$.
This means that in fact the right-hand sides of equations of the system~\eqref{eq:ReducedInvCond} vanish,
and its general solution is $\phi^i=\kappa+b^i$.
Here $\chi=\chi(t)$ and $\kappa=\kappa(t)$ are arbitrary smooth real-valued functions of $t$,
and $b^i$ and $c$ are arbitrary real constants.
\noprint{
Summing up the above arguments we have that
$\tau_t=0$,
$\xi=\chi(t)+{\rm i}c$,
$\bar\xi=\chi(t)-{\rm i}c$ and
$\eta^i=\tfrac1{2\rm i}\chi_t(\bar z-z)+\kappa(t)+b^i$.
This results in the lemma's statement.
}
\end{proof}

Pulling back the vectors fields given in Lemma~\ref{lem:LieInvAlgMLaysModelCompl}
with respect to the change of variables $\tilde t=t$, $z=x+{\rm i}y$, $\bar z=x-{\rm i}y$,
we obtain a spanning set of vector fields for the maximal Lie invariance algebra of the system~\eqref{eq:mLaysMod}.

\begin{theorem}\label{thm:mLaysLieInvAlg}
The maximal Lie invariance algebra~$\mathfrak g$ of the system~\eqref{eq:mLaysMod} is spanned by the vector fields
\begin{gather}\label{eq:mLaysLieInvAlg}
\begin{split}&
\mathcal P^t=\p_t,\quad
\mathcal P^y=\p_y,\quad
\mathcal P^x(\chi)=\chi(t)\p_x-y\chi_t(\p_{\psi^1}+\dots+\p_{\psi^m}),\\&
\mathcal J^1=\p_{\psi^1},\quad
\dots,\quad
\mathcal J^m=\p_{\psi^m},\quad
\mathcal Z(\kappa)=\kappa(t)(\p_{\psi^1}+\dots+\p_{\psi^m}),
\end{split}
\end{gather}
where $\chi$ and $\kappa$ are arbitrary smooth functions of~$t$.
\end{theorem}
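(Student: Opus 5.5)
The plan is to obtain Theorem~\ref{thm:mLaysLieInvAlg} directly from Lemma~\ref{lem:LieInvAlgMLaysModelCompl} by pulling back its vector fields along the real-analytic change of independent variables $z=x+{\rm i}y$, $\bar z=x-{\rm i}y$, $t$ unchanged, $\psi$ unchanged. This change maps the system~\eqref{eq:mLaysMod} to the system~\eqref{eq:mLaysModCompl}, that is, to $\mathcal M^{\rm c}_\theta$ in~\eqref{eq:MLayModEnhanced}. Point transformations conjugate maximal Lie invariance algebras, so $\mathfrak g$ is the pullback of $\mathfrak g^{\rm c}$. Strictly, one should also remark that Lie symmetries of the real system, written in $(t,z,\bar z)$, are exactly the vector fields considered in Lemmas~\ref{lem:LieSymVecFieldClassV}--\ref{lem:LieInvAlgMLaysModelCompl}, with real $\tau,\eta^j$ and conjugate components $\xi,\bar\xi$. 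This holds because the computation there was performed for such fields, so no symmetries are lost or gained.

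Next I would compute the pullbacks using the Wirtinger relations
\[
\p_z=\tfrac12(\p_x-{\rm i}\p_y),\qquad
\p_{\bar z}=\tfrac12(\p_x+{\rm i}\p_y).
\]
These give $\p_z+\p_{\bar z}=\p_x$ and $\p_z-\p_{\bar z}=-{\rm i}\p_y$. Moreover $\tfrac1{2{\rm i}}(z-\bar z)=y$. The generators then map as follows:
\begin{itemize}
\item $\p_t\mapsto\mathcal P^t$;
\item $\p_z-\p_{\bar z}\mapsto-{\rm i}\p_y$, which is the real field $\mathcal P^y$ up to a constant multiplier;
\item $\chi(\p_z+\p_{\bar z})-\tfrac1{2{\rm i}}(z-\bar z)\chi_t\sum_j\p_{\psi^j}\mapsto\mathcal P^x(\chi)$;
\item $\p_{\psi^j}\mapsto\mathcal J^j$ and $\kappa\sum_j\p_{\psi^j}\mapsto\mathcal Z(\kappa)$.
\end{itemize}

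The only point needing care is the second generator. In Lemma~\ref{lem:LieInvAlgMLaysModelCompl} the field $\p_z-\p_{\bar z}$ arises from the real constant $c$ in $\xi=\chi+{\rm i}c$, so it really stands for ${\rm i}c(\p_z-\p_{\bar z})=c\,\p_y$. Hence over $\mathbb R$ it yields exactly $\mathcal P^y$. I would state this explicitly to reconcile real and complex spans; this is the main, though minor, obstacle.

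Finally, one could verify directly that each field in~\eqref{eq:mLaysLieInvAlg} leaves~\eqref{eq:mLaysMod} invariant as a sanity check. The fields $\mathcal P^t$, $\mathcal P^y$, $\mathcal J^j$ and $\mathcal Z(\kappa)$ act trivially on $q$ up to constants. Under $\mathcal P^x(\chi)$, $q$ is shifted by a term balanced by the $\beta y$ and $\{\psi,q\}$ terms, because $\psi^i$ is shifted uniformly in $i$ and $\bar1\in\ker\mathsf F$.
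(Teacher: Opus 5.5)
Your proposal is correct and follows the paper's proof, which consists precisely of pulling back the vector fields of Lemma~\ref{lem:LieInvAlgMLaysModelCompl} along $z=x+{\rm i}y$, $\bar z=x-{\rm i}y$. Your remark that $\p_z-\p_{\bar z}$ must be read over $\mathbb R$ as ${\rm i}(\p_z-\p_{\bar z})=\p_y$ (coming from $\xi=\chi+{\rm i}c$) is a precision the paper leaves implicit. One small correction to your optional sanity check: under $\mathcal P^x(\chi)$ the potential vorticity $q$ is actually unchanged and the $\beta y$ term plays no role. Invariance comes instead from the term $-\chi_tq_x$ produced by transforming $\p_t$ cancelling the term $+\chi_tq_x$ produced by the shifted $\psi_y$ in the Poisson bracket.
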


\begin{remark}
The algebra~$\mathfrak g$ is infinite-dimensional.
The vector fields~\eqref{eq:mLaysLieInvAlg} do not constitute a~basis of~$\mathfrak g$
since they are linearly dependent.
There is no natural way to choose a basis in~$\mathfrak g$.
The main reason for this is that there is no natural basis in the space of smooth functions of~$t$.
\end{remark}

\begin{remark}\label{rem:mLayerMLIAGaugeTrans}
The coefficients in the expansion of any element of the algebra~$\mathfrak g$
with respect to the collection of vector fields~\eqref{eq:mLaysLieInvAlg},
$a_1\mathcal P^t+a_2\mathcal P^y+\mathcal P^x(\chi)+c_k\mathcal J^k+\mathcal Z(\kappa)$,
is defined up to the gauge transformations
$\tilde c=c+\epsilon\bar 1$, $\tilde\kappa=\kappa-\epsilon$ with arbitrary $\epsilon\in\mathbb R$,
where $c:=(c_1,\dots,c_m)^{\mathsf T}$, and similarly for~$\tilde c$.
Due to the presence of these gauge transformations, we can set without loss of generality that
{\it
$c\perp_{\mathsf W}^{}\bar 1$ or, equivalently, $c\in\mathop{\rm im}\mathsf F$
in the expansion of any element of the algebra~$\mathfrak g$
with respect to the collection of vector fields~\eqref{eq:mLaysLieInvAlg}.}
At the same time, in particular cases when constant summand~$\alpha$ is singled out in the parameter function~$\kappa$
or, moreover, $\kappa=\alpha$, it may be convenient to change
$c+\alpha\bar 1\to c$ and $\kappa-\alpha\to\kappa$ and neglect the gauge $c\perp_{\mathsf W}^{}\bar 1$.
\end{remark}

Up to antisymmetry of the Lie bracket, the nontrivial commutation relations
between vector fields from the spanning set of~$\mathfrak g$
from Theorem~\ref{thm:mLaysLieInvAlg} are exhausted by
\begin{gather*}
[\mathcal P^t,\mathcal P^x(\chi)]=\mathcal P^x(\chi_t),\quad
[\mathcal P^t,\mathcal Z(\kappa)]=\mathcal Z(\kappa_t),\quad
[\mathcal P^x(\chi),\mathcal P^y]=\mathcal Z(\chi_t).
\end{gather*}
It is thus obvious that the algebra~$\mathfrak g$ is solvable and not nilpotent.
Its nilradical is
$\mathfrak n:=\langle\mathcal P^y,\mathcal P^x(\chi),\mathcal J^1,\dots,\mathcal J^m,\mathcal Z(\kappa)\rangle$.
To classify subalgebras of $\mathfrak g$ it is convenient to
decompose the algebra $\mathfrak g$ into the semidirect sum of
the subalgebras as follows:
\begin{gather*}
\mathfrak g=\langle\mathcal P^t,\mathcal P^y\rangle
\lsemioplus\langle\mathcal P^x(\chi),\mathcal J^1,\dots,\mathcal J^m,\mathcal Z(\kappa)\rangle.
\end{gather*}

Since the Hamiltonian structure of the system~\eqref{eq:mLaysModMatrixForm} is not canonical,
see Section~\ref{sec:HamiltonianStructure},
it establishes a connection between conserved quantities and Hamiltonian symmetries in a specific way.
The Hamiltonian operator~$\mathfrak H$ of the system~\eqref{eq:mLaysModMatrixForm}
maps characteristics of local conservation laws
to the $q$-component tuple, $\hat{\mathcal Q}q=(\hat{\mathcal Q}q^1,\dots,\hat{\mathcal Q}q^m)^{\mathsf T}$,
of prolonged evolution forms, $\hat{\mathcal Q}$, of local infinitesimal Hamiltonian symmetries, $\mathcal Q$.
Given a generalized symmetry vector field ${\mathcal Q=\tau\p_t+\xi^x\p_x+\xi^y\p_y+\eta^i\p_{\psi^i}}$,
each of whose components is a differential functions of~$\psi$, i.e.,
a smooth function of the independent variables~$t$, $x$ and~$y$
and a finite number of derivatives of~$\psi$ with respect to these variables,
we have $\hat{\mathcal Q}q=(\mathrm D_x^2+\mathrm D_y^2+\mathsf F)(\eta-\tau\psi_t-\xi^x\psi_x-\xi^y\psi_y)$.
Hence
\begin{gather*}
-\hat{\mathcal P}^tq=q_t=\mathfrak H(-\mathsf W\psi),\quad
-\hat{\mathcal P}^x(\chi)q=\chi(t)q_x=\mathfrak H\big(-\chi(t)y\mathsf W\bar1\big),
\\
-\hat{\mathcal Z}(\kappa)q=0=\mathfrak H\big(\kappa(t)\mathsf W\bar1\big),\quad
0=\mathfrak H\big(\Phi^i_{q^i}\delta_i\big),
\\
-\hat{\mathcal P}^yq=q_y,\quad -\hat{\mathcal J}^iq=-\mathsf F\delta_i,
\end{gather*}
where $\delta_i=(\delta_{ij},j=1,\dots,m)^{\mathsf T}$ with the Kronecker delta~$\delta_{ij}$.
Thus, we have
\begin{itemize}\itemsep=0ex
\item[$\diamond$]
the standard relation of the invariance with respect to shifts in time ($\mathcal Q=\mathcal P^t$)
with the conservation of energy (${\lambda=-\mathsf W\psi}$),
\item[$\diamond$]
the standard relation of the invariance with respect to generalized (time-dependent) shifts in~$x$
($\mathcal Q=\mathcal P^x(\chi)$)
with the conservation of generalized total zonal momentums ($\lambda=-\chi(t)y\mathsf W\bar1$) and
\item[$\diamond$]
the degenerate (via zero) association of simultaneous equal time-dependent shifts of~$\psi^i$
($\mathcal Q=\mathcal Z(\kappa)$)
with generalized total weighted circulations ($\lambda=\kappa(t)\mathsf W\bar1$),
which are conserved time-dependent Casimir functionals.
\end{itemize}
All the Casimir functionals (\smash{$\lambda=\Phi^i_{q^i}\delta_i$}) are mapped by~$\mathfrak H$ to zero
and, except the total weighted circulation ($\lambda=\mathsf W\bar1$), has no relation to symmetries.
The algebra of Hamiltonian Lie symmetries of the system~\eqref{eq:mLaysModMatrixForm}
is $\langle\mathcal P^t,\mathcal P^x(\chi),\mathcal Z(\kappa)\rangle$,
whereas $\mathcal P^y$ and $\mathcal J^1$, \dots, $\mathcal J^m$ are not such symmetries
and thus have no counterparts among conserved quantities of~\eqref{eq:mLaysModMatrixForm}.

\section{Point symmetry pseudogroup}\label{sec:mLaysPointSymGroup}

To find the point symmetry pseudogroup of the multi-layer quasi-geostrophic model~\eqref{eq:mLaysMod},
we use the megaideal-based version of the algebraic method~\cite{bihl2012a}.
The description and applications of this technique can be found, e.g., in~\cite{boyk2024a,opan2020a}
and references therein.
It is an extension of the automorphism-based version of the algebraic method 
first introduced by Hydon~\cite{hydo2000A,hydo2000b}.

Both versions are based on the idea that the pushforward~$\Phi_*$ of any point-symmetry transformation~$\Phi$
of the system of differential equations is an automorphism of its maximal Lie invariance algebra~$\mathfrak g$,
$\Phi_*\mathfrak g=\mathfrak g$.
But Hydon's approach requires the explicit computation of the automorphisms of the algebra~$\mathfrak g$,
which certainly cannot be done if its dimension is not finite;
even for a finite-dimensional Lie algebra, the explicit construction of its automorphisms can be highly complicated.
In contrast, the megaideal-based version of the algebraic method employs the notion of a megaideal~$\mathfrak m$ of a Lie algebra~$\mathfrak g$
(which is also called a fully characteristic ideal in \cite[Exrecise~14.1.1]{hilg2012A}).
Recall that a vector subspace $\mathfrak m$ of a Lie algebra $\mathfrak g$ is called {\it megaideal}
if it is invariant under all automorphisms of $\mathfrak g$~\cite{popo2003a},
that is, $\mathfrak T\mathfrak m\subseteq\mathfrak m$ for all $\mathfrak T\in{\rm Aut}(\mathfrak g)$.

The first step in the megaideal-based version of the algebraic method
is to construct a certain set of megaideals of the maximal Lie invariance algebra $\mathfrak g$.
Then we consider a point transformation $\Phi$ acting on the space of independent and dependent variables
and using property $\Phi_*\mathfrak m\subseteq\mathfrak m$ find the general constraints imposed on components of $\Phi$
for each megaideal $\mathfrak m$ from the found set.
After deriving all possible constraints on the components of $\Phi$ in such a manner using all found megaideals,
we check if the condition $\Phi_*\mathfrak g\subseteq\mathfrak g$ is satisfied,
which guarantees that no more constraints within the algebraic method can be found.
Then one finishes the computation by applying the direct method,
which is usually not computationally expensive.

Elementary properties of megaideals allow us to find a large set of megaideals of the algebra~$\mathfrak g$
without any knowledge of the automorphism group~${\rm Aut}(\mathfrak g)$ of~$\mathfrak g$.
For example, all elements of the derived series and lower and upper central series of~$\mathfrak g$,
including the center~$\mathfrak z$ and the derivative~$\mathfrak g'$ of~$\mathfrak g$,
as well as the radical~$\mathfrak r$ and nilradical~$\mathfrak n$ of~$\mathfrak g$, are megaideals of~$\mathfrak g$.
Moreover, if $\mathfrak i_1$ and $\mathfrak i_2$ are megaideals of~$\mathfrak g$,
then so are $\mathfrak i_1+\mathfrak i_2$, $\mathfrak i_1\cap\mathfrak i_2$ and $[\mathfrak i_1, \mathfrak i_2]$,
i.e., sums, intersections and commutators of megaideals are again megaideals.
Note that for computing point symmetry transformations,
we should exclude megaideals that are sums of two proper submegaideals from the consideration,
since they do not provide essentially new constraints on the components of the transformation 
in comparison to those obtained from the summands.
If $\mathfrak i_2$ is a megaideal of $\mathfrak i_1$ and $\mathfrak i_1$ is a megaideal of~$\mathfrak g$,
then $\mathfrak i_2$ is a megaideal of~$\mathfrak g$,
i.e., megaideals of megaideals are also megaideals.
The following lemma from~\cite{card2013a} is also used to construct megaideals from known ones.

\begin{lemma}\label{lem:SpaceOfCommutElements}
If~$\mathfrak i_0$, $\mathfrak i_1$ and $\mathfrak i_2$ are megaideals of~$\mathfrak g$,
then the set~$\mathfrak s$ of elements from~$\mathfrak i_0$ whose commutators with
arbitrary elements from~$\mathfrak i_1$ belong to~$\mathfrak i_2$ is also a megaideal of~$\mathfrak g$.
\end{lemma}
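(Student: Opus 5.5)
The plan is to verify directly that $\mathfrak s:=\{v\in\mathfrak i_0\mid [v,w]\in\mathfrak i_2\ \text{for all}\ w\in\mathfrak i_1\}$ is a vector subspace of~$\mathfrak g$ that is mapped into itself by every automorphism of~$\mathfrak g$. Nothing beyond the definition of a megaideal and the fact that automorphisms preserve the Lie bracket is needed.

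First, I will check that $\mathfrak s$ is a subspace. It contains $0$, since $[0,w]=0\in\mathfrak i_2$. It is closed under linear combinations: if $v_1,v_2\in\mathfrak s$ and $a,b$ are scalars, then $av_1+bv_2\in\mathfrak i_0$ because $\mathfrak i_0$ is a subspace. For any $w\in\mathfrak i_1$, bilinearity gives $[av_1+bv_2,w]=a[v_1,w]+b[v_2,w]\in\mathfrak i_2$ because $\mathfrak i_2$ is a subspace.

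Second, I will show invariance. Fix $\mathfrak T\in{\rm Aut}(\mathfrak g)$ and $v\in\mathfrak s$. Then $\mathfrak Tv\in\mathfrak i_0$, since $\mathfrak i_0$ is a megaideal. Take an arbitrary $w\in\mathfrak i_1$. The key point is to write $w$ as $\mathfrak Tw'$ with $w'\in\mathfrak i_1$. Because $\mathfrak T^{-1}$ is also an automorphism and $\mathfrak i_1$ is a megaideal, $w':=\mathfrak T^{-1}w$ lies in~$\mathfrak i_1$. Hence
\[
[\mathfrak Tv,w]=[\mathfrak Tv,\mathfrak Tw']=\mathfrak T[v,w'].
\]
Since $[v,w']\in\mathfrak i_2$ and $\mathfrak i_2$ is a megaideal, $\mathfrak T[v,w']\in\mathfrak i_2$. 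Therefore $\mathfrak Tv\in\mathfrak s$, which gives $\mathfrak T\mathfrak s\subseteq\mathfrak s$.

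The only mildly delicate step is the use of $\mathfrak T^{-1}$ to make every element of~$\mathfrak i_1$ an image under~$\mathfrak T$. This needs $\mathfrak T^{-1}\mathfrak i_1\subseteq\mathfrak i_1$, which holds because the megaideal condition applies to all automorphisms, inverses included. As a consequence we even get $\mathfrak T\mathfrak i_1=\mathfrak i_1$. The argument is purely algebraic, so it works equally well for the infinite-dimensional algebra~$\mathfrak g$ considered in the paper.
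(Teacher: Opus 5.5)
Your proof is correct: the subspace check is immediate, and you handle the one point that needs care, namely writing every $w\in\mathfrak i_1$ as $\mathfrak T w'$ with $w'\in\mathfrak i_1$ by applying the megaideal property to $\mathfrak T^{-1}$. The paper does not prove this lemma itself but quotes it from \cite{card2013a}, and your direct verification is the standard proof of it.
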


Since the algebra~$\mathfrak g$ is solvable the radical~$\mathfrak r$ of~$\mathfrak g$ coincides with~$\mathfrak g$,
\[\mathfrak m_1:=\mathfrak g=
\langle \mathcal P^t,
\mathcal P^y,
\mathcal P^x(\chi),
\mathcal J^1,
\dots,
\mathcal J^m,
\mathcal Z(\kappa)
\rangle.
\]
The next obvious proper megaideals are the derived algebra~$\mathfrak g'$,
the nilradical~$\mathfrak n$ of~$\mathfrak g$,
its derived algebra $\mathfrak n'$
and the center~$\mathfrak z$ of $\mathfrak g$,
which respectively are
\begin{gather*}
\mathfrak g'
=\langle\mathcal P^x(\chi),\mathcal Z(\kappa)\rangle,
\quad
\mathfrak m_2:=\mathfrak n
=\langle\mathcal P^y,\mathcal P^x(\chi),\mathcal J^1,\dots,\mathcal J^m,\mathcal Z(\kappa)\rangle,
\\
\mathfrak n'
=\langle\mathcal Z(\kappa)\rangle,
\quad
\mathfrak m_3:=\mathfrak z
=\langle\mathcal P^x(1),\mathcal J^1,\dots,\mathcal J^m\rangle.
\end{gather*}
Pairwise intersecting the above megaideals, we get the next new ones
\begin{gather*}
\mathfrak m_4:=\mathfrak g'\cap\mathfrak m_3
=\langle\mathcal P^x(1),\mathcal Z(1)\rangle,
\quad
\mathfrak m_5:=\mathfrak n'\cap\mathfrak m_3
=\langle\mathcal Z(1)\rangle.
\end{gather*}
In notations of Lemma~\ref{lem:SpaceOfCommutElements}, let the megaideals~$\mathfrak i_0$ and $\mathfrak i_1$
be~$\mathfrak m_2$ and $\mathfrak m_1$, respectively.
Taking $\mathfrak i_2$ to be $\mathfrak m_5$
we obtain
\begin{gather*}
\mathfrak s=\langle\mathcal P^x(1),\mathcal P^x(t),\mathcal Z(1),\mathcal Z(t)\rangle,
\quad
\mathfrak m_6:=\mathfrak s\cap\mathfrak n'=\langle\mathcal Z(1),\mathcal Z(t)\rangle.
\end{gather*}
The found megaideals that are denoted by $\mathfrak m_i$, $i\in\{1,\dots,6\}$
are essential in computing the point symmetry pseudogroup $G$.

\begin{theorem}\label{thm:mLaysPointSymGroup}
The point symmetry pseudogroup $G$ of the system~\eqref{eq:mLaysMod}
consists of the point transformations of the form
\begin{gather*}
\tilde t    =\varepsilon_1t+T^0,\quad
\tilde x    =\varepsilon_1x+h(t),\quad
\tilde y    =\varepsilon_2y+Y^0,\quad
\tilde{\psi}^i=\varepsilon_2\psi^i-\varepsilon_1\varepsilon_2h_t(t)y+g(t)+\Psi^i,
\end{gather*}
where $\varepsilon_1,\varepsilon_2=\pm1$, $T^0$, $Y^0$ and $\Psi^i$, $i=1,\dots,m$, are arbitrary real constants,
and $h$ and~$g$ are arbitrary smooth functions of~$t$.
\end{theorem}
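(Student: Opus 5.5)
We will follow the megaideal-based version of the algebraic method. Let $\Phi\colon(\tilde t,\tilde x,\tilde y,\tilde\psi)=(T,X,Y,\Psi^i)(t,x,y,\psi)$ be a point symmetry of~\eqref{eq:mLaysMod}. Since $\Phi_*$ maps $\mathfrak g$ onto itself, $\Phi_*\mathfrak m_k\subseteq\mathfrak m_k$ for $k=1,\dots,6$. We will write these inclusions as $\Phi_*Q=\tilde Q$ for basis-like elements $Q$ of each megaideal, with $\tilde Q$ a general element of the same megaideal in tilded variables. Componentwise, this gives the equations $QT=\tilde\tau$, $QX=\tilde\xi^x$, $QY=\tilde\xi^y$, $Q\Psi^i=\tilde\eta^i$.

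We will process the megaideals from smallest to largest.
\begin{itemize}
\item $\mathfrak m_5$: $\Phi_*\mathcal Z(1)=a\tilde{\mathcal Z}(1)$ gives $T,X,Y$ independent of $\sum_j\psi^j$-shifts, and $\sum_j\Psi^i_{\psi^j}=a$.
\item $\mathfrak m_3$: the $\mathcal J^k$ map into $\langle\tilde{\mathcal P}^x(1),\tilde{\mathcal J}^j\rangle$. Hence $T_{\psi^k}=Y_{\psi^k}=0$ and $X_{\psi^k}$ is constant. Moreover, since $\mathcal P^x(1)\mapsto$ an element of the center, $T_x=Y_x=0$ and $X_x$ is constant.
\item $\mathfrak m_4$: $\Phi_*\langle\mathcal P^x(1),\mathcal Z(1)\rangle$ stays in this span. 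This forces $X_{\psi^k}=0$ (because $\mathcal J^k$ has to go to the center, and $\mathcal P^x(1)$-components are then excluded by comparing with $\mathfrak m_4$ and a dimension count on $\mathfrak z$), and it makes $\Psi^i_x$ a constant common to all~$i$.
\item $\mathfrak m_6$: $\mathcal Z(t)$ maps into $\langle\tilde{\mathcal Z}(1),\tilde{\mathcal Z}(\tilde t)\rangle$. Applying this to $T$ yields nothing new, but together with $\mathfrak m_5$ it gives that $\Psi$ is affine in~$\psi$ with $T$-dependence compatible with $\tilde t$ being affine in~$t$.
\item $\mathfrak m_2$: $\mathcal P^y$ goes into the nilradical, so $T_y=0$, $Y_y$ depends on $t$ only, and $X_y=0$ up to the $\mathcal P^x(\chi)$ part.
\item $\mathfrak m_1$: $\Phi_*\mathcal P^t$ must be of the form $a\tilde{\mathcal P}^t+\dots$, whence $T_t$ is constant, so $T=\varepsilon t+T^0$.
\end{itemize}
The commutation relations $[\mathcal P^t,\mathcal P^x(\chi)]=\mathcal P^x(\chi_t)$ and $[\mathcal P^x(\chi),\mathcal P^y]=\mathcal Z(\chi_t)$ are preserved by $\Phi_*$. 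We will use them to tie the scalings of $t$, $x$, $y$ and $\psi$ together. The expected outcome of this stage is
\[
T=c_1t+T^0,\quad X=c_2x+h(t),\quad Y=c_3y+Y^0,\quad \Psi^i=c_4\psi^i+\mu^i_j\psi^j+\text{(terms linear in } y \text{ with coefficient depending on } t)+g(t)+\Psi^i_0.
\]
Here the matrix $\mu$ has zero row sums, and the $y$-coefficient is $-c_4c_2^{-1}\dots h_t$, fixed by requiring $\Phi_*\mathcal P^x(\chi)\in\langle\tilde{\mathcal P}^x,\tilde{\mathcal Z}\rangle$.

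We will then substitute this ansatz into the system~\eqref{eq:mLaysMod} and apply the direct method. Comparing the Laplacian terms with the time-derivative and Poisson-bracket terms fixes the scaling relations: $c_4=c_3c_2$ and, from the bracket, $c_1c_4=c_2c_3$, which gives the time scale. The $\beta y$ term together with $\beta\psi_x$ fixes $c_3^2=1$ and $c_2=c_1$ up to sign. The coupling term $\mathsf F\psi$ is homogeneous of degree zero in derivatives but must scale like $\psi_{xx}$; this forces $c_2^2=1$ and $\mu=0$, since $\mathsf F$ has to commute with $\mathop{\rm diag}$-free mixing that preserves each single-layer bracket. Altogether this yields $c_1=c_2=\varepsilon_1$ and $c_3=c_4=\varepsilon_2$.

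We expect the main obstacle to be excluding the layer-mixing $\mu^i_j$ and fixing the scalings. The algebraic method leaves $\Psi$ as a general affine map on $\psi$ (the center contains all $\mathcal J^k$), so the mixing has to be killed directly. For this we will use that the nonlinearity $\{\psi^i,q^i\}$ is diagonal. Splitting with respect to products like $\psi^j_x\psi^i_{xxy}$ for $j\neq i$ forces $\mu=0$. The zero-eigenvalue and scaling argument on $\mathsf F$ then pins $|c_2|=1$, using $\beta\ne0$ and the nonvanishing of the essential entries of~$\mathsf F$.
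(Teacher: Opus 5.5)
Your overall route (megaideals first, then the direct method) is the paper's, but the algebraic stage cannot give you what you take from it. Each of the point transformations $\tilde x=x+\alpha_k\psi^k$ with $\sum_k\alpha_k=0$, $\tilde x=x+\gamma y$, $\tilde y=y+\delta t$ and $\tilde\psi^i=\psi^i+\epsilon x$ pushes $\mathfrak g$ onto itself. For instance, the first sends $\mathcal J^k\mapsto\tilde{\mathcal J}^k+\alpha_k\tilde{\mathcal P}^x(1)$ and fixes the other spanning fields. The third sends $\mathcal P^t\mapsto\tilde{\mathcal P}^t+\delta\tilde{\mathcal P}^y$ and $\mathcal P^x(\chi)\mapsto\tilde{\mathcal P}^x(\chi)+\tilde{\mathcal Z}(\delta\tilde t\chi_{\tilde t})$. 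These transformations therefore induce automorphisms and preserve every megaideal. So your ``dimension count on $\mathfrak z$'' cannot yield $X_{\psi^k}=0$. What this stage actually delivers is the paper's form: $X=X^1x+X^2y+X^{0k}\psi^k+h(t)$, $Y=Y^1y+Y^2t+Y^0$ and $\Psi^i=A^{ij}\psi^j+A^0x-(Y^1/T^1)h_ty+g(t)+b_i$, with $\sum_kX^{0k}=0$ and equal row sums of $(A^{ij})$. Your direct-method stage starts from $X=c_2x+h$, $Y=c_3y+Y^0$ and a $\Psi$ with no $x$-term, so it never eliminates $X^2$, $X^{0k}$, $Y^2$ or $A^0$. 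The paper removes them explicitly. The derivative $\psi^j_{txy}$ enters only through $\tilde\psi^i_{\tilde t\tilde y\tilde y}$, with coefficient proportional to $A^{ij}{\rm D}_yX/{\rm D}_xX$; this forces ${\rm D}_yX=0$, i.e.\ $X^2=X^{0k}=0$. The coefficients of $\psi^i_y$ and $\psi^{i-1}_y$ produced by the coupling terms then force $Y^2=A^0=0$. Your scaling relations also need repair. The relation $c_4=c_2c_3$ is false: it fails for $\mathscr I^{tx}$, and together with $c_1c_4=c_2c_3$ it would give $c_1=1$, contradicting your own final $c_4=\varepsilon_2$. The correct relations are $c_2^2=c_3^2$ (Laplacian), $c_1c_4=c_2c_3$ (bracket versus $\psi_{txx}$), $c_2^2=1$ (coupling terms) and $c_1c_2=1$ (the term $\beta\psi_x$); the $\beta$-term says nothing about $c_3$.

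The layer-mixing step is where the real obstruction lies. The coefficient of $\psi^j_x\psi^i_{xxy}$ gives only $A^{ii}A^{ij}=0$, which says nothing when $A^{ii}=0$. Collect all monomials $\psi^k_x\psi^l_{xxy}$ after eliminating every $\psi^k_{txx}$ by the original system. This gives $A^{ik}A^{il}=0$ for $k\ne l$ and $A^{ik}\in\{0,X^1Y^1/T^1\}$, so $(A^{ij})$ is a multiple of a permutation matrix $\mathsf P_\sigma$. The coupling terms then force $\sigma$ to map neighbouring layers to neighbouring layers with matching weights. Hence $\sigma={\rm id}$ or $\sigma(i)=m+1-i$, and the latter is possible if and only if $f_{i,i+1}=f_{m+1-i,m-i}$ for all $i$, i.e.\ $\mathsf J\mathsf F\mathsf J=\mathsf F$ for the reversal matrix $\mathsf J$. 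In that case $\tilde\psi=\mathsf J\psi$ is a genuine point symmetry. Since $\mathsf J\bar1=\bar1$, it maps $q$ to $\mathsf Jq$, and the componentwise bracket commutes with permutations. For $m=2$ and $H_1=H_2$, i.e.\ $f_{12}=f_{21}$, it simply swaps the two layers. So no argument can force $\mu=0$ in general: the statement is true only when $\mathsf J\mathsf F\mathsf J\ne\mathsf F$, and otherwise $G$ contains the extra involution $\tilde\psi=\mathsf J\psi$. The paper's proof has the same blind spot. It reads off $A^{ij}=0$ from the coefficients of $\psi^j_{txx}$, which tacitly pairs the $i$th transformed equation with the $i$th original one. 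A correct proof needs the permutation analysis above, together with either the non-centrosymmetry assumption or an enlarged $G$.
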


\begin{proof}
The general form of point transformations acting on the space $\mathbb R^3_{t,x,y}\times\mathbb R^m_\psi$ is
\begin{gather*}
\Phi\colon\quad
(\tilde t,\tilde x,\tilde y,\tilde\psi^1,\dots,\tilde\psi^m)=
(T,X,Y,\Psi^1,\dots,\Psi^m),
\end{gather*}
where $T$, $X$, $Y$, $\Psi^1$, \dots, $\Psi^m$ are smooth functions of $t$, $x$, $y$, $\psi^1$, \dots, $\psi^m$,
whose Jacobian is nonzero.
If $\Phi$ is a point-symmetry transformation of the system~\eqref{eq:mLaysMod},
then $\Phi_*\mathfrak m_i\subseteq\mathfrak m_i$ for all $i\in\{1,\dots,6\}$,
where $\Phi_*$ denotes the pushforward of vector fields by~$\Phi$.
We use this necessary condition for deriving principal constraints on the components of such transformations.

We choose
$\mathcal Z(1)$, $\mathcal Z(t)$, $\mathcal P^x(1)$, $\mathcal J^i$, $\mathcal P^y$ and $\mathcal P^t$
as linearly independent vector fields from $\mathfrak g$ for checking the above necessary condition.
Since
$\mathcal Z(1)\in\mathfrak m_5$,
$\mathcal Z(t)\in\mathfrak m_6$,
$\mathcal P^x(1)\in\mathfrak m_4$,
$\mathcal J^i\in\mathfrak m_3$,
$\mathcal P^y\in\mathfrak m_2$ and
$\mathcal P^t\in\mathfrak m_1$,
the above necessary condition implies that
\begin{subequations}\label{eq:PushFwd}
\begin{gather}
\Phi_*\mathcal Z(1)   =\tilde{\mathcal Z}(\tilde\kappa^1),\label{eq:PushFwdQ1}
\\
\Phi_*\mathcal Z(t)   =\tilde{\mathcal Z}(\tilde\kappa^2)+\tilde{\mathcal Z}(\hat\kappa^2\tilde t),\label{eq:PushFwdQ2}
\\
\Phi_*\mathcal P^x(1)   =\tilde{\mathcal P}^x(\tilde\chi^3)+\tilde{\mathcal Z}(\tilde\kappa^3),\label{eq:PushFwdQ3}
\\
\Phi_*\mathcal J^i=\tilde{\mathcal P}^x(\tilde\chi^4)+\alpha^{4j}\tilde{\mathcal J}^j,\label{eq:PushFwdQ4}
\\
\Phi_*\mathcal P^y   =\beta^5\tilde{\mathcal P}^y+\tilde{\mathcal P}^x(\tilde\chi^5)
+\alpha^{5j}\tilde{\mathcal J}^j+\tilde{\mathcal Z}(\tilde\kappa^5),\label{eq:PushFwdQ5}
\\
\Phi_*\mathcal P^t   =\gamma^6\tilde{\mathcal P}^t+\beta^6\tilde{\mathcal P}^y
+\tilde{\mathcal P^x}(\tilde\chi^6)+\alpha^{6j}\tilde{\mathcal J}^j
+\tilde{\mathcal Z}(\tilde\kappa^6),\label{eq:PushFwdQ6}
\end{gather}
\end{subequations}
where $\tilde\chi^i$ and $\tilde\kappa^i$, $i\in\{5,6\}$,
are smooth functions of $\tilde t$, the other parameters are real constants,
and the tildes in the notation of the vector fields on the right-hand sides mean
that these vector fields are defined in the coordinates with tildes.
We expand the system~\eqref{eq:PushFwd} and split each of its equations componentwise
In total, we obtain the following collection of constraints for the components of $\Phi$:
\begin{gather}\label{eq:ConstraintsOnTrnsfG}
\begin{split}
&T_{tt}=0,\quad
T_x=T_y=T_{\psi^i}=0,
\\
&X_x=X^1,\quad
X_y=X^2,\quad
X_{\psi^i}=X^{0i},
\\
&Y_t=Y^2,\quad
Y_y=Y^1,\quad
Y_x=0,\quad
Y_{\psi^i}=0
\\
&\Psi^i_{\psi^j}=A^{ij},\quad
\Psi^i_x=A^0,\quad
\Psi^i_{yy}=0,\quad
\Psi^i_{ty}=-T_t^{-1}X_{tt}Y_y,
\end{split}
\end{gather}
where $X^1$, $X^2$, $X^{0i}$, $Y^1$, $Y^2$, $A^{ij}$ and $A^0$ are arbitrary real constants with
\begin{gather}\label{eq:ConstraintsOnTrnsfGB}
\sum_{k=1}^m X^{0k}=0,\quad
\sum_{k=1}^mA^{ik}=\sum_{k=1}^mA^{jk}.
\end{gather}
The solution set of this collection of constraints consists of the tuples with components of the form
\begin{gather}\label{eq:ConstraintsOnPhi}
\begin{split}
&T=T^1t+T^0,\quad
X=X^1x+X^2y+X^{0i}\psi^i+h(t),\quad
Y=Y^1y+Y^2t+Y^0,\\
&\Psi^i=A^{ij}\psi^j+A^0x-\frac{Y^1}{T^1}f_ty+g(t)+b_i,
\end{split}
\end{gather}
where in addition to the above,
$T^0$ and $T^1$ are arbitrary real constants,
$h$ and~$g$ are arbitrary smooth functions of~$t$,
and the nondegeneracy of the Jacobian matrix of~$\Phi$ requires that $T^1X^1Y^1\det A^{ij}\ne0$.

One more further specification for the obtained form~\eqref{eq:ConstraintsOnTrnsfG}
for components of the arbitrary point-symmetry transformation~$\Phi$
can be obtained using the megaideal~$\mathfrak g'$,
\begin{gather}\label{eq:ConstraintsOnPhiB}
X^1Y^1=T^1\sum_{k=1}^mA^{ik}.
\end{gather}

Under the derived restrictions on the components of $\Phi$, we have $\Phi_*\mathfrak g\subseteq\mathfrak g$.
This is why the further application of the algebraic method does not result in more constraints
on the components of~$\Phi$.

We proceed with applying the direct method.
In view of the obtained form~\eqref{eq:ConstraintsOnPhi} for~$\Phi$,
the operators of partial derivatives are transformed according to the rule
\begin{gather*}
\p_{\tilde t}=\frac1{{\rm D}_tT}\left({\rm D}_t-\frac{{\rm D}_tX}{{\rm D}_xX}{\rm D}_x
-\frac{{\rm D}_tY}{{\rm D}_yY}\left({\rm D}_y-\frac{{\rm D}_yX}{{\rm D}_xX}{\rm D}_x\right)\right),
\\
\p_{\tilde x}=\frac1{{\rm D}_xX}{\rm D}_x,\quad
\p_{\tilde y}=\frac1{{\rm D}_yY}\left({\rm D}_y-\frac{{\rm D}_yX}{{\rm D}_xX}{\rm D}_x\right).
\end{gather*}
We label by ($\tilde{\ref{eq:mLaysMod}}$) the system~\eqref{eq:mLaysMod} written in the variables with tildes.
The system obtained from~($\tilde{\ref{eq:mLaysMod}}$) by substituting
the expressions for jet variables with tildes in terms of jet variables without tildes
and then substituting the expressions for $\psi^i_{txx}$ in view of the system~\eqref{eq:mLaysMod}
is called the expanded system~($\tilde{\ref{eq:mLaysMod}}$).
The derivative $\psi^i_{txy}$ appears only in $i$th equation of this system in the term
arising from~$\tilde\psi^i_{\tilde t\tilde y\tilde y}$,
\begin{gather*}
\tilde\psi^i_{\tilde t\tilde y\tilde y}=-\frac2{T^1(Y^1)^2}\frac{{\rm D}_yX}{{\rm D}_xX}A^{ij}\psi^j_{txy}+\cdots.
\end{gather*}
Hence ${\rm D}_yX=0$ since $\det A^{ij}\ne0$, and thus $X^2=X^{0i}=0$, which simplifies the expressions
for the transformed operators of partial derivatives,
\begin{gather*}
\p_{\tilde t}=\frac1{T^1}\left({\rm D}_t-\frac{f_t}{X^1}{\rm D}_x-\frac{Y^2}{Y^1}{\rm D}_y\right),\quad
\p_{\tilde x}=\frac1{X^1}{\rm D}_x,\quad
\p_{\tilde y}=\frac1{Y^1}{\rm D}_y.
\end{gather*}

Collecting the coefficients of $\psi^j_{txx}$
in the $i$th equation of the expanded system~($\tilde{\ref{eq:mLaysMod}}$) with $i\ne j$
gives the constraint $A^{ij}=0$.
Thus, the equation~\eqref{eq:ConstraintsOnPhiB} implies $A^{ii}=X^1Y^1/T^1$.
More constraints on $\Phi$,
\begin{gather*}
(f_{i,i-1}+f_{i,i+1})\frac{Y^2}{Y^1}=0,\quad
f_{i,i-1}\frac{A^0}{X^1Y^1}=0,
\end{gather*}
are obtained after collecting, in the $i$th equation of the expanded system~($\tilde{\ref{eq:mLaysMod}}$) for fixed~$i$,
the coefficients of~$\psi^i_y$ and~$\psi^{i-1}_y$, which arise from the terms
\smash{$(f_{i,i-1}+f_{i,i+1})\tilde\psi^i_{\tilde t}$}
and \smash{$f_{i,i-1}\tilde\psi^i_{\tilde x}\tilde\psi^{i-1}_{\tilde y}$}, respectively.
Therefore, $Y^2=A^0=0$.
Under the derived constraints,
the expanded system~($\tilde{\ref{eq:mLaysMod}}$) takes the form
\begin{gather}\label{eq:MappedExpandedSystForPSG}
\begin{split}
&(\psi^i_{tyy}+\psi^i_x\psi^i_{yyy}-\psi^i_y\psi^i_{yyx})\big((X^1)^2(Y^1)^{-2}-1\big)
\\
&\qquad{}
+f_{i,i-1}\big(\psi^i_t-\psi^{i-1}_t+\psi^{i-1}_x\psi^i_y-\psi^{i}_x\psi^{i-1}_y\big)\big((X^1)^2-1\big)
\\&\qquad{}
+f_{i,i+1}\big(\psi^i_t-\psi^{i+1}_t+\psi^{i+1}_x\psi^i_y-\psi^{i}_x\psi^{i+1}_y\big)\big((X^1)^2-1\big)
+\beta\psi^i_x(T^1X^1-1)=0.
\end{split}
\end{gather}
The obvious total splitting of~\eqref{eq:MappedExpandedSystForPSG} implies that $(X^1)^2=(Y^1)^2=1$ and $T^1=X^1$.
\end{proof}

\begin{corollary}\label{cor:DiscrTransfs}
The quotient of the pseudogroup $G$ by its identity component $G_{\rm id}$ is isomorphic to the group $\mathbb Z_2^2$
and can be naturally identified with the subgroup of $G$ generated by two involutions
$\mathscr I^{tx}$ and $\mathscr I^{y\psi}$,
\begin{gather*}
\mathscr I^{tx}\colon\quad
(t,x,y,\psi^1,\dots,\psi^m)\mapsto(-t,-x,y,\psi^1,\dots,\psi^m),
\\
\mathscr I^{y\psi}\colon\quad
(t,x,y,\psi^1,\dots,\psi^m)\mapsto(t,x,-y,-\psi^1,\dots,-\psi^m).
\end{gather*}
\end{corollary}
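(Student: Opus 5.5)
The plan is to read the corollary off directly from Theorem~\ref{thm:mLaysPointSymGroup}. I would first identify the identity component $G_{\rm id}$ as the set of transformations from Theorem~\ref{thm:mLaysPointSymGroup} with $\varepsilon_1=\varepsilon_2=1$. Every such transformation is the composition of flows of vector fields from $\mathfrak g$. Namely, the shifts $\tilde t=t+T^0$ and $\tilde y=y+Y^0$ are generated by $\mathcal P^t$ and $\mathcal P^y$, and the constant shifts $\tilde\psi^i=\psi^i+\Psi^i$ by $\mathcal J^i$. The transformation $\tilde x=x+h(t)$, $\tilde\psi^i=\psi^i-h_ty$ is the flow of $\mathcal P^x(h)$ at parameter value one, and $\tilde\psi^i=\psi^i+g(t)$ is the flow of $\mathcal Z(g)$. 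Since the parameters $T^0$, $Y^0$, $\Psi^i$, $h$ and $g$ range over connected (in fact vector) spaces, this subset is connected and contains the identity. Conversely, it is closed under composition and inversion, which one checks directly from the explicit formulas; the composition of $(h,g)$-parts is again of the same form with $h\to h+\tilde h$ suitably reparametrized.

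Next, I would show that no transformation with $(\varepsilon_1,\varepsilon_2)\ne(1,1)$ lies in the identity component. The map $\Phi\mapsto(\varepsilon_1,\varepsilon_2)$ can be recovered invariantly, for instance as the signs of $\p\tilde t/\p t$ and $\p\tilde y/\p y$. These derivatives are constant along continuous families of transformations and take only the values $\pm1$, so they are locally constant. Moreover, composing two transformations multiplies the corresponding signs, so $(\varepsilon_1,\varepsilon_2)$ defines a homomorphism $G\to\mathbb Z_2^2$. This homomorphism is surjective and its kernel is exactly $G_{\rm id}$, whence $G/G_{\rm id}\simeq\mathbb Z_2^2$.

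Finally, I would check that $\mathscr I^{tx}$ (the case $\varepsilon_1=-1$, $\varepsilon_2=1$, all other parameters zero) and $\mathscr I^{y\psi}$ (the case $\varepsilon_1=1$, $\varepsilon_2=-1$) belong to $G$. Both are involutions, they commute, and their images under the homomorphism generate $\mathbb Z_2^2$. Hence the subgroup $\{{\rm id},\mathscr I^{tx},\mathscr I^{y\psi},\mathscr I^{tx}\mathscr I^{y\psi}\}$ is a complete set of coset representatives for $G/G_{\rm id}$.

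The only mildly delicate point is the rigorous notion of ``identity component'' for a pseudogroup with functional parameters. I would handle it by defining $G_{\rm id}$ as the set of transformations generated by flows of elements of $\mathfrak g$, and then argue via the sign invariants as above.
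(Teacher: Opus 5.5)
Your proposal is correct and follows the same route the paper takes implicitly. The paper reads the corollary off directly from the explicit form of $G$ in Theorem~\ref{thm:mLaysPointSymGroup}: the sign pair $(\varepsilon_1,\varepsilon_2)$ gives the surjective homomorphism onto $\mathbb Z_2^2$, and its kernel $G_{\rm id}$ is generated by the pseudosubgroups~\eqref{eq:ElementTransfs}, i.e., the flows of $\mathcal P^t$, $\mathcal P^y$, $\mathcal P^x(h)$, $\mathcal J^i$ and $\mathcal Z(g)$, with $\mathscr I^{tx}$ and $\mathscr I^{y\psi}$ arising exactly as you describe.
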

The pseudosubgroups of the pseudogroup~$G$
each of which is parameterized by a single functional or constant canonical parameter
and associated with a family of vector fields listed in Theorem~\ref{thm:mLaysLieInvAlg}
are of the form
\begin{gather}\label{eq:ElementTransfs}
\begin{split}
\mathscr P^t(\epsilon)&:=
(t,x,y,\psi^1,\dots,\psi^m)\mapsto(t+\epsilon,x,y,\psi^1,\dots,\psi^m),
\\
\mathscr P^x(h)&:=
(t,x,y,\psi^1,\dots,\psi^m)\mapsto(t,x+h(t),y,\psi^1-h_t(t)y,\dots,\psi^m-h_t(t)y),
\\
\mathscr P^y(\epsilon)&:=
(t,x,y,\psi^1,\dots,\psi^m)\mapsto(t,x,y+\epsilon,\psi^1,\dots,\psi^m),
\\
\mathscr J^i(\epsilon)&:=
(t,x,y,\psi^1,\dots,\psi^i,\dots,\psi^m)\mapsto(t,x,y,\psi^1,\dots,\psi^i+\epsilon,\dots,\psi^m),
\\
\mathscr Z(g)&:=
(t,x,y,\psi^1,\dots,\psi^m)\mapsto(t,x,y,\psi^1+g(t),\dots,\psi^m+g(t)),
\end{split}
\end{gather}
where the constant~$\epsilon$ plays the role of constant group parameter,
and $h$ and~$g$ are arbitrary smooth functions of~$t$.
The transformations~$\mathscr P^t(\epsilon)$, $\mathscr P^y(\epsilon)$ and~$\mathscr P^x(h)$
are respectively shifts with respect to~$t$ and~$y$
and generalized time-dependent shifts with respect to~$x$.
The transformations~$\mathscr J^i(\epsilon)$ and~$\mathscr Z(g)$ correspond to
gaugings of the stream function tuple,
where gauging summands for the stream functions on different layers differ at most by constants.

\section{Equivalence groupoid, pseudogroup and algebra}\label{sec:GroupClassification}

Theorems~\ref{thm:mLaysLieInvAlg} and~\ref{thm:mLaysPointSymGroup} provide some additional insights
about transformational properties of the system~\eqref{eq:mLaysMod}
when it is viewed as a class~$\mathcal M$ (of systems) of differential equations,
which is defined in Section~\ref{sec:mLaysMod}.
See \cite{opan2017a,opan2020b,opan2022a, vane2020b}
for the required theoretical background on classes of differential equations
and on point transformations within them.
More specifically, Theorem~\ref{thm:mLaysLieInvAlg} implies that
the algebra $\mathfrak g$ actually coincides with
the kernel of the maximal Lie invariance algebras of systems from the class~$\mathcal M$.
Furthermore, the trivial prolongation of the kernel algebra of a class of differential equations
to its arbitrary elements is invariant under the action of the corresponding usual equivalence (pseudo)group,
see~\cite[Section~2.2]{opan2022a} for all the necessary notions and the proof of the mentioned property.
This observation allows us to compute the equivalence pseudogroup $G^\sim$ of the class~$\mathcal M$
using the megaideal-based version of the algebraic method, which was suggested in~\cite{bihl2015a}.
At the same time, the algebra $\mathfrak g$ is the maximal Lie invariance algebra
for each system from the class~$\mathcal M$,
and hence it is preserved when pushed forward by the transformational part of any admissible transformation
within this class.
This provides an algebraic tool for computing the equivalence groupoid~$\mathcal G^\sim$ of~$\mathcal M$.
A nice feature of both computations, of~$G^\sim$ and of~$\mathcal G^\sim$,
is that their major part repeats the proof of Theorem~\ref{thm:mLaysPointSymGroup}.

\begin{theorem}\label{thm:ClassMEquivGroupoid}
The equivalence groupoid $\mathcal G^\sim$ of the class $\mathcal M$ consists of the triples of the form
$(\theta,\Phi,\tilde\theta)$,
where $\Phi$ is a point transformation in the space of independent and dependent variables,
whose components are
\begin{subequations}\label{eq:ClassMEquivGroupoid}
\begin{gather}\label{eq:ClassMEquivGroupoidA}
\tilde t    =T^1t+T^0,\quad
\tilde x    =X^1x+h(t),\quad
\tilde y    =\varepsilon X^1y+Y^0,
\\\label{eq:ClassMEquivGroupoidB}
\tilde{\psi}^i=\varepsilon\frac{(X^1)^2}{T^1}\psi^i-\varepsilon\frac{X^1}{T^1}h_t(t)y+g(t)+\Psi^i,
\end{gather}
and the target and source arbitrary-element tuples $\tilde\theta=(\tilde f_{i,i-1},\tilde f_{i,i+1},\tilde\beta)$
and $\theta=(f_{i,i-1},f_{i,i+1},\beta)$ are related according to
\begin{gather}\label{eq:ClassMEquivGroupoidC}
\tilde f_{i,i-1}=\frac{f_{i,i-1}}{(X^1)^2},\quad
\tilde f_{i,i+1}=\frac{f_{i,i+1}}{(X^1)^2},\quad
\tilde\beta=\frac\beta{T^1X^1}.
\end{gather}
\end{subequations}
Here $\varepsilon=\pm1$, $T^1$, $T^0$, $X^1$ $Y^0$ and $\Psi^i$, $i=1,\dots,m$, are arbitrary real constants satisfying $T^1X^1\ne0$,
and $h$ and~$g$ are arbitrary smooth functions of~$t$.
\end{theorem}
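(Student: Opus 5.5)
The plan is to reuse the algebraic argument from the proof of Theorem~\ref{thm:mLaysPointSymGroup} and then finish with the direct method, this time keeping track of the arbitrary elements. Let $(\theta,\Phi,\tilde\theta)$ be an admissible transformation of~$\mathcal M$. By Theorem~\ref{thm:mLaysLieInvAlg}, every system $\mathcal M_\theta$ has the same maximal Lie invariance algebra~$\mathfrak g$, and this algebra does not depend on~$\theta$. Hence $\Phi_*$ maps~$\mathfrak g$ onto~$\mathfrak g$, written in the tilded coordinates. Any megaideal of~$\mathfrak g$ is then sent to its tilded counterpart. In particular this holds for $\mathfrak m_1,\dots,\mathfrak m_6$ and~$\mathfrak g'$, since $\Phi_*$ induces an isomorphism between two copies of the same abstract algebra, and megaideals are preserved by any such isomorphism composed with the identification.

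Consequently, the conditions~\eqref{eq:PushFwd} hold verbatim. Their componentwise splitting never used the values of~$\theta$, so it again yields the constraints~\eqref{eq:ConstraintsOnTrnsfG}--\eqref{eq:ConstraintsOnTrnsfGB} and the form~\eqref{eq:ConstraintsOnPhi}. The megaideal~$\mathfrak g'$ likewise gives~\eqref{eq:ConstraintsOnPhiB}.

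Next I substitute the transformation into the tilded system $\mathcal M_{\tilde\theta}$ and reduce it modulo $\mathcal M_\theta$. The same sequence of arguments as in the direct-method part of the proof of Theorem~\ref{thm:mLaysPointSymGroup} applies at each step.
\begin{itemize}
\item The coefficient of $\psi^j_{txy}$ forces $X^2=X^{0i}=0$.
\item The coefficients of $\psi^j_{txx}$ with $j\ne i$ give $A^{ij}=0$, and then~\eqref{eq:ConstraintsOnPhiB} gives $A^{ii}=X^1Y^1/T^1$.
\item The coefficients of $\psi^i_y$ and $\psi^{i-1}_y$ give $Y^2=A^0=0$. This uses only the nonvanishing of $\tilde f_{i,i\pm1}$.
\end{itemize}

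The remaining condition is the analogue of~\eqref{eq:MappedExpandedSystForPSG} with two sets of parameters. Splitting it gives three relations. The Laplacian part requires $(X^1)^2=(Y^1)^2$, i.e.\ $Y^1=\varepsilon X^1$. The coupling terms give $\tilde f_{i,i\pm1}(X^1)^2=f_{i,i\pm1}$. The $\beta$-term gives $\tilde\beta T^1X^1=\beta$. Substituting these into~\eqref{eq:ConstraintsOnPhi} produces exactly~\eqref{eq:ClassMEquivGroupoid}, with the coefficient of~$\psi^i$ equal to $\varepsilon (X^1)^2/T^1$. The coefficient of~$h_ty$ becomes $-\varepsilon X^1h_t/T^1$, because the $y$-coefficient $-Y^1h_t/T^1$ in~\eqref{eq:ConstraintsOnPhi} is taken with $Y^1=\varepsilon X^1$.

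Conversely, I will verify by direct substitution that every triple~\eqref{eq:ClassMEquivGroupoid} is admissible. This is routine, since the transformation decomposes into scalings composed with elements of~$G$.

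The main obstacle is the final splitting step. The coefficient of the Laplacian-type third-order terms must be normalized against the coefficient $A^{ii}/(T^1(Y^1)^2)$ coming from $\tilde\psi_{\tilde t\tilde y\tilde y}$. One has to check carefully that, after dividing by the common factor, the coupling and $\beta$ terms carry precisely the factors $(X^1)^2$ and $T^1X^1$, so that the relations~\eqref{eq:ClassMEquivGroupoidC} come out with the correct powers. I would first carry out this bookkeeping on a single layer equation with $\beta$ only, and then add the coupling terms.
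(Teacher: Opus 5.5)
Your proposal is correct and follows the paper's own proof essentially step by step. Because $\mathfrak g$ and its megaideals do not depend on $\theta$, you return to the algebraic part of the proof of Theorem~\ref{thm:mLaysPointSymGroup}; the direct method then gives the two-parameter analogue~\eqref{eq:MappedExpandedSystForGroupoid} of~\eqref{eq:MappedExpandedSystForPSG}, whose splitting yields $(X^1)^2=(Y^1)^2$ and~\eqref{eq:ClassMEquivGroupoidC}, while your explicit converse check is a harmless addition that the paper leaves implicit.
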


\begin{proof}
Given any two arbitrary-element tuples $\theta$ and $\tilde\theta$, let a point transformation
\begin{gather*}
\Phi\colon\quad
(\tilde t,\tilde x,\tilde y,\tilde\psi^1,\dots,\tilde\psi^m)=
(T,X,Y,\Psi^1,\dots,\Psi^m)
(t,x,y,\psi^1,\dots,\psi^m)
\end{gather*}
connect the corresponding systems~$\mathcal M_\theta$ and~$\mathcal M_{\tilde\theta}$ from the class $\mathcal M$.
In light of Theorem~\ref{thm:mLaysLieInvAlg}, the algebra $\mathfrak g$
is the maximal Lie invariance algebra of each system from the class $\mathcal M$.
Therefore, $\Phi_*\mathfrak g=\mathfrak g$, which thus implies $\Phi_*\mathfrak m=\mathfrak m$
for every megaideal $\mathfrak m\subset\mathfrak g$.
This returns us to the setting of the algebraic part of the proof of Theorem~\ref{thm:mLaysPointSymGroup},
which straightforwardly implies that the components of the transformation $\Phi$
are necessarily of the form~\eqref{eq:ConstraintsOnPhi},
where the parameters satisfy the restrictions, indicated after this form, including~\eqref{eq:ConstraintsOnPhiB}.

The next step in the course of computing the explicit form of $\Phi$ is to apply the direct method.
Within the framework of this method, we should express the transformed derivatives in terms of the initial ones
and substitute the obtained expressions into the target system $\mathcal M_{\tilde\theta}$
and then substitute the expressions for $\psi^i_{txx}$ in view of the source system~$\mathcal M_\theta$.
The computations in this part are completely analogous to those carried out
in the proof of Theorem~\ref{thm:mLaysPointSymGroup}.
The only difference is that the expanded target system $\mathcal M_{\tilde\theta}$,
which is the counterpart of~\eqref{eq:MappedExpandedSystForPSG},
takes the form
\begin{gather}\label{eq:MappedExpandedSystForGroupoid}
\begin{split}
&(\psi^i_{tyy}+\psi^i_x\psi^i_{yyy}-\psi^i_y\psi^i_{yyx})\big((X^1)^2(Y^1)^{-2}-1\big)
\\
&\qquad{}
+\big(\psi^i_t-\psi^{i-1}_t+\psi^{i-1}_x\psi^i_y-\psi^{i}_x\psi^{i-1}_y\big)\big(\tilde f_{i,i-1}(X^1)^2-f_{i,i-1}\big)
\\&\qquad{}
+\big(\psi^i_t-\psi^{i+1}_t+\psi^{i+1}_x\psi^i_y-\psi^{i}_x\psi^{i+1}_y\big)\big(\tilde f_{i,i+1}(X^1)^2-f_{i,i+1}\big)
+\psi^i_x(\tilde\beta T^1X^1-\beta)=0.
\end{split}
\end{gather}
This results in the remaining constraints,
$(X^1)^2=(Y^1)^2$ and~\eqref{eq:ClassMEquivGroupoidC}.
\end{proof}

\begin{theorem}\label{thm:ClassMEquivGroup}
The usual equivalence pseudogroup~$G^\sim$ of the class~$\mathcal M$ consists of the point transformations
in the foliated space \smash{$\mathbb R^{m+3}_{t,x,y,\psi}\times\mathbb R^{2m-1}_\theta$}
with components of the form~\eqref{eq:ClassMEquivGroupoid},
where $\varepsilon=\pm1$, $T^1$, $T^0$, $X^1$ $Y^0$ and $\Psi^i$, $i=1,\dots,m$,
are arbitrary real constants with $T^1X^1\ne0$,
and $f$ and~$g$ are arbitrary smooth functions of~$t$.
\end{theorem}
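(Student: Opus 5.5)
The plan is to derive $G^\sim$ directly from Theorem~\ref{thm:ClassMEquivGroupoid}, since the usual equivalence pseudogroup consists exactly of the point transformations in the foliated space whose restrictions to each system are admissible transformations. Concretely, an element $\mathcal T\in G^\sim$ is a point transformation of $\mathbb R^{m+3}_{t,x,y,\psi}\times\mathbb R^{2m-1}_\theta$ that projects to the arbitrary-element space and whose $(t,x,y,\psi)$-components, for every fixed $\theta$, define a $\Phi$ with $(\theta,\Phi,\tilde\theta)\in\mathcal G^\sim$. By Theorem~\ref{thm:ClassMEquivGroupoid}, for each $\theta$ the components of $\Phi$ are of the form~\eqref{eq:ClassMEquivGroupoidA}--\eqref{eq:ClassMEquivGroupoidB}, with parameters $\varepsilon,T^1,T^0,X^1,Y^0,\Psi^i,h,g$ that a priori may depend on~$\theta$, and $\tilde\theta$ is given by~\eqref{eq:ClassMEquivGroupoidC}.

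The key step is to show that these parameters cannot depend on~$\theta$. In the definition of the usual equivalence pseudogroup, the transformation components for $(t,x,y,\psi)$ are required not to depend on the arbitrary elements when the latter are constants (the class is parameterized by constant arbitrary elements, and the usual equivalence pseudogroup is the one whose transformational part is independent of $\theta$). If one prefers the generalized setting, one observes that the only condition linking parameters and $\theta$ is~\eqref{eq:ClassMEquivGroupoidC}, which is satisfied for any choice of $(T^1,X^1)$ independent of~$\theta$; hence every triple family of the groupoid with $\theta$-independent parameters already arises from a single point transformation of the foliated space, i.e., the class is normalized. Thus I take all parameters constant (functions $h,g$ depending only on $t$) and verify the converse: the transformation given by~\eqref{eq:ClassMEquivGroupoid} maps every $\mathcal M_\theta$ to $\mathcal M_{\tilde\theta}$, which is immediate since by Theorem~\ref{thm:ClassMEquivGroupoid} each such triple is admissible, and the $\theta$-component~\eqref{eq:ClassMEquivGroupoidC} is a smooth invertible map on the (positive or nonzero) parameter set, preserving $f_{1,0}=f_{m,m+1}=0$.

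The main (mild) obstacle is to justify rigorously that allowing $\theta$-dependence of parameters gives nothing beyond the generalized equivalence group and does not alter the usual one; I would handle it by noting that for the usual $G^\sim$ the $\theta$-independence of the transformational part is part of the definition, while the positivity/nonvanishing constraints are preserved because $(X^1)^2>0$ and $T^1X^1\ne0$. Finally one checks that the composition and inversion of such transformations stay in the family, confirming it is a pseudogroup.
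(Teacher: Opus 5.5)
Your argument is correct, but it is not the proof the paper gives for this theorem.

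\textbf{The paper's route.} The paper works directly in the foliated space. Since $\mathfrak g$ is the kernel algebra of $\mathcal M$, its trivial prolongation $\hat{\mathfrak g}$, and hence every megaideal of it, is preserved by $G^\sim$. The algebraic part of the proof of Theorem~\ref{thm:mLaysPointSymGroup} is therefore rerun for a general fibre-preserving $\mathscr T$. This gives the form~\eqref{eq:ConstraintsOnPhi} of the $(t,x,y,\psi)$-components, and the same pushforward conditions show that the $\theta$-components depend on $\theta$ only. The direct method then finishes as in~\eqref{eq:MappedExpandedSystForGroupoid}, with $\tilde\theta$ replaced by $(F^{i,i\pm1},B)$.

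\textbf{Your route.} You read $G^\sim$ off Theorem~\ref{thm:ClassMEquivGroupoid} fibre by fibre. This is exactly the shortcut the paper mentions in the remark following Corollary~\ref{cor:ClassMNorm}. Your version is shorter and gives normalization at no extra cost. The paper's version does not depend on the groupoid result and illustrates the algebraic method for equivalence groups of~\cite{bihl2015a}.

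\textbf{Points to tighten.}
\begin{enumerate}
\item You build in that the $\theta$-components depend only on $\theta$ by saying $\mathscr T$ \emph{projects to the arbitrary-element space}, but this is not part of the definition. It does follow: each $\mathcal M_\theta$ must be mapped into $\mathcal M$, whose arbitrary elements are constants, and then~\eqref{eq:ClassMEquivGroupoidC} determines the image.
\item The condition $T^1X^1\ne0$ preserves $\beta\ne0$ but not $\beta>0$. Under the positivity constraint you need $T^1X^1>0$, as the paper remarks right after the theorem.
\item You can drop your aside on the \emph{generalized setting}. There the parameters genuinely may depend on $\theta$ (Remark~\ref{rem:GeneralizedEquivGroup}), so it does not show what you claim. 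For the usual group it is not needed, because $\theta$-independence of the $(t,x,y,\psi)$-components holds by definition, up to shifting constants between $g$ and~$\Psi^i$.
\end{enumerate}
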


\begin{proof}
We start with the general form of fibre-preserving point transformation $\mathscr T$
acting on the foliated space $\mathbb R^{m+3}_{t,x,y,\psi}\times\mathbb R^{2m-1}_\theta$,
\noprint{
$\mathbb R^3_{t,x,y}\times\mathbb R^m_\psi\times\mathbb R^{2(m-1)}_{f_{i,j}}\times\mathbb R_\beta$
}
\begin{align*}
\mathscr T\colon\quad
&(\tilde t,\tilde x,\tilde y,\tilde\psi^1,\dots,\tilde\psi^m)=
(T,X,Y,\Psi^1,\dots,\Psi^m)
(t,x,y,\psi^1,\dots,\psi^m),
\\
&(\tilde f_{i,i-1},\tilde f_{i,i+1},\tilde\beta)=(F^{i,i-1},F^{i,i+1},B)
(t,x,y,\psi^1,\dots,\psi^m,f_{i,i-1},f_{i,i+1},\beta),
\end{align*}
where $i=1,\dots,m$ with the excluded index pairs $(1,0)$ and~$(m,m+1)$.
Denote by hat the trivial prolongation of objects defined in the space~$\mathbb R^{m+3}_{t,x,y,\psi}$
to the space~$\mathbb R^{m+3}_{t,x,y,\psi}\times\mathbb R^{2m-1}_\theta$.
Theorem~\ref{thm:mLaysLieInvAlg} implies that the algebra $\mathfrak g$
coincides with the kernel of maximal Lie invariance algebras of systems from the class $\mathcal M$.
This is why the trivial prolongation~$\hat{\mathfrak g}$ of $\mathfrak g$ is preserved
under the adjoint action of $G^\sim$, and thus $G^\sim$ also preserves the megaideals of $\hat{\mathfrak g}$.
Hence the form of the $(t,x,y,\psi)$-components of~$\mathscr T$ is given by~\eqref{eq:ConstraintsOnPhi},
where the parameters satisfy the equations~\eqref{eq:ConstraintsOnPhiB} and~\eqref{eq:ConstraintsOnTrnsfGB}
the inequality $T^1X^1Y^1\det A^{ij}\ne0$.
In notation of the proof of Theorem~\ref{thm:mLaysPointSymGroup},
the consideration of the necessary conditions
$\mathscr T_*\hat{\mathcal P}^x(1)\in\hat{\mathfrak m}_4$,
$\mathscr T_*\hat{\mathcal J}^i\in\hat{\mathfrak m}_3$,
$\mathscr T_*\hat{\mathcal P}^y\in\hat{\mathfrak m}_2$ and
$\mathscr T_*\hat{\mathcal P}^t\in\hat{\mathfrak m}_1$
also results in the following equations for the $\theta$-components of~$\mathscr T$:
\begin{gather*}
F^{i,i-1}_t=F^{i,i-1}_x=F^{i,i-1}_y=F^{i,i-1}_{\psi^j}=0,\quad
F^{i,i+1}_t=F^{i,i+1}_x=F^{i,i+1}_y=F^{i,i+1}_{\psi^j}=0,
\\
B_t=B_x=B_y=B_{\psi^j}=0,
\end{gather*}
i.e., $F^{i,i-1}=F^{i,i-1}(f_{i,i-1},f_{i,i+1},\beta)$ and $B=B(f_{i,i-1},f_{i,i+1},\beta)$.
The next step is to apply the direct method to find the remaining restrictions on the components on $\mathscr T$.
These computations are carried out analogously to those
in the last part of the proof of Theorem~\ref{thm:ClassMEquivGroupoid},
with the only difference that the parameters $\tilde f_{i,i-1}$, $\tilde f_{i,i+1}$ and $\tilde \beta$ in~\eqref{eq:MappedExpandedSystForGroupoid}
are replaced with $F^{i,i-1}$, $F^{i,i+1}$ and $B$, respectively.
\end{proof}

\begin{remark}
Under the physical constraint $\beta>0$, we should set $T^1X^1>0$
in Theorems~\ref{thm:ClassMEquivGroupoid} and~\ref{thm:ClassMEquivGroup}.
\end{remark}

\begin{corollary}\label{cor:ClassMNorm}
The class~$\mathcal M$ is normalized in the usual sense.
\end{corollary}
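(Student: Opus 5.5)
The plan is to read normalization off directly from Theorems~\ref{thm:ClassMEquivGroupoid} and~\ref{thm:ClassMEquivGroup}. Recall that a class is normalized in the usual sense if its equivalence groupoid is generated by its usual equivalence pseudogroup. That is, for every admissible transformation $(\theta,\Phi,\tilde\theta)\in\mathcal G^\sim$ there should exist $\mathscr T\in G^\sim$ whose restriction to the space of $(t,x,y,\psi)$ coincides with~$\Phi$ and whose $\theta$-components map~$\theta$ to~$\tilde\theta$.

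First, take an arbitrary admissible transformation $(\theta,\Phi,\tilde\theta)$. By Theorem~\ref{thm:ClassMEquivGroupoid}, $\Phi$ has the form~\eqref{eq:ClassMEquivGroupoidA}--\eqref{eq:ClassMEquivGroupoidB} for some parameters $\varepsilon$, $T^1$, $T^0$, $X^1$, $Y^0$, $\Psi^i$, $h$ and~$g$. Moreover, $\tilde\theta$ is determined by $\theta$ and these parameters via~\eqref{eq:ClassMEquivGroupoidC}.

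Second, with the same parameter values, define the point transformation~$\mathscr T$ in the foliated space $\mathbb R^{m+3}_{t,x,y,\psi}\times\mathbb R^{2m-1}_\theta$. Its $(t,x,y,\psi)$-components are those of~$\Phi$, and its $\theta$-components are given by~\eqref{eq:ClassMEquivGroupoidC}. Theorem~\ref{thm:ClassMEquivGroup} guarantees that $\mathscr T\in G^\sim$. Its projection is~$\Phi$, and it maps $\theta$ to~$\tilde\theta$. Hence $(\theta,\Phi,\tilde\theta)$ is generated by an element of~$G^\sim$. Since the admissible transformation was arbitrary, $\mathcal G^\sim$ coincides with the groupoid induced by~$G^\sim$.

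The only point needing care is that the parameter constraints coincide in the two theorems. In both, $T^1X^1\ne0$ and $\varepsilon=\pm1$ are the sole restrictions, and neither depends on the source tuple~$\theta$. There is thus no admissible transformation that exists only for special values of the arbitrary elements. Under the physical constraint $\beta>0$, both theorems are restricted by $T^1X^1>0$ in the same way, so the argument is unaffected. I expect no real obstacle; the corollary is essentially a comparison of the two explicit descriptions.
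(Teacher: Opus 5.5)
Your proposal is correct and follows the paper's argument. The paper likewise observes that Theorems~\ref{thm:ClassMEquivGroupoid} and~\ref{thm:ClassMEquivGroup} together show that the action groupoid of $G^\sim$ coincides with $\mathcal G^\sim$, which is the definition of normalization in the usual sense. Your explicit check that the parameter constraints are identical in both theorems and independent of~$\theta$ spells out what the paper's one-line proof leaves implicit: $\varepsilon=\pm1$ and $T^1X^1\ne0$, or $T^1X^1>0$ when $\beta>0$.
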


\begin{proof}
Theorems~\ref{thm:ClassMEquivGroupoid} and~\ref{thm:ClassMEquivGroup} jointly imply
that the action groupoid of the equivalence group $G^\sim$
coincides with the equivalence groupoid~$\mathcal G^\sim$.
According to~\cite[Definition~2]{vane2020b}, this means
that the class~$\mathcal M$ is normalized in the usual sense.
\end{proof}

\begin{remark}
We could also derive Theorem~\ref{thm:ClassMEquivGroup}
as a corollary of Theorem~\ref{thm:ClassMEquivGroupoid}
since analyzing the equivalence groupoid~$\mathcal G^\sim$ leads to
the first claim in the proof of Corollary~\ref{cor:ClassMNorm},
which implies the form of equivalence transformations.
\end{remark}

\begin{remark}\label{rem:GeneralizedEquivGroup}
The generalized equivalence group of the class~$\mathcal M$ consists of the point transformations
in the foliated space \smash{$\mathbb R^{m+3}_{t,x,y,\psi}\times\mathbb R^{2m-1}_\theta$}
with components of the form~\eqref{eq:ClassMEquivGroupoid},
where all parameters are allowed to depend on the arbitrary-element tuple~$\theta$.
The effective generalized equivalence group of~$\mathcal M$ coincides with
its usual equivalence group~$G^\sim$.
\end{remark}

\begin{corollary}\label{cor:ClassMEquivAlgebra}
The equivalence algebra $\mathfrak g^\sim$ of the class~$\mathcal M$ of systems of the form~\eqref{eq:mLaysMod}
is spanned by the vector fields
\begin{gather*}
\hat{\mathcal P}^t,\quad
\hat{\mathcal P}^y,\quad
\hat{\mathcal P}^x(\chi),\quad
\hat{\mathcal J}^1,\quad
\dots,\quad
\hat{\mathcal J^m},\quad
\hat{\mathcal Z}(\kappa),\quad
t\p_t-(\psi^1\p_{\psi^1}+\dots+\psi^m\p_{\psi^m})-\beta\p_{\beta},\\
x\p_x+y\p_y+2(\psi^1\p_{\psi^1}+\dots+\psi^m\p_{\psi^m})-2\sum_{i=1}^m(f_{i,i-1}\p_{f_{i,i-1}}+f_{i,i+1}\p_{f_{i,i+1}})-\beta\p_\beta,
\end{gather*}
where $\chi(t)$ and $\kappa(t)$ are arbitrary smooth functions of their argument.
\end{corollary}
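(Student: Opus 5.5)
The equivalence algebra consists of the infinitesimal generators of one-parameter subgroups of the usual equivalence pseudogroup~$G^\sim$. Theorem~\ref{thm:ClassMEquivGroup} gives $G^\sim$ explicitly, so the plan is to differentiate continuous one-parameter families of transformations of the form~\eqref{eq:ClassMEquivGroupoid} at the identity. The discrete parameter $\varepsilon$ drops out, since only $\varepsilon=1$ lies in the identity component. The remaining parameters are the constants $T^1,T^0,X^1,Y^0,\Psi^i$ and the functions $h,g$. We take a smooth curve of such tuples passing through the identity values $T^1=X^1=1$, $T^0=Y^0=\Psi^i=0$, $h=g=0$, and compute the derivative of the transformation components at the identity.

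The steps are as follows. Write $T^1=e^{a\epsilon}$, $X^1=e^{b\epsilon}$, $T^0=c_0\epsilon$, $Y^0=c_2\epsilon$, $\Psi^i=c_i\epsilon$, $h=\epsilon\chi(t)$ and $g=\epsilon\kappa(t)$, all to first order. Differentiating~\eqref{eq:ClassMEquivGroupoidA} gives the $(t,x,y)$-components
\[
(at+c_0)\p_t+(bx+\chi)\p_x+(by+c_2)\p_y .
\]
Differentiating~\eqref{eq:ClassMEquivGroupoidB} with $\varepsilon=1$ gives the $\psi$-components $(2b-a)\psi^i-\chi_t y+\kappa+c_i$. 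Differentiating~\eqref{eq:ClassMEquivGroupoidC} gives $-2bf_{i,i\pm1}$ for the coupling coefficients and $-(a+b)\beta$ for~$\beta$. Collecting coefficients of the independent parameters then yields the generators. The parameters $c_0,c_2,\chi,c_i,\kappa$ give the trivially prolonged $\hat{\mathcal P}^t$, $\hat{\mathcal P}^y$, $\hat{\mathcal P}^x(\chi)$, $\hat{\mathcal J}^i$ and $\hat{\mathcal Z}(\kappa)$. The parameter $a$ gives $t\p_t-\psi^j\p_{\psi^j}-\beta\p_\beta$. The parameter $b$ gives the second scaling field listed in the statement.

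It remains to check the converse: each listed vector field generates a one-parameter family lying in $G^\sim$. For the two scalings, the flows are $T^1=e^{\epsilon}$ and $X^1=e^{\epsilon}$ respectively, with the corresponding $\psi$ and $\theta$ scalings. These flows visibly have the form~\eqref{eq:ClassMEquivGroupoid}. The other flows are the trivial prolongations of~\eqref{eq:ElementTransfs}. Hence the span equals $\mathfrak g^\sim$.

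The only delicate point is justifying that every element of $\mathfrak g^\sim$ arises this way, that is, that the equivalence algebra coincides with the Lie algebra of the identity component of~$G^\sim$. One could instead verify this directly via the infinitesimal invariance criterion for the class. I would simply cite the standard correspondence and sanity-check the result with the infinitesimal criterion on~\eqref{eq:MappedExpandedSystForGroupoid} linearized at the identity.
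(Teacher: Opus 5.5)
Your proposal is correct and takes essentially the same approach as the paper. The paper states this result as an immediate corollary of Theorem~\ref{thm:ClassMEquivGroup}, obtained by differentiating one-parameter families of transformations of the form~\eqref{eq:ClassMEquivGroupoid} with $\varepsilon=1$ at the identity. Your first-order coefficients $(2b-a)\psi^i-\chi_ty+\kappa+c_i$, $-2bf_{i,i\pm1}$ and $-(a+b)\beta$ are right and give exactly the two scaling generators in the statement.
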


We can summarize the results of Section~\ref{sec:LieInvAlg} and this section
as the solution of the group classification problem for the class~$\mathcal M$.

\begin{theorem}\label{thm:ClassMGroupClassification}
The kernel of the maximal Lie invariance algebras of systems from the class~$\mathcal M$
coincides with the algebra~$\mathfrak g$.
There are no Lie symmetry extensions in this class,
i.e., each system from~$\mathcal M$ is invariant exactly with respect to the algebra~$\mathfrak g$.
The usual equivalence pseudogroup~$G^\sim$ of the class~$\mathcal M$
is described in Theorem~\ref{thm:ClassMEquivGroup}.
As canonical representatives of $G^\sim$-equivalence classes of systems from the class~$\mathcal M$,
one can choose the systems with $\beta=1$ and $f_{12}$ equal to $1$ or $\pm1$
depending on which condition, of the positivity or of nonvanishing, is assumed for the arbitrary elements.
\end{theorem}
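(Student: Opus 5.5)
The statement collects consequences of the preceding results, so the plan is to assemble them rather than compute anything new.

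\emph{Kernel and absence of extensions.} Theorem~\ref{thm:mLaysLieInvAlg}, obtained via Lemma~\ref{lem:LieInvAlgMLaysModelCompl}, was proved for an arbitrary value of the arbitrary-element tuple~$\theta$ under the sole assumption that $\beta$ and the essential entries $f_{i,i\pm1}$ are nonzero. The resulting algebra~$\mathfrak g$ does not depend on~$\theta$. Hence the maximal Lie invariance algebra of every system~$\mathcal M_\theta$ equals~$\mathfrak g$. Intersecting over~$\theta$, the kernel is~$\mathfrak g$, and no system admits more. This gives the first two claims.

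\emph{Equivalence pseudogroup.} This claim is just Theorem~\ref{thm:ClassMEquivGroup}. Since $\mathcal M$ is normalized (Corollary~\ref{cor:ClassMNorm}), $G^\sim$-equivalence of systems coincides with similarity via arbitrary admissible point transformations. So choosing representatives modulo $G^\sim$ is a complete description of the classes.

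\emph{Canonical representatives.} Use the transformation rule~\eqref{eq:ClassMEquivGroupoidC}, namely $\tilde f_{i,i\pm1}=f_{i,i\pm1}/(X^1)^2$ and $\tilde\beta=\beta/(T^1X^1)$.
\begin{itemize}
\item First fix $X^1$ by $(X^1)^2=|f_{12}|$. This yields $\tilde f_{12}=\operatorname{sgn}f_{12}$, i.e.\ $1$ under positivity and $\pm1$ under mere nonvanishing. The sign of $f_{12}$ cannot be changed, since $(X^1)^2>0$.
\item Then set $T^1=\beta/X^1$, which gives $\tilde\beta=1$. Under the positivity constraint $\beta>0$, choosing $X^1>0$ makes $T^1>0$, consistent with the remark requiring $T^1X^1>0$.
\end{itemize}
After these two normalizations, the remaining equivalence freedom acting on~$\theta$ is only $X^1=\pm1$, $T^1=X^1$. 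This freedom acts trivially on $\theta$, so the other $f$'s are genuine invariants and no further normalization is possible.

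\emph{Main obstacle.} The only delicate point is confirming that Lemma~\ref{lem:LieInvAlgMLaysModelCompl} used only nonvanishing, not positivity, of the parameters. This matters for the $\pm1$ case with nonvanishing parameters. The step deriving $\delta_t=\bar\delta_t$ in the lemma's proof invoked sign considerations (the inequalities $\lambda_j\leqslant0$). I would recheck that step: with arbitrary signs, the column $\bar1$ lies in the span of $v_1,\dots,v_{m-1}$ iff $\bar1\in\operatorname{im}\mathsf F$. To exclude this, use a left null vector~$w$ of~$\mathsf F$. Its components satisfy $w_{i+1}f_{i,i+1}=w_if_{i+1,i}$, so with $w_1=1$ one has $w_i=d_i^{-2}$ in the notation of Section~\ref{sec:PropertiesOfMatrixF}. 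The condition $\bar1\in\operatorname{im}\mathsf F$ then becomes $\sum_i w_i=0$. This can fail to hold only for special sign patterns. So for the nonvanishing case I would either note this exceptional subset or restrict the statement accordingly, while under positivity every $w_i>0$, so $\sum_i w_i>0$ and the argument is airtight.
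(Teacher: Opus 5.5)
Your argument follows the paper's route. The paper states this theorem as a direct summary of Theorem~\ref{thm:mLaysLieInvAlg} (whose algebra~$\mathfrak g$ does not depend on~$\theta$), Theorem~\ref{thm:ClassMEquivGroup} and the scalings~\eqref{eq:ClassMEquivGroupoidC}, and your normalization $(X^1)^2=|f_{12}|$, $T^1X^1=\beta$ is the intended one.

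Your caveat about Lemma~\ref{lem:LieInvAlgMLaysModelCompl} is correct, and in fact sharp, modulo an index slip: the left null vector satisfies $w_{i+1}f_{i+1,i}=w_if_{i,i+1}$, which is what gives $w_i=d_i^{-2}$. If $\sum_id_i^{-2}=0$ (e.g.\ $m=2$, $f_{21}=-f_{12}$), then $\bar1\in\im\mathsf F$, and the system genuinely admits the extra symmetries $\sigma(t)\p_y+\sigma_tx(\p_{\psi^1}+\dots+\p_{\psi^m})+\phi^j(t)\p_{\psi^j}$ with $\mathsf F\phi_t=-\beta\sigma_t\bar1$. So the ``no extensions'' claim holds under positivity, where $\sum_id_i^{-2}=(H_1+\dots+H_m)/H_1>0$, but not under mere nonvanishing, contrary to the paper's remark in Section~\ref{sec:mLaysMod}.
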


\section{Classification of subalgebras}\label{sec:Subalgebras}

Since the multi-layer quasi-geostrophic model~\eqref{eq:mLaysMod}
is a system of partial differential equations with three independent variables,
to classify its Lie reductions
in an optimal way it suffices to classify one- and two-dimensional Lie subalgebras
of its Lie invariance algebra $\mathfrak g$
with respect to the adjoint action of its point-symmetry pseudogroup $G$.
See~\cite[Section~2]{vinn2024a} for a description of the optimized procedure of Lie reductions
in the case of three independent variables
and Remark~\ref{rem:mLaysMod3DSubalgs} below
for the exhaustive study of codimension-three Lie reductions of the system~\eqref{eq:mLaysMod}.
The pseudogroup $G$ is generated by
the continuous point transformations listed in~\eqref{eq:ElementTransfs} and
the discrete point transformations listed in Corollary~\ref{cor:DiscrTransfs}.
They act on~$\mathfrak g$ via pushing forward its elements
as vector fields in the space $\mathbb R^3_{t,x,y}\times\mathbb R^m_\psi$,
and the nonidentity actions on the elements spanning~$\mathfrak g$ are exhausted by the following:
\begin{gather*}
\mathscr P^x(h)_*\mathcal P^t=\mathcal P^t+\mathcal P^x(h_t),
\quad
\mathscr Z(g)_*\mathcal P^t=\mathcal P^t+\mathcal Z(g_t),
\\
\mathscr P^x(h)_*\mathcal P^y=\mathcal P^y-\mathcal Z(h_t),
\quad
\mathscr P^y(\epsilon)_*\mathcal P^x(\chi)=\mathcal P^x(\chi)+\epsilon\mathcal Z(\chi_t),
\\
\mathscr P^t(\epsilon)_*\mathcal Z(\kappa)=\mathcal Z(\tilde\kappa^\epsilon),\quad
\mathscr P^t(\epsilon)_*\mathcal P^x(\chi)=\mathcal P^x(\tilde\chi^\epsilon),
\\
\mathscr I^{tx}_*\mathcal P^t=-\mathcal P^t,\quad
\mathscr I^{tx}_*\mathcal P^x(\chi)=-\mathcal P^x(\hat\chi),\quad
\mathscr I^{tx}_*\mathcal Z(\kappa)=\mathcal Z(\hat\kappa),
\\
\mathscr I^{y\psi}_*\mathcal P^y=-\mathcal P^y,\quad
\mathscr I^{y\psi}_*\mathcal J^k=-\mathcal J^k,\quad
\mathscr I^{y\psi}_*\mathcal Z(\kappa)=-\mathcal Z(\kappa),
\end{gather*}
where
$\tilde\kappa^\epsilon(t):=\kappa(t-\epsilon)$,
$\tilde\chi^\epsilon(t):=\chi(t-\epsilon)$,
$\hat\chi(t):=\chi(-t)$ and $\hat\kappa(t):=\kappa(-t)$.

The algebra $\mathfrak g$ can be represented as the semidirect sum of its subalgebra
$\mathfrak g_1=\langle\mathcal P^t,\mathcal P^y\rangle$ acting on the ideal
$\mathfrak g_2=\langle\mathcal P^x(\chi),\mathcal J^1,\dots,\mathcal J^m,\mathcal Z(\kappa)\rangle$.
Using this decomposition, we can apply the method for classifying subalgebras of semidirect sums of Lie algebras,
which was suggested in~\cite{pate1975a},
see also~\cite{chap2024b} and references therein for the modern view on this method.
Since each of the Lie algebras~$\mathfrak g_1$ and~$\mathfrak g_2$ is abelian,
its optimal list of subalgebras with respect to its inner automorphisms
coincides with the list of its subspaces.
Denote by $\pi_1$ the canonical projection $\pi_1\colon\mathfrak g\to\mathfrak g_1$
with respect to the vector space decomposition $\mathfrak g=\mathfrak g_1\dotplus\mathfrak g_2$.

\begin{theorem}\label{thm:1D-subalgebras}{}
A complete list of $G$-inequivalent one-dimensional subalgebras of the algebra~$\mathfrak g$
spanned by the vector fields~\eqref{eq:mLaysLieInvAlg} is exhausted by the following families.
Below $c:=(c_1,\dots,c_m)^{\mathsf T}$ and~$a$ are arbitrary constants with $c\in\mathop{\rm im}\mathsf F$
and $\chi$ and~$\kappa$ are arbitrary smooth functions of~$t$
that satisfy the indicated constraints.

\begin{enumerate}

\item
$\mathfrak s_{1.1}^{ac}:=
\langle\mathcal P^t+a\mathcal P^y+c_k\mathcal J^k\rangle$.
The subalgebras~$\mathfrak s_{1.1}^{ac}$ and~$\mathfrak s_{1.1}^{\tilde a\tilde c}$
are equivalent if and only if $\tilde a=\varepsilon a$ and $\tilde c=\varepsilon c$
for some $\varepsilon\in\{-1,1\}$.

\item
$\mathfrak s_{1.2}^{\chi c}:=
\langle\mathcal P^y+\mathcal P^x(\chi)+c_k\mathcal J^k\rangle$.
The subalgebras~$\mathfrak s_{1.2}^{\chi c}$ and~$\mathfrak s_{1.2}^{\tilde\chi\tilde c}$
are equivalent if and only if $\tilde\chi(t)=\varepsilon'\chi(\varepsilon t-\epsilon)$
and $\tilde c=c$
for some $\varepsilon,\varepsilon'\in\{-1,1\}$ and $\epsilon\in\mathbb R$.

\item
$\mathfrak s_{1.3}^{\chi c\kappa}:=
\langle \mathcal P^x(\chi)+c_k\mathcal J^k+\mathcal Z(\kappa)\rangle$
with nonzero parameter tuple $(\chi,c,\kappa)$.
The subalgebras~$\mathfrak s_{1.3}^{\chi c\kappa}$ and~$\mathfrak s_{1.3}^{\tilde\chi\tilde c\tilde\kappa}$
are equivalent if and only if
$\tilde\chi(t)=\varepsilon\beta\chi(\varepsilon t-\epsilon)$,
$\tilde\kappa(t)=\varepsilon'\beta\kappa(\varepsilon t-\epsilon)
+\epsilon'\beta\dot\chi(\varepsilon t-\epsilon)$
and $\tilde c=\varepsilon'\beta c$
for some $\varepsilon,\varepsilon'\in\{-1,1\}$, $\epsilon,\epsilon'\in\mathbb R$ and $\beta\in\mathbb R_{\ne0}$
with $\dot\chi$ denoting the derivative of~$\chi$.\looseness=-1

\end{enumerate}
\end{theorem}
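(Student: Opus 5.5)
We will follow the method for classifying subalgebras of the semidirect sum $\mathfrak g=\mathfrak g_1\lsemioplus\mathfrak g_2$. Take a one-dimensional subalgebra $\mathfrak s=\langle Q\rangle$ with
\[Q=a_1\mathcal P^t+a_2\mathcal P^y+\mathcal P^x(\chi)+c_k\mathcal J^k+\mathcal Z(\kappa).\]
Using the gauge of Remark~\ref{rem:mLayerMLIAGaugeTrans}, we fix $c\in\mathop{\rm im}\mathsf F$. We then split into cases according to the projection $\pi_1\mathfrak s$, up to the action of $G$ on $\mathfrak g_1$. Only the discrete involutions $\mathscr I^{tx}$ and $\mathscr I^{y\psi}$ act nontrivially on $\mathfrak g_1$, each changing the sign of one basis element. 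Together with rescaling of the generator, this yields three cases: $a_1\neq0$ (normalize $a_1=1$), $a_1=0\neq a_2$ (normalize $a_2=1$), and $\pi_1\mathfrak s=\{0\}$.

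\textbf{Case $a_1=1$.} We first apply $\mathscr P^x(h)$ with $h_t=-\chi$, which removes $\mathcal P^x(\chi)$. This produces an extra $\mathcal Z$-term, since $\mathscr P^x(h)_*\mathcal P^y=\mathcal P^y-\mathcal Z(h_t)$. We then apply $\mathscr Z(g)$ with $g_t$ equal to minus the accumulated $\kappa$, which kills the whole $\mathcal Z$-part. This step must use the freedom of moving a constant between $\kappa$ and $c$ before re-imposing $c\in\mathop{\rm im}\mathsf F$; in fact the constant is killed by $g$ as well. The result is $\mathfrak s^{ac}_{1.1}$.

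\textbf{Case $a_1=0$, $a_2=1$.} We apply $\mathscr P^x(h)$ with $h_t=\kappa$, which eliminates $\mathcal Z(\kappa)$. The shift in $x$ by $h$ does not affect $\mathcal P^x(\chi)$ because $\mathcal P^x$ commutes with itself. This gives $\mathfrak s^{\chi c}_{1.2}$.

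\textbf{Case $\pi_1\mathfrak s=0$.} This is $\mathfrak s^{\chi c\kappa}_{1.3}$, with only the scaling freedom remaining.

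For the equivalence conditions we compute the stabilizing actions. The generator of $G$ consists of $\mathscr P^t(\epsilon)$, $\mathscr P^y(\epsilon')$, $\mathscr P^x(h)$, $\mathscr Z(g)$, $\mathscr J^i$, the two involutions, and the rescaling of the generator by $\beta$. From the list of pushforwards we read off which combinations preserve the normal form in each case:
\begin{itemize}
\item[$\diamond$] In case~1, requiring $\mathcal P^x$ and $\mathcal Z$ to stay absent forces $h_t=0$ and $g_t=0$. Only the involutions act, and $\mathscr I^{tx}$ combined with rescaling by $-1$ flips the signs of $a$ and $c$ simultaneously. Hence the classes are determined by the pair $(a,c)$ up to $\varepsilon$.
\item[$\diamond$] In case~2, the coefficient of $\mathcal P^y$ must remain $1$. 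Then $\mathscr I^{y\psi}$ must be paired with scaling by $-1$, which changes $\chi$ to $-\chi$ and leaves $c$ fixed. Meanwhile $\mathscr I^{tx}$ and $\mathscr P^t$ reparametrize $\chi$, which produces $\varepsilon'\chi(\varepsilon t-\epsilon)$ with $\tilde c=c$.
\item[$\diamond$] In case~3, the shift $\mathscr P^y(\epsilon')$ adds $\epsilon'\mathcal Z(\chi_t)$, the scaling contributes $\beta$, and the combination of $\mathscr I^{tx}$, $\mathscr I^{y\psi}$ and $\mathscr P^t$ supplies $\varepsilon$, $\varepsilon'$ and the time shift.
\end{itemize}

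The main obstacle is proving the ``only if'' direction in each case, in particular tracking signs under $\mathscr I^{tx}$: it sends $\mathcal P^x(\chi)\mapsto-\mathcal P^x(\hat\chi)$ but $\mathcal Z(\kappa)\mapsto\mathcal Z(\hat\kappa)$, which explains the separate $\varepsilon$ and $\varepsilon'$. To handle this we will write a general element of $G$ as a product of elementary transformations, as in Theorem~\ref{thm:mLaysPointSymGroup}, push forward the normal form, and compare components. We must also keep $c$ within $\mathop{\rm im}\mathsf F$. This is automatic because $\mathscr J^i$ acts trivially on $\mathfrak g$ and the discrete maps only change the sign of $c$.
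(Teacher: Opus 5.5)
Your proposal is correct and follows essentially the same route as the paper. It uses the same case split by $\pi_1\mathfrak s$ and the same normalizing pushforwards: $\mathscr Z\big(-\int(\kappa+a\chi)\,{\rm d}t\big)\circ\mathscr P^x\big(-\int\chi\,{\rm d}t\big)$ in the first case and $\mathscr P^x\big(\int\kappa\,{\rm d}t\big)$ in the second. It also reads off the residual equivalences from the pushforward table, as the paper does. One small correction: staying in $\mathop{\rm im}\mathsf F$ is not entirely ``automatic,'' since $\mathscr Z(\epsilon t)$ in case~1 and $\mathscr P^x(-\epsilon t)$ in case~2 shift $c$ by $\epsilon\bar1$; it is the requirement $\tilde c\in\mathop{\rm im}\mathsf F$ that forces $\epsilon=0$ there, and these same transformations are what remove the $\bar1$-component of $c$ in the first place.
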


\begin{proof}

Let $\mathfrak s\subset\mathfrak g$ be a one-dimensional subalgebra of~$\mathfrak g$.
Since $\dim\pi_1\mathfrak s\leqslant1$, modulo the $G$-equivalence we can set
\smash{$\pi_1\mathfrak s\in\big\{\langle\mathcal P^t+a\mathcal P^y\rangle,\langle\mathcal P^y\rangle,\{0\}\big\}$}.
We can set the gauge $a\geqslant 0$, acting by \smash{$\mathscr I^{y\psi}_*$} if necessary.
According to Remark~\ref{rem:mLayerMLIAGaugeTrans},
we can also assume from the very beginning in each case below
that $c\in\mathop{\rm im}\mathsf F$ in the fixed basis element~$Q$ of~$\mathfrak s$.

\medskip\par\noindent	
$\boldsymbol{\pi_1\mathfrak s=\langle\mathcal P^t+a\mathcal P^y\rangle}$.
Thus, we can choose
$Q=\mathcal P^t+a\mathcal P^y+c_i\mathcal J^i+\mathcal P^x(\chi)+\mathcal Z(\kappa)$.
Pushing forward~$Q$ by the transformation
$\mathscr Z\big(-\int(\kappa+a\chi)\,{\rm d}t\big)\circ\mathscr P^x\big(-\int\chi\,{\rm d}t\big)$,
we can set $\chi=\kappa=0$,
which results in the first subalgebra family in the theorem's statement.
The equivalence within this family is generated by
the pushforwards $\mathscr Z(\epsilon t)_*$ and $\mathscr I^{y\psi}_*$.

\medskip\par\noindent	
$\boldsymbol{\pi_1\mathfrak s=\langle\mathcal P^y\rangle}$.
Taking $Q=\mathcal P^y+c_i\mathcal J^i+\mathcal P^x(\chi)+\mathcal Z(\kappa)$ as the basis vector of $\mathfrak s$,
we simplify it by $\mathscr P^x\big(\int\kappa\,{\rm d}t\big)_*$, setting the gauge $\kappa=0$
and thus obtaining the second listed subalgebra family.
Furthermore, the pushforwards $\mathscr P^t(\epsilon)_*$, $\mathscr I^{tx}_*$
and  $\mathscr P^x(-\epsilon't)_*$ generate the equivalence within this family.

\medskip\par\noindent	
$\boldsymbol{\pi_1\mathfrak s=\{0\}}$.
In other words, $\mathfrak s\subset\mathfrak g_2$.
The one-dimensional subspaces of~$\mathfrak g_2$ constitute the third listed subalgebra family,
the equivalence between which is generated by the action of~$G$.
\end{proof}

\begin{theorem}\label{thm:2D-subalgebras}
A complete list of $G$-inequivalent two-dimensional subalgebras of the algebra~$\mathfrak g$
spanned by the vector fields~\eqref{eq:mLaysLieInvAlg} is exhausted by the following subalgebra families.
Below $b:=(b_1,\dots,b_m)^{\mathsf T}$, $c:=(c_1,\dots,c_m)^{\mathsf T}$,
$a$ and~$\sigma$ are arbitrary constants with $\sigma\ne0$,
$b\in\mathop{\rm im}\mathsf F$ and, except the first three algebras, $c\in\mathop{\rm im}\mathsf F$,
and $\chi$, $\chi^1$, $\chi^2$, $\kappa$, $\kappa^1$ and~$\kappa^2$
are arbitrary smooth functions of $t$ that satisfy the indicated constraints.
\begin{enumerate}\itemsep=0.5ex
\item
$
\mathfrak s_{2.1}^{abc}:=
\langle
\mathcal P^t+b_k\mathcal J^k,\,
\mathcal P^y+\mathcal P^x(a)+c_k\mathcal J^k
\rangle
$.
The subalgebras~$\mathfrak s_{2.1}^{abc}$ and~$\mathfrak s_{2.1}^{\tilde a\tilde b\tilde c}$
are equivalent if and only if $\tilde a=\varepsilon a$,
$\tilde b=\varepsilon b$ and $\tilde c=c$ for some $\varepsilon\in\{-1,1\}$.
		
\item
$
\mathfrak s_{2.2}^{abc}:=\langle
\mathcal P^t+a\mathcal P^y+b_k\mathcal J^k,\,
\mathcal P^x(1)+c_k\mathcal J^k
\rangle
$.
The subalgebras~$\mathfrak s_{2.2}^{abc}$ and~$\mathfrak s_{2.2}^{\tilde a\tilde b\tilde c}$
are equivalent if and only if $\tilde a=\varepsilon a$,
$\tilde b=\varepsilon b$ and
$\tilde c=\varepsilon c$
for some $\varepsilon\in\{-1,1\}$.
		
\item
$
\mathfrak s_{2.3}^{abc}:=\langle
\mathcal P^t+a\mathcal P^y+b_k\mathcal J^k,\,
c_k\mathcal J^k
\rangle
$,
where the tuple $c$ is nonzero.
The subalgebras~$\mathfrak s_{2.3}^{abc}$ and~\smash{$\mathfrak s_{2.3}^{\tilde a\tilde b\tilde c}$}
are equivalent if and only if $\tilde a=\varepsilon a$,
$\tilde b=\varepsilon b+\epsilon c$ and
$\tilde c=\gamma c$
for some $\varepsilon\in\{-1,1\}$, $\epsilon\in\mathbb R$ and $\gamma\in\mathbb R_{\ne0}$.
		
\item
$
\mathfrak s_{2.4}^{ab\sigma}:=\langle
\mathcal P^t+a\mathcal P^y+b_k\mathcal J^k,\,
\mathcal P^x({\rm e}^{\sigma t})+\mathcal Z(a\sigma t{\rm e}^{\sigma t})
\rangle
$.
The subalgebras~$\mathfrak s_{2.4}^{ab\sigma}$ and~$\mathfrak s_{2.4}^{\tilde a\tilde b\tilde\sigma}$
are equivalent if and only if $\tilde a=\varepsilon a$,
$\tilde b=\varepsilon b$
and $\tilde\sigma=\varepsilon'\sigma$		
for some $\varepsilon,\varepsilon'\in\{-1,1\}$.

\item
$
\mathfrak s_{2.5}^{ab\sigma}:=\langle
\mathcal P^t+a\mathcal P^y+b_k\mathcal J^k,\,
\mathcal Z({\rm e}^{\sigma t})
\rangle
$,
The subalgebras~$\mathfrak s_{2.5}^{ab\sigma}$ and~$\mathfrak s_{2.5}^{\tilde a\tilde b\tilde\sigma}$
are equivalent if and only if $\tilde a=\varepsilon a$,
$\tilde b=\varepsilon b$ and $\tilde\sigma=\varepsilon'\sigma$		
for some $\varepsilon,\varepsilon'\in\{-1,1\}$.

\item
$
\mathfrak s_{2.6}^{\chi bc\kappa}:=\langle
\mathcal P^y+\mathcal P^x(\chi)+b_k\mathcal J^k,\,
\mathcal P^x(1)+c_k\mathcal J^k+\mathcal Z(\kappa)
\rangle
$.
Such subalgebras with parameter values $(\chi,b,c,\kappa)$ and $(\tilde\chi,\tilde b,\tilde c,\tilde\kappa)$
are equivalent if and only if
$\tilde\chi(t)=\varepsilon\varepsilon'\chi(\varepsilon t-\epsilon)+\epsilon'$,
$\tilde b=b+\varepsilon'\epsilon'c$,
$\tilde c=\varepsilon'c$,
$\tilde\kappa(t)=\varepsilon'\kappa(\varepsilon t-\epsilon)$
for some $\varepsilon,\varepsilon'\in\{-1,1\}$ and $\epsilon,\epsilon'\in\mathbb R$.

\item
$
\mathfrak s_{2.7}^{\chi bc\kappa}:=\langle
\mathcal P^y+\mathcal P^x(\chi)+b_k\mathcal J^k,\,
c_k\mathcal J^k+\mathcal Z(\kappa)
\rangle
$,
where the tuple $(c,\kappa)$ is nonzero.
Such subalgebras with parameter values $(\chi,b,c,\kappa)$ and $(\tilde\chi,\tilde b,\tilde c,\tilde\kappa)$
are equivalent if and only if
$\tilde\chi(t)=\varepsilon'\chi(\varepsilon t-\epsilon)$,
$\tilde b=b+\epsilon'c$,
$\tilde c=\alpha c$,
$\tilde\kappa(t)=\alpha\kappa(\varepsilon t-\epsilon)$
for some $\varepsilon,\varepsilon'\in\{-1,1\}$, $\epsilon,\epsilon'\in\mathbb R$
and $\alpha\in\mathbb R_{\ne0}$.

\item
$\mathfrak s_{2.8}^\upsilon:=\big\langle
\mathcal P^x(\chi^1)+b_k\mathcal J^k+\mathcal Z(\kappa^1),\,
\mathcal P^x(\chi^2)+c_k\mathcal J^k+\mathcal Z(\kappa^2)\big\rangle$,
where $\upsilon:=(\chi^1,b,\kappa^1,\chi^2,c,\kappa^2)$, and
the subtuples $(\chi^1,b,\kappa^1)$ and $(\chi^2,c,\kappa^2)$ are linearly independent.
Such subalgebras with parameter values $\upsilon$ and $\tilde\upsilon$ are equivalent if and only if
\begin{gather*}
\tilde\chi^k(t)=\varepsilon\beta_{kl}\chi^l(\varepsilon t-\epsilon),\quad
\tilde\kappa^k(t)=\beta_{kl}
\big(\varepsilon'\kappa^l(\varepsilon t-\epsilon)+\epsilon'\chi^l_t(\varepsilon t-\epsilon)\big),
\\
\tilde b=\varepsilon'\beta_{11}b+\varepsilon'\beta_{12}c,\quad
\tilde c=\varepsilon'\beta_{21}b+\varepsilon'\beta_{22}c,
\end{gather*}
where $\varepsilon,\varepsilon'\in\{-1,1\}$, $\epsilon,\epsilon'\in\mathbb R$,
$(\beta_{kl})$ is a constant nondegenerate matrix,
and the indices~$k$ and~$l$ run from 1 to~2.

\end{enumerate}
\end{theorem}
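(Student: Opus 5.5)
We follow the same scheme as in the proof of Theorem~\ref{thm:1D-subalgebras}, i.e., the method of classifying subalgebras of the semidirect sum $\mathfrak g=\mathfrak g_1\lsemioplus\mathfrak g_2$, now for a two-dimensional subalgebra~$\mathfrak s$, and split into cases according to $\dim\pi_1\mathfrak s\in\{2,1,0\}$. Throughout we use Remark~\ref{rem:mLayerMLIAGaugeTrans} to assume that the $\mathcal J$-parts of basis elements lie in $\mathop{\rm im}\mathsf F$ whenever a $\mathcal Z$-part with an arbitrary function is available for absorbing multiples of~$\bar1$. The nonidentity pushforwards listed before Theorem~\ref{thm:1D-subalgebras} are the only tools, and the equivalence statements are obtained by collecting the residual transformations preserving the chosen normal forms.

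\emph{Case $\pi_1\mathfrak s=\mathfrak g_1$.} We take $Q_1=\mathcal P^t+\mathcal P^x(\chi^1)+b_k\mathcal J^k+\mathcal Z(\kappa^1)$ and $Q_2=\mathcal P^y+\mathcal P^x(\chi^2)+c_k\mathcal J^k+\mathcal Z(\kappa^2)$. As in Theorem~\ref{thm:1D-subalgebras}, $\mathscr Z$ and $\mathscr P^x$ set $\chi^1=\kappa^1=0$. Then $[Q_1,Q_2]=\mathcal P^x(\chi^2_t)+\mathcal Z(\kappa^2_t)$ must be a combination of $Q_1,Q_2$, hence zero, so $\chi^2=a$ and $\kappa^2$ are constants. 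We absorb $\kappa^2$ into~$c$; then the gauge $c\in\mathop{\rm im}\mathsf F$ is lost, which is why the statement excludes the first families from that gauge. This gives $\mathfrak s_{2.1}$. The residual freedom consists of $\mathscr Z(\epsilon t)$, $\mathscr P^x(\epsilon t)$, which would reintroduce terms and are therefore only usable with zero parameter, and the involutions, giving $\tilde a=\varepsilon a$, $\tilde b=\varepsilon b$. $\mathscr I^{tx}$ flips $\mathcal P^t$ and $\mathcal P^x$ simultaneously.

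\emph{Case $\dim\pi_1\mathfrak s=1$.} First let $\pi_1\mathfrak s=\langle\mathcal P^t+a\mathcal P^y\rangle$. Then $Q_1$ is normalized to $\mathcal P^t+a\mathcal P^y+b_k\mathcal J^k$ as in Theorem~\ref{thm:1D-subalgebras}, and $Q_2=\mathcal P^x(\chi)+c_k\mathcal J^k+\mathcal Z(\kappa)\in\mathfrak g_2$. The closure condition reads $[Q_1,Q_2]=\mathcal P^x(\chi_t)+\mathcal Z(\kappa_t+a\chi_t)=\sigma Q_2$, with the $a$-term coming from $[\mathcal P^y,\mathcal P^x]$. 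Hence $\chi_t=\sigma\chi$, $\sigma c=0$ and $\kappa_t+a\chi_t=\sigma\kappa$. Solving this ODE system and splitting the cases $\chi\neq0$ or $\chi=0$, and $\sigma=0$ or $\sigma\neq0$, gives the normal forms $\mathfrak s_{2.2}$ ($\sigma=0$, $\chi=1$, with $\kappa$ constant absorbed into $c$ and the gauge restored), $\mathfrak s_{2.3}$ ($\sigma=0$, $\chi=0$), $\mathfrak s_{2.4}$ ($\sigma\ne0$, $\chi={\rm e}^{\sigma t}$, $\kappa=a\sigma t{\rm e}^{\sigma t}+C{\rm e}^{\sigma t}$, with $C$ removed by $\mathscr Z$ or by rescaling together with $Q_1$ modifications) and $\mathfrak s_{2.5}$ ($\chi=0$, $\kappa={\rm e}^{\sigma t}$). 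The scaling of $Q_2$ and the shift $\mathscr P^t(\epsilon)$ normalize the multiplicative constants, while adding multiples of $Q_2$ to $Q_1$ gives the $\tilde b=\varepsilon b+\epsilon c$ freedom in $\mathfrak s_{2.3}$.

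Next let $\pi_1\mathfrak s=\langle\mathcal P^y\rangle$. Then $Q_1=\mathcal P^y+\mathcal P^x(\chi)+b_k\mathcal J^k$, with $\kappa$ removed as before. We have $[Q_1,Q_2]=-\mathcal Z(\chi^2_t)$, where $Q_2=\mathcal P^x(\chi^2)+c_k\mathcal J^k+\mathcal Z(\kappa^2)$. This bracket must lie in $\mathfrak s\cap\mathfrak g_2=\langle Q_2\rangle$, which forces $\chi^2$ to be constant. If $\chi^2\neq0$ we scale to $\chi^2=1$, which gives $\mathfrak s_{2.6}$, using $\mathscr P^x(\epsilon't)$ and combinations $Q_1+\epsilon'Q_2$ for the equivalences. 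If $\chi^2=0$ we obtain $\mathfrak s_{2.7}$. Finally, if $\pi_1\mathfrak s=\{0\}$, then $\mathfrak s$ is any two-dimensional subspace of the abelian ideal~$\mathfrak g_2$, giving $\mathfrak s_{2.8}$ with the stated equivalence induced by $G$ and basis changes.

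The main obstacle we expect is the careful derivation of the equivalence conditions, especially in the cases $\mathfrak s_{2.4}$, $\mathfrak s_{2.6}$ and $\mathfrak s_{2.8}$. There one must verify that the listed transformations exhaust the stabilizer-type freedom. We would do this by applying a general element of $G$, as given in Theorem~\ref{thm:mLaysPointSymGroup}, to the normal form and comparing components. The same comparison must also be kept consistent with the gauge $c\in\mathop{\rm im}\mathsf F$.
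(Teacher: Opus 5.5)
Your case split on $\pi_1\mathfrak s$ and the normalizations match the paper's proof. However, the derivation of $\mathfrak s_{2.4}$ has a real gap.

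First, a sign error: $[\mathcal P^y,\mathcal P^x(\chi)]=-\mathcal Z(\chi_t)$, so closure gives $\kappa_t-a\chi_t=\sigma\kappa$. Your equation $\kappa_t+a\chi_t=\sigma\kappa$ would give $\kappa=(C-a\sigma t){\rm e}^{\sigma t}$, not the form you then write down.

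More substantively, the constant $C$ in $\kappa=(a\sigma t+C){\rm e}^{\sigma t}$ cannot be removed by the means you name. The pushforwards $\mathscr Z(g)_*$ and $\mathscr P^x(h)_*$ act nontrivially only on $\mathcal P^t$ and $\mathcal P^y$, hence only on $Q_1$. Rescaling $Q_2$, or adding multiples of $Q_2$ to $Q_1$, leaves the normalized $Q_2$ unchanged. The shift $\mathscr P^t(\epsilon)_*$ moves $C$ to $C-a\sigma\epsilon$, so it fails when $a=0$. The missing ingredient is $\mathscr P^y(\epsilon)$. It fixes $Q_1$ and maps $\mathcal P^x({\rm e}^{\sigma t})$ to $\mathcal P^x({\rm e}^{\sigma t})+\epsilon\sigma\mathcal Z({\rm e}^{\sigma t})$, so choosing $\epsilon=-C/\sigma$ kills $C$; this is the transformation the paper uses.

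Second, in $\mathfrak s_{2.2}$ the gauge $c\in\mathop{\rm im}\mathsf F$ cannot be restored after absorbing the constant $\kappa$ into $c$. Nothing in $G$ changes the $\langle\bar1\rangle$-component of $c$ in $\mathcal P^x(1)+c_k\mathcal J^k$. Indeed, $\mathscr P^y$ acts on $\mathcal P^x(\chi)$ only through $\chi_t=0$, and $Q_1$ cannot be added to $Q_2$ because it contains $\mathcal P^t$. This is exactly why the statement leaves $c$ arbitrary in the first three families, as you yourself noted for $\mathfrak s_{2.1}$.

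Finally, a warning for the equivalence check you plan via a general element of $G$. Unlike in $\mathfrak s_{2.1}$, in $\mathfrak s_{2.2}$ the transformation $\mathscr P^x(\epsilon t)$ is usable: the term $\mathcal P^x(\epsilon)$ it adds to $Q_1$ is cancelled by replacing $Q_1$ with $Q_1-\epsilon Q_2$. Followed by $\mathscr Z(\gamma t)$ with a suitable $\gamma$, this sends $b$ to $b-\epsilon c$ modulo $\langle\bar1\rangle$. So the residual freedom in $\mathfrak s_{2.2}$ is larger than the condition $\tilde b=\varepsilon b$ in the statement; the paper's proof invokes only $\mathscr I^{y\psi}_*$ for this family. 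You should therefore not expect to confirm that condition verbatim.
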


\begin{proof}
Let $\mathfrak s$ be a two-dimensional subalgebra of $\mathfrak g$.
Modulo the $G$-equivalence we can set
$\pi_1\mathfrak s\in\{\langle\mathcal P^t,\mathcal P^y\rangle,\langle\mathcal P^t+a\mathcal P^y\rangle,\langle\mathcal P^y\rangle,\{0\}\}$.
Assume that $\mathfrak s$ is spanned by vector fields $Q_1$ and $Q_2$,
whose forms will be specified depending on the case under the consideration.

\medskip\par\noindent	
$\boldsymbol{\pi_1\mathfrak s=\langle\mathcal P^t,\mathcal P^y\rangle}$.
The subalgebra $\mathfrak s$ is spanned by the vector fields
\begin{gather*}
Q_1=\mathcal P^t+b_k\mathcal J^k+\mathcal P^x(\chi^1)+\mathcal Z(\kappa^1),
\quad
Q_2=\mathcal P^y+c_k\mathcal J^k+\mathcal P^x(\chi^2)+\mathcal Z(\kappa^2).
\end{gather*}
Pushing forward the basis vector fields by the transformation
$\mathscr Z\big(-\int\kappa^1{\rm d}t\big)\circ\mathscr P^x\big(-\int\chi^1{\rm d}t\big)$,
we can set $\kappa^1=\chi^1=0$.
Since $\pi_1\mathfrak s$ is abelian, the algebra $\mathfrak s$ is abelian as well.
The commutation relation $[Q_1,Q_2]=0$ implies $\chi^2_t=\kappa^2_t=0$.
Denoting the constants $c_i+\kappa^2$ by $c_i$ and~$\chi^2$ by~$a$,
we obtain the first subalgebra family listed in the theorem's statement.
The equivalence within this family is generated by
the pushforwards $\mathscr Z(\epsilon t)_*$ and $\mathscr I^{y\psi}_*$.

\medskip\par\noindent	
$\boldsymbol{\pi_1\mathfrak s=\langle\mathcal P^t+a\mathcal P^y\rangle}$.
We can assume the vector fields $Q_1$ and $Q_2$ take the following form:
\begin{gather*}
Q_1=\mathcal P^t+a\mathcal P^y+\mathcal P^x(\chi^1)+b_k\mathcal J^k+\mathcal Z(\kappa^1),
\quad
Q_2=\mathcal P^x(\chi^2)+c_k\mathcal J^k+\mathcal Z(\kappa^2).
\end{gather*}
Acting on $\mathfrak s$ by $\mathscr Z\big(-\int(\kappa^1+a\chi^1)\,{\rm d}t\big)_*\circ\mathscr P^x\big(-\int\chi^1\,{\rm d}t\big)_*$,
we set $\chi^1=\kappa^1=0$.

If the subalgebra $\mathfrak s$ is abelian,
the commutation relation $[Q_1,Q_2]=0$ implies $\chi^2_t=0$ and $\kappa^2_t=0$.
Denoting the constants $c_i+\kappa^2$ by $c_i$,
we obtain the second and the third listed subalgebra families
depending on whether the constant $\chi^2$ is nonzero or zero, respectively.
If $\chi^2\ne0$, then we can set $\chi^2=1$ by rescaling the vector field~$Q_2$,
and the equivalence within the corresponding family of subalgebras
is generated by the pushforward $\mathscr I^{y\psi}_*$.
Within the family with $\chi^2=0$, the equivalence additionally involves linear recombination of the basis elements.

If the subalgebra $\mathfrak s$ is nonabelian,
then the commutation relation $[Q_1,Q_2]=\sigma Q_2$ with nonzero~$\sigma$ gives the system
\[
c_1=\dots=c_m=0,\quad
\chi^2_t=\sigma\chi^2,\quad
\kappa^2_t-a\chi^2_t=\sigma\kappa^2.
\]
Hence $\chi^2=\beta{\rm e}^{\sigma t}$, $\kappa^2=(\beta a\sigma t+\delta){\rm e}^{\sigma t}$
for some real constants $\beta$ and $\delta$ with $(\beta,\delta)\ne(0,0)$.
Depending on whether $\beta$ is zero or not, we obtain the fifth and fourth listed families.
In the former case, we rescale~$Q_2$ to set $\delta=1$.
In the latter case, we can set $\delta=0$ and $\beta=1$
by the pushforward $\mathscr P^y(-\beta^{-1}\delta)_*$ and rescaling~$Q_2$.
The equivalence within each of these families is generated by
the pushforwards $\mathscr I^{tx}_*$ and $\mathscr I^{y\psi}_*$.

\medskip\par\noindent	
$\boldsymbol{\pi_1\mathfrak s=\langle\mathcal P^y\rangle}$.
In view of Theorem~\ref{thm:1D-subalgebras} and modulo linearly combining~$Q_1$ and~$Q_2$,
we can set $Q_1=\mathcal P^y+\mathcal P^x(\chi^1)+b_k\mathcal J^k$
and $Q_2=\mathcal P^x(\chi^2)+c_k\mathcal J^k+\mathcal Z(\kappa^2)$.
The commutation relation $[Q_1,Q_2]=\mathcal Z(\chi^2_t)$ implies $\chi^2_t=0$.
The cases with $\chi^2\ne0$ and $\chi^2=0$
correspond to the sixth and the seventh subalgebra families, respectively.
We also relabel $(\chi^1,\kappa^2)$ as $(\chi,\kappa)$ and,
for the sixth subalgebra family, set $\chi^2=1$ by scaling the vector field $Q_2$.
Note that in addition to the $G$-equivalence within each of these subalgebra families,
we can also replace $Q_1$ with the linear combination $Q_1+\gamma Q_2$, where $\gamma\in\mathbb R$,
and use the pushforward $\mathscr P^x\big(\int\gamma\kappa\,{\rm d}t\big)_*$
to make the $\mathcal Z$-summand in this vector field zero.

\medskip\par\noindent	
$\boldsymbol{\pi_1\mathfrak s=\{0\}}$.
This means that~$\mathfrak s$ is a subalgebra of the abelian algebra~$\mathfrak g_2$,
which gives us the eighth subalgebra family.
The equivalence within this subalgebra family is generated by the action of~$G$
and the linear recombination of the basis vector fields.
This equivalence is described in the theorem's statement.
\end{proof}

\begin{remark}\label{rem:mLaysMod3DSubalgs}
A three-dimensional subalgebra of~$\mathfrak g$ satisfies the transversality condition
and thus is appropriate for Lie reduction if and only if
it is $G$-equivalent to a subalgebra of the form
\[
\langle
\mathcal P^t+a_k\mathcal J^k,\,
\mathcal P^y+b_k\mathcal J^k,\,
\mathcal P^x(1)+c_k\mathcal J^k
\rangle,
\]
where $a:=(a_1,\dots,a_m)^{\mathsf T}$, $b:=(b_1,\dots,b_m)^{\mathsf T}$ and $c:=(c_1,\dots,c_m)^{\mathsf T}$
are arbitrary constant tuples with $a,b\in\mathop{\rm im}\mathsf F$.
An ansatz constructed with respect to such a subalgebra is
$\psi=\phi+xc+yb+ta$, where $\phi$ is the $m$-tuple of unknown constants,
which can be set to zero by the point symmetry transformation $\psi-\phi\to\psi$ of the system~\eqref{eq:mLaysMod}.
The corresponding reduced system is thus trivial;
this is just the compatibility condition of the ansatz with the system~\eqref{eq:mLaysMod},
which constraints the subalgebra parameters,
$\mathsf Fa+c\odot\mathsf Fb-b\odot\mathsf Fc+\beta c=0$.
This implies that $c\perp_{\mathsf W}^{}\langle\bar1\rangle=\ker\mathsf F$,
i.e., $c\in\mathop{\rm im}\mathsf F$ as well,
and $a=-\mathsf F^+(c\odot\mathsf Fb-b\odot\mathsf Fc+\beta c)$,
where the Moore--Penrose inverse~$\mathsf F^+$ of~$\mathsf F$ is given by~\eqref{eq:MNInverse}.
All the solutions of~\eqref{eq:mLaysMod} constructed in this way are very simple.
They are $\mathfrak s_{2.6}^{0bc0}$-invariant
and belong to the solution family~\eqref{eq:MultilayerLinInXY}, see Section~\ref{sec:ReductionS26} below.
This completes the study of codimension-three Lie reductions of the system~\eqref{eq:mLaysMod}.
\end{remark}

\section{Codimension-one Lie reductions}\label{sec:CodimOneLieReds}

To construct a Lie invariant ansatz for a system~$\mathcal L$ of differential equations
with $n$ independent variables $x=(x_1,\dots,x_n)$ and $m$ dependent variables $u=(u^1,\dots,u^m)$,
one selects a Lie subalgebra $\mathfrak s$ of the maximal Lie invariance algebra $\mathfrak g$ of~$\mathcal L$
satisfying the local transversality condition.
This condition means that the dimension of $\mathfrak s$
is equal to the dimension of the pushforward of $\mathfrak s$ by the canonical projection
from the space with the coordinates $(x,u)$ to the space with the coordinates~$x$.
If $(Q_1,\dots,Q_d)$ is a basis of the subalgebra~$\mathfrak s$,
then an $\mathfrak s$-invariant ansatz is a representation of the general solution
of the system~$\mathcal Q$ of quasilinear first-order partial differential equations $Q_1[u]=0$, \dots, $Q_d[u]=0$,
where~$Q[u]$ denotes the characteristic of a vector field~$Q$.
More specifically, given $n+m-d$ functionally independent particular solutions of~$\mathcal Q$,
which constitute a functional basis of $\mathfrak s$-invariants,
$m$ of them are assumed to be the new (invariant) dependent variables $v=(v^1,\dots,v^m)$
of the other $n-d$ solutions,
which are considered as the new (invariant) independent variables $z=(z_1,\dots,z_{n-d})$.
Substituting the ansatz into the system~$\mathcal L$, one obtains
a codimension-$d$ reduced system in terms of the invariant variables $(x,v)$.
The substitution of any solution of this reduced system into the ansatz
gives an $\mathfrak s$-invariant solution of~$\mathcal L$.
See \cite[Section~5.2.1]{blum2010A}, \cite[Chapter~3]{olve1993A} or \cite[Chapter~V]{ovsi1982A}
for a related theoretical background
as well as \cite[Section~2]{vinn2024a} and \cite[Section~B]{kova2023b}
for the descriptions of the optimized procedures of Lie reduction in the case of three independent variables
and of step-by-step reductions with involving hidden symmetries, respectively.

Using the subalgebras listed in Theorem~\ref{thm:1D-subalgebras},
we construct a complete list of $G$-inequivalent group-invariant codimension-one submodels
of the system~\eqref{eq:mLaysMod}.
The systematic study of these submodels, even within the framework of classical symmetry analysis, is too involved.
In particular, for each reduced system, the complexity of computing its Lie invariance algebra
is comparable to that presented in Section~\ref{sec:LieInvAlg} for the system~\eqref{eq:mLaysMod}.
Nevertheless, we were able to compute the Lie invariance algebras for
all of the above submodels by hand,
and in addition, we checked these results using several packages for symbolic computation of Lie symmetries
for low values of the number of layers~$m$.

In this section,~$v^1$, \dots, $v^m$ denote the new dependent variables
of the two new independent variables $(z_1,z_2)$, $v:=(v^1,\dots,v^m)^{\mathsf T}$
and the subscripts~1 and~2 of functions
denote derivatives with respect to~$z_1$ and~$z_2$, respectively

\subsection{Subalgebra family 1.1}\label{sec:ReductionS11}

\subsubsection{Reduced system}

An ansatz constructed with the one-dimensional subalgebra
$\mathfrak s_{1.1}^{ac}:=\langle\mathcal P^t+a\mathcal P^y+c_k\mathcal J^k\rangle$
of~$\mathfrak g$,
where $a$ and $c:=(c_1,\dots,c_m)^{\mathsf T}$ are arbitrary constants with $c\in\mathop{\rm im}\mathsf F$,
is $\psi^i=v^i+c_it+ax$ or,~in the vector notation,
\[
\psi=v+tc+ax\bar 1\quad\mbox{with}\quad z_1=x,\quad z_2=y-at,
\]
where $\bar 1$ is the all-ones $m$-column defined in Section~\ref{sec:VectorFormOfSyst}.
The corresponding reduced system takes the~form
\begin{gather}\label{eq:RedSystSubalgS11}
\begin{split}
&\{v^i,w^i\}
+f_{i,i-1}\big(c_{i-1}-c_i\big)
-f_{i,i+1}\big(c_i-c_{i+1}\big)+a\beta=0,
\\
&w^i:=v^i_{11}+v^i_{22}+f_{i,i-1}(v^{i-1}-v^i)-f_{i,i+1}(v^i-v^{i+1})+\beta z_2,
\end{split}
\end{gather}
where $\{v^i,w^i\}:=v^i_1w^i_2-v^i_2w^i_1$ is the Poisson bracket of~$v^i$ and~$w^i$, and $c_0,c_{m+1}:=0$.
The matrix form of the system~\eqref{eq:RedSystSubalgS11} is
\begin{gather}\label{eq:RedSystSubalgS11MatrixForm}
\begin{split}
&\{v,w\}+\mathsf Fc+a\beta\bar 1=0,\\
&w:=v_{11}+v_{22}+\mathsf Fv+\beta z_2\bar 1
\end{split}
\end{gather}
with the componentwise Poisson bracket $\{v,w\}$, $\{v,w\}:=(\{v^i,w^i\})^{\mathsf T}$
and the matrix $\mathsf F=(f_{ij})$ defined in Section~\ref{sec:VectorFormOfSyst}.
Using the package {\sf DESOLV}~\cite{carm2000a} for the computer algebra system {\sf Maple}
we computed the maximal Lie invariance algebra of the system~\eqref{eq:RedSystSubalgS11} for
low values of the number of layers~$m$, $m\in\{2,3,4\}$,
but such a computation is not possible for an arbitrary value of~$m$.
Nevertheless, we can make it by hand.

\begin{theorem}\label{thm:mLaysLieInvAlgRedSystem1.1}
The maximal Lie invariance algebra of the system~\eqref{eq:RedSystSubalgS11}
is the linear span \[\langle \p_{z_1},\p_{z_2},\p_{v^1},\dots,\p_{v^m}\rangle,\]
and it is entirely induced by
the maximal Lie invariance algebra~$\mathfrak g$ of the system~\eqref{eq:mLaysMod}.
\end{theorem}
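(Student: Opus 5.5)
We mimic the strategy of Section~\ref{sec:LieInvAlg}, adapted to the stationary setting. First we pass to the complex variables $\zeta=z_1+{\rm i}z_2$, $\bar\zeta=z_1-{\rm i}z_2$, in which the Laplacian becomes $4\p_\zeta\p_{\bar\zeta}$. The system~\eqref{eq:RedSystSubalgS11} then reads
\[
-2{\rm i}\big(v^i_\zeta v^i_{\zeta\zeta\bar\zeta}-v^i_{\bar\zeta}v^i_{\zeta\bar\zeta\bar\zeta}\big)\cdot 4
+\text{(terms of order}\leqslant1\text{ in }v\text{)}=0 .
\]
Each equation is of third order but contains only the two third-order derivatives $v^i_{\zeta\zeta\bar\zeta}$ and $v^i_{\zeta\bar\zeta\bar\zeta}$, and these enter linearly with coefficients $\mp v^i_{\bar\zeta}$, $v^i_\zeta$. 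Since there is no $\p_t$, the natural "solved" variable is, say, $v^i_{\zeta\zeta\bar\zeta}$, expressed through $v^i_{\zeta\bar\zeta\bar\zeta}$ and lower derivatives wherever $v^i_{\bar\zeta}\ne0$. As in Lemmas~\ref{lem:LieSymVecFieldClassV} and~\ref{lem:LieSymVecFieldClassHatV}, we embed the system into a superclass of systems of this shape with arbitrary lower-order part $\hat F^i$ and derive the principal determining equations there.

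Concretely, we take a general vector field $Q=\xi\p_\zeta+\bar\xi\p_{\bar\zeta}+\eta^j\p_{v^j}$ and apply the infinitesimal invariance criterion. We substitute for $v^i_{\zeta\zeta\bar\zeta}$ and split with respect to the parametric third-order jet variables $v^j_{\zeta\zeta\zeta}$, $v^j_{\bar\zeta\bar\zeta\bar\zeta}$, $v^j_{\zeta\bar\zeta\bar\zeta}$, then with respect to second-order ones $v^j_{\zeta\zeta}$, $v^j_{\bar\zeta\bar\zeta}$, $v^j_{\zeta\bar\zeta}$. Then:
\begin{itemize}
\item The coefficients of $v^j_{\zeta\zeta\zeta}$ and $v^j_{\bar\zeta\bar\zeta\bar\zeta}$ should give $\xi_{\bar\zeta}=0$ and $\xi_{v^j}=0$, i.e.\ conformality.
\item The coefficients of the mixed second-order terms should force $\eta^i$ to be affine in $v$ and diagonal, $\eta^i=\Psi^i v^i+\Phi^i(\zeta,\bar\zeta)$.
\item The cubic structure $v^i_\zeta v^i_{\zeta\bar\zeta\bar\zeta}$ should yield $\Psi^i=\xi_\zeta+\bar\xi_{\bar\zeta}$ (constant), hence $\xi=\gamma\zeta+\delta$ with constant $\gamma,\delta$, together with $\Phi^i_{\zeta}=\Phi^i_{\bar\zeta}=0$ up to terms fixed by $\gamma$ (the analogue of~\eqref{eq:Phii} without $t$-derivatives).
\end{itemize}
This leaves at most a similarity-rotation $\gamma\zeta\p_\zeta+\bar\gamma\bar\zeta\p_{\bar\zeta}+(\gamma+\bar\gamma)v\cdot\p_v$, translations, and constant shifts $\Phi^i$.

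Next we specify to the actual lower-order part. Splitting the remaining condition with respect to first-order derivatives $v^j_\zeta,v^j_{\bar\zeta}$, the coupling terms $f_{i,i\pm1}v^i_\zeta v^{i\pm1}_{\bar\zeta}$ scale with weight $\gamma+\bar\gamma$ relative to the cubic terms and must vanish, giving ${\rm Re}\,\gamma=0$. The $\beta$-term $\beta v^i_{z_1}$ breaks rotations, giving ${\rm Im}\,\gamma=0$. The inhomogeneous constants $\mathsf Fc+a\beta\bar1$ and $\beta z_2$ inside $w$ are invariant under $\p_{z_1},\p_{z_2}$ (the latter because $\beta z_2$ enters $w$ only through derivatives in the bracket) and under shifts of $v$, since $\mathsf F\bar1=0$ and shifts of individual $v^i$ change $w^i$ by constants. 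This leaves exactly $\langle\p_{z_1},\p_{z_2},\p_{v^1},\dots,\p_{v^m}\rangle$.

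The main obstacle is the first splitting step. Unlike in Lemma~\ref{lem:LieSymVecFieldClassV}, there is no distinguished $\p_t\p_z\p_{\bar z}$ derivative whose prolongation coefficient cleanly contains all the needed jet variables. Solving for $v^i_{\zeta\zeta\bar\zeta}$ introduces division by $v^i_{\bar\zeta}$, so the splitting must be done after multiplying through, and hidden (e.g.\ hodograph-type or $v$-dependent) symmetries have to be excluded carefully. If this proves too messy, we will first treat the single-equation block (the one-layer stationary vorticity equation, whose symmetries are known to include Euclidean and scaling symmetries plus $v$-shifts) and use that the coupling terms with all $f_{i,i\pm1}\ne0$ reduce the symmetries to those common to all layers.

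Finally, for inducedness we observe the following correspondences:
\begin{itemize}
\item $\p_{z_1}$ is induced by $\mathcal P^x(1)$, up to the adjustment $\mathcal J$-terms coming from the ansatz term $ax\bar1$, i.e.\ by $\mathcal P^x(1)-a\,\mathcal Z(1)$;
\item $\p_{z_2}$ is induced by $\mathcal P^y$;
\item $\p_{v^i}$ is induced by $\mathcal J^i$.
\end{itemize}
All of these commute with $\mathcal P^t+a\mathcal P^y+c_k\mathcal J^k$, i.e.\ they lie in its normalizer in~$\mathfrak g$.
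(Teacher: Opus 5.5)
Your overall route (complex variables, splitting with respect to third- and second-order jet variables, then using the lower-order terms, then checking inducedness via the normalizer of $\mathfrak s_{1.1}^{ac}$) is the paper's. However, your third splitting step is wrong, and the error is never repaired. The relation $\Psi^i=\xi_\zeta+\bar\xi_{\bar\zeta}$ is the stationary analogue of~\eqref{eq:constPsi}. In Lemma~\ref{lem:LieSymVecFieldClassHatV}, that relation came from balancing the \emph{linear} top-order term $\psi^i_{tz\bar z}$ against the quadratic ones. The reduced system~\eqref{eq:RedSystSubalgS11ComplexForm} has no such term. Its top-order part $v^i_\zeta v^i_{\zeta\bar\zeta\bar\zeta}-v^i_{\bar\zeta}v^i_{\zeta\zeta\bar\zeta}$ is homogeneous of degree two in $v^i$. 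For $\xi=\gamma\zeta+\delta$, $\eta^i=\Psi^iv^i+\Phi^i$, both summands acquire the same weight $2\Psi^i-2(\gamma+\bar\gamma)$ whatever $\Psi^i$ is. So nothing ties $\eta^i_{v^i}$ to $\xi_\zeta$ at the superclass level, and every $v^i\p_{v^i}$ survives there (the decoupled system with $G^i=0$ admits all of them). Your list ``similarity-rotation with $v$-weight $\gamma+\bar\gamma$, translations, shifts'' is therefore not exhaustive. Your own fallback remark supplies a counterexample: the stationary $\beta$-plane equation, taken in each layer, admits $z_1\p_{z_1}+z_2\p_{z_2}+3v^j\p_{v^j}$, with $v$-weight three, not two. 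For the same reason, ``hence $\xi=\gamma\zeta+\delta$'' is unjustified at that point. In the paper, $\xi_{zz}=\bar\xi_{\bar z\bar z}=0$ comes from the coefficients of the second-order derivatives, $v^i_{\bar z}\xi_{zz}=v^i_z\bar\xi_{\bar z\bar z}$. Likewise, $\eta^i_z=\eta^i_{\bar z}=0$ comes from the coefficient of $v^i_{zz\bar z}$.

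Consequently, your final step does not exclude all candidates. Rotation combined with the common scaling $v^j\p_{v^j}$ (all $\Psi^i$ equal, ${\rm Re}\,\gamma=0$) passes your coupling test. The constant term $\mathsf Fc+a\beta\bar1$ vanishes for $a=0$, $c=0$, so it cannot help. And you use the $\beta$-term only to break rotations. The repair is to keep the $\Psi^i$ free and do what the paper does with the first-degree terms. Comparing the weights of $\beta v^i_\zeta$ and $\beta v^i_{\bar\zeta}$ with that of the top-order part gives $\gamma=\bar\gamma$ and $\Psi^i=3\gamma$ (the paper's $\xi_z=\bar\xi_{\bar z}$, $\eta^i_{v^i}=3\xi_z$). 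Then the coupling terms $f_{i,i\pm1}(v^i_\zeta v^{i\pm1}_{\bar\zeta}-v^i_{\bar\zeta}v^{i\pm1}_\zeta)$ require $\Psi^{i\pm1}=\Psi^i-(\gamma+\bar\gamma)$. Hence $\gamma=0$ and all $\Psi^i=0$ (the paper's $\xi_z=0$ from the quadratic terms). Your inducedness part is fine up to a sign. Since $v=\psi-tc-ax\bar1$, the field $\mathcal P^x(1)$ induces $\p_{z_1}-a(\p_{v^1}+\dots+\p_{v^m})$, so $\p_{z_1}$ is induced by $\mathcal P^x(1)+a\mathcal Z(1)$; this does not affect the span.
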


\begin{proof}
We follow the consideration of Section~\ref{sec:LieInvAlg}.
More specifically, we formally replace the real independent variables~$(z_1,z_2)$
by the complex conjugate variables $z=z_1+{\rm i}z_2$ and $\bar z=z_1-{\rm i}z_2$, where ${\rm i}$ is the imaginary unit,
which reduces the system~\eqref{eq:RedSystSubalgS11} to the form
\begin{gather}\label{eq:RedSystSubalgS11ComplexForm}
\begin{split}
&v^i_zv^i_{z\bar z\bar z}-v^i_{\bar z}v^i_{zz\bar z}+G^i(z,\bar z,v,v_z,v_{\bar z})=0,\\
&G^i:=\frac14\big(v^i_z(\mathsf Fv_{\bar z})^i-v^i_{\bar z}(\mathsf Fv_z)^i\big)
+\frac{\beta\rm i}8(v^i_z+v^i_{\bar z}+a)+\frac{\rm i}8(\mathsf Fc)^i.
\end{split}
\end{gather}
The condition of infinitesimal invariance of the system~\eqref{eq:RedSystSubalgS11ComplexForm}
with respect to a vector field of the form
$Q=\xi(z,\bar z,v)\p_z+\bar\xi(z,\bar z,v)\p_{\bar z}+\eta^j(z,\bar z,v)\p_{v^j}$,
\begin{gather*}
\eta^{i,z}v^i_{z\bar z\bar z}+v^i_z\eta^{i,z\bar z\bar z}
-\eta^{i,\bar z}v^i_{zz\bar z}-v^i_{\bar z}\eta^{i,zz\bar z}
+Q_{(1)}G^i=0\quad\mbox{on solutions of~\eqref{eq:RedSystSubalgS11ComplexForm}},
\end{gather*}
is successively split with respect to parametric derivatives
when assuming $v^i_{z\bar z\bar z}$ as the leading derivatives and
taking into account the constraints derived for its components on the previous steps.
Here and in what follows,
$\bar\xi$ is the conjugate of~$\xi$,
$Q_{(1)}$ is the first prolongation of~$Q$,
$\eta^{i,\mu\nu\kappa}$~are the components of the third prolongation of~$Q$
for third-order derivatives,
${\rm D_\mu}$ denotes the total derivative operator with respect to $\mu\in\{z,\bar z\}$,
\begin{gather*}
Q_{(1)}=Q+\eta^{j,\mu}\p_{v^j_\mu}\quad\mbox{with}\quad
\eta^{i,\mu}={\rm D}_\mu (\eta^i-\xi v^i_z-\bar\xi v^i_{\bar z})
+\xi v^i_{z\mu}+\bar\xi v^i_{\bar z\mu},\\
\eta^{i,\mu\nu\kappa}={\rm D}_\mu{\rm D}_\nu{\rm D}_\kappa(\eta^i-\xi v^i_z-\bar\xi v^i_{\bar z})
+\xi v^i_{z\mu\nu\kappa}+\bar\xi v^i_{\bar z\mu\nu\kappa},\\
{\rm D_\mu}=\p_\mu+v^j_\mu\p_{v^j}+v^j_{\mu\nu}\p_{v^j_\nu}
+v^j_{\mu\nu\kappa}\p_{v^j_{\nu\kappa}}+v^j_{\mu\nu\kappa\lambda}\p_{v^j_{\nu\kappa\lambda}}+\cdots,
\end{gather*}
and the indices~$\kappa$, $\lambda$ and~$\nu$ run through the set $\{z,\bar z\}$.
Separately collecting the coefficients of the third-order jet variables~$v^i_{zzz}$, $v^i_{\bar z\bar z\bar z}$,
$v^j_{zz\bar z}$ with $j\ne i$ and $v^i_{zz\bar z}$ in the above $i$th condition,
we derive the equations
${\rm D}_z\bar\xi=0$, ${\rm D}_{\bar z}\xi=0$,
$\eta^i_{v^j}=0$, $j\ne i$, and
$v^i_{\bar z}\eta^i_z=v^i_z\eta^i_{\bar z}$,
which further splits to $\bar\xi_z=\xi_{\bar z}=\xi_{v^k}=0$ and $\eta^i_z=\eta^i_{\bar z}=0$.
Similarly continuing with the second-order jet variables~$v^i_{zz}$, $v^i_{\bar z\bar z}$ and $v^i_{z\bar z}$,
we obtain ${\rm D}_{\bar z}\eta^i_{v^i}=0$, ${\rm D}_z\eta^i_{v^i}=0$,
$v^i_{\bar z}\xi_{zz}=v^i_z\bar\xi_{\bar z\bar z}$,
which implies $\eta^i_{v^iv^i}=0$ and $\xi_{zz}=\bar\xi_{\bar z\bar z}=0$.
The next step is to collect the coefficients of the first-degree terms
with respect to the jet variables~$v^i_z$ and $v^i_{\bar z}$,
giving $\xi_z=\bar\xi_{\bar z}$ and $\eta^i_{v^i}=3\xi_z$.
Finally, analyzing the quadratic terms with respect to~$(v^k_z,v^k_{\bar z})$,
results in the equation $\xi_z=0$.
Therefore, all the components of~$Q$ are constants,
which is an obvious solution of the determining equations for~$Q$.

The normalizer ${\rm N}_{\mathfrak g}(\mathfrak s_{1.1}^{ac})$
of the subalgebra $\mathfrak s_{1.1}^{ac}$ in the algebra $\mathfrak g$
is spanned by the vector fields
$\mathcal P^t$, $\mathcal P^y$, $\mathcal P^x(1)$, $\mathcal J^1$,~\dots, $\mathcal J^m$.
Therefore, the algebra of induced symmetries of the reduced system~\eqref{eq:RedSystSubalgS11}
is the linear span $\langle \p_{z_1},\p_{z_2},\p_{v^1},\dots,\p_{v^m}\rangle$,
which coincides with the maximal Lie invariance algebra of this system.
\end{proof}

\subsubsection{Case of affine dependence}\label{sec:CaseOfAffDependence}

An important particular case of the submodels of the form~\eqref{eq:RedSystSubalgS11}
corresponds to the zero values of the parameter combinations
$f_{i,i-1}(c_{i-1}-c_i)-f_{i,i+1}(c_i-c_{i+1})+a\beta$, $i=1,\dots,m$.
The matrix form of this vanishing condition is $\mathsf Fc+a\beta\bar 1=0$.
Under the physical constraints of positivity on the essential components of the matrix~$\mathsf F$
and the parameter~$\beta$,
we have $\ker\mathsf F\perp_{\mathsf W}^{}\mathop{\rm im}\mathsf F$ and
can split the vanishing condition into $a=0$ and $\mathsf Fc=0$,
i.e., $c\in\langle\bar 1\rangle$, and thus $c=0$ since $c\perp_{\mathsf W}^{}\bar 1$ here.
This gives stationary solutions of the original system~\eqref{eq:mLaysMod}.
For each~$i$, the equation $\{v^i,w^i\}=0$ implies that $v^i$  and~$w^i$ are functionally dependent.

Each solution for which this dependence is affine satisfies
a linear system of partial differential equations of the form $w^i=b_{1i}v^i+b_{0i}$
with some real constants $b_{1i}$ and~$b_{0i}$.
Collecting the parameters $b_{1i}$ and $b_{0i}$
into the matrix $\mathsf B:=\mathop{\rm diag}(b_{11},\dots,b_{1m})$
and the column $b:=(b_{01},\dots,b_{0m})^{\mathsf T}$, respectively,
we rewrite the above system in the matrix form
\begin{gather}\label{eq:ReducitonS11LinearModons}
v_{11}+v_{22}+(\mathsf F-\mathsf B)v=b-\beta z_2\bar 1.
\end{gather}
Since the matrix $\mathsf F-\mathsf B$ is tridiagonal
with the same subdiagonal and the superdiagonal entries as those of~$\mathsf F$,
which are positive,
it is also similar to a symmetric tridiagonal matrix via the same matrix~$\mathsf D$
as the matrix~$\mathsf F$ is, see Section~\ref{sec:PropertiesOfMatrixF}.
Hence, it is diagonalizable and its eigenvalues~$\nu_i$ are real and pairwise distinct.
Thus, we can assume that $\nu_1<\dots<\nu_m$.
Consider the weighted inner product $(\cdot,\cdot)_{\mathsf W}$ with the weight matrix $\mathsf W:=\mathsf D^{-2}$
and denote by $\tilde e_{\nu_i}$ an eigenvector of $\mathsf F-\mathsf B$
with the eigenvalue $\nu_i$ and $\|\tilde e_{\nu_i}\|_{\mathsf W}^{}=1$.
Since $\mathsf F-\mathsf B$ is irreducible,
the first and the last components of each of the eigenvectors are nonzero.
According to the Perron--Frobenius theorem, 
one can choose $\tilde e_{\nu_m}$ with all components to be positive,
and all the other eigenvectors~$\tilde e_{\nu_i}$ (with $i<m$)
necessarily have components with opposite signs.
These eigenvectors constitute an orthonormal basis
in the $\mathsf W$-weighted inner product $(\cdot,\cdot)_{\mathsf W}^{}$.
The transition matrix~$\mathsf R$ to the obtained eigenbasis admits
the representation $\mathsf R=\mathsf D\tilde{\mathsf O}$,
where the matrix $\tilde{\mathsf O}$ is orthogonal with respect to the standard inner product
and the diagonal matrix~$\mathsf D$ is defined in Section~\ref{sec:PropertiesOfMatrixF},
and $\mathsf N:=\mathop{\rm diag}(\nu_1,\dots,\nu_m)=\mathsf R^{-1}(\mathsf F-\mathsf B)\mathsf R$.

Under the change of dependent variables $\tilde v=\mathsf R^{-1}(v-v^0)$,
where $v^0$ is a particular solution of the system~\eqref{eq:ReducitonS11LinearModons},
this system takes the form
\begin{gather}\label{eq:ReducitonS11LinearModonsDecoupled}
\tilde v_{11}+\tilde v_{22}+\mathsf N\tilde v=0,
\end{gather}
which is a decoupled system of homogeneous linear partial differential equations,
where the kind of the $i$th equation depends on the sign of~$\nu_i$,
the (two-dimensional) modified Helmholtz, the Laplace or the Helmholtz equation
if $\nu_i$ is negative, zero or positive, respectively.
The form of a particular solution $v^0$ of the system~\eqref{eq:ReducitonS11LinearModonsDecoupled}
depends on whether the matrix $\mathsf F-\mathsf B$ is degenerate or not.

If $\det(\mathsf F-\mathsf B)=0$,
the image of the operator $\mathsf F-\mathsf B$ is orthogonal to its kernel
with respect to the above weighted inner product, $\ker(\mathsf F-\mathsf B)=\langle \tilde e_0\rangle$ and
the restriction of the operator $\mathsf F-\mathsf B$ on its image $\langle \tilde e_{\nu_i},\nu_i\ne0\rangle$ is invertible.
Then we can take
\[
v^0
=(3b-\beta z_2\bar1,\tilde e_0)_{\mathsf W}^{}\frac{z_2^2}6\tilde e_0
+(\mathsf F-\mathsf B)^+(b-\beta z_2\bar1), 
\]
where $(\mathsf F-\mathsf B)^+$ is the Moore--Penrose inverse of $\mathsf F-\mathsf B$,
see Footnote~\ref{fnt:Moore-PenroseInverse}.
As a result, we construct a wide family of solutions of the original system~\eqref{eq:mLaysMod}
that are expressed in terms of the general solutions of the modified Helmholtz, the Laplace or the Helmholtz equations,
\begin{gather}\label{eq:GeneralSolutionHelmholtzDegenerate}
\solution
\psi=\mathsf R\tilde v(x,y)
-\beta(\bar1,\tilde e_0)_{\mathsf W}^{}\frac{y^3}6\tilde e_0
+\gamma y^2\tilde e_0
-\beta y(\mathsf F-\mathsf B)^+\bar1, 
\end{gather}
where
$\mathsf B:=\mathop{\rm diag}(b_{11},\dots,b_{1m})$,
$b_{11}$, \dots, $b_{1m}$ and~$\gamma$ are arbitrary constants,
$\tilde v_{xx}+\tilde v_{yy}+\mathsf N\tilde v=0$, $\mathsf N:=\mathop{\rm diag}(\nu_1,\dots,\nu_m)$,
$\nu_i$ are the eigenvalues of $\mathsf F-\mathsf B$,
the columns of~$\mathsf R$ are the corresponding eigenvectors  $\tilde e_{\nu_i}$ of the matrix~$\mathsf F-\mathsf B$
with $\|\tilde e_{\nu_i}\|_{\mathsf W}^{}=1$,
$(\cdot,\cdot)_{\mathsf W}$ is the weighted inner product with the weight matrix $\mathsf W:=\mathsf D^{-2}$,
and $(\mathsf F-\mathsf B)^+$ is the Moore--Penrose inverse of $\mathsf F-\mathsf B$.
We omitted the constant vector summand since it is inessential up to the $G$-equivalence
or, more specifically, the stream function tuple~$\psi$ is defined up to adding a constant tuple.

For the system~\eqref{eq:ReducitonS11LinearModons} with nondegenerate matrix $\mathsf F-\mathsf B$,
we can take a particular solution of much simpler form,
$v^0=(\mathsf F-\mathsf B)^{-1}\big(b-\beta z_2\bar1)$.
This leads to a wide family of solutions of the original system~\eqref{eq:mLaysMod}
that are expressed in terms of the general solutions of the modified Helmholtz or the Helmholtz equations,
\begin{gather}\label{eq:GeneralSolutionHelmholtzNonDegenerate}
\solution
\psi=\mathsf R\tilde v(x,y)-\beta y(\mathsf F-\mathsf B)^{-1}\bar1,
\end{gather}
where again
$\mathsf B:=\mathop{\rm diag}(b_{11},\dots,b_{1m})$,
$b_{11}$, \dots, $b_{1m}$ are arbitrary constants,
$\tilde v_{xx}+\tilde v_{yy}+\mathsf N\tilde v=0$, $\mathsf N:=\mathop{\rm diag}(\nu_1,\dots,\nu_m)$,
$\nu_i$ are the eigenvalues of $\mathsf F-\mathsf B$,
the columns of~$\mathsf R$ are the corresponding eigenvectors of the matrix~$\mathsf F-\mathsf B$,
and we omitted the constant vector summand since it is inessential up to the $G$-equivalence
or, more specifically, the stream function tuple~$\psi$ is defined up to adding a constant tuple.

While the found solutions~\eqref{eq:GeneralSolutionHelmholtzDegenerate} and~\eqref{eq:GeneralSolutionHelmholtzNonDegenerate}
of the multi-layer quasi-geostrophic system~\eqref{eq:mLaysMod} are stationary,
they can be extended to time-evolving solutions using (generalized) Galilean boosts in the $x$-direction,
which are point symmetries of~\eqref{eq:mLaysMod}.
More specifically, given a stationary solution~$\psi$ of~\eqref{eq:mLaysMod},
its pushforward $\mathscr P(\chi)_*\psi$ by the generalized Galilean transformation $\mathscr P(\chi)$
with an arbitrary sufficiently smooth function~$\chi$ of~$t$,
which has been defined in Section~\eqref{sec:mLaysPointSymGroup},
is also a solution of~\eqref{eq:mLaysMod}.
In order for $\mathscr P(\chi)_*\psi$ to represent a physically meaningful solution,
the parameter function $\chi$ should satisfy additional constraints,
in particular, on the sign of its derivative.

Many closed-form solutions of the modified Helmholtz, the Laplace and the Helmholtz equations
were presented in the literature.
In particular, wide families of such solutions were constructed
in~\cite{boye1976b} and~\cite[Sections~1.1--1.3]{mill1977A} using separation of variables based on symmetries.
See also~\cite[Chapter~7]{poly2002A}.

One can select various physically relevant solutions from the above families or use them as
building blocks for merging to more complicated solutions.
The signs of the eigenvalues of the matrix $\mathsf F-\mathsf B$ play an important role
in this selection.
A sufficient condition for this matrix to have at least one positive eigenvalue
is $f_{i,i-1}+f_{i,i+1}+b^{i1}<0$ for some $i\in\{1,\dots,m\}$,
which follows from application of the Poincar\'e separation theorem
to the symmetric matrix $\mathsf D^{-1}(\mathsf F-\mathsf B)\mathsf D$.
Sylvester's criterion implies
that all the eigenvalues of the matrix $\mathsf F-\mathsf B$ are positive
if all the leading principal minors of its symmetric counterpart $\mathsf D^{-1}(\mathsf F-\mathsf B)\mathsf D$ positive.
In view of the Gershgorin circle theorem,
a sufficient condition for the positivity of all the eigenvalues of the matrix $\mathsf F-\mathsf B$
is that $b_i<-\min\big(2|f_{i,i-1}+f_{i,i+1}|,|f_{i,i-1}+f_{i,i+1}+f_{i-1,i}+f_{i+1,i}|\big)$ for any~$i$.
Analogously, all the eigenvalues of the matrix $\mathsf F-\mathsf B$ are negative
if all the odd and even leading principal minors are negative and positive, respectively,
and a sufficient condition for this is that all $b_i$ are positive.\looseness=-1

\subsubsection{Solutions associated with entirely bounded solutions of Helmholtz equations}%
\label{sec:Reduction11BoundedSolOfHelmholtz}

Among the solutions constructed in Section~\ref{sec:CaseOfAffDependence},
we single out those corresponding to velocity fields
that are defined on the entire $(x,y)$-plane and bounded.
Denote $\alpha_i:=-\beta\big((\mathsf F-\mathsf B)^+\bar1,\tilde e_{\nu_i}\big)_{\mathsf W}^{}$.
If $\nu_i<0$, then the corresponding $\tilde e_{\nu_i}$-component
is $\tilde v^i+\alpha_iy$ for some solution~$\tilde v^i$
of the modified Helmholtz equation $\tilde v^i_{xx}+\tilde v^i_{yy}-|\nu_i|\tilde v^i=0$.
Since the nonzero solutions of this equation that are defined on the entire $(x,y)$-plane
have exponential growth at infinity, the only choice for~$\tilde v^i$ is $\tilde v^i\equiv0$.
If $\nu_i=0$, then $\alpha_i=0$ and the corresponding $\tilde e_{\nu_i}$-component
is $\tilde v^i-\frac16\beta(\bar1,\tilde e_0)_{\mathsf W}^{}y^3+\gamma y^2$
for some solution~$\tilde v^i$
of the Laplace equation $\tilde v^i_{xx}+\tilde v^i_{yy}=0$
and some constant~$\gamma$.
The first derivatives of this component are bounded if and only if $\gamma=0$, $(\bar1,\tilde e_0)_{\mathsf W}^{}=0$ and
both~$\tilde v^i_x$ and~$\tilde v^i_y$ are constant,
that is, the function~$\tilde v^i$ is affine with respect to $(x,y)$.%
\footnote{%
The derivative~$\tilde v^i_x$ itself is bounded and satisfies the Laplace equation;
hence it is constant in view of the Liouville theorem, $\tilde v^i_x=:\sigma_1=\const$.
The boundedness of the first $y$-derivative of the component implies that
the solution~$\tilde v^i_y$ of the Laplace equation grows not faster than a quadratic polynomial.
It follows from the extended Liouville theorem \cite[Theorem 5.9]{tesc2025A}
that actually $\tilde v^i_y$ is a quadratic polynomial.
Again recalling about the boundedness of the first $y$-derivative of the component,
we have that $\tilde v^i_y=\frac12\beta(\bar1,\tilde e_0)_{\mathsf W}^{}y^2-2\gamma y+\sigma_2$
for some constant~$\sigma_2$.
Then the function~$\tilde v^i$ satisfies the Laplace equation if and only if
$\gamma=0$ and $(\bar1,\tilde e_0)_{\mathsf W}^{}=0$, i.e.,
the derivative~$\tilde v^i_y$ is constant as well.
}
For $\nu_i>0$, the corresponding $\tilde e_{\nu_i}$-component
is $\tilde v^i+\alpha_iy$ for some solution~$\tilde v^i$
of the Helmholtz equation $\tilde v^i_{xx}+\tilde v^i_{yy}+\nu_i\tilde v^i=0$.
The first derivatives of this component are bounded if and only if
the first derivatives of~$\tilde v^i$ are bounded,
which is equivalent to the fact that the function~$\tilde v^i$ itself is bounded.
The latter means that we have the representation of~$\tilde v^i$ as a {\it generalized Herglotz wave function},
\[
\tilde v^i=\int_0^{2\pi}{\rm e}^{{\rm i}\sqrt{\nu_i}(x\cos\theta+y\sin\theta)}{\rm d}\mu_i(\theta)
\]
for some finite complex Radon measure~$\mu_i$ on the circle.%
\footnote{%
For any tempered solution~$u$ of the (homogeneous) Helmholtz equation $u_{xx}+u_{yy}+\kappa^2u=0$,
its Fourier transform $\hat u(\xi)$, $\xi:=(\xi_1,\xi_2)$,
satisfies the equation $(|\xi|^2-\kappa^2)\hat u(\xi)=0$,
and thus $\hat u(\xi)$ is a distribution supported on the circle $\{\xi\mid|\xi|=\kappa\}$.
If $u$ is a bounded solution, then $\hat u(\xi)$ is a finite measure supported on this circle.
The solution~$u$ of the Helmholtz equation is the Herglotz wave function with kernel $g\in L^2(S^1)$,
$u(x,y)=\int_0^{2\pi}g(\theta){\rm e}^{{\rm i}\kappa(x\cos\theta+y\sin\theta)}{\rm d}\theta$,
if and only if it has $L^2$-averaged $r^{-1/2}$-decay, i.e.,
there exists a (finite) limit of
\smash{$\kappa(4\pi r)^{-1}\int_{x^2+y^2\leqslant r^2}|u(x,y)|^2{\rm d}x{\rm d}y$}
as $r\to\infty$~\cite{hart1961a},
and it coincides with the square of the $L^2$-norm of~$g$.
Recall that every distributional solution of the homogeneous Helmholtz equation is a real analytic function.
}
More details on Herglotz wave functions can be found in~\cite{colt2019A,hart1961a} and references therein.
For the function~$\tilde v^i$ to be real-valued, the measure~$\mu_i$ should satisfy, roughly speaking, the condition
$\mu_i(\theta+\pi)=\bar\mu_i(\theta)$, where $\bar\mu_i$ denotes the complex conjugate of the measure~$\mu_i$.
The alternative way to construct real-valued solutions is
to take the real parts of generalized Herglotz wave functions,
where the corresponding measures do, in general, not satisfy the above condition.

Summing up, the stationary solution of the original system~\eqref{eq:mLaysMod}
with an affine componentwise dependence between the stream function~$\psi$ and the potential vorticity~$q$
that are defined in the entire $(x,y)$-plane and correspond to bounded velocity fields are exhausted
by the tuples
\begin{gather}\label{eq:Reduction12BoundedSolutions}
\solution
\psi=
\sum_{j\colon\nu_j>0}\mathop{\rm Re}\left(\int_0^{2\pi}{\rm e}^{{\rm i}\sqrt{\nu_j}(x\cos\theta+y\sin\theta)}
{\rm d}\mu_j(\theta)\right)\tilde e_{\nu_j}
+(\sigma_1x+\sigma_2y)\tilde e_0
-\beta y(\mathsf F-\mathsf B)^+\bar1.
\end{gather}
Here $\nu_j$ runs through the subset of $\{\nu_1,\dots,\nu_m\}$
constituted by the positive eigenvalues of $\mathsf F-\mathsf B$,
$\tilde e_{\nu_i}$ is an eigenvector of $\mathsf F-\mathsf B$
with the eigenvalue~$\nu_i$ and $\|\tilde e_{\nu_i}\|_{\mathsf W}=1$,
$\tilde e_0=0$ if the matrix $\mathsf F-\mathsf B$ is invertible,
$\sigma_1$ and~$\sigma_2$ are arbitrary constants,
and $\mu_i$ are arbitrary finite complex Radon measures on the circle.
The operator $(\mathsf F-\mathsf B)^+$ is the Moore--Penrose inverse of $\mathsf F-\mathsf B$,
see Footnote~\ref{fnt:Moore-PenroseInverse}.

Pushing the stationary solutions~\eqref{eq:Reduction12BoundedSolutions} forward
using, e.g., Galilean boosts $\mathscr P(\gamma t)$ with $\gamma<0$,
we obtain the travelling-wave solutions of the original system~\eqref{eq:mLaysMod},
\begin{gather*}
\solution
\psi=
\sum_{j\colon\nu_j>0}\mathop{\rm Re}\left(\int_0^{2\pi}
{\rm e}^{{\rm i}\sqrt{\nu_j}((x-\gamma t)\cos\theta+y\sin\theta)}
{\rm d}\mu_j(\theta)\right)\tilde e_{\nu_j}\\\qquad{}
+(\sigma_1(x-\gamma t)+\sigma_2y)\tilde e_0
-\beta y(\mathsf F-\mathsf B)^+\bar1
-\gamma y\bar 1,
\end{gather*}
which propagates in the negative direction of $x$.
In the geophysical context, this corresponds to westward wave propagation relative to the Earth's surface.

Suitably specifying the form of the (finite) measures $\mu_i$ in~\eqref{eq:Reduction12BoundedSolutions},
we construct various types of wave solutions.
To demonstrate the physical capability of the solution family~\eqref{eq:Reduction12BoundedSolutions},
we construct several distinct classes of solutions bounded on the entire plane:
baroclinic simple plane waves,
coherent baroclinic eddies,
coherent baroclinic hetons
and superpositions of solutions of the above kinds.
We also plot some simple solutions among the obtained ones
in the case of three layers with the numerical data
from the Section~\ref{sec:NumericalExample} and specific values of solution parameters,
neglecting the tuples $u_{\rm bg}=\beta(\mathsf F-\mathsf B)^+\bar1$ of constant background flows in the layers
from these solutions for a better illustration.
The multiplier $\psi_0:=15000$ m$^2/$s is chosen for solutions to be of the realistic scale.

\medskip\par\noindent
{\it Baroclinic plane waves.}
If the density of~$\mu_i$ is $\sum_{n=0}^\infty A_{in}\delta(\theta-\theta_n)$,
where~$A_{in}$ are arbitrary complex constants with $\sum_{n=0}^\infty|A_{in}|<\infty$, $0\leqslant\theta_n<2\pi$
and $\delta$ denotes the Dirac delta-function,
then the function~$\tilde v^i$ and its Galilean boost are the simple and the travelling two-dimensional plane harmonics,
\[
\tilde v^i=\mathop{\rm Re}\sum_{n=0}^{\infty}A_{in}{\rm e}^{{\rm i}\sqrt{\nu_i}(x\cos\theta_n+y\sin\theta_n)},\quad
\mathscr P^x(\gamma t)_*\tilde v^i=\mathop{\rm Re}\sum_{n=0}^{\infty}A_{in}{\rm e}^{{\rm i}\sqrt{\nu_i}((x-\gamma t)\cos\theta_n+y\sin\theta_n)},
\]
respectively,
and these solution describe stationary and travelling geophysical Rossby waves.
Note that each fixed harmonic can be derived via the multiplicative separation of variables
in the Cartesian coordinate system.

To plot a representative of this solution family,
we select $\mathsf B=\mathop{\rm diag}(-7,4,-1)\cdot10^{-10}$.
The eigenvalues and corresponding eigenvectors of the matrix $\mathsf F-\mathsf B$ are given by
\begin{gather*}
\nu_1\approx-12.9\cdot10^{-10},\quad
\nu_2\approx-0.76\cdot10^{-10},\quad
\nu_3\approx 1.67\cdot10^{-10},
\\
\tilde e_{\nu_1}\approx(0.74,-1.02,0.14)^{\mathsf T},\quad
\tilde e_{\nu_2}\approx(-0.82,-0.06,1.02)^{\mathsf T},\quad
\tilde e_{\nu_3}\approx(0.96,0.35,0.25)^{\mathsf T}.
\end{gather*}
Note that in this case the matrix $\mathsf F-\mathsf B$ is invertible, so $\tilde e_0=0$, and
the tuple of constant background flows in the layers is
$u_{\rm bg}:=\beta(\mathsf F-\mathsf B)^{-1}\bar 1\approx(0.28,0.06,-0.08)^{\mathsf T}$.
Evidently, only the third mode is relevant for constructing globally bounded solutions.
We choose the measure $\mu_3$ to be proportional to the sum
of two Dirac delta functions centered at a propagation angle $\pi/4$,
${\rm d}\mu_3(\theta)=\frac12\psi_0\big(\delta(\theta-\pi/4)+\delta(\theta-5\pi/4)\big){\rm d}\theta$.
As a result, we recover the plane wave solution
\begin{gather}\label{eq:Red11PlaneWaves}
\psi=\psi_0\cos\big(\sqrt{\nu_3/2}(x+y)\big)\tilde e_{\nu_3}-yu_{\rm bg}\quad\mbox{with $\psi_0:=15000$ m$^2/$s}.
\end{gather}
The plot of~\eqref{eq:Red11PlaneWaves} is given, up to the summand that gives constant flows in layers,
in Figure~\ref{fig:Red11PlaneWaves}.
Physically, this solution corresponds to a stationary Rossby wave pattern
where the westward phase propagation of the wave
is exactly balanced by the eastward advection of the background current.
\begin{figure}[!ht]
\centering
\includegraphics[width=\linewidth]{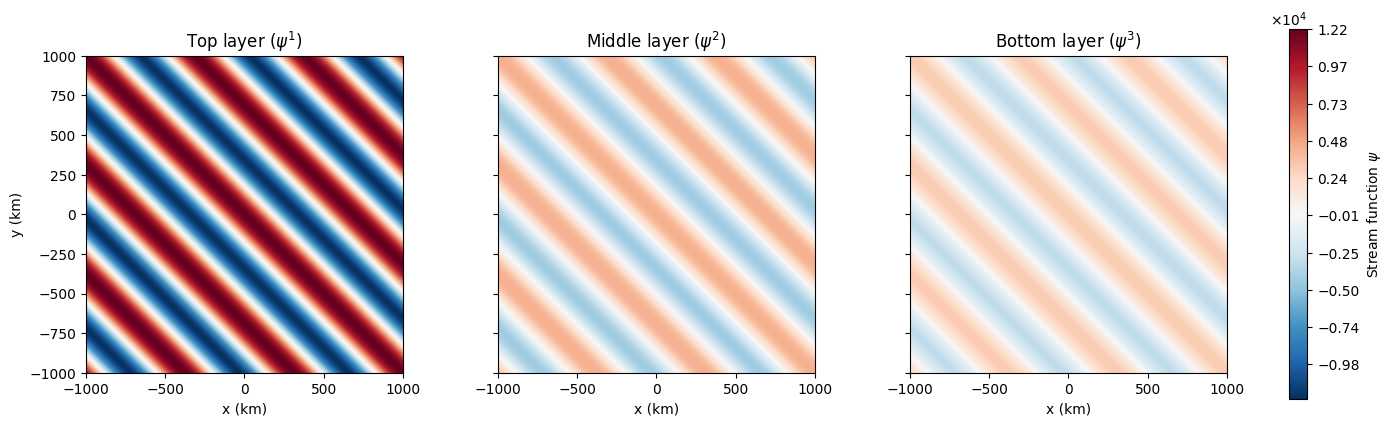}
\caption{Baroclinic simple plane waves~\eqref{eq:Red11PlaneWaves}.
}
\label{fig:Red11PlaneWaves}
\end{figure}

\medskip\par\noindent
{\it Coherent baroclinic eddies and hetons.}
The density $\sum_{n=0}^\infty A_{in}{\rm e}^{{\rm i}n\theta}$ of~$\mu_i$,
where $(A_{in},\,n\in\mathbb N_0)$ is an arbitrary tempered sequence of complex constants,
corresponds to a linear superposition of two-dimensional radial harmonics,
\begin{gather*}
\tilde v^i=2\pi\mathop{\rm Re}\sum_{n=0}^\infty{\rm i}^nA_{in}J_n(\sqrt{\nu_i}r){\rm e}^{{\rm i}n\vartheta},\quad
r:=\sqrt{x^2+y^2},\quad \vartheta:=\arccos\frac xr,
\end{gather*}
where $J_n$ denotes the $n$th order Bessel function of the first kind.
Note that each fixed harmonic can be derived via the multiplicative separation of variables
in the polar coordinate system.

For instance, using the same matrix $\mathsf B=\mathop{\rm diag}(-7,4,-1)\cdot10^{-10}$ as above
and selecting measure $\mu_3$ to be given by a constant density on the spectral circle,
${\rm d}\mu_3(\theta)=\psi_0{\rm d}\theta/2\pi$,
we recover a rotationally symmetric isolated vortex
represented in terms of the zeroth-order Bessel function of the first kind,
\begin{gather}\label{eq:Red11Eddy}
\psi=\psi_0J_0(\sqrt{\nu_3}r)\tilde e_{\nu_3}-yu_{\rm bg}\quad\mbox{with $\psi_0:=15000$ m$^2/$s}.
\end{gather}
The plot of this solution (up to the tuple $u_{\rm bg}\approx(0.28,0.06,-0.08)^{\mathsf T}$
of constant background flows in layers) is given in Figure~\ref{fig:Red11Eddy}.
Physically, it represents a standing baroclinic eddy embedded in a uniform zonal current.
\begin{figure}[!ht]
\centering
\includegraphics[width=\linewidth]{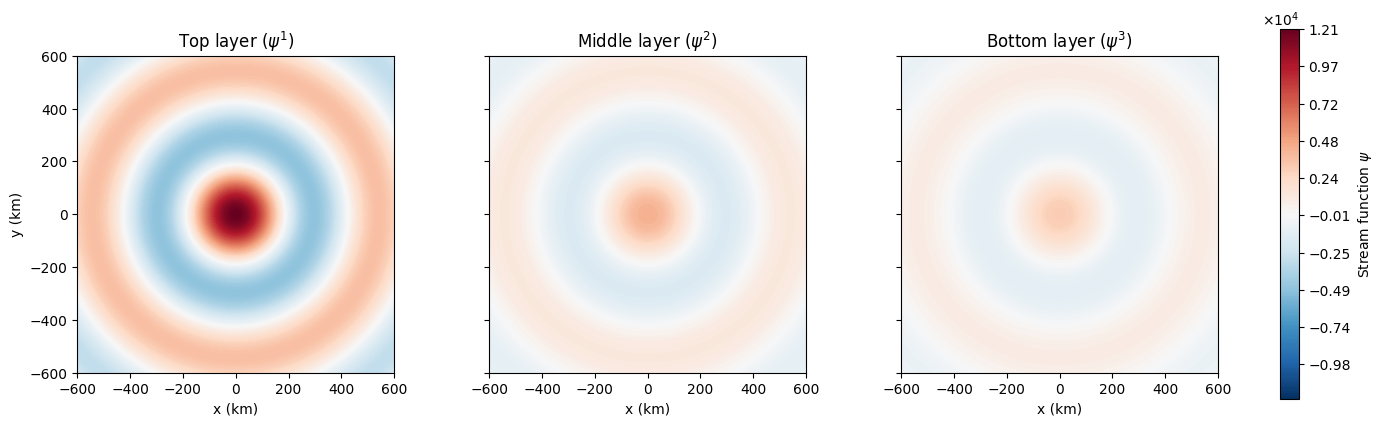}
\caption{Coherent baroclinic eddy~\eqref{eq:Red11Eddy}.}
\label{fig:Red11Eddy}
\end{figure}

To obtain a heton solution, we take the matrix $\mathsf B=\mathop{\rm diag}(-10,-8,-5)\cdot10^{-10}$
The appropriate eigenvalue and eigenvector of $\mathsf F-\mathsf B$ for constructing a heton
are then $\nu_2\approx3.13\cdot10^{-10}$ and $\tilde e_{\nu_2}\approx(0.79,0.14,-1.12)^{\mathsf T}$
since the first and the second components of~$\tilde e_{\nu_2}$ have the same order and opposite signs.
The measure $\mu_2$ is chosen by a constant density on the spectral circle as above,
${\rm d}\mu_2(\theta)=\psi_0{\rm d}\theta/2\pi$, and $\mu_3=0$ ($\nu_3>0$ as well).
While the solution arising from this setting is formally still of the form~\eqref{eq:Red11Eddy}
up to the permutation of $\nu_2$ and~$\nu_3$,
\begin{gather}\label{eq:Red11Heton}
\psi=\psi_0J_0(\sqrt{\nu_2}r)\tilde e_{\nu_2}-yu_{\rm bg}\quad\mbox{with $\psi_0:=15000$ m$^2/$s},
\end{gather}
it has a different physical interpretation, see, e.g.,~\cite{rein2018a}.
We give its plot in Figure~\ref{fig:Red11Heton},
neglecting the tuple $u_{\rm bg}\approx(0.86,1.75,3.92)^{\mathsf T}\cdot 10^{-2}$
of constant background flows in layers.
\begin{figure}[!ht]
\centering
\includegraphics[width=\linewidth]{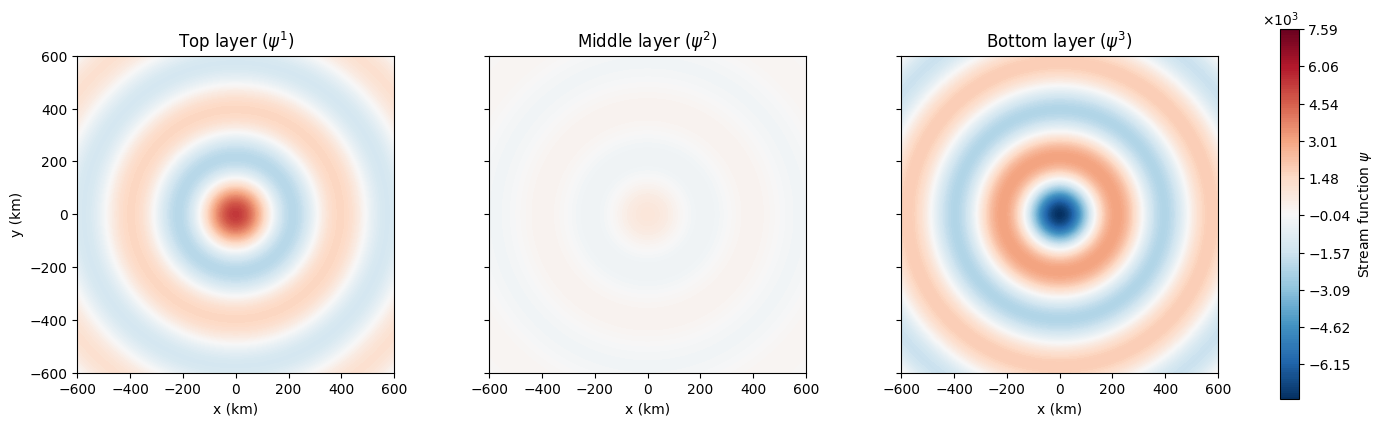}
\caption{Coherent baroclinic heton~\eqref{eq:Red11Heton}.}
\label{fig:Red11Heton}
\end{figure}

\medskip\par\noindent
{\it Superpositions of Rossby waves with different modes.}
A more complicated dynamics can be obtained via linear superposition of solutions.
Consider now the matrix $\mathsf B:=\mathop{\rm diag}(-8,3,-2)\cdot10^{-10}$.
The relevant eigenvalues and eigenvectors of the matrix $\mathsf F-\mathsf B$ are
$\nu_2\approx2.40\cdot10^{-10}$, $\nu_3\approx2.67\cdot10^{-10}$
and $\tilde e_{\nu_2}\approx(-0.83,-0.06,1.02)^{\mathsf T}$, $\tilde e_{\nu_3}\approx(0.96,0.35,0.25)^{\mathsf T}$.
The plane wave solution of the form~\eqref{eq:Reduction12BoundedSolutions} with
${\rm d}\mu_2(\theta)=\frac12\psi_0\big(\delta(\theta-\pi/3)+\delta(\theta-2\pi/3)\big){\rm d}\theta$ and
${\rm d}\mu_3(\theta)=\frac12\psi_0\big(\delta(\theta-\pi/4)+\delta(\theta-5\pi/4)\big){\rm d}\theta$,
where $\psi_0:=15000$ m$^2/$s, is
\begin{gather}\label{eq:Red11SuperpositionRossby}
\psi(x,y)=\psi_0\cos\Big(\frac{\sqrt{\nu_2}}2(\sqrt3x+y)\Big)\tilde e_{\nu_2}+
\psi_0\cos\big(\sqrt{\nu_3/2}(x+y)\big)\tilde e_{\nu_3}
-yu_{\rm bg},
\end{gather}
The plot of this solution (up to neglecting the tuple $u_{\rm bg}\approx(-0.21,0.01,0.43)^{\mathsf T}\cdot 10^{-2}$
of constant background flows in layers) is given in Figure~\ref{fig:Red11SuperpositionRossby}.
This picture represents the formation of cyclone and anticyclone in the bottom layer,
while the top and the middle layers have a Rossby wave pattern.
\begin{figure}[!ht]
\centering
\includegraphics[width=\linewidth]{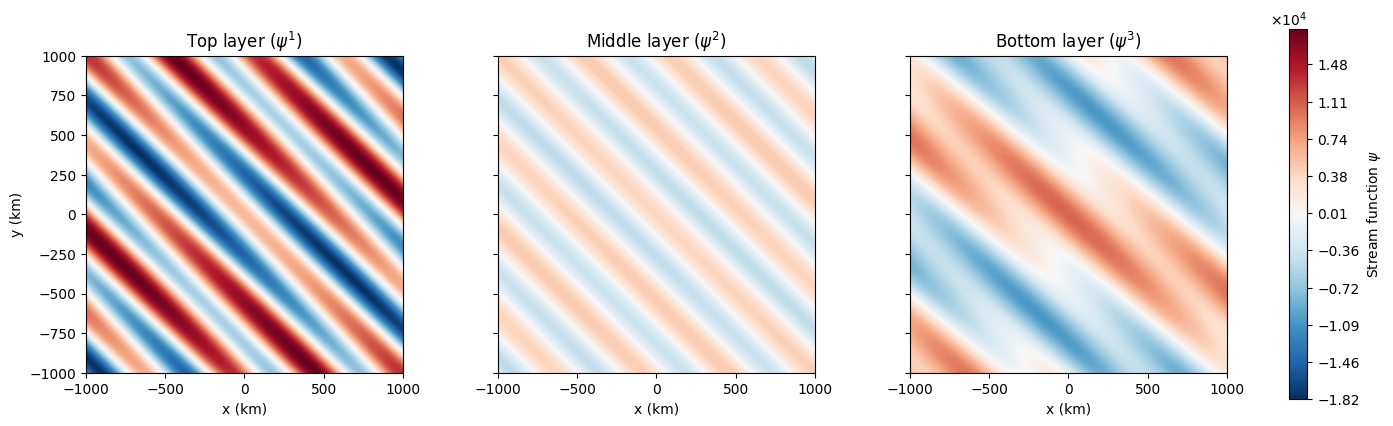}
\caption{Superposition of two baroclinic Rossby waves~\eqref{eq:Red11SuperpositionRossby}.}
\label{fig:Red11SuperpositionRossby}
\end{figure}

\medskip\par\noindent
{\it General linear superpositions.}
One can construct exact solutions of the system~\eqref{eq:mLaysMod}
with quite irregular behaviour
using merely the linear superposition of a particular solution of the system~\eqref{eq:ReducitonS11LinearModons}
with the above regular solutions patterns of the homogeneous counterpart of this system
with the same parameter matrix~$\mathsf B$.
The involved solution patterns, which are two-dimensional plane and radial harmonics,
can in addition be shifted and scaled before superposing.
An example of such an irregular solution is presented in Figure~\ref{fig:Red11GenLinSuperposition}.
It is just the linear superposition of three plane and seven radial harmonics of low order.

\begin{figure}[!ht]
\centering
\includegraphics[width=\linewidth]{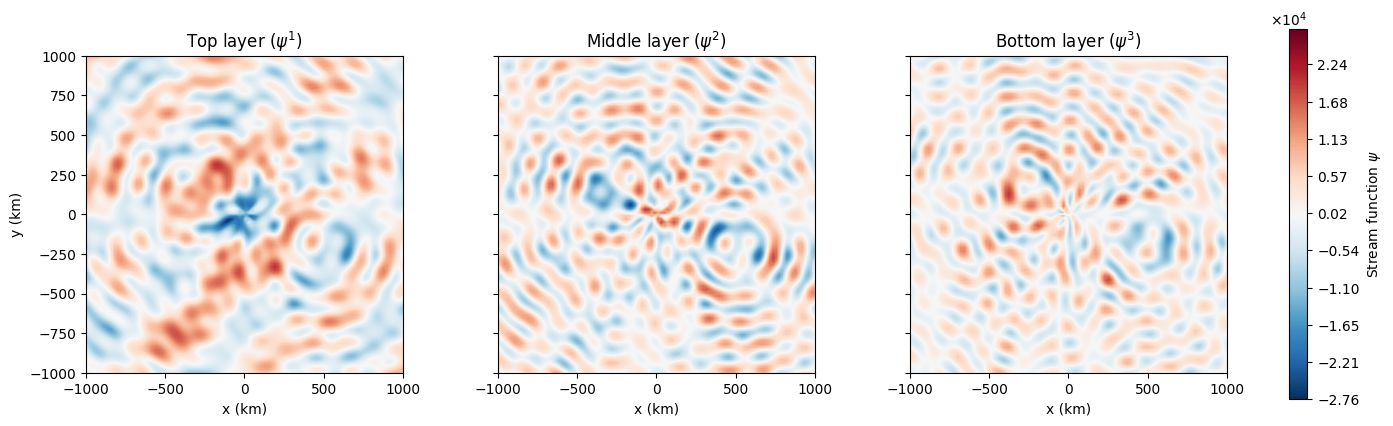}
\caption{An example of linear superposition of various solution patterns.}
\label{fig:Red11GenLinSuperposition}
\end{figure}

\subsubsection{Merging solutions and modons}\label{sec:Reduction11Modon}

Another way to construct a wide class of physically relevant (usually weak) solutions
of the multi-layer quasi-geostrophic problem~\eqref{eq:mLaysMod}
using the Lie reduction with respect to the subalgebra~$\mathfrak s_{1.1}^{00}$
is to partition the solution domain into several subdomains
and then consider different solutions of the form~\eqref{eq:GeneralSolutionHelmholtzDegenerate} or~\eqref{eq:GeneralSolutionHelmholtzNonDegenerate}
for the same or even different systems of the form~\eqref{eq:ReducitonS11LinearModons} on the subdomains,
merging these solutions by imposing interface boundary conditions between subdomains.

In this section, plotting merged solutions with the numerical data
from the Section~\ref{sec:NumericalExample} and specific values of solution parameters,
we do not neglect the tuples $u_{\rm bg}=\beta(\mathsf F-\mathsf B)^+\bar1$
of constant background flows in the layer subdomains
since otherwise the plots are discontinuous.

\medskip\par\noindent
{\it Dipolar vortices (modons).}
Consider arbitrary diagonal matrices~$\tilde{\mathsf B}$ and~$\hat{\mathsf B}$ such that
all eigenvalues~$\nu_i$ of $\mathsf F-\tilde{\mathsf B}$ are positive, ${0<\nu_1<\dots<\nu_m}$, and
all eigenvalues~$\varsigma_i$ of $\mathsf F-\hat{\mathsf B}$ are negative, $\varsigma_1<\dots<\varsigma_m<0$.
Denote by~\smash{$\tilde e_{\nu_j}$} and~\smash{$\hat e_{\varsigma_j}$} $\mathsf W$-normalized eigenvectors
of the matrices $\mathsf F-\tilde{\mathsf B}$ and $\mathsf F-\hat{\mathsf B}$
associated with the eigenvalues~$\nu_i$ and~$\varsigma_i$, respectively.
Both collections of eigenvectors
\smash{$(\tilde e_{\nu_1},\dots,\tilde e_{\nu_m})$} and \smash{$(\hat e_{\varsigma_1},\dots,\hat e_{\varsigma_m})$}
are orthonormal bases in $\mathbb R^m$  with respect to the $\mathsf W$-weighted inner product $(\cdot,\cdot)_{\mathsf W}^{}$.
We partition the $(x,y)$-plane into a disk of radius $r_0$ with center at zero and its exterior.
As the values of the stream function~$\psi$ inside and outside the disk,
we take the solutions $\tilde\psi$ and~$\hat\psi$ defined
using the first radial harmonics for the Helmholtz and modified Helmholtz equations
that are bounded on the corresponding domains, respectively,
\begin{gather*}
\tilde\psi=\sin\theta\,
\big(\tilde\alpha_jJ_1(\sqrt{\nu_j}r)\tilde e_{\nu_j}-r\tilde u_{\rm bg}\big)
\quad\mbox{with}\quad \tilde u_{\rm bg}:=\beta(\mathsf F-\tilde{\mathsf B})^{-1}\bar 1,
\quad r\leqslant r_0,
\\
\hat\psi=\sin\theta\,
\big(\hat\alpha_jK_1(\sqrt{-\varsigma_j}r)\hat e_{\varsigma_j}-r\hat u_{\rm bg}\big)
\quad\mbox{with}\quad \hat u_{\rm bg}:=\beta(\mathsf F-\hat{\mathsf B})^{-1}\bar 1,
\quad r>r_0.
\end{gather*}
Here $J_1$ and~$K_1$ denote the first-order Bessel functions of the first and the second kinds, respectively.
Note that the Bessel function~$J_1$ is slowly decaying as $r\to\infty$
and the Bessel function~$K_1$ has a singularity at $r=0$.
But the partition of the domain has accounted for these problems.
Moreover, $\tilde\psi$ and $\hat\psi$ are solutions of two different Helmholtz equations
with different eigenvalues.

The parameters $\tilde{\mathsf B}$, $\hat{\mathsf B}$, $r_0$, $\tilde\alpha_j$ and~$\hat\alpha_j$
are determined by enforcing the matching conditions $\tilde\psi=\hat\psi=0$
and $\tilde\psi_r=\hat\psi_r$ at the interface $r=r_0$,
which leads to a system for these parameters,
\begin{gather}\label{eq:ModonCondition0}
\begin{split}&
\tilde\alpha_jJ_1(\sqrt{\nu_j}r_0)\tilde e_{\nu_j}=r_0\tilde u_{\rm bg},
\quad
\hat\alpha_jK_1(\sqrt{-\varsigma_j}r_0)\hat e_{\varsigma_j}=r_0\hat u_{\rm bg}.
\\&
\tilde\alpha_j\sqrt{\nu_j}J_1'(\sqrt{\nu_j}r_0)\tilde e_{\nu_j}
-\hat\alpha_j\sqrt{-\varsigma_j}K_1'(\sqrt{-\varsigma_j}r_0)\hat e_{\varsigma_j}
=\tilde u_{\rm bg}-\hat u_{\rm bg}.
\end{split}
\end{gather}
We need the condition of vanishing both the functions $\tilde\psi$ and~$\hat\psi$ on the entire circle $r=r_0$
to construct a solution of the system~\eqref{eq:mLaysMod} at least in the weak sense
since
\begin{gather*}
\tilde\psi_{rr}=-r^{-1}\tilde\psi_r-r^{-2}\tilde\psi_{\theta\theta}
-(\mathsf F-\tilde{\mathsf B})\tilde\psi-\beta r\sin\theta\,\bar1,
\\
\hat\psi_{rr}=-r^{-1}\hat\psi_r-r^{-2}\hat\psi_{\theta\theta}
-(\mathsf F-\hat{\mathsf B})\hat\psi-\beta r\sin\theta\,\bar1,
\end{gather*}
$\hat{\mathsf B}\ne\tilde{\mathsf B}$,
and thus otherwise $\tilde\psi_{rr}\ne\hat\psi_{rr}$
on the entire circle $r=r_0$ except the two points with $\theta\in\{0,\pi\}$,
which leads to the discontinuity of $\psi_{rr}$ and the vorticity~$q$ therein.
In general, the third derivative~$\psi_{rrr}$ then has a finite discontinuity on the circle $r=r_0$,
which fits well in the weak setting, cf.~\cite{lari1976a}.
Therefore, the obtained modons are solutions of the system~\eqref{eq:mLaysMod}
in the weak sense.
The system~\eqref{eq:ModonCondition0} implies that
\begin{gather}\nonumber
\tilde\alpha_j
=\frac{r_0\beta}{J_1(\sqrt{\nu_j}r_0)}
\big((\mathsf F-\tilde{\mathsf B})^{-1}\bar 1,\tilde e_{\nu_j}\big)_{\mathsf W}^{}
=\frac{r_0\beta}{J_1(\sqrt{\nu_j}r_0)}
\big(\bar 1,(\mathsf F-\tilde{\mathsf B})^{-1}\tilde e_{\nu_j}\big)_{\mathsf W}^{}
=\frac{r_0\beta(\bar 1,\tilde e_{\nu_j})_{\mathsf W}^{}}{\nu_jJ_1(\sqrt{\nu_j}r_0)},
\\\nonumber
\hat\alpha_j
=\frac{r_0\beta}{K_1(\sqrt{-\varsigma_j}r_0)}
\big((\mathsf F-\hat{\mathsf B})^{-1}\bar 1,\hat e_{\varsigma_j}\big)_{\mathsf W}^{}
=\frac{r_0\beta}{K_1(\sqrt{-\varsigma_j}r_0)}
\big(\bar 1,(\mathsf F-\hat{\mathsf B})^{-1}\hat e_{\varsigma_j}\big)_{\mathsf W}^{}
=\frac{r_0\beta(\bar 1,\hat e_{\varsigma_j})_{\mathsf W}^{}}{\varsigma_j K_1(\sqrt{-\varsigma_j}r_0)},
\\\label{eq:ModonCondition1c}
\frac{r_0\beta}{\sqrt{\nu_j}}\frac{J_1'(\sqrt{\nu_j}r_0)}{J_1(\sqrt{\nu_j}r_0)}
(\bar 1,\tilde e_{\nu_j})_{\mathsf W}^{}\tilde e_{\nu_j}
+\frac{r_0\beta}{\sqrt{-\varsigma_j}}\frac{K_1'(\sqrt{-\varsigma_j}r_0)}{K_1(\sqrt{-\varsigma_j}r_0)}
(\bar 1,\hat e_{\varsigma_j})_{\mathsf W}^{}\hat e_{\varsigma_j}
=\tilde u_{\rm bg}-\hat u_{\rm bg}.
\end{gather}
The equality~\eqref{eq:ModonCondition1c} can be interpreted as
a system of $m$ equations with respect to the parameters
$\tilde b_1$, \dots, $\tilde b_m$, $\hat b_1$, \dots, $\hat b_m$ and~$r_0$.
Thus, in general, $m+1$ of these parameters can be arbitrarily chosen.
For practical computations,
one should consider the equality~\eqref{eq:ModonCondition1c} jointly with
the determining equations
$(\mathsf F-\tilde{\mathsf B})\tilde e_{\nu_j}=\nu_j\tilde e_{\nu_j}$,
$(\mathsf F-\hat{\mathsf B})\hat e_{\varsigma_j}=\varsigma_j\hat e_{\varsigma_j}$
for the corresponding eigenvalues and eigenvectors,
which are then also considered as additional parameters to be chosen or found.
The solutions constructed in the above way are extensions, to the case of an arbitrary number of layers,
of the Larichev--Reznik modons~\cite{lari1976a} for the single-layer model ($m=1$),
see also~\cite[Section~4.1]{crow2024a} and ~\cite{flie1980a,kizn2000a}
as well as for a precursory construction in~\cite{ster1975a}.
Despite various differences and possible involvement of both baroclinic and barotropic modes,
the shape of these solutions has common features.
The structure is vertically locked and contains matched pairs of counter-rotating vortices (dipoles).
It may exhibit baroclinic shear,
modeling coherent structures like ``mushroom'' vortices often observed in satellite imagery.

In some special cases, the computation of particular solutions of the system~\eqref{eq:ModonCondition1c}
can be simplified.
If $\hat{\mathsf B}=\tilde{\mathsf B}+\varrho\mathsf E$, then $\varsigma_i=\nu_i-\varrho$ and
$\hat e_{\varsigma_i}=\tilde e_{\nu_i}$, i.e.,
the operators $\mathsf F-\tilde{\mathsf B}$ and $\mathsf F-\hat{\mathsf B}$ share the eigenbasis,
\begin{gather*}
\tilde\alpha_j=\frac{r_0\beta(\bar 1,\tilde e_{\nu_j})_{\mathsf W}^{}}{\nu_jJ_1(\sqrt{\nu_j}r_0)},
\quad
\hat\alpha_j=\frac{-r_0\beta(\bar 1,\tilde e_{\nu_j})_{\mathsf W}^{}}{(\varrho-\nu_j)K_1(\sqrt{\varrho-\nu_j}r_0)},
\end{gather*}
and the condition~\eqref{eq:ModonCondition1c} reduces to the single equation
\begin{gather*}
\noprint{
\frac{r_0\beta}{\sqrt{\nu_j}}\frac{J_1'(\sqrt{\nu_j}r_0)}{J_1(\sqrt{\nu_j}r_0)}
(\bar 1,\tilde e_{\nu_j})_{\mathsf W}^{}\tilde e_{\nu_j}
+\frac{r_0\beta}{\sqrt{\varrho-\nu_j}}\frac{K_1'(\sqrt{\varrho-\nu_j}r_0)}{K_1(\sqrt{\varrho-\nu_j}r_0)}
(\bar 1,\tilde e_{\nu_j})_{\mathsf W}^{}\tilde e_{\nu_j}
=\tilde u_{\rm bg}-\hat u_{\rm bg}
\\\quad{}
=\beta(\nu_j^{-1}+(\varrho-\nu_j)^{-1})(\bar 1,\tilde e_{\nu_j})_{\mathsf W}^{}\tilde e_{\nu_j}.
\\[1ex]
\frac{r_0}{\sqrt{\nu_j}}\frac{J_1'(\sqrt{\nu_j}r_0)}{J_1(\sqrt{\nu_j}r_0)}
+\frac{r_0}{\sqrt{\varrho-\nu_j}}\frac{K_1'(\sqrt{\varrho-\nu_j}r_0)}{K_1(\sqrt{\varrho-\nu_j}r_0)}
=\nu_j^{-1}+(\varrho-\nu_j)^{-1}.
\\[1ex]
}
\frac{r_0}{\sqrt{\nu}}\frac{J_0(\sqrt{\nu}r_0)}{J_1(\sqrt{\nu}r_0)}
-\frac{r_0}{\sqrt{\varrho-\nu}}\frac{K_0(\sqrt{\varrho-\nu}r_0)}{K_1(\sqrt{\varrho-\nu}r_0)}
=\frac2{\nu}+\frac2{\varrho-\nu},
\end{gather*}
and $\nu_1$, \dots, $\nu_m$ are $m$ distinct positive roots of this equation
for chosen $r_0$ and~$\varrho$.
Given these roots, we find the matrix~$\tilde{\mathsf B}$ from the equality of polynomials
\smash{$\det(\mathsf F-\tilde{\mathsf B}-\nu\mathsf E)=\prod_{i=1}^m(\nu-\nu_i)$},
which gives a system of $m$ equations for the $m$ diagonal entries of~$\tilde{\mathsf B}$.
Then we find the eigenvectors~$\tilde e_{\nu_1}$,~\dots, $\tilde e_{\nu_m}$
of the matrix $\mathsf F-\tilde{\mathsf B}$.
On Figure~\ref{LRbcModons}, we plot these specific modons
for the numerical data from the Section~\ref{sec:NumericalExample}
and the following values of parameters:
\begin{description}\itemsep=0ex
\item[\rm (a) chosen:] 
$r_0=170$ km,\ \ $\varrho= 4.5\cdot10^{-9}$,
\item[\rm\phantom{(a) }found:]
$\mathsf B\approx\mathop{\rm diag}(-1.71,-4.26,-2.23)\cdot 10^{-9}$,\ \
$\nu=(1.74,3.59,9.4)\cdot 10^{-10}$,\\[.5ex]
$\tilde e_{\nu_1}\approx(-0.98, 0.12, -0.016)^{\mathsf T}$, \
$\tilde e_{\nu_2}\approx-(0.18,0.64,0.062)^{\mathsf T}$, \
$\tilde e_{\nu_3}\approx(-0.05, -0.069, 0.54)^{\mathsf T}$,\\[.5ex]
$\tilde\alpha\approx(0.94,1.87,1.2)\cdot10^4$,\ \
$\hat\alpha\approx(4.39,0.104,-1.8)\cdot 10^7$;
\item[\rm (b) chosen:] 
$r_0=150$ km,\ \ $\varrho= 8\cdot10^{-9}$,
\item[\rm\phantom{(b) }found:]
$\mathsf B\approx\mathop{\rm diag}(-3.34,-1.82,-5.25)\cdot 10^{-9}$,\ \
$\nu=(1.01, 2.7, 5.09)\cdot 10^{-9}$,\\[.5ex]
$\tilde e_{\nu_1}\approx(-0.34,0.61,-0.025)^{\mathsf T}$, \
$\tilde e_{\nu_2}\approx(-0.94,-0.22, 0.016)^{\mathsf T}$, \
$\tilde e_{\nu_3}\approx(0.011, 0.035, 0.55)^{\mathsf T}$,\\[.5ex]
$\tilde\alpha\approx(-8.12,-6.22,-7.36)\cdot10^3$,\ \
$\hat\alpha\approx(-26.46,8.97,-1.12)\cdot 10^7$.
\end{description}

\begin{figure}[!ht]
\begin{subfigure}[b]{\textwidth}
\centering
\includegraphics[width=\linewidth]{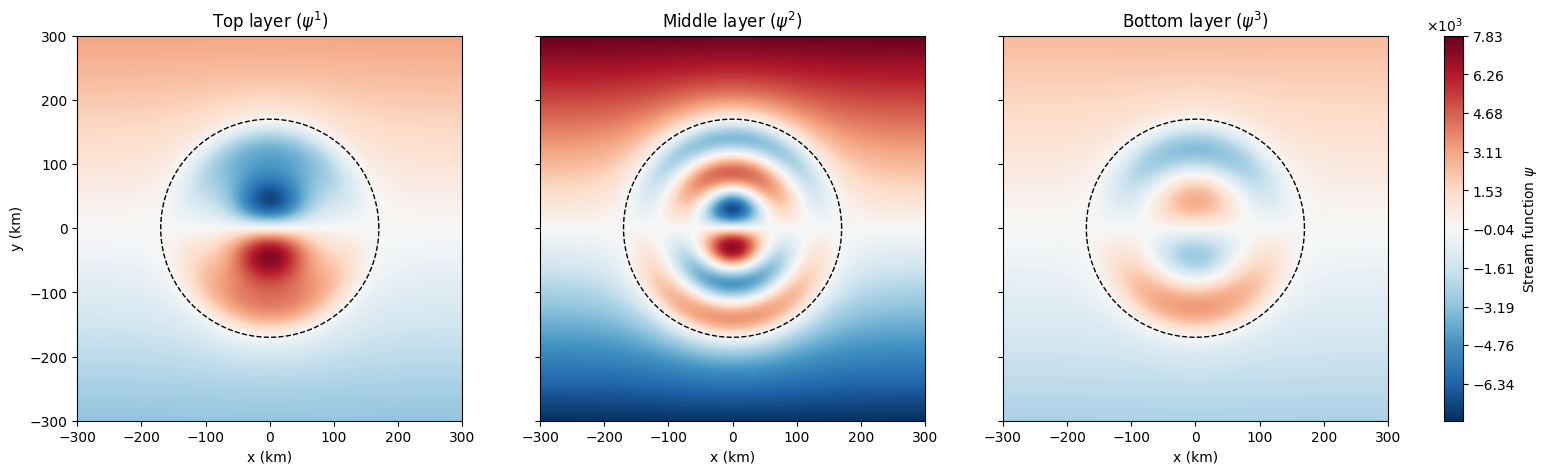}
\caption{}
\end{subfigure}
\begin{subfigure}[b]{\textwidth}
\centering
\includegraphics[width=\linewidth]{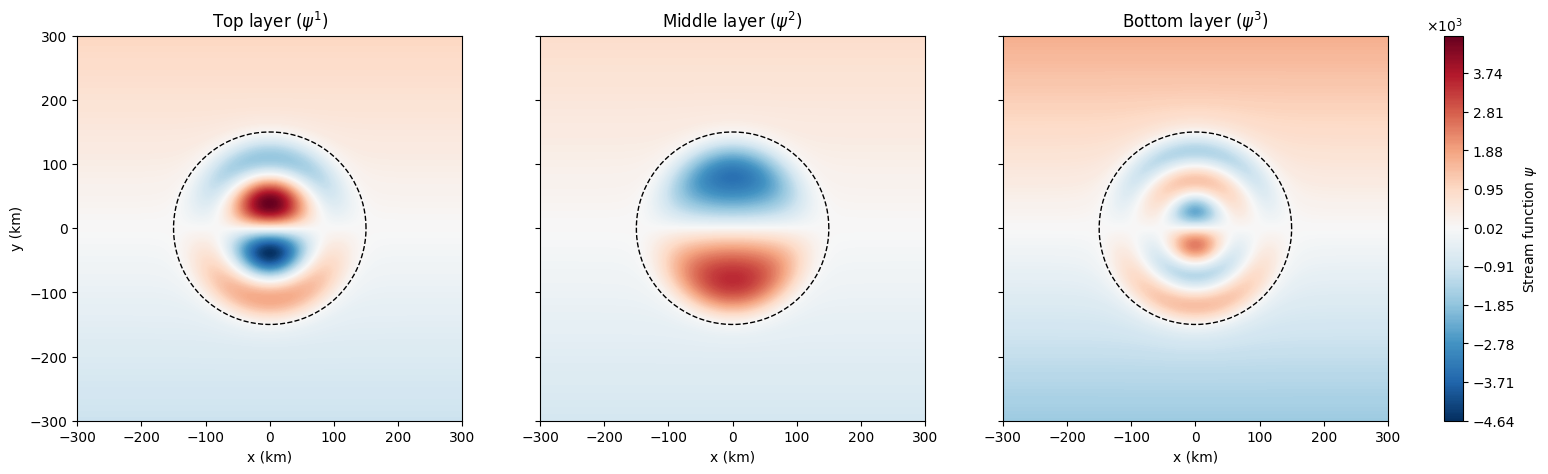}
\caption{}
\end{subfigure}
\caption{Larichev--Reznik baroclinic modon~\eqref{eq:LRbtModon}
in the case of the same eigenbasis in the inner and outer domains
for two values of the parameter tuples $(r_0,\varrho)$.}
\label{LRbcModons}
\end{figure}

The simplest dipolar vortex solution of the system~\eqref{eq:mLaysMod} is
the \textit{Larichev--Reznik barotropic modon},
which most directly extends the Larichev--Reznik modon~\cite{lari1976a} for the single-layer model
to the case of an arbitrary number of layers.
We choose $\tilde{\mathsf B}=-\tilde\varrho\mathsf E$ and $\hat{\mathsf B}=\hat\varrho\mathsf E$
with $\tilde\varrho>\rho(\mathsf F)=|\lambda_1|$ and $\hat\varrho>0$,
where $\rho(\mathsf F)$ denotes the spectral radius of~$\mathsf F$.
Then
$\nu_i=\lambda_i+\tilde\varrho>0$,
$\varsigma_i=\lambda_i-\hat\varrho<0$ and
$\tilde e_{\nu_i}=\hat e_{\varsigma_i}=e_i/\|e_i\|_{\mathsf W}^{}$;
in particular,
$\nu_m=\tilde\varrho$,
$\varsigma_m=-\hat\varrho$ and
$\tilde e_{\nu_m}=\hat e_{\varsigma_m}=\bar1/\|\bar1\|_{\mathsf W}^{}$.
Hence
$\tilde u_{\rm bg}=\beta\tilde\varrho^{-1}\bar1$,
$\hat u_{\rm bg}:=-\beta\hat\varrho^{-1}\bar1$,
$\tilde\alpha_i=\hat\alpha_i=0$, $i=1,\dots,m-1$,
\begin{gather}\nonumber
\tilde\alpha_m=\frac{r_0\beta\|\bar1\|_{\mathsf W}^{}}{\tilde\varrho J_1(\sqrt{\tilde\varrho}r_0)},
\quad
\hat\alpha_m=-\frac{r_0\beta\|\bar1\|_{\mathsf W}^{}}{\hat\varrho K_1(\sqrt{\hat\varrho}r_0)},
\\[.5ex]\label{eq:LRbtModonCondition}
\noprint{
\frac{r_0}{\sqrt{\tilde\varrho}}\frac{J_1'(\sqrt{\tilde\varrho}r_0)}{J_1(\sqrt{\tilde\varrho}r_0)}
+\frac{r_0}{\sqrt{\hat\varrho}}\frac{K_1'(\sqrt{\hat\varrho}r_0)}{K_1(\sqrt{\hat\varrho}r_0)}
=\tilde\varrho^{-1}+\hat\varrho^{-1},\quad\mbox{i.e.,}\quad
}
\frac{r_0}{\sqrt{\tilde\varrho}}\frac{J_0(\sqrt{\tilde\varrho}r_0)}{J_1(\sqrt{\tilde\varrho}r_0)}
-\frac{r_0}{\sqrt{\hat\varrho}}\frac{K_0(\sqrt{\hat\varrho}r_0)}{K_1(\sqrt{\hat\varrho}r_0)}
=\frac2{\tilde\varrho}+\frac2{\hat\varrho}.
\end{gather}
We consider the last equality as an equation
with respect to one of the parameters~$r_0$, $\tilde\varrho$ and~$\hat\varrho$
when the other two parameters are chosen.
The corresponding Larichev--Reznik barotropic modon takes the form
\begin{gather}\label{eq:LRbtModon}
\psi=\begin{cases}
\dfrac\beta{\tilde\varrho}
\left(r_0\dfrac{J_1(\sqrt{\tilde\varrho}r)}{J_1(\sqrt{\tilde\varrho}r_0)}\sin\theta-y\right)\bar1,
\ \ r\leqslant r_0,
\\[3ex]
-\dfrac\beta{\hat\varrho}
\left(r_0\dfrac{K_1(\sqrt{\hat\varrho}r)}{K_1(\sqrt{\hat\varrho}r_0)}\sin\theta-y\right)\bar1,
\ \ r>r_0
\end{cases}
\mbox{with \eqref{eq:LRbtModonCondition}.}
\end{gather}
The structure is vertically locked and purely barotropic.
On Figure~\ref{LRbtModons},
we plot the solution~\eqref{eq:LRbtModon}
for the numerical data from the Section~\ref{sec:NumericalExample} and the following values of parameters:
\begin{itemize}\itemsep=0ex
\item[(a)]
chosen:\ \ $\tilde\varrho=1.5\cdot10^{-9}$, $\hat\varrho= 3.0\cdot10^{-10}$,\\
found: \ \ $r_0\approx104.43\mbox{ km}$,\ \ $\tilde\alpha_3\approx-3.47\cdot10^4$,\ \
$\hat\alpha_3\approx-7.97\cdot10^5$;
\noprint{
\item[(b)] 
chosen:\ \ $\tilde\varrho = 1.8 \cdot10^{-9}$,\ \ $r_0= 105\mbox{ km}$, \\
found: \ \ $\hat\varrho\approx2.64\cdot 10^{-9}$,\ \ $\tilde\alpha_3\approx-1.1\cdot10^4$,\ \
$\hat\alpha_3\approx-6.3\cdot10^5$.
}
\item[(b)]
chosen:\ \ $\tilde\varrho = 1.3 \cdot10^{-9}$,\ \ $r_0= 125\mbox{ km}$, \\
found: \ \ $\hat\varrho\approx2.39\cdot 10^{-9}$,\ \ $\tilde\alpha_3\approx-1.71\cdot10^4$,\ \
$\hat\alpha_3\approx-1.81\cdot10^6$.
\end{itemize}

\begin{figure}[!ht]
\begin{subfigure}[b]{\textwidth}
\centering
\includegraphics[width=\linewidth]{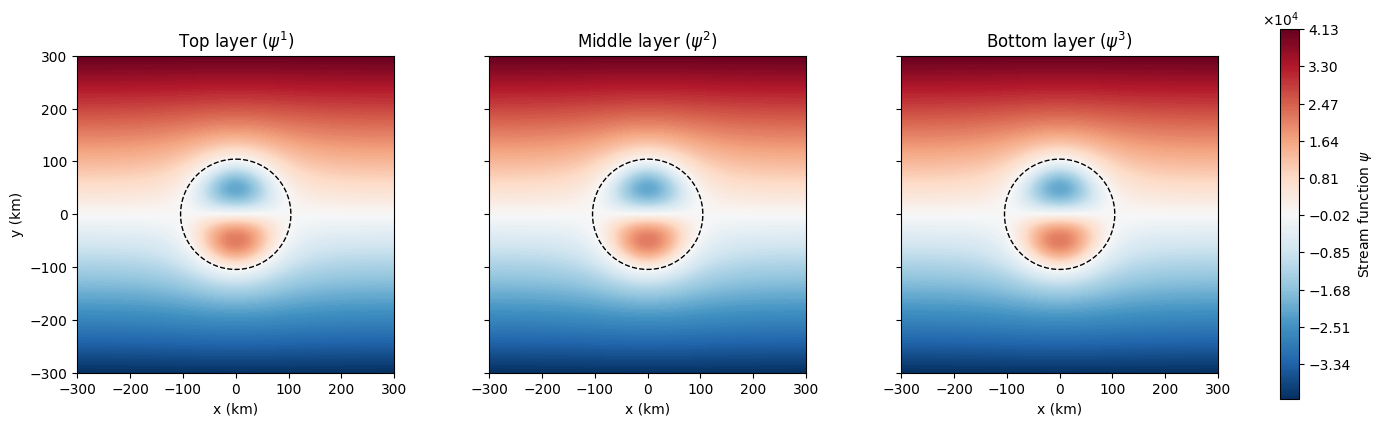}
\caption{}
\end{subfigure}
\begin{subfigure}[b]{\textwidth}
\centering
\includegraphics[width=\linewidth]{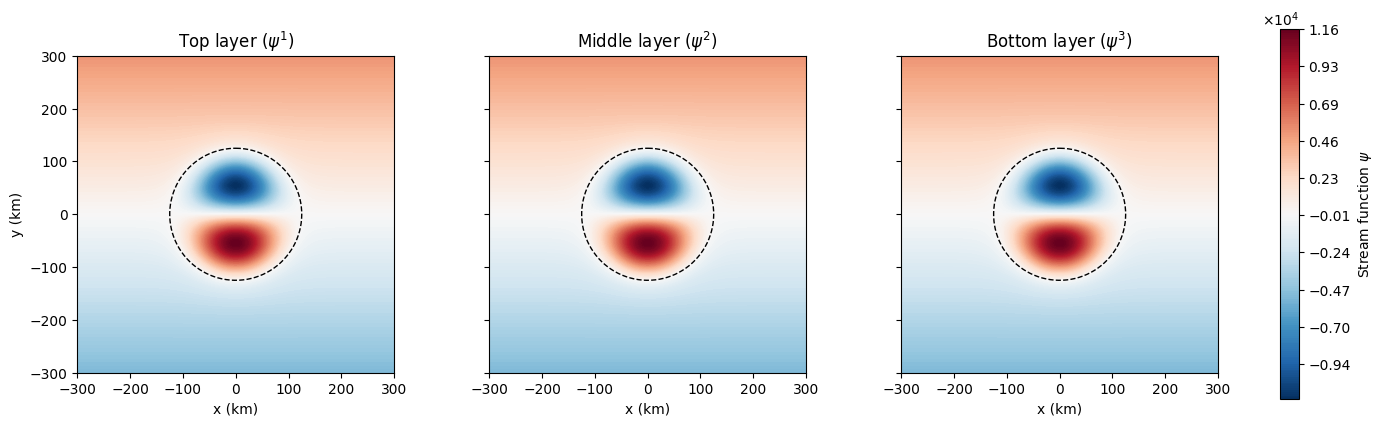}
\caption{}
\end{subfigure}
\caption{Larichev--Reznik barotropic modon~\eqref{eq:LRbtModon} 
for two values of the parameter tuples $(\tilde\varrho,\hat\varrho,r_0)$.}
\label{LRbtModons}
\end{figure}

\begin{remark}
The above construction of modons is closer to the pioneering paper~\cite{lari1976a}
on the single-layer model rather than to their recent studies, e.g., in~\cite{crow2024a}.
In particular, in the quasi-geostrophic model considered in~\cite{crow2024a},
it is allowed that the parameter~$\beta$ may essentially vary from layer to layer,
which increases the possible freedom in the course of merging solutions in the inner and outer domains.
Moreover, as far as we understand, the continuity of the stream function is chosen in~\cite{crow2024a}
as the only matching condition at the interface.
Then the stream function is in general nonzero for almost all points of the interface,
which implies a finite discontinuity of the potential vorticity at these points.
Therefore, the modons presented in~\cite{crow2024a} are not weak solutions
of the original multi-layer quasi-geostrophic model~\cite[Eq.~(2.1)--(2.4)]{crow2024a}
and, if the derivative $\psi_r$ is not continuous, even of the reduced problem~\cite[Eq.~(2.6)]{crow2024a}.
\end{remark}

\begin{remark}
In view of the enforced matching conditions $\tilde\psi=\hat\psi=0$
and $\tilde\psi_r=\hat\psi_r$ at the interface $r=r_0$,
the common values of~$\tilde\psi_r$ and~$\hat\psi_r$ at the interface $r=r_0$
for any nonzero merged solution are necessarily nonzero, except those at the points $(\pm r_0,0)$.
This explains the opposite signs of the stream function
in the inner and outer parts of neighborhoods of most of the interface points,
cf.\ Figures~\ref{LRbcModons} and~\ref{LRbtModons}.
\end{remark}

\subsection{Subalgebra family 1.2}\label{sec:Reduction12}

\subsubsection{Reduced system}\label{sec:Reduction12ReducedSystem}

For the one-dimensional subalgebra
$\mathfrak s_{1.2}^{\chi c}:=\langle\mathcal P^y+\mathcal P^x(\chi)+c_k\mathcal J^k\rangle$
of~$\mathfrak g$,
where $c:=(c_1,\dots,c_m)^{\mathsf T}$ is an arbitrary element of~$\mathop{\rm im}\mathsf F$
and $\chi$ is an arbitrary smooth function of~$t$,
an associated Lie ansatz is $\psi^i=v^i-\frac12\chi_t(t) y^2+c_iy$
or, in the vector notation,
\[
\psi=v-\frac12\chi_t(t)y^2\bar1+yc\quad\mbox{with}\quad z_1=t,\quad z_2=x-\chi(t)y.
\]
It reduces the system~\eqref{eq:mLaysMod} to the system
\begin{gather}\label{eq:RedSystSubalgS12}
\begin{split}
&\left((1+\chi^2)v^i_{22}\right)_1-c_i(1+\chi^2)v^i_{222}
-f_{i,i-1}(c_iv^{i-1}_2-c_{i-1}v^i_2-v^{i-1}_1+v^i_1)
\\
&\qquad{}+f_{i,i+1}(c_{i+1}v^i_2-c_iv^{i+1}_2-v^i_1+v^{i+1}_1)+\beta v^i_2-\chi_{11}=0.
\end{split}
\end{gather}
The condition $c\in\mathop{\rm im}\mathsf F$ modulo the $G$-equivalence
is, in fact, inessential in the course of the above Lie reduction and for its further study
and is neglected.
Recall that $\bar1$ is the all-ones $m$-column and
$\mathsf F=(f_{ij})$ is the tridiagonal matrix
that are defined in Section~\ref{sec:VectorFormOfSyst},
and $\mathsf E$ is the $m\times m$ identity matrix.
Denote $\mathsf C:=\mathop{\rm diag}c$
and $\mathsf B:=\mathop{\rm diag}\mathsf Fc-\mathsf C\mathsf F+\beta\mathsf E$,
which is a tridiagonal matrix as well.
\noprint{
Moreover, using similar arguments to those in Section~\ref{sec:mLaysMod},
we can show that, up to the $G$-equivalence, the matrix $\mathsf B$ is diagonalizable
and its eigenvalues are real and pairwise distinct.
}
Using this notation, the reduced system~\eqref{eq:RedSystSubalgS12} can be represented
in a form that is more convenient for analysis,
\begin{gather}\label{eq:RedSystSubalgS12MatrixForm}
((1+\chi^2)v_{22})_1-(1+\chi^2)\mathsf Cv_{222}+\mathsf Bv_2+\mathsf Fv_1=\chi_{11}\bar1.
\end{gather}

Depending on whether $\chi_t=0$ or $\chi_t\ne0$,
the normalizer of the subalgebra $\mathfrak s_{1.2}^{\chi c}$
in the algebra~$\mathfrak g$ is
$\langle\mathcal P^t,\mathcal P^y,\mathcal P^x(1),\mathcal J^1,\dots,\mathcal J^m,\mathcal Z(\kappa)\rangle$
or
$\langle\mathcal P^y+\mathcal P^x(\chi),\mathcal P^x(1),\mathcal J^1,\dots,\mathcal J^m,\mathcal Z(\kappa)\rangle$,
respectively,
where the parameter function~$\kappa$ runs through the set of smooth functions of~$t$.
Therefore, the algebra of induced symmetries of the system~\eqref{eq:RedSystSubalgS12MatrixHomog}
coincides with
\begin{gather*}
\big\langle\p_{z_2},\,\p_{v^1},\,\dots,\,\p_{v^m},\,\kappa(z_1)(\p_{v_1}+\cdots+\p_{v_m})\big\rangle
\quad\mbox{if}\quad\chi_t\ne0,\\
\big\langle\p_{z_1},\,\p_{z_2},\,\p_{v^1},\,\dots,\,\p_{v^m},\,\kappa(z_1)(\p_{v_1}+\cdots+\p_{v_m})\big\rangle
\quad\mbox{if}\quad\chi_t=0,
\end{gather*}
where the parameter function~$\kappa$ runs through the set of smooth functions of~$z_1$.

The reduced system~\eqref{eq:RedSystSubalgS12MatrixForm} is an inhomogeneous linear system of partial differential equations.
Therefore, it admits Lie symmetries arising from the linear superposition of solutions of
its homogeneous counterpart
\begin{gather}\label{eq:RedSystSubalgS12MatrixHomog}
((1+\chi^2)v_{22})_1-(1+\chi^2)\mathsf Cv_{222}+\mathsf Bv_2+\mathsf Fv_1=0.
\end{gather}
Most of these symmetries are genuine hidden symmetries of the original nonlinear system~\eqref{eq:mLaysMod},
which in particular shows that genuine hidden symmetries associated with the Lie reduction
with respect to any subalgebra from the family~$\{\mathfrak s_{1.2}^{\chi c}\}$ definitely exist.
To exhaustively study hidden Lie symmetries of~\eqref{eq:mLaysMod} associated with this reduction,
it is necessary to compute the maximal Lie invariance algebra of~\eqref{eq:RedSystSubalgS12MatrixHomog}.
At the same time, the system~\eqref{eq:RedSystSubalgS12MatrixHomog} is parameterized by
an arbitrary function $\chi=\chi(z_1)$ and the components of~$\theta=(\mathsf F,\beta)$ and~$c$,
so it is actually a class of systems of differential equations.
To properly describe Lie symmetries of systems from the class~\eqref{eq:RedSystSubalgS12MatrixHomog},
it is in fact necessary to solve the group classification problem for this class.

Since $\ker\mathsf F=\langle\bar 1\rangle$ and $(\mathop{\rm diag}\mathsf Fc)\bar 1=\mathsf Fc$,
we can readily verify that
\[v=\frac{\chi_{11}}\beta z_2\bar 1-\frac{\chi_1}\beta c\]
is a particular solution of~\eqref{eq:RedSystSubalgS12MatrixForm}.
This is why the system~\eqref{eq:RedSystSubalgS12MatrixForm} is reduced by
the transformation $v-\beta^{-1}\chi_{11}z_2\bar 1+\beta^{-1}\chi_1 c\mapsto v$
to the homogeneous system~\eqref{eq:RedSystSubalgS12MatrixHomog},
and the symmetry analysis of the former thus reduces to the symmetry analysis of the latter.
Another interpretation for this is to consider the modified ansatz
\[
\psi=v(z_1,z_2)-\frac{\chi_t}2y^2\bar 1+yc+\frac{\chi_{tt}}{\beta}(x-\chi y)\bar 1-\frac{\chi_t}\beta c,\quad
z_1=t,\quad z_2=x-\chi y,
\]
which straightforwardly reduces the system~\eqref{eq:mLaysMod} to the system~\eqref{eq:RedSystSubalgS12MatrixHomog}.
For the purpose of symmetry classification, we denote
the class of systems of the form~\eqref{eq:RedSystSubalgS12MatrixHomog} by~$\mathcal R$,
its arbitrary-element tuple by $\varrho:=(\mathsf F,\mathsf C,\chi,\beta)$,
and the system from the class~$\mathcal R$ with
a fixed value of the arbitrary-element tuple~$\varrho$ by~$\mathcal R_\varrho$.

\subsubsection{Group classification}\label{sec:Reduction12GroupClassification}

We begin with the most general form of Lie-symmetry vector fields of the system~$\mathcal R_\varrho$,
which constitute the maximal Lie invariance algebra~$\mathfrak g^{\rm max}_\varrho$ of this system,
\[
\tau\p_{z_1}+\xi\p_{z_2}+\eta^j\p_{v^j},
\]
where the components~$\tau$, $\xi $ and $\eta^j$ are smooth functions depending on $(z_1,z_2,v)$.
Similarly to the proof of Lemma~\ref{lem:LieSymVecFieldClassV},
we apply the Lie invariance criterion, see, e.g.,~\cite[Theorem~2.31]{olve1993A},
to this vector field and the system~$\mathcal R_\varrho$.
In the obtained equation that associated with the $i$th equation of the system~$\mathcal R_\varrho$,
we separately collect coefficients of the jet monomial
$v^k_{112}$, $v^j_2v^k_{112}$, $v^i_{12}v^j_{22}$, $v^j_{12}v^k_2$, $v^j_{12}$,
which results in the following determining equations:
\begin{gather*}
\tau_2=\tau_{v^j}=0,\quad
\xi_{v^j}=0,\quad
\eta^i_{v^jv^k}=0,\quad
\eta^i_{z_2v^i}=\xi_{22},\quad
\eta^i_{z_2v^j}=0,\ j\ne i.
\end{gather*}
These equations do not involve parameters $\chi$, $c$ and $\theta$.
Integrating them, we find that
\begin{gather}\label{eq:DetEqIntegratedRed12}
\tau=\tau(z_1),\quad
\xi=\xi(z_1,z_2),\quad
\eta^i=\eta^{ij}(z_1,z_2)v^j+\eta^{i0}(z_1,z_2)
\end{gather}
with $\eta^{ii}_2=\frac12\xi_{22}$ and $\eta^{ij}_2=0$, $j\ne i$.
Collecting summands without the unknown functions~$v^j$ and their derivatives
results in the system~$\mathcal R_\varrho$
on the tuple $(\eta^{10},\dots,\eta^{m0})^{\mathsf T}$.
Moreover, the components of this tuple do not appear in the other determining equations.

It is convenient to carry out the further computations in matrix form,
using the following representation
for Lie symmetry vector fields of the system~$\mathcal R_\varrho$
in view of~\eqref{eq:DetEqIntegratedRed12}:
\[
\tau\p_{z_1}+\xi\p_{z_2}+(\mathsf H v)^j\p_{v^j}
\]
with the matrix $\mathsf H=(\eta^{jk}(z_1,z_2))_{j,k=1}^m$, where $\tau=\tau(z_1)$, $\xi =\xi(z_1,z_2)$.
Successively collecting the coefficients of $v_{12}$ (we repeat this part of collecting for convenience),
$v_{222}$, $v_{22}$, $v_2$, $v_1$ and~$v$
in the matrix equation obtained from the Lie invariance criterion
for such a vector field and the system~$\mathcal R_\varrho$,
we derive the system
\begin{subequations}\label{eq:DetEq1.2.Part2}
\begin{gather}
\label{eq:Condition1}
2\mathsf H_2=\xi_{22}\mathsf E,
\\
\label{eq:Condition2}
\xi_1=0,\quad
[\mathsf H,\mathsf C]=0,\quad
(\tau_1-\xi_2)\mathsf C=\mathsf 0,
\\
\label{eq:Condition3}
\mathsf H_1+\big(\tau(\chi^2)_1(1+\chi^2)^{-1}\big)_1\mathsf E=\mathsf 0,
\\
\label{eq:Condition4}
[\mathsf H,\mathsf B]=\big(\tau_1+\xi_2-\tau(\chi^2)_1(1+\chi^2)^{-1}\big)\mathsf B,
\\
\label{eq:Condition5}
(1+\chi^2)\mathsf H_{22}-[\mathsf H,\mathsf F]+(2\xi_2-\tau(\chi^2)_1(1+\chi^2)^{-1})\mathsf F=\mathsf 0,
\\
\label{eq:Condition6}
\noprint{(1+\chi^2)\mathsf H_{122}+}(\chi^2)_1\mathsf H_{22}
\noprint{-(1+\chi^2)\mathsf C\mathsf H_{222}}+\mathsf F\mathsf H_1+\beta\mathsf H_2=\mathsf 0,
\end{gather}
\end{subequations}
We have in addition arranged the system in view of the equation $\xi_{22}\mathsf C=\mathsf 0$,
which is a differential consequence of the equation $(\tau_1-\xi_2)\mathsf C=\mathsf 0$.

Substituting the expressions for $\mathsf H_1$ and $\mathsf H_2$
arising from~\eqref{eq:Condition1} and~\eqref{eq:Condition3} into~\eqref{eq:Condition6},
we straightforwardly derive~$\big(\tau(\chi^2)_1(1+\chi^2)^{-1}\big)_1=0$ and thus $\mathsf H_1=\mathsf 0$
since the matrix $\mathsf F$ has nonzero non-diagonal entries
while~$\mathsf H_2$ and~$\mathsf H_{22}$ are diagonal.
Differentiating~\eqref{eq:Condition5} with respect to~$z_1$ results in the equation $(\chi^2)_1\mathsf H_{22}=\mathsf 0$.
In view of this constraint, the equation~\eqref{eq:Condition6} implies $\mathsf H_2=\mathsf 0$.
Therefore, $\mathsf H$ is a constant matrix and, in view of~\eqref{eq:Condition1}, $\xi_{22}=0$.
The equation~\eqref{eq:Condition5} takes the form
\[
[\mathsf H,\mathsf F]=\big(2\xi_2-\tau(\chi^2)_1(1+\chi^2)^{-1}\big)\mathsf F,
\]
where the coefficient $2\xi_2-\tau(\chi^2)_1(1+\chi^2)^{-1}$ is a real constant.
If this constant differs from zero,
the equation is compatible if and only if $\mathsf F$ is nilpotent,
see \cite[Lemma~4, p.~44]{jaco1962A} or \cite[Theorem~II]{roth1952a}.
The discussion in Section~\ref{sec:mLaysMod} implies that the matrix~$\mathsf F$ is not nilpotent
under the assumed physical constraint of positivity of the essential components of~$\mathsf F$.
Hence, $2\xi_2-\tau(\chi^2)_1(1+\chi^2)^{-1}=0$ and $[\mathsf H,\mathsf F]=\mathsf 0$.
Combining the former equation with~\eqref{eq:Condition4} leads to the equation
\begin{gather}\label{eq:Condition4Simplified}
[\mathsf H,\mathsf B]=(\tau_1-\xi_2)\mathsf B.
\end{gather}
If $\mathsf C=\mathsf 0$, then $\mathsf B=\beta\mathsf E$ by the construction,
and thus $[\mathsf H,\mathsf B]=0$ and $\tau_1=\xi_2$.
Otherwise, the third equation from~\eqref{eq:Condition2} directly gives $\tau_1=\xi_2$,
and hence again $[\mathsf H,\mathsf B]=0$.

The above discussion implies that the system~\eqref{eq:DetEq1.2.Part2} is equivalent to the system
\begin{gather*}
\tau_{11}=\tau_2=0,\quad
\xi_1=\xi_{22}=0,\quad
\tau_1=\xi_2,\quad
\tau(\chi^2)_1(1+\chi^2)^{-1}=2\xi_2,
\\
\mathsf H_1=\mathsf H_2=0,\quad
[\mathsf H,\mathsf F]=[\mathsf H,\mathsf C]=[\mathsf H,\mathsf B]=0,
\end{gather*}
and thus $[\mathsf H,\mathop{\rm diag}\mathsf Fc]=0$.
Analyzing this system along with~\eqref{eq:DetEqIntegratedRed12},
we solve the group classification problem for the class $\mathcal R$.

Denote by $\mathfrak H$ the space of matrices~$\mathsf H$
that commute with the matrices~$\mathsf F$, $\mathsf C$ and $\mathop{\rm diag}\mathsf Fc$.
It is obvious that $\mathfrak H\supseteq\langle\mathsf E\rangle$.
Since the matrix~$\mathsf F$ has $m$ distinct eigenvalues, the condition $[\mathsf H,\mathsf F]=0$
implies that $\mathfrak H$ is contained in the space $\mathbb R_{m-1}[\mathsf F]$
of polynomials of~$\mathsf F$ whose degrees are not greater than $m-1$.
Hence $\dim\mathfrak H\leqslant\dim\mathbb R_{m-1}[\mathsf F]=m$.

\begin{lemma}\label{lem:Reduction12Hmax}
$\mathfrak H=\mathbb R_{m-1}[\mathsf F]$ if and only if
$c\in\langle\bar1\rangle$, and thus $c=0$ if in addition $c\perp_{\mathsf W}^{}\bar1$.
\end{lemma}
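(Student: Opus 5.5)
Both directions reduce to showing that $\mathsf F$ itself lies in $\mathfrak H$ exactly when $c\in\langle\bar1\rangle$. The chain of inclusions $\langle\mathsf E\rangle\subseteq\mathfrak H\subseteq\mathbb R_{m-1}[\mathsf F]$ has already been established, and $\mathfrak H$ is a linear space. Hence $\mathfrak H=\mathbb R_{m-1}[\mathsf F]$ holds if and only if $\mathsf F\in\mathfrak H$. Indeed, if $\mathsf F\in\mathfrak H$, then every power of $\mathsf F$ commutes with $\mathsf C$ and $\mathop{\rm diag}\mathsf Fc$, so all polynomials of $\mathsf F$ lie in $\mathfrak H$. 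Conversely, $\mathsf F\in\mathbb R_{m-1}[\mathsf F]$.

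Since $\mathsf F$ trivially commutes with itself, the condition $\mathsf F\in\mathfrak H$ amounts to two commutation conditions, $[\mathsf F,\mathsf C]=0$ and $[\mathsf F,\mathop{\rm diag}\mathsf Fc]=0$.

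\emph{Sufficiency.} If $c=\alpha\bar1$, then $\mathsf C=\alpha\mathsf E$ and $\mathsf Fc=0$ because $\ker\mathsf F=\langle\bar1\rangle$. Then $\mathop{\rm diag}\mathsf Fc=\mathsf 0$, and both commutators vanish.

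\emph{Necessity.} Compute the commutator of $\mathsf F$ with a diagonal matrix entrywise: for $\mathsf C=\mathop{\rm diag}(c_1,\dots,c_m)$,
\[
[\mathsf F,\mathsf C]_{i,i+1}=f_{i,i+1}(c_{i+1}-c_i).
\]
All essential entries $f_{i,i+1}$ are nonzero, so $[\mathsf F,\mathsf C]=0$ forces $c_1=\dots=c_m$, i.e., $c\in\langle\bar1\rangle$. The second condition, $[\mathsf F,\mathop{\rm diag}\mathsf Fc]=0$, is not needed for this direction.

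The final clause then follows. The vector $\bar1$ satisfies $(\bar1,\bar1)_{\mathsf W}^{}>0$, so $c=\alpha\bar1$ together with $c\perp_{\mathsf W}^{}\bar1$ gives $\alpha=0$, hence $c=0$.

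I do not expect a real obstacle. The only points to state carefully are the equivalence $\mathfrak H=\mathbb R_{m-1}[\mathsf F]\iff\mathsf F\in\mathfrak H$, which relies on $\mathfrak H$ being an algebra closed under products of commuting matrices, and the use of nonvanishing off-diagonal entries of the tridiagonal matrix $\mathsf F$.
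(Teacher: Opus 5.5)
Your proof is correct and follows essentially the same route as the paper: sufficiency because $c=\alpha\bar1$ makes $\mathsf C$ a multiple of $\mathsf E$ and $\mathop{\rm diag}\mathsf Fc=\mathsf 0$, and necessity by taking $\mathsf H=\mathsf F$ and reading off $f_{i,i+1}(c_{i+1}-c_i)=0$ from the superdiagonal of $[\mathsf F,\mathsf C]$. Your extra reduction to the single condition $\mathsf F\in\mathfrak H$, using that a centralizer is closed under products, only makes explicit what the paper leaves implicit.
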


\begin{proof}
If $c\in\langle\bar1\rangle$, then $\mathsf C\in\langle\mathsf E\rangle$, $\mathsf Fc=0$.
Hence the commutation conditions $[\mathsf H,\mathsf C]=[\mathsf H,\mathsf B]=0$
imply no constraints for~$\mathsf H$.
Conversely, if $\mathfrak H=\mathbb R_{m-1}[\mathsf F]$,
then in particular $[\mathsf H,\mathsf C]=0$ for $\mathsf H=\mathsf F$,
which implies the equations $f_{i,i+1}(c_{i+1}-c_i)=0$, $i=1,\dots,m-1$.
In view of the inequalities $f_{i,i+1}>0$, $i=1,\dots,m-1$,
this means that $c_1=\dots=c_m$, i.e., $c\in\langle\bar1\rangle$.
\end{proof}

\begin{lemma}\label{lem:Reduction12Hmin}
$\mathfrak H=\langle\mathsf E\rangle$ if all $c_i$ are pairwise distinct.
\end{lemma}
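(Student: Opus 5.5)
The plan is to use only the commutation conditions $[\mathsf H,\mathsf C]=0$ and $[\mathsf H,\mathsf F]=0$ from the definition of~$\mathfrak H$; the third condition, $[\mathsf H,\mathop{\rm diag}\mathsf Fc]=0$, will not be needed. Since $\mathsf C=\mathop{\rm diag}c$ and the entries $c_1,\dots,c_m$ are assumed pairwise distinct, $\mathsf C$ has simple spectrum with eigenvectors the standard basis vectors. For any $\mathsf H\in\mathfrak H$, the condition $[\mathsf H,\mathsf C]=0$ reads $(c_i-c_j)h_{ij}=0$ for all $i,j$. Hence $h_{ij}=0$ for $i\ne j$, i.e., $\mathsf H$ is diagonal, $\mathsf H=\mathop{\rm diag}(h_1,\dots,h_m)$.

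Next I impose $[\mathsf H,\mathsf F]=0$ on such a diagonal $\mathsf H$. The $(i,j)$ entry of the commutator is $(h_i-h_j)f_{ij}$. Take the superdiagonal entries $(i,i+1)$, $i=1,\dots,m-1$. There $f_{i,i+1}\ne0$ by the physical positivity constraint, or merely by the nonvanishing of the essential components of~$\mathsf F$. This gives $h_i=h_{i+1}$ for all $i$, so $h_1=\dots=h_m$ and $\mathsf H\in\langle\mathsf E\rangle$. Conversely, $\mathsf E$ trivially commutes with $\mathsf F$, $\mathsf C$ and $\mathop{\rm diag}\mathsf Fc$, so $\mathfrak H=\langle\mathsf E\rangle$.

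There is no real obstacle here. The only point needing care is that the argument uses the irreducibility of the tridiagonal matrix~$\mathsf F$, that is, the nonvanishing of all its off-diagonal essential entries. This is what forces a diagonal matrix commuting with $\mathsf F$ to be scalar, and it is the same fact already used in the proof of Lemma~\ref{lem:Reduction12Hmax}. One could alternatively argue that $\mathsf H\in\mathbb R_{m-1}[\mathsf F]$ is a polynomial in $\mathsf F$ that is diagonal, but the direct entrywise computation above is shorter.
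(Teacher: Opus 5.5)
Your proof is correct, and it takes a different, more elementary route than the paper.

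The paper starts from the inclusion $\mathfrak H\subseteq\mathbb R_{m-1}[\mathsf F]$, which comes from the simple spectrum of $\mathsf F$. It writes $\mathsf H\in\mathfrak H$ as a polynomial in $\mathsf F$ of degree $l$ with leading coefficient $a_l$. It then analyses the powers $\mathsf F^k$ through the sign-normalized matrix $\tilde{\mathsf F}=-\mathsf\Omega\mathsf F\mathsf\Omega$, showing that $\mathsf F^k$ is exactly $(2k+1)$-diagonal. Finally, the $(i,i+l)$ entry of $[\mathsf H,\mathsf C]$, namely $a_l(\mathsf F^l)_{i,i+l}(c_{i+l}-c_i)$, forces $l=0$.

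You use $\mathsf C$ first instead. Its simple spectrum makes $\mathsf H$ diagonal, and the nonvanishing superdiagonal of $\mathsf F$ then makes $\mathsf H$ scalar. This avoids the polynomial representation, the nonderogatory property of $\mathsf F$ and the band analysis of its powers. It also needs only $f_{i,i+1}\ne0$, not positivity. (The paper's route could also drop positivity, since $(\mathsf F^l)_{i,i+l}=f_{i,i+1}\cdots f_{i+l-1,i+l}$.)

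What the paper's route buys is economy in the hypothesis on $c$. It only needs, for the top degree $l$, a single index $i$ with $c_i\ne c_{i+l}$. So the same argument gives $\mathfrak H=\langle\mathsf E\rangle$ whenever, for every $l\in\{1,\dots,m-1\}$, some $c_i\ne c_{i+l}$; in particular this requires $c_1\ne c_m$. This matches the shape of the exceptional cases listed for $m=3,4$, such as $c_1=c_3$, or $c_1=c_4$ with a degree-three extension.

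Your first step genuinely needs pairwise distinctness. As soon as $c_i=c_j$ for some $i\ne j$, the commutant of $\mathsf C$ is strictly larger than the diagonal matrices, and your argument no longer applies directly.
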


\begin{proof}
The matrix $\tilde{\mathsf F}:=-\mathsf\Omega\mathsf F\mathsf\Omega$
with $\mathsf\Omega:=\mathop{\rm diag}((-1)^k,\,k=1,\dots,m)$
is tridiagonal with strictly positive subdiagonal, diagonal and superdiagonal entries.
Its $k$th power $\tilde{\mathsf F}^k$ is a $(2k+1)$-diagonal matrix with strictly positive entries
on these diagonals.
Then, $\mathsf F^k=(-1)^k\mathsf\Omega\tilde{\mathsf F}^k\mathsf\Omega$,
i.e., $(\mathsf F^k)_{ij}=(-1)^{j-i+k}(\tilde{\mathsf F}^k)_{ij}$.
Therefore, $\mathsf F^k$ is a $(2k+1)$-diagonal matrix with nonzero entries on these diagonals
and chessboard sign pattern on them.

Take an arbitrary matrix $H\in\mathfrak H$.
Suppose that the degree~$l$ of~$H$ as a polynomial of~$\mathsf F$ is greater than zero
and $a_l$ is the leading coefficient of this polynomial.
Then the condition $[\mathsf H,\mathsf C]=0$ implies that
$h_{i,i+l}(c_{i+l}-c_i)=0$.
At the same time, $c_{i+l}\ne c_i$ and $h_{i,i+l}=a_l(\mathsf F^l)_{i,i+l}\ne0$
since $a_l\ne0$ and $(\mathsf F^l)_{i,i+l}\ne0$,
which gives a contradiction.
Therefore $l=0$, i.e., $\mathsf H\in\langle\mathsf E\rangle$.
\end{proof}

It follows from Lemmas~\ref{lem:Reduction12Hmax} and~\ref{lem:Reduction12Hmin}
that there are two outermost situations,
all $c_i$ are the same, where $\mathfrak H=\mathbb R_1[\mathsf F]$,
and all $c_i$ are pairwise distinct, where $\mathfrak H=\langle\mathsf E\rangle$.
For $m=2$, only these situations are possible.
If $m>2$ and $c_i$ are not all the same and are not all pairwise distinct,
then  $\langle\mathsf E\rangle\subseteq\mathfrak H\subsetneq\mathbb R_{m-1}[\mathsf F]$,
where in the first inclusion we can have equality,
but, according to Lemma~\ref{lem:Reduction12Hmax}, not in the second one.

We exhaustively describe the space~$\mathfrak H$ depending on~$\mathsf F$ and $\mathsf C$
for low values of~$m$,
\begin{gather*}
\hspace*{-\mathindent}\mathfrak H\mbox{ for }m=2\colon\quad
\circ\ \mathfrak H=\mathbb R_1[\mathsf F]\quad\mbox{if}\quad c\in\langle\bar1\rangle\quad\mbox{and}\quad
\circ\ \mathfrak H=\langle\mathsf E\rangle\quad\mbox{otherwise}.
\\
\hspace*{-\mathindent}\mathfrak H\mbox{ for }m=3\colon\\
\circ\ \mathbb R_2[\mathsf F]\quad\mbox{if}\quad c\in\langle\bar1\rangle,\\
\circ\ \big\langle\mathsf E,\mathsf F^2+(f_{12}+f_{21}+f_{23})\mathsf F\big\rangle
\quad\mbox{if}\quad f_{12}=f_{32},\quad c_1=c_3\ne c_2,\\
\circ\ \langle\mathsf E\rangle\quad\mbox{otherwise}.
\\
\hspace*{-\mathindent}\mathfrak H\mbox{ for }m=4\colon\\
\circ\ \mathbb R_2[\mathsf F]\quad\mbox{if}\quad c\in\langle\bar1\rangle,
\\
\circ\ \langle\mathsf E,
\mathsf F^3+(f_{12}+2f_{21}+f_{23}+f_{32})\mathsf F^2+(f_{21}+f_{23}+f_{32})(f_{12}+f_{21})\mathsf F\rangle
\\\hphantom{\circ\ }\mbox{if}\quad f_{43}=f_{12},\ f_{34}=f_{21},\ \ c_1=c_4\ne c_2=c_3,\\
\circ\ \big\langle\mathsf E,\mathsf F^2+(f_{12}+f_{21}+f_{23})\mathsf F\big\rangle
\quad\mbox{if}\quad f_{34}=f_{12}-f_{32},\ \ f_{43}=f_{21}-f_{23},\ \ c_1=c_3\ne c_2=c_4,\\
\circ\ \langle\mathsf E\rangle\quad\mbox{otherwise}.
\end{gather*}
Note that for $m=2$ and $m=3$,
the conditions $[\mathsf H,\mathsf F]=0$ and $[\mathsf H,\mathsf C]=0$ imply
$[\mathsf H,\mathop{\rm diag}\mathsf Fc]=0$,
but this is not the case for $m=4$.

\begin{remark}\label{rem:SpaceHParams}
If $c\notin\langle\bar1\rangle$,
the space~$\mathfrak H$ is larger than the minimal case $\mathfrak H=\langle\mathsf E\rangle$
only under specific additional constraints
not only on the subalgebra parameter tuple~$c$, which can be chosen arbitrarily,
but also on the essential components of the matrix~$\mathsf F$, which define the model.
Although for $m\in\{3,4\}$ the constraints are linear in $(c,\mathsf F)$
and decoupled with respect to~$c$ and~$\mathsf F$,
after analysing the commutation relations $[\mathsf H,\mathsf F]=[\mathsf H,\mathsf C]=[\mathsf H,\mathsf B]=0$,
we conjecture that in higher dimensions~$m$, there exist nonlinear constrains coupling $c$ and~$\mathsf F$.
\end{remark}

\begin{remark}\label{rem:MinMaxDimOfH}
While we have exhaustively described the space~$\mathfrak H$ for low values of~$m$,
the same problem for an arbitrary value of~$m$ remains open.
It is not even clear what are justified conjectures about deeper properties of this space.
Furthermore, as noted in Remark~\ref{rem:SpaceHParams},
non-maximal extensions of~$\mathfrak H$ definitely require
that the corresponding model parameters satisfy a system of algebraic equations
and thus their set is of measure zero in the space of admitted model parameters.
This is why for the remainder of Section~\ref{sec:CodimOneLieReds},
we consider the entire space $\mathfrak H=\mathbb R_{m-1}[\mathsf F]$ if $c\in\langle\bar1\rangle$
and its subspace~$\langle\mathsf E\rangle$ otherwise,
which correspond to the \textit{completely decoupled} and \textit{generically coupled}
reduced systems~\eqref{eq:RedSystSubalgS12MatrixHomog}, respectively.
This remark is also relevant for the analogous reduced systems~\eqref{eq:RedSys13Homogen}.
At the same time, the complete description of the nontrivial extensions of the space~$\mathfrak H$
may be of interest for finding bifurcation values of the model parameters.
\end{remark}

The kernel Lie invariance algebra $\mathfrak g_{\mathcal R}^\cap$ of the class $\mathcal R$
is spanned by the vector fields
\[
\p_{z_2},\quad
v^j\p_{v^j},\quad
\p_{v^1},\quad\dots,\quad
\p_{v^m},\quad
\kappa(\p_{v^1}+\cdots+\p_{v^m}),
\]
where $\kappa$ is an arbitrary smooth function of~$z_1$.
Among these vector fields, only $v^j\p_{v^j}$ is not induced
by a Lie-symmetry vector field of the original system~$\mathcal M_\theta$.
Any system~$\mathcal R_\varrho$ from the class~$\mathcal R$
is invariant with respect to the algebra~$\mathfrak g^{\rm gen}_\varrho$ spanned by
\begin{gather*}
\p_{z_2},\quad
(\mathsf Hv)^j\p_{v^j},\quad
\zeta^j(p,q)\p_{v^j},
\end{gather*}
where the tuple $(\zeta^1,\dots,\zeta^m)$
runs through the solution set of the system~$\mathcal R_\varrho$,
and $\mathsf H$ runs through a basis of the space $\mathfrak H$ of (constant) matrices
commuting with the matrices~$\mathsf F$, $\mathsf C$ and $\mathsf B$.
The subalgebra of~$\mathfrak g^{\rm gen}_\varrho$ constituted by induced symmetries
is spanned by the vector fields~$\p_{z_2}$ and $\zeta^j(z_1,z_2)\p_{v^j}$
with $\zeta^j_2=0$ and $\zeta^j_1=\zeta^k_1$.
All the other elements of~$\mathfrak g^{\rm gen}_\varrho$ are hidden symmetries
for the original system~$\mathcal M_\theta$.

The classification of extensions of~$\mathfrak g^{\rm gen}_\varrho$,
i.e., the cases where $\mathfrak g^{\rm max}_\varrho\ne\mathfrak g^{\rm gen}_\varrho$,
is carried out modulo the action of the equivalence transformations in the class $\mathcal R$
that are induced by elements of the point symmetry pseudogroup~$G$ of the system~$\mathcal M_\theta$,
which is the same for all systems from the class~$\mathcal M$.
The only essential among these equivalence transformations are the translations with respect to~$z_1$.
Since $\tau_{11}=\tau_2=0$, $\xi_1=\xi_{22}=0$ and $\tau_1=\xi_2$,
we have $\tau=at+b$ and $\xi_2=a$ for some constants~$a$ and~$b$.
Then the classifying equation $\tau(\chi^2)_1(1+\chi^2)^{-1}=2\xi_2$ takes the form
$(at+b)(1+\chi^2)_1=2a(1+\chi^2)$.
Up to translations with respect to~$z_1$ and multiplying the classifying equation by a nonzero constant,
there are two inequivalent cases of this equation,
in addition to the identity corresponding to the absence of extension,
with $(a,b)=(1,0)$ and with $(a,b)=(0,1)$.
In view of this argument, the inequivalent extensions of the algebra $\mathfrak g^{\rm gen}_\varrho$
are exhausted by two cases,
\begin{align*}
\chi_1=0\colon&\quad
\mathfrak g^{\rm max}_\varrho=\mathfrak g^{\rm gen}_\varrho+\langle\p_{z_1}\rangle,
\\
\chi=\pm\sqrt{\alpha z_1^2-1}\colon&\quad
\mathfrak g^{\rm max}_\varrho=\mathfrak g^{\rm gen}_\varrho+\langle z_1\p_{z_1}+z_2\p_{z_2}\rangle,
\end{align*}
where $\alpha$ is a positive constant.
Recall that the extension~$\p_{z_1}$ is induced by the Lie-symmetry vector field~$\p_t$ of~$\mathcal M_\theta$,
whereas the extension~$z_1\p_{z_1}+z_2\p_{z_2}$ is a hidden Lie symmetry of~$\mathcal M_\theta$.

We split further consideration into cases depending on two criteria,
the dimension of the space~$\mathfrak H$, see Remark~\ref{rem:MinMaxDimOfH},
and the presence of Lie-symmetry vector fields with nonzero $z_1$-components.
According to the first criterion, we have the \emph{decoupled} and the \emph{coupled} cases,
where $\dim\mathfrak H$ is minimal and maximal, respectively.
Each of these cases additionally splits into
the \emph{general}, the \emph{scale-invariant} and the \emph{shift-invariant} cases
using the second criterion and meaning shifts and scalings with respect to $z_1$.

\subsubsection{Completely decoupled case}\label{sec:ReductionA12DecoupledCase}

In view of Lemma~\ref{lem:Reduction12Hmax}, the most singular case of the group classification
of the class $\mathcal R$ of systems~\eqref{eq:RedSystSubalgS12MatrixHomog}
takes place if and only if $c\in\langle\bar1\rangle$, i.e., $c=0$ modulo the $G$-equivalence,
which corresponds to the maximal possible dimension of the general algebra $\mathfrak g^{\rm gen}_\varrho$.
Consequently, we have $\mathsf C=0$, which implies $\mathsf B=\beta\mathsf E$.
Under these constraints, the system~$\mathcal R_\varrho$ takes the form
\begin{gather}\label{eq:RedSystSubalgS12Decoupled}
((1+\chi^2)v_{22})_1+\beta v_2+\mathsf Fv_1=0,
\end{gather}
and thus it is decoupled.
In this case, the algebra $\mathfrak g^{\rm gen}_\varrho$ coincides with the span
\[
\langle\p_{z_2},\
(\mathsf F^{i-1}v)^j\p_{v^j},\,i=1,\dots,m,\
\zeta^j(p,q)\p_{v^j}\rangle,
\]
where the tuple $(\zeta^1,\dots,\zeta^m)$
runs through the solution set of the system~$\mathcal R_\varrho$.

Using the change of the dependent variables $\tilde v:=\mathsf P^{-1}v$,
we diagonalize the matrix~$\mathsf F$ to its eigenvalue matrix
$\mathsf\Lambda=\mathop{\rm diag}(\lambda_1,\dots,\lambda_m)$, see the end of Section~\ref{sec:PropertiesOfMatrixF}
and thus we map the system~\eqref{eq:RedSystSubalgS12Decoupled} to the decoupled system
$
((1+\chi^2)\tilde v_{22})_1+\beta\tilde v_2+\mathsf\Lambda\tilde v_1=0.
$
or, componentwise,
\begin{gather}\label{eq:RedSys12DecoupledModIthEq}
((1+\chi^2)\tilde v^i_{22})_1+\beta\tilde v^i_2+\lambda_i\tilde v^i_1=0, \quad i=1,\dots,m.
\end{gather}
Since $\lambda_m=0$, the $m$th equation of the system~\eqref{eq:RedSys12DecoupledModIthEq} takes the form
$((1+\chi^2)\tilde v^m_{22})_1+\beta\tilde v^m_2=0$.
This equation obviously integrates to \[\big((1+\chi^2)\tilde v^m_2\big)_1+\beta\tilde v^m=\varsigma,\]
where $\varsigma=\varsigma(z_1)$ is an arbitrary smooth function.
Changing the variables in the last equation according to
\begin{gather}\label{eq:RedSystSubalgS12DecoupledChangeVariables}
\hat z_1=\beta\int\frac{{\rm d}z_1}{1+\chi^2},\quad
\hat z_2=z_2,\quad
\hat v^m=\tilde v^m-\frac \varsigma\beta,
\end{gather}
we derive the famous Klein--Gordon equation in the light-cone variables
\begin{gather}\label{eq:KG}
\hat v^m_{\hat z_1\hat z_2}+\hat v^m=0.
\end{gather}
There are large families of known exact solutions for this equation,
including invariant ones, see, e.g.,
\cite[Sections~1.1--1.3]{mill1977A} for separation of variables based on symmetries
for the counterpart of~\eqref{eq:KG} in the standard spacetime coordinates or
the collection of results for this counterpart in~\cite[Section~4.1.3]{poly2002A}.
The generalized symmetries and the local conservation laws of the equation~\eqref{eq:KG},
which can be interpreted as the corresponding hidden objects of the original system~\eqref{eq:mLaysMod},
were exhaustively described in~\cite{opan2020e}.

As a result, the collection of the $\mathfrak s_{1.2}^{\chi}$-invariant solutions
of the original system~\eqref{eq:mLaysMod} can be represented
in terms of the general solutions of the equations~\eqref{eq:RedSys12DecoupledModIthEq},
\begin{gather}\label{eq:GeneralSolutionDecoupledRedS12}
\solution
\psi=\sum_{i=1}^{m-1}\tilde v^i(z_1,z_2)e_i+\left(\hat v^m(\hat z_1,\hat z_2)
-\frac{\chi_t}2 y^2+\frac{\chi_{tt}}\beta z_2+\frac\varsigma\beta\right)\bar 1,
\end{gather}
where $z_1:=t$, $\hat z_1:=\beta\int(1+\chi^2)^{-1}{\rm d}t$,
$\hat z_2=z_2:=x-\chi(t)y$,
$e_i$ is an eigenvector of the matrix~$\mathsf F$ corresponding to its eigenvalue~$\lambda_i$ and,
moreover, $\lambda_m=0$ and $e_m=\bar1$.
The function $\tilde v^i$ is an arbitrary solution
of the $i$th equation in~\eqref{eq:RedSys12DecoupledModIthEq}, $i=1,\dots,m-1$,
the function $\hat v^m$ is an arbitrary solution of the Klein--Gordon equation~\eqref{eq:KG},
$\tilde v^m(z_1,z_2)=\hat v^m(\hat z_1,\hat z_2)$,
and $\chi$ and $\varsigma$ are arbitrary functions of $t$.
The function~$\varsigma$ can be neglected modulo the $G$-equivalence.

Since each of the equations~\eqref{eq:RedSys12DecoupledModIthEq} and~\eqref{eq:KG} is linear and homogeneous,
any linear combination of its solutions is its solution as well.
This is why the representation~\eqref{eq:GeneralSolutionDecoupledRedS12} gives
wide families of exact solutions of the original system~\eqref{eq:mLaysMod}.
The remainder of this subsection is devoted to finding
closed-form solutions of the equations~\eqref{eq:RedSys12DecoupledModIthEq}
using its Lie reductions (even with respect to induced symmetries)
or the other methods like Shapovalov--Shirokov noncommutative integration~\cite{shap1995a},
especially, its simplest variant, which can be described in terms of complexification.

Consider now the $i$th equation in~\eqref{eq:RedSys12DecoupledModIthEq}.
Using the results of Section~\ref{sec:Reduction12GroupClassification},
we obtain that any equation of the form~\eqref{eq:RedSys12DecoupledModIthEq}
is invariant with respect to the algebra~$\mathfrak a_{\chi\lambda_i}$
spanned~by the vector fields
\begin{gather*}
\p_{z_2},\quad
\tilde v^i\p_{\tilde v^i},\quad
\zeta^i(z_1,z_2)\p_{\tilde v^i},
\end{gather*}
where the function~$\zeta^i$ runs through the solution set of~\eqref{eq:RedSys12DecoupledModIthEq} with the fixed $\lambda_i$, $\beta$ and~$\chi$.
The inequivalent extensions of the algebra~$\mathfrak a_{\chi\lambda_i}$ are exhausted by
$\langle\p_{z_1}\rangle$ for $\chi_1=0$ and $\langle z_1\p_{z_1}+z_2\p_{z_2}\rangle$ for $\chi=\pm\sqrt{\alpha z_1^2-1}$,
where $\alpha$ is a positive constant.

\paragraph{General case.}
For general $\chi$,
a complete list of inequivalent one-dimensional subalgebras of~$\mathfrak a_{\chi\lambda_i}$
that are appropriate for Lie reduction consists of the subalgebras of the form
$\langle\p_{z_2}+b_i\tilde v^i\p_{\tilde v^i}\rangle$,
where $b_i$ is an arbitrary real constant.
An ansatz corresponding to such a subalgebra is given by
$\tilde v^i={\rm e}^{b_iz_2}\varphi^i(\omega)$ with $\omega:=z_1$,
and it reduces the equation~\eqref{eq:RedSys12DecoupledModIthEq} with the fixed~$i$
to the homogeneous linear first-order ordinary differential equations
\begin{gather*}
b_i^2\big((1+\chi^2)\varphi^i)_\omega+\lambda_i\varphi^i_\omega+\beta b_i\varphi^i=0,
\end{gather*}
which can be easily integrated.
\noprint{
\[
\varphi^i(\omega):=A_i\exp\int\frac{-(2b_i^2\chi\chi_\omega+\beta b_i)}{b_i^2(1+\chi^2)+\lambda_i}{\rm d}\omega
\quad\mbox{if}\quad b_i\ne0\ \mbox{or}\ i\ne m,
\]
and $\varphi^m$ is an arbitrary function of~$\omega$ if $b_m=0$.
}
Pulling its general solution back with respect to the above ansatz,
we obtain an explicit solution of the equation~\eqref{eq:RedSys12DecoupledModIthEq},
\begin{gather*}
\solutionRedEq
\tilde v^i(z_1,z_2)=
\frac{A_i}{b_i^2(1+\chi^2)+\lambda_i}
\exp\left(b_iz_2-\int\frac{\beta b_i{\rm d}z_1}{b_i^2(1+\chi^2)+\lambda_i}\right)
\quad\mbox{if}\quad b_i\ne0\ \mbox{or}\ i\ne m,\\
\tilde v^m(z_1,z_2)=\varphi^m(z_1)\quad\mbox{if}\quad b_m=0,
\end{gather*}
where $A_i$ and~$b_i$ are arbitrary constants,
$\varphi^m$ is an arbitrary function of~$z_1$, and this function can be neglected modulo the $G$-equivalence.
Applying the complexification trick, we can assume that the constants~$A_i$ and~$b_i$ are complex.
We choose the constant $b_i$ to be imaginary, $b_i=\gamma_i{\rm i}$ with $\gamma_i\ne0$,
and obtain wave-like solutions of the form
\[
\solutionRedEq
\tilde v^i(z_1,z_2)=\frac{\hat A_i}{\gamma_i^2(1+\chi^2)-\lambda_i}
\cos\left(\gamma_iz_2+\int\frac{\beta\gamma_i{\rm d}z_1}{\gamma_i^2(1+\chi^2)-\lambda_i}+\alpha_i\right),
\]
where $\hat A_i$, $\alpha_i$ and~$\gamma_i$ are arbitrary real constants with $\gamma_i\ne0$.
Since $\lambda_i\leqslant0$, all such solutions are bounded.

\paragraph{Scale-invariant case.}
This case take place when $\chi=\pm\sqrt{\alpha z_1^2-1}$ with $\alpha>0$,
and thus we have one more family of inequivalent one-dimensional subalgebras of the algebra~$\mathfrak a_{\chi\lambda_i}$,
\[
\langle
z_1\p_{z_1}+z_2\p_{z_2}+b_i\tilde v^i\p_{\tilde v^i}
\rangle,
\]
again parameterized by an arbitrary real constant~$b_i$.
For any subalgebra from this family, an ansatz constructed for~$\tilde v$ using it
can be chosen in the form $\tilde v^i=|z_1|^{b_i}\varphi^i(\omega)$ with $\omega:=z_2/z_1$,
which reduces the equation~\eqref{eq:RedSys12DecoupledModIthEq} to
\begin{gather}\label{eq:RedSystSubalgS12b}
\alpha\omega\varphi^i_{\omega\omega\omega}-\alpha b_i\varphi^i_{\omega\omega}+
(\lambda_i\omega-\beta)\varphi^i_\omega-\lambda_i b_i\varphi^i=0.
\end{gather}
Since $\lambda_m=0$, the $m$th equation of the system~\eqref{eq:RedSystSubalgS12b}
is $\big(\omega\varphi^m_{\omega\omega}-(b_m+1)\varphi^m_\omega-\beta\alpha^{-1}\varphi^m\big)_\omega=0$
and can be integrated in terms of the Bessel functions $J_\nu(z)$ and $Y_\nu(z)$ for $\omega<0$
and in terms of the modified Bessel functions $I_\nu(z)$ and $K_\nu(z)$ for $\omega>0$,
\begin{gather*}
\varphi^m=|\omega|^{\frac12b_m+1}\big(
 A_{1m}J_{b_m+2}(\sqrt{\beta'|\omega|})
+A_{2m}Y_{b_m+2}(\sqrt{\beta'|\omega|})\big)+A_{3m} \quad\mbox{if}\quad \omega<0,
\\
\varphi^m=\omega^{\frac12b_m+1}\big(
 A_{1m}I_{b_m+2}(\sqrt{\beta'\omega})
+A_{2m}K_{b_m+2}(\sqrt{\beta'\omega})\big)+A_{3m} \quad\mbox{if}\quad \omega>0,
\end{gather*}
where $A_{1m}$, $A_{2m}$ and $A_{3m}$ are arbitrary constants, and $\beta':=4\beta\alpha^{-1}$.
When $b_i=0$, $i=1,\dots,m-1$, the general solution of the $i$th equation
of the form~\eqref{eq:RedSystSubalgS12b} can be written as
\begin{gather*}
\varphi^i=A_{1i}{\rm Mi}_{\mu_i,\frac12}(\nu_i\omega)
+A_{2i}{\rm Wi}_{\mu_i,\frac12}(\nu_i\omega){\rm d}\omega+A_{3i},
\end{gather*}
where $\mu_i:=\frac\beta2(-\alpha\lambda_i)^{-1/2}$ and $\nu_i:=2(-\alpha^{-1}\lambda_i)^{1/2}$,
and by ${\rm Mi}_{\mu,\kappa}(z)$ and ${\rm Wi}_{\mu,\kappa}(z)$ we denote antiderivatives of Whittaker functions,
\[
{\rm Mi}_{\mu,\kappa}(z):=\int M_{\mu,\kappa}(z){\rm d}z,\quad
{\rm Wi}_{\mu,\kappa}(z):=\int W_{\mu,\kappa}(z){\rm d}z.
\]
Since $\alpha\lambda_i<0$ for all $i=1,\dots,m-1$, the constants $\mu_i$ and $\nu_i$ are real,
which implies that the derived solutions are real-valued.
Pulling the obtained solution back with respect to the used ansatz,
we obtain the following solution of the system~\eqref{eq:RedSys12DecoupledModIthEq}:
\begin{gather*}
\solutionRedEq\chi=\pm\sqrt{\alpha z_1^2-1}:
\\
\tilde v^i=A_{1i}{\rm Mi}_{\mu_i,\frac12}(\nu_i\omega)+A_{2i}{\rm Wi}_{\mu_i,\frac12}(\nu_i\omega),\quad i=1,\dots,m-1,
\\
\tilde v^m=|z_1|^{b_m}|\omega|^{\frac12b_m+1}\big(
 A_{1m}J_{b_m+2}(\sqrt{\beta'|\omega|})
+A_{2m}Y_{b_m+2}(\sqrt{\beta'|\omega|})\big)+A_{3m}|z_1|^{b_m} \quad\mbox{if}\quad \omega<0,
\\
\tilde v^m=|z_1|^{b_m}\omega^{\frac12b_m+1}\big(
 A_{1m}I_{b_m+2}(\sqrt{\beta'\omega})
+A_{2m}K_{b_m+2}(\sqrt{\beta'\omega})\big)+A_{3m}|z_1|^{b_m} \quad\mbox{if}\quad \omega>0,
\end{gather*}
where $\omega:=z_2/z_1$,
$A_{1i}$, $A_{2i}$ and $A_{3m}$ are arbitrary real constants, and $\beta':=4\beta\alpha^{-1}$.
Moreover, up to the $G$-equivalence, the constant $A_{3m}$ can be gauged to zero.

\paragraph{Shift-invariant case.}
When $\chi_1=0$, each equation~\eqref{eq:RedSys12DecoupledModIthEq} with $i<m$
is a linearized Benjamin--Bona--Mahony (BBM) equation,
\begin{gather}\label{eq:LinBBMSyst}
\mathcal B_i\colon\quad
(1+\chi^2)\tilde v^i_{122}+\beta\tilde v^i_2+\mathsf\lambda_i\tilde v^i_1=0.
\end{gather}
The dispersion relation for this equation is \[\omega^i(k)=-\frac{\beta k}{(1+\chi^2)k^2-\lambda_i}.\]
Since $\lambda_i<0$ if $i<m$, the phase velocity $\omega^i(k)/k$ is bounded on the entire $\mathbb R_k$.
Therefore, given initial condition $\tilde v^i(0,z_2):=g^i(z_2)$,
the general solution of~\eqref{eq:LinBBMSyst} can be written using Fourier transform as
\[
\solutionRedEq\chi_t=0\colon\quad
\tilde v^i(z_1,z_2)=\frac1{\sqrt{2\pi}}\int_{-\infty}^\infty \hat g^i(k){\rm e}^{{\rm i}(kz_2-\omega^i(k) z_1)}{\rm d}k,
\]
where $\hat g^i(k)$ is the Fourier transform of the initial value $g^i(z_2)$.
If the initial value is sufficiently regular, then
the corresponding solution is bounded in the entire space $\mathbb R^2_{z_1,z_2}$.
More specifically, if $g^i(z_2)$ belongs to the Sobolev space $H^s(\mathbb R)$ for $s>1/2$,
the solution $\tilde v^i(z_1,\cdot)$ remains in that space for all~$z_1$.
The (one-dimensional) Sobolev embedding theorem, which states that $H^s(\mathbb R)\subset L^\infty(\mathbb R)$ for $s>1/2$,
implies that the solution $\tilde v$ is bounded such that
$|\tilde v(z_1,z_2)|\leqslant C\|\tilde v(z_1,\cdot)\|_{H^s(\mathbb R)}$ for some constant $C$
and for all $z_1$.

The maximal Lie invariance algebra $\mathfrak g_i$ of the linearized BBM equation $\mathcal B_i$
is spanned by the vector fields
\[
\p_{z_1},\quad
\p_{z_2},\quad
\tilde v^i\p_{\tilde v^i},\quad
\zeta^i(z_1,z_2)\p_{\tilde v^i},
\]
where the parameter-function $\zeta$ runs through the solution set of this equation.
The algebra~$\mathfrak g_i$ splits over
the infinite-dimensional ideal $\mathfrak g_i^{\rm lin}:=\{\zeta^i(z_1,z_2)\p_{\tilde v^i}\}$,
which is associated with the linear superposition of solutions of~\eqref{eq:LinBBMSyst}.
More specifically, the algebra $\mathfrak g_i$ is the semidirect sum
of the ideal~$\mathfrak g_i^{\rm lin}$ with the complementing subalgebra $\mathfrak g_i^{\rm ess}$,
which is three-dimensional, abelian
and spanned by the vector fields~$\p_{z_1}$, $\p_{z_2}$ and $\tilde v^i\p_{\tilde v^i}$.

The structure of the algebra $\mathfrak g_i^{\rm ess}$ suggests that
most Lie and generalized reductions of~$\mathcal B_i$ result in solutions
that have at least polynomial growth as $z_1$ or $z_2$ tends to infinity
and thus their physical relevance is under question.
At the same time, one can easily construct more interesting solutions of~$\mathcal B_i$.
Using a subalgebra of the form $\langle \p_{z_1}+\alpha_i\p_{z_2}+\gamma_i\tilde v^i\p_{\tilde v^i}\rangle$,
where $\alpha_i$ and $\gamma_i$ are arbitrary real constants with $\alpha_i\geqslant0$,
we construct modulated wave ansatz $\tilde v^i={\rm e}^{\gamma_i z_1}\varphi^i(\omega)$,
where $\omega=z_2-\alpha_i z_1$, and reduce~\eqref{eq:LinBBMSyst}
to the constant-coefficient homogeneous linear third-order ordinary differential equation
\begin{gather}\label{eq:ReductionOfLinBBM}
\alpha_i(1+\chi^2)\varphi^i_{\omega\omega\omega}-\gamma_i(1+\chi^2)\varphi^i_{\omega\omega}
+(\alpha_i\lambda_i-\beta)\varphi^i_\omega-\gamma_i\lambda_i\varphi^i=0.
\end{gather}
Since the parameter~$\beta$ is positive,
the real parts of each root of the characteristic polynomial of the equation~\eqref{eq:ReductionOfLinBBM} are nonzero.
Hence, each solution of~\eqref{eq:ReductionOfLinBBM} is unbounded on the entire space $\mathbb R_\omega$.
Nevertheless, we can apply the complexification trick and assume that $\gamma_i$ is purely imaginary,
that is, $\gamma_i={\rm i}\delta_i$ for some real parameter $\delta_i$.
Substituting the ansatz $\varphi^i=\exp({\rm i}r\omega)$ into the equation~\eqref{eq:ReductionOfLinBBM}
gives that $r$ is a root of the cubic polynomial $a_3r^3+a_2r^2+a_1r+a_0$ with
$a_0:=\delta_i\lambda_i$,
$a_1:=\beta-\alpha_i\lambda_i$,
$a_2:=-\delta_i(1+\chi^2)$ and
$a_3:=\alpha_i(1+\chi^2)$.
Real roots of this polynomial definitely exist
and each of them leads to a solution family of~$\mathcal B_i$.
In total, we have the following collection of simple-wave periodic solutions
of the equation~$\mathcal B_i$, which can be, of course, linearly superposed:
\begin{gather}\label{eq:SolutionLinBBM}
\solutionRedEq
\tilde v^i(z_1,z_2)=C\cos\big(\delta_iz_1+r_i(z_1-\alpha_iz_2)+\theta\big),
\end{gather}
where $\delta_i$, $\alpha_i$, $C$ and $\theta$ are arbitrary real constants,
and $r_i$ is a real root of the above polynomial,
whose coefficients are parameterized by~$\delta_i$ and~$\alpha_i$.
In particular, this polynomial has three distinct real roots if (and only if)
its discriminant $\Delta=18a_0a_1a_2a_3-4a_0a_2^3+a_1^2a_2^2-4a_1^3a_3-27a_0^2a_3^2$
is positive.

We illustrate the solution~\eqref{eq:SolutionLinBBM} using the numerical data from Section~\ref{sec:NumericalExample}.
Assuming that $\chi$ is constant, substituting the solution~\eqref{eq:SolutionLinBBM} into~\eqref{eq:GeneralSolutionDecoupledRedS12}
leads to
\begin{gather}\label{eq:SolutionLinBBMForPlotting}
\psi=\sum_{i=1}^{3}\psi_{i0}\cos\big(\delta_iz_1+r_i(z_1-\alpha_iz_2)+\theta\big)e_i,
\end{gather}
where $\psi_{i0}$ are scaling constants.
The parameters for this simulation are defined as follows:
\begin{gather*}
\chi=0.7,
\\
\alpha_1=1.2\cdot10^{-4},\quad
\delta_1=4\cdot10^{-6},\quad
r_1=3.34\cdot10^{-2},\quad
\theta_1=\pi/3,\quad
\psi_{10}=15000,
\\
\alpha_2=1.5\cdot10^{-4},\quad
\delta_2=3\cdot10^{-6},\quad
r_2=2.00\cdot10^{-2},\quad
\theta_2=\pi/4,\quad
\psi_{20}=12000,
\\
\alpha_3=1.4\cdot10^{-4},\quad
\delta_3=3.5\cdot10^{-6},\quad
r_1=2.50\cdot10^{-2},\quad
\theta_3=\pi/6,\quad
\psi_{30}=10000.
\end{gather*}
The resulting solution is plotted on Figure~\ref{fig:Red112TravelingRossbyWaves} at time snapshots $t=0$~s, $t=3600$~s, and $t=7200$~s.
Physically, these profiles represent the evolution of traveling baroclinic Rossby waves.
\begin{figure}[!ht]
\centering
\includegraphics[width=\linewidth]{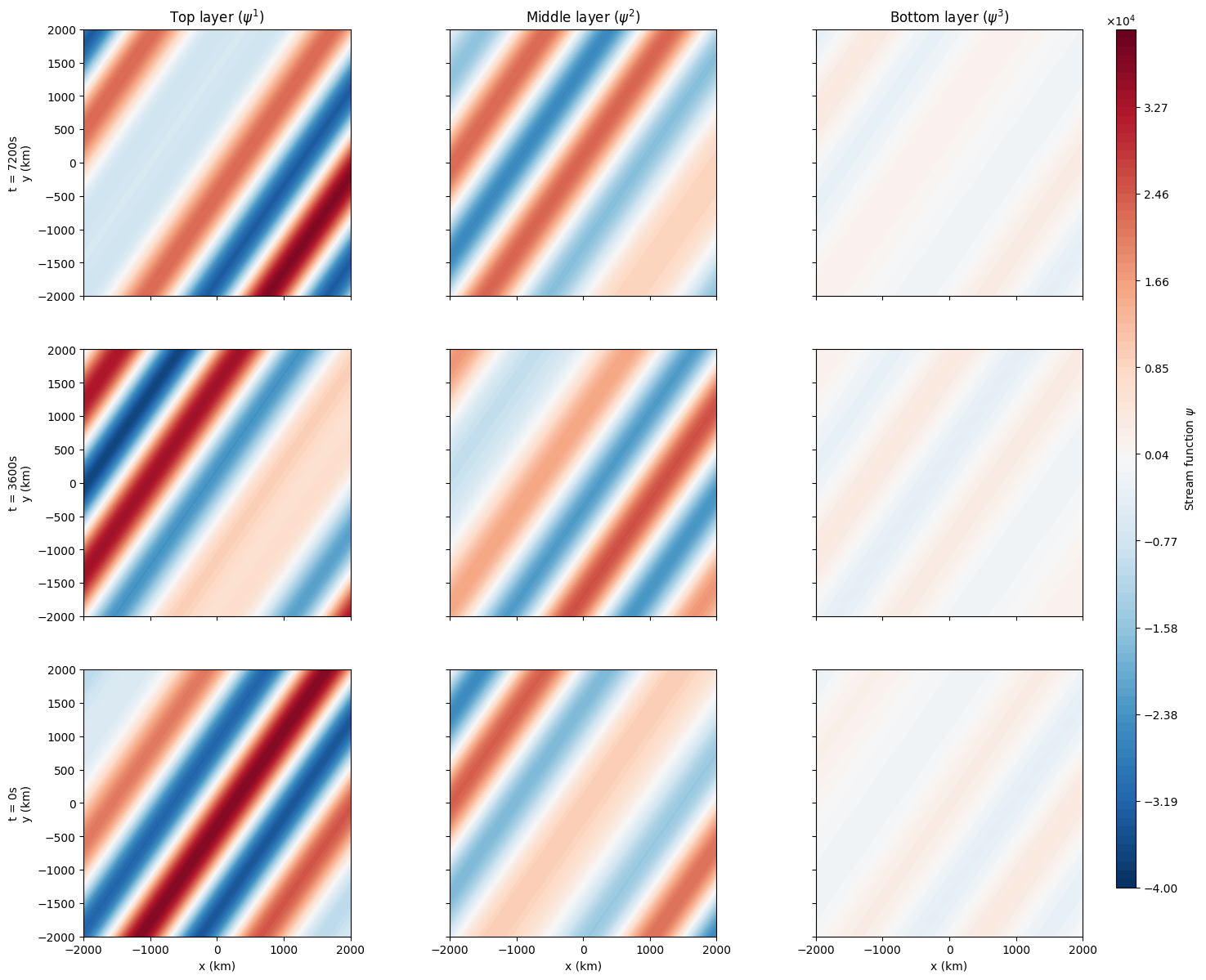}
\caption{Traveling Rossby wave~\eqref{eq:SolutionLinBBMForPlotting} in three layer model
at the time snapshots $t=0$~s, $t=3600$~s and $t=7200$~s from the last row to the first row.}
\label{fig:Red112TravelingRossbyWaves}
\end{figure}

\noprint{
{\it Illustration}
$z_1=t,\quad z_2=x-\chi y$

\begin{gather*}
\solutionRedEq
\tilde v^i(z_1,z_2)=C\cos\big(\delta_it+r(t-\alpha_ix+\alpha_i\chi y)+\theta\big),
\\
\solution
\psi=\sum_{i=1}^{m-1}\tilde v^i(z_1,z_2)e_i+\hat v^m(\hat z_1,\hat z_2)\bar 1,
\end{gather*}

Values of parameters:
\begin{gather*}
\chi=0.7\quad\mbox{tilting parameter}
\\
\alpha_1=1.2\cdot10^{-4},\quad
\delta_1=4\cdot10^{-6},\quad
\Delta=-4.34\cdot10^{-30},\\
\qquad
r_1=3.34\cdot10^{-2},\quad
r_2=(1.34+29.35{\rm i})\cdot10^{-6},\quad
r_3=(1.34-29.35{\rm i})\cdot10^{-6}
\\
\alpha_2=1.5\cdot10^{-4},\quad
\delta_2=3\cdot10^{-6},\quad
\Delta=-3.25\cdot10^{-31},\\
\qquad
r_1=2.00\cdot10^{-2},\quad
r_2=(1.79+14.27{\rm i})\cdot10^{-6},\quad
r_3=(1.79-14.27{\rm i})\cdot10^{-6}
\\
\alpha_3=1.4\cdot10^{-4},\quad
\delta_3=3.5\cdot10^{-6},\quad
\Delta=6.96\cdot10^{-33},\\
\qquad
r_1=2.50\cdot10^{-2},\quad
r_2=3.06\cdot10^{-6},\quad
r_3=0
\end{gather*}
\begin{gather}
\solution
\psi=\sum_{i=1}^{m-1}\tilde v^i(z_1,z_2)e_i+\left(\hat v^m(\hat z_1,\hat z_2)
-\frac{\chi_t}2 y^2+\frac{\chi_{tt}}\beta z_2+\frac\varsigma\beta\right)\bar 1,
\end{gather}
}

\subsubsection{Generically coupled case}\label{sec:Reduction12CoupledCase}

Recall that the completely decoupled case, which corresponds to the zero parameter tuple~$c$,
represents the most singular case in the group classification of the class~$\mathcal R$,
see Lemma~\ref{lem:Reduction12Hmax}.
The opposite, generic case, when the components of~$c$ are pairwise distinct,
is described in Lemma~\ref{lem:Reduction12Hmin}, which states
that then $\mathfrak H=\langle\mathsf E\rangle$
and thus the algebra $\mathfrak g^{\rm gen}_\varrho$ coincides with the span
\[
\langle\p_{z_2},\ v^j\p_{v^j},\ \zeta^j(p,q)\p_{v^j}\rangle,
\]
where the tuple $(\zeta^1,\dots,\zeta^m)$ runs through the solution set
of the corresponding system~$\mathcal R_\varrho$ of the form~\eqref{eq:RedSystSubalgS12MatrixHomog}.
According to Remark~\ref{rem:MinMaxDimOfH},
for further consideration with an arbitrary $c\notin\langle\bar1\rangle$,
we use, instead of the entire spaces~$\mathfrak H$ depending on~$(c,\mathsf F)$,
the common subspace~$\langle\mathsf E\rangle$ of all such spaces.

\paragraph{General case.}
A complete list of inequivalent one-dimensional subalgebras of~$\mathfrak g^{\rm max}_{\varrho}$
that are appropriate for Lie reduction of the system~\eqref{eq:RedSystSubalgS12MatrixHomog}
contains the subalgebras $\langle\p_{z_2}+\mu v^j\p_{v^j}\rangle$, $\mu\in\mathbb R$.
An ansatz associated with such a subalgebra reads $v={\rm e}^{\mu z_2}\varphi(\omega)$ with $\omega=z_1$.
Substituting this ansatz into the system~\eqref{eq:RedSystSubalgS12MatrixHomog},
expanding~$\mathsf B$ according its definition,
$\mathsf B:=\mathop{\rm diag}\mathsf Fc-\mathsf C\mathsf F+\beta\mathsf E$,
and rearranging summands, we derive the reduced system
\begin{gather}\label{eq:Reduction12CoupledCase}
(\p_\omega-\mu\mathsf C)\tilde{\mathsf F}\varphi+\mu\tilde{\mathsf B}\varphi=0,
\end{gather}
where
$\mathsf C:=\mathop{\rm diag} c$,
$\tilde{\mathsf F}:=\mathsf F+\mu^2(1+\chi^2)\mathsf E$ and
$\tilde{\mathsf B}:=\mathop{\rm diag}\mathsf Fc+\beta\mathsf E$.
The modified matrix~$\tilde{\mathsf F}$ commutes with~$\mathsf F$, and thus it is diagonalizable,
its eigenvectors and corresponding eigenvalues are $e_i$ and $\lambda_i+\mu^2(1+\chi^2)$, respectively,
the transition matrix to this eigenbasis coincides with $\mathsf P$,
see Section~\ref{sec:PropertiesOfMatrixF} for the notation.
Moreover, if $\chi_\omega\ne0$, the matrix $\tilde{\mathsf F}$ is invertible as a matrix-valued function.
Substituting $\tilde\varphi=\tilde{\mathsf F}\varphi$ into~\eqref{eq:Reduction12CoupledCase}
we derive a variable-coefficient linear homogeneous system of first-order ordinary differential equations
in the canonical form,
\begin{gather}\label{eq:Reduction12CoupledCaseGeneralFinvert}
\tilde\varphi_\omega-\mu(\mathsf C-\tilde{\mathsf B}\tilde{\mathsf F}^{-1})\tilde\varphi=0.
\end{gather}
The corresponding solutions of the system~\eqref{eq:mLaysMod} can be represented in the form
\[
\solution
\psi={\rm e}^{\mu(x-\chi y)}\tilde{\mathsf F}^{-1}\tilde\varphi
-\frac12\chi_ty^2\bar1+yc+\frac{\chi_{tt}}{\beta}(x-\chi y)\bar 1-\frac{\chi_t}\beta c,
\]
where $\tilde\varphi=\tilde\varphi(\omega)$ with $\omega=t$ is an arbitrary solution
of~\eqref{eq:Reduction12CoupledCaseGeneralFinvert},
$c$ is an arbitrary constant tuple that does not belong to $\langle\bar1\rangle$,
$\mu$ is an arbitrary constant,
$\chi$ is an arbitrary smooth function of~$t$, and
$\tilde{\mathsf F}:=\mathsf F+\mu^2(1+\chi^2)\mathsf E$.
Complexifying these solutions, setting the parameter~$\mu$ to be imaginary
and splitting the result into the real and the imaginary parts,
we construct solutions that are bounded and periodic with respect to $x-\chi y$.

\paragraph{Scale-invariant case.}
When $\chi=\pm\sqrt{\alpha z_1^2-1}$,
a complete list of inequivalent subalgebras of~$\mathfrak g^{\rm max}_\varrho$
that are appropriate for Lie reduction of the system~\eqref{eq:RedSystSubalgS12MatrixHomog}
includes one more subalgebra family $\langle z_1\p_{z_1}+z_2\p_{z_2}+\mu v^j\p_{v^j}\rangle$, $\mu\in\mathbb R$.
For a fixed~$\mu$, an associated ansatz is $v=|z_1|^\mu\varphi(\omega)$ with $\omega:=z_2/z_1$,
which leads to the reduced system
\begin{gather}\label{eq:Reduction12CoupledCaseScaleInv}
\alpha(\omega^3\mathsf E+\mathsf C)\varphi_{\omega\omega\omega}
-\alpha(\mu-2)\omega^2\varphi_{\omega\omega}
+(\omega\mathsf F-\mathsf B)\varphi_\omega
-\mu\mathsf F\varphi=0.
\end{gather}
The integration of this system does not look feasible
since even the analogous system~\eqref{eq:RedSystSubalgS12b} in the decoupled case
was solved in terms of Bessel functions and antiderivatives of Whittaker functions.
For the sake completeness, we write a representation
for the corresponding solutions of the system~\eqref{eq:mLaysMod}:
\[
\solution
\psi=|t|^\mu\varphi-\frac{\varepsilon\alpha t}{\sqrt{\alpha t^2-1}}(y^2\bar1+2\beta^{-1}c)+yc
-\frac{\varepsilon\alpha t\omega}{\beta(\alpha t^2-1)^{3/2}}\bar 1,
\]
where $\varepsilon=\pm1$,
$\varphi=\varphi(\omega)$ with $\omega=(x+\varepsilon y\sqrt{\alpha t^2-1})/t$ is an arbitrary solution of~\eqref{eq:Reduction12CoupledCaseScaleInv},
$c$ is an arbitrary constant $m$-tuple with $c\notin\langle\bar1\rangle$,
$\alpha$ and $\mu$ are arbitrary constants with $\alpha>0$.

\paragraph{Shift-invariant case.}
When $\chi_1=0$,
a complete list of inequivalent one-dimensional subalgebras
of the maximal Lie invariance algebra $\mathfrak g^{\rm max}_{\varrho}$
of the system~\eqref{eq:RedSystSubalgS12MatrixHomog}
that are appropriate for Lie reductions includes at least the subalgebra families,
$\langle \p_{z_1}+\nu\p_{z_2}+\mu v^j\p_{v^j}\rangle$
and $\langle\p_{z_2}+\mu v^j\p_{v^j}\rangle$,
where $\nu,\mu\in\mathbb R$.

Using a subalgebra from the first family,
we construct the ansatz $v={\rm e}^{\mu z_1}\varphi(\omega)$ with $\omega=z_2-\nu z_1$,
which reduces the system~\eqref{eq:RedSystSubalgS12MatrixHomog} to
a constant-coefficient homogeneous linear system of ordinary differential equations that are at most of order three,
\begin{gather}\label{eq:Reduction12CoupledCaseShiftInvA}
(1+\chi^2)\hat{\mathsf C}\varphi_{\omega\omega\omega}
-\mu(1+\chi^2)\varphi_{\omega\omega}
-\hat{\mathsf B}\varphi_\omega
-\mu\mathsf F\varphi=0,
\end{gather}
where
$\hat{\mathsf C}:=\mathsf C+\nu\mathsf E$ and
$\hat{\mathsf B}:=\mathop{\rm diag}\mathsf Fc-\hat{\mathsf C}\mathsf F+\beta\mathsf E$.
The general solution of the reduced system~\eqref{eq:Reduction12CoupledCaseShiftInvA}
can be obtained via solving the eigenvalue problem for a certain $3m$ by $3m$ matrix
constructed from the matrices $(1+\chi^2)\hat{\mathsf C}$, $\mu(1+\chi^2)\mathsf E$,
$\hat{\mathsf B}$ and $\mu\mathsf F$.
While finding this solution for specific values of parameters causes no issues,
writing it down in a closed form in terms of the involved parameters remains a challenge
even for a low number of layers~$m$
since it requires further investigation of properties of the above collection of matrices.
We do not consider this problem here, but we will study an analogous inhomogeneous system
for the specific case $\mu=\nu=0$ in Section~\ref{sec:ReductionS21}.
Nevertheless, the corresponding solution of~\eqref{eq:mLaysMod} is given~by
\begin{gather}\label{eq:Reduction12CoupledCaseShiftInvASol}
\solution
\psi={\rm e}^{\mu t}\varphi+yc,
\end{gather}
where $\varphi=\varphi(\omega)$ with $\omega=x-\chi y-\nu t$ is an arbitrary solution of~\eqref{eq:Reduction12CoupledCaseShiftInvA},
$c$ is an arbitrary constant $m$-tuple with $c\notin\langle\bar1\rangle$ and
$\mu$, $\nu$ and~$\chi$ are arbitrary constants.
We can complexify this solution, assuming that~$\mu$ is an imaginary number.
If in this framework the eigenvalue problem associated with~\eqref{eq:Reduction12CoupledCaseShiftInvA}
has imaginary solutions,
the system~\eqref{eq:Reduction12CoupledCaseShiftInvA} in its turn
has bounded periodic solutions, which results in simple-wave velocity fields.

The reduction using any subalgebra from the family $\langle\p_{z_2}+\mu v^j\p_{v^j}\rangle$
leads to the system~\eqref{eq:Reduction12CoupledCase} with constant~$\chi$,
which can be easily integrated in terms of matrix exponentials.
The precise form of the general solution of~\eqref{eq:Reduction12CoupledCaseGeneralFinvert}
depends on whether the modified matrix $\tilde{\mathsf F}=\mathsf F+\mu^2(1+\chi^2)\mathsf E$
is invertible or not.

It is invertible if and only if $\lambda_i\ne-\mu^2(1+\chi^2)$ for all $i=1,\dots,m$.
Recall that $\lambda_i$ are the eigenvalues of the matrix~$\mathsf F$,
see the end of Section~\ref{sec:VectorFormOfSyst}.
In this case, the system~\eqref{eq:Reduction12CoupledCase} is equivalent to
the system~\eqref{eq:Reduction12CoupledCaseGeneralFinvert},
which is a linear homogeneous system of first-order ordinary differential equations with constant coefficients.
Hence the corresponding solution of the original system~\eqref{eq:mLaysMod} is given by
\begin{gather}\label{eq:Reduction12CoupledCaseSolutionFromZ2Invertible}
\solution
\psi={\rm e}^{\mu(x-\chi y)}\tilde{\mathsf F}^{-1}
\exp\big(\mu t(\mathsf C-\tilde{\mathsf B}\tilde{\mathsf F}^{-1})\big)A
+yc,
\end{gather}
where $\tilde{\mathsf F}:=\mathsf F+\mu^2(1+\chi^2)\mathsf E$,
$\tilde{\mathsf B}:=\mathop{\rm diag}\mathsf Fc+\beta\mathsf E$,
$\mathsf C:=\mathop{\rm diag}c$,
$c$ and $A$ are $m$-tuples of arbitrary constants with $c\notin\langle 1\rangle$,
and $\mu$ and $\chi$ are arbitrary constants.

Since the eigenvalues of $\tilde{\mathsf F}$ are pairwise different,
the matrix $\tilde{\mathsf F}$ is degenerate if and only if
there exists~$i$ such that the eigenvector $e_i$ of~$\tilde{\mathsf F}$ spans the kernel of $\tilde{\mathsf F}$,
$\ker\tilde{\mathsf F}=\langle e_i\rangle$.
This allows us to write down the corresponding solution of the original system~\eqref{eq:mLaysMod} as follows:
\begin{gather*}
\solution
\psi={\rm e}^{\mu(x-\chi y)}\tilde{\mathsf F}^+
\exp\big(\mu t(\mathsf C-\tilde{\mathsf B}\tilde{\mathsf F}^+)\big)A
+g(t)e_i+yc,
\end{gather*}
where $\tilde{\mathsf F}:=\mathsf F+\mu^2(1+\chi^2)\mathsf E$,
$\tilde{\mathsf B}:=\mathop{\rm diag}\mathsf Fc+\beta\mathsf E$,
$\mathsf C:=\mathop{\rm diag}c$,
$c$ and $A$ are tuples of arbitrary constants with $c\notin\langle\bar1\rangle$,
$\mu$ and $\chi$ are arbitrary constants,
$g(t)$ is an arbitrary smooth function of~$t$ if $\tilde{\mathsf B}e_i=0$ and
$g=0$ otherwise,
and the Moore--Penrose inverse~$\tilde{\mathsf F}^+$ of $\tilde{\mathsf F}$
is defined in the same way as that for~$\mathsf F$ in~\eqref{eq:MNInverse}.

Note that in the derived solutions, we can complexify the parameters $\mu$ and $A$ to obtain wave-like solutions.
When $\mu={\rm i}\nu$ with $\nu\in\mathbb R_{\ne0}$,
the modified matrix $\tilde{\mathsf F}:=\mathsf F-\nu^2(1+\chi^2)\mathsf E$
is invertible, and thus the only relevant solution family for the complexification is~\eqref{eq:Reduction12CoupledCaseSolutionFromZ2Invertible}.
This yields bounded periodic solutions of~\eqref{eq:mLaysMod}
if at least some eigenvalues of the matrix $\mathsf C-\tilde{\mathsf B}\tilde{\mathsf F}^{-1}$ are real.

\subsection{Subalgebra family 1.3}\label{sec:Reduction13}

In many aspects, the consideration of Lie reductions with respect to the subalgebras  $\mathfrak s_{1.3}^{\chi c\kappa}$
is analogous to that for the subalgebras $\mathfrak s_{1.2}^{\chi c}$.
This includes the group classification of reduced systems,
partitioning into cases for two-step reductions and their naming.
In the spirit of Remark~\ref{rem:MinMaxDimOfH},
special attention is paid to the {\it completely decoupled} and {\it generically coupled} cases of the reduced systems
and the further splitting of the latter one into the \emph{general}, the \emph{scale-invariant} and the \emph{shift-invariant} cases
when relevant.

\subsubsection{Reduced system}

The one-dimensional subalgebra
$\mathfrak s_{1.3}^{\chi c\kappa}:=\langle \mathcal P^x(\chi)+c_k\mathcal J^k+\mathcal Z(\kappa)\rangle$
of~$\mathfrak g$,
where $c:=(c_1,\dots,c_m)^{\mathsf T}$ is an arbitrary element of~$\mathop{\rm im}\mathsf F$
and $\chi$ and~$\kappa$ are arbitrary smooth functions of~$t$,
satisfies the transversality condition and is thus appropriate for Lie reduction
if and only if $\chi\ne0$.
Using this subalgebra, we construct a suitable Lie ansatz,
\[
\psi^i=v^i-\frac{\chi_t(t)y-\kappa(t)-c_i}{\chi(t)}x-\frac\beta6 y^3\quad\mbox{with}\quad
z_1=t,\quad
z_2=\chi y-\int\kappa(t)\,{\rm d}t,
\]
to reduce the system~\eqref{eq:mLaysMod} to the linear system
\begin{gather*}
c_i\chi^2v^i_{222}+(\chi^2 v^i_{22})_1
\\
\quad{}
+f_{i,i-1}\left(c_iv^{i-1}_2-c_iv^i_2+v^{i-1}_1-v^i_1-c_{i-1}v^i_2+c_iv^i_2+\frac\beta{2\chi^3}(z_2+\smallint\kappa)^2(c_{i-1}-c_i)\right)
\\
\quad{}
-f_{i,i+1}\left(c_iv^i_2-c_iv^{i+1}_2+v^i_1-v^{i+1}_1-c_iv^i_2+c_{i+1}v^i_2+\frac\beta{2\chi^3}(z_2+\smallint\kappa)^2(c_i-c_{i+1})\right)=0,
\end{gather*}
where $\smallint\kappa:=\int\kappa(t)\,{\rm d}t$ denotes an antiderivative of~$\kappa$ with respect to $z_1:=t$.
In the matrix notation, the ansatz and the reduced system take the form
\begin{gather}
\nonumber
\psi=v-\left(\frac{\chi_t(t)y-\kappa(t)}{\chi(t)}x-\frac\beta6 y^3\right)\bar 1+\frac x{\chi(t)}c
\quad\mbox{with}\quad z_1=t,\quad z_2=\chi y-\int\kappa(t)\,{\rm d}t,
\\
\label{eq:RedSys13}
\chi^2\mathsf Cv_{222}+(\chi^2v_{22})_1+\mathsf Bv_2+\mathsf Fv_1
+\frac{\beta}{2\chi^3}(z_2+\smallint\kappa)^2\mathsf Fc=0,
\end{gather}
respectively.
As above, $\mathsf C:=\mathop{\rm diag}c$,
and we also denote $\mathsf B:=\mathsf C\mathsf F-\mathop{\rm diag}\mathsf Fc$.
The facts that
$(\mathop{\rm diag}\mathsf Fc)\bar 1=\mathsf Fc$,
$\mathop{\rm im}\mathsf F\perp_{\mathsf W}^{}\langle\bar 1\rangle$,
$\mathop{\rm im}\mathsf B\perp_{\mathsf W}^{}\langle\bar 1\rangle$ and
$\dim\mathop{\rm im}\mathsf F=m-1$
implies $\mathop{\rm im}\mathsf B\subseteq\mathop{\rm im}\mathsf F$.
We also have $c\in\mathop{\rm im}\mathsf F$.
This is why a particular solution of the system~\eqref{eq:RedSys13} takes the form
\noprint{
\begin{gather*}
\begin{split}
\hat v=&-\beta\left(
\frac{z_2^2}2\int\frac{{\rm d}z_1}{\chi^3}+z_2\int\frac{\smallint\kappa}{\chi^3}{\rm d}z_1 +\int\frac{(\smallint\kappa)^2}{2\chi^3}{\rm d}z_1\right)c
\\&{}
+\beta\chi^2\int\frac{{\rm d}z_1}{\chi^3}\mathsf F^+c
+\beta\left(z_2\iint\frac{{\rm d}z_1}{\chi^3}{\rm d}z_1
+\iint\frac{\smallint\kappa}{\chi^3}{\rm d}z_1{\rm d}z_1\right)\mathsf F^+\mathsf Bc
\\&{}
-\beta\iiint\frac{{\rm d}z_1}{\chi^3}{\rm d}z_1{\rm d}z_1(\mathsf F^+\mathsf B)^2c
\end{split}
\end{gather*}
}
\begin{gather*}
\hat v=-\beta\left(\tfrac12z_2^2\mathcal X_{11}+\mathcal K_1z_2 +\mathcal N\right)c
+\beta\chi^2\mathcal X_{11}\mathsf F^+c
+\beta\left(z_2\mathcal X_1+\mathcal K\right)\mathsf F^+\mathsf Bc
-\beta\mathcal X(\mathsf F^+\mathsf B)^2c,
\end{gather*}
where $\mathcal X$, $\mathcal K$ and $\mathcal N$ denote
third, the second and the first antiderivatives of
$\chi^{-3}$, $\chi^{-3}\smallint\kappa$ and $\chi^{-3}(\smallint\kappa)^2$
with respect to~$z_1$, respectively.
Recall that the Moore--Penrose inverse~$\mathsf F^+$ of~$\mathsf F$ is given by~\eqref{eq:MNInverse}.
By the transformation of the dependent variables $v-\hat v\to v$,
we reduce the system~\eqref{eq:RedSys13} to its homogeneous counterpart
\begin{gather}\label{eq:RedSys13Homogen}
\chi^2\mathsf Cv_{222}+(\chi^2v_{22})_1+\mathsf Bv_2+\mathsf Fv_1=0,
\end{gather}
Therefore, we have the following representation
for the $\mathfrak s_{1.3}^{\chi c\kappa}$-invariant solutions of the system~\eqref{eq:mLaysMod}:
\begin{gather}\label{eq:Ansatz13Mod}
\begin{split}\solution
\psi=&{}v-\left(\frac{\chi_t(t)y-\kappa(t)}{\chi(t)}x-\frac\beta6 y^3\right)\bar 1+\frac x{\chi(t)}c
-\beta\left(\tfrac12z_2^2\mathcal X_{11}+\mathcal K_1z_2 +\mathcal N\right)c
\\[.5ex]&{}
+\beta\chi^2\mathcal X_{11}\mathsf F^+c
+\beta\left(z_2\mathcal X_1+\mathcal K\right)\mathsf F^+\mathsf Bc
-\beta\mathcal X(\mathsf F^+\mathsf B)^2c,
\end{split}	
\end{gather}
where $v=v(z_1,z_2)$ is an arbitrary solution of the homogeneous linear system
of partial differential equations~\eqref{eq:RedSys13Homogen},
which can be considered as a reduced system for $\mathfrak s_{1.3}^{\chi c\kappa}$-invariant solutions
instead of~\eqref{eq:RedSys13},
$z_1:=t$, $z_2=\chi y-\int\kappa(t)\,{\rm d}t$,
$\chi$ and~$\kappa$ are arbitrary functions of~$t$,
$\mathcal X$, $\mathcal K$ and $\mathcal N$ denote
third, the second and the first antiderivatives of
$\chi^{-3}$, $\chi^{-3}\smallint\kappa$ and $\chi^{-3}(\smallint\kappa)^2$
with respect to~$t$, respectively,
$c:=(c_1,\dots,c_m)^{\mathsf T}$ is a tuple of arbitrary constants with $c\perp_{\mathsf W}^{}\bar 1$,
$\mathsf C:=\mathop{\rm diag}c$ and $\mathsf B:=\mathsf C\mathsf F-\mathop{\rm diag}\mathsf Fc$.

\subsubsection{Group classification}\label{sec:Reduction13GroupClassification}

The study of induced and hidden symmetries of the system~\eqref{eq:mLaysMod}
associated with the reduction with respect to the subalgebra~$\mathfrak s_{1.3}^{\chi c\kappa}$
is analogous to that carried out in Section~\ref{sec:Reduction12}.
We begin with finding the induced symmetries of the reduced system~\eqref{eq:RedSys13Homogen}.
Note that the subalgebra $\mathfrak g_1=\langle\mathcal P^x(\chi),\mathcal J^1,\dots,\mathcal J^m,\mathcal Z(\kappa)\rangle\subset\mathfrak g$
is an abelian Lie algebra.
The normalizer ${\rm N}_{\mathfrak g}(\mathfrak s_{1.3}^{\chi c\kappa})$ of $\mathfrak s_{1.3}^{\chi c\kappa}$
coincides with the subalgebra~$\mathfrak g_1$ for generic values of $\chi$, $c$ and $\kappa$,
providing $\chi_t\ne0$ and $\kappa_t\ne0$.
We have the only following extensions of the generic case:
\begin{enumerate}\itemsep=.5ex
\item
${\rm N}_{\mathfrak g}(\mathfrak s_{1.3}^{\chi c\kappa})=\langle\mathcal P^t\rangle+\mathfrak g_1$
if and only if $\chi_t=\alpha\chi$, $\kappa_t=\alpha\kappa$ for some $\alpha\ne0$ and $c=0$,
	
\item
${\rm N}_{\mathfrak g}(\mathfrak s_{1.3}^{\chi c\kappa})
=\langle\mathcal P^y\rangle+\mathfrak g_1$
if and only if $\chi_t=0$ and $\kappa_t\ne0$,
	
\item
${\rm N}_{\mathfrak g}(\mathfrak s_{1.3}^{\chi c\kappa})=\mathfrak g$
if and only if $\chi_t=0$ and $\kappa_t=0$.	
\end{enumerate}
The algebras of induced symmetries corresponding to the above cases are the following:
\begin{enumerate}\itemsep=.5ex\setcounter{enumi}{-1}
\item
$\langle
z_2\rho(z_1)(\p_{v^1}+\cdots+\p_{v^m}),\,
\p_{v^1},\,\dots,\,\p_{v^m},\,
\kappa(z_1)(\p_{v^1}+\cdots+\p_{v^m})
\rangle$,

\item
$\langle
\p_{z_1},\,
z_2\rho(z_1)(\p_{v^1}+\cdots+\p_{v^m}),\,
\p_{v^1},\,\dots,\,\p_{v^m},\,
\kappa(z_1)(\p_{v^1}+\cdots+\p_{v^m})
\rangle$,

\item
$\langle
\p_{z_2},\,z_2\rho(z_1)(\p_{v^1}+\cdots+\p_{v^m}),\,
\p_{v^1},\,\dots,\,\p_{v^m},\,
\kappa(z_1)(\p_{v^1}+\cdots+\p_{v^m})
\rangle$,
	
\item
$\langle
\p_{z_1},\,\p_{z_2},\,z_2\rho(z_1)(\p_{v^1}+\cdots+\p_{v^m}),\,
\p_{v^1},\,\dots,\,\p_{v^m},\,
\kappa(z_1)(\p_{v^1}+\cdots+\p_{v^m})
\rangle$.
\end{enumerate}
Here $\rho=\rho(z_1)$ and $\kappa=\kappa(z_1)$ run through the set of arbitrary smooth functions of~$z_1$.

The homogeneous reduced system~\eqref{eq:RedSys13Homogen} is of the same form as~\eqref{eq:RedSystSubalgS12MatrixHomog},
which allows us to straightforwardly apply the results of Section~\ref{sec:Reduction12GroupClassification}
for computing the maximal Lie invariance algebras of~\eqref{eq:RedSys13Homogen}
depending on values of involved parameters.
For this purpose, we regard the set of systems of the form~\eqref{eq:RedSys13Homogen}
as a class of systems of differential equations
parameterized by an arbitrary function $\chi=\chi(z_1)$ and the components of~$\theta=(\mathsf F,\beta)$ and~$c$.
The most general form of a Lie symmetry vector field of~\eqref{eq:RedSys13Homogen} is
$\tau\p_{z_1}+\xi \p_{z_2}+\eta^j\p_{v^j}$,
where the components~$\tau$, $\xi$ and $\eta^j$ are smooth functions depending on $(z_1,z_2,v)$.
The maximal Lie invariance algebras of any system from the class~\eqref{eq:RedSys13Homogen}
with the general value of the arbitrary-element tuple is infinite-dimensional and spanned by the vector fields
\begin{gather*}
\p_{z_2},\quad
(\mathsf H v)^j\p_{ v^j},\quad
\zeta^j(z_1,z_2)\p_{v^j}.
\end{gather*}
Here the tuple $(\zeta^1,\dots,\zeta^m)$
runs through the solution set of the system~\eqref{eq:RedSys13Homogen},
and $\mathsf H$ runs through a basis of the space $\mathfrak H$ of (constant) matrices
commuting with the matrices~$\mathsf F$, $\mathsf C$ and $\mathsf C\mathsf F-\mathop{\rm diag}\mathsf Fc$,
where the last matrix can be replaced by $\mathop{\rm diag}\mathsf Fc$,
\[
[\mathsf H,\mathsf F]=[\mathsf H,\mathsf C]=[\mathsf H,\mathop{\rm diag}\mathsf Fc]=0.
\]
Some general properties of the space $\mathfrak H$ have been described in Lemmas~\ref{lem:Reduction12Hmax} and~\ref{lem:Reduction12Hmin},
and the exhaustive description of $\mathfrak H$ for $m=2$, $3$ and $4$ has been given in Section~\ref{sec:Reduction12GroupClassification}.

For arbitrary fixed values of the arbitrary elements~$\theta=(\mathsf F,\beta)$ and~$c$,
the corresponding general Lie invariance algebra can be further extended
for specific values of the parameter function~$\chi$.
There are two inequivalent cases of such extensions,
\begin{align*}
\chi_1=0\colon&\quad
\langle\p_{z_2},
(\mathsf Hv)^j\p_{v^j},
\zeta^j(z_1,z_2)\p_{v^j}\rangle
+\langle\p_{z_1}\rangle,
\\[.5ex]
\chi=\alpha z_1\colon&\quad
\langle\p_{z_2},
(\mathsf Hv)^j\p_{v^j},
\zeta^j(z_1,z_2)\p_{v^j}\rangle
+\langle z_1\p_{z_1}+z_2\p_{z_2}\rangle,
\end{align*}
where $\alpha$ is an arbitrary nonzero constant.
The extension~$\p_{z_1}$ is induced by the Lie-symmetry vector field~$\p_t$ of the original system~$\mathcal M_\theta$,
whereas the extension~$z_1\p_{z_1}+z_1\p_{z_2}$ is a genuine hidden Lie symmetry of~$\mathcal M_\theta$.

\subsubsection{Completely decoupled case}\label{sec:Reduction13DecoupledCase}

If $c=0$, the system~\eqref{eq:RedSys13Homogen} takes the form
$(\chi^2v_{22}+\mathsf Fv)_1=0$,
which trivially integrates with respect to~$z_1$~to
\begin{gather}\label{eq:SemidecoupledReduction13}
\chi^2v_{22}+\mathsf Fv=\varphi,
\end{gather}
where $\varphi:=(\varphi^1(z_2),\dots,\varphi^m(z_2))^{\mathsf T}$ is an $m$-tuple of arbitrary smooth functions of $z_2$.
The system~\eqref{eq:SemidecoupledReduction13} is an inhomogeneous linear system
of second-order ordinary differential equations with the independent variable $z_2$,
where $z_1$ plays the role of a parameter.
It can be decoupled via diagonalizing the matrix~$\mathsf F$
and then integrated by the method of variation of parameters.
More specifically, we first make the change of the dependent variables $v:=\mathsf P\tilde v$
and denote $\tilde\varphi:=\mathsf P^{-1}\varphi$,
where $\mathsf P$ is the transition matrix to the eigenbasis of~$\mathsf F$,
see Section~\ref{sec:PropertiesOfMatrixF}.
Then the system~\eqref{eq:SemidecoupledReduction13} reduces to the decoupled system
\begin{gather}\label{eq:SemidecoupledReduction13JNF}
\chi^2\tilde v_{22}+\lambda_i\tilde v=\tilde\varphi,\quad
i=1,\dots,m.
\end{gather}
Recall that $\lambda_i$ are the eigenvalues of $\mathsf F$ with $\lambda_1<\dots<\lambda_m=0$.
For $i\ne m$ and $i=m$, the general solution of the equations~\eqref{eq:SemidecoupledReduction13JNF}
is given by
\begin{gather*}
\solutionRedEq
\tilde v^i=
\frac{{\rm e}^{\nu^i z_2}}{2\chi^2\nu^i}\int{\rm e}^{-\nu^i z_2}\tilde\varphi^i(z_2){\rm d}z_2
-\frac{{\rm e}^{-\nu^i z_2}}{2\chi^2\nu^i}\int{\rm e}^{\nu^i z_2}\tilde\varphi^i(z_2){\rm d}z_2
+h^{1i}(z_1){\rm e}^{\nu^i z_2}+h^{2i}(z_1){\rm e}^{-\nu^i z_2},
\\[.5ex]
\tilde v^m=\chi^{-2}g(z_2)+h^{1m}(z_1)z_2+h^{2m}(z_1),
\end{gather*}
respectively,
where $\nu^i:=\sqrt{-\lambda_i}/\chi$, $h^{1i}$ and $h^{2i}$ are arbitrary smooth functions of $z_1$,
and $g=g(z_2)$ is a second antiderivative of~$\tilde\varphi^m$ with respect to~$z_2$.
Pulling this solution back with respect to the ansatz,
we derive the solution of the original system~\eqref{eq:mLaysMod},
\begin{gather*}
\solution
\psi=\sum_{i=1}^{m-1}\left(
\frac{{\rm e}^{\nu^i z_2}}{2\chi^2\nu^i}\int{\rm e}^{-\nu^i z_2}\tilde\varphi^i\,{\rm d}z_2
-\frac{{\rm e}^{-\nu^i z_2}}{2\chi^2\nu^i}\int{\rm e}^{\nu^i z_2}\tilde\varphi^i\,{\rm d}z_2
+h^{1i}{\rm e}^{\nu^i z_2}+h^{2i}{\rm e}^{-\nu^i z_2}
\right)e_i
\\
\qquad
+\Big(\chi^{-2}g+h^{1m}z_2+h^{2m}
-\frac{\chi_t y-\kappa}\chi x-\frac\beta6 y^3\Big)\bar 1,
\end{gather*}
where $z_2=\chi y-\smallint\kappa{\rm d}t$,
$e_i$ is an eigenvector of the matrix~$\mathsf F$ corresponding to its eigenvalue~$\lambda_i$,
$\lambda_m=0$, $e_m=\bar1$,
$\nu^i:=\sqrt{-\lambda_i}/\chi(t)$,
$\chi=\chi(t)$, $\kappa=\kappa(t)$, $h^{1i}=h^{1i}(t)$, $h^{2i}=h^{2i}(t)$, $\tilde\varphi^i=\tilde\varphi^i(z_2)$
and $g=g(z_2)$ are arbitrary smooth function of their arguments with $\chi\ne0$.
Note that we can gauge $h^{2m}(t)$ to zero modulo the $G$-equivalence.

In view of Corollary~\ref{cor:EigenvaluesAreNegaive},
which states that nonzero eigenvalues $\lambda_i$ are real negative,
and the positivity of the parameter $\beta$,
only particular simple solutions from the above family are physically relevant for the entire plane.

\subsubsection{Generically coupled case}\label{sec:Reduction13CoupledCase}

Analogously to Section~\ref{sec:Reduction12CoupledCase},
we study the case of generic coupling of the reduced system~\eqref{eq:RedSys13Homogen}
with values the subalgebra parameter $c$ from $\im\mathsf F\setminus\{0\}$.
The consideration is divided into three subcases
according to the presence of Lie-symmetry vector fields with nonzero $z_1$-components
and their kind, that is, the general case (with no such fields) and the scale- and the shift-invariant cases.

\paragraph{General case.}
A complete list of inequivalent one-dimensional subalgebras of the
maximal Lie invariance algebra of the system~\eqref{eq:RedSys13Homogen}
that are appropriate for Lie reductions necessarily includes the subalgebra family
$\langle\p_{z_2}+\mu v^j\p_{v^j}\rangle$,
where $\mu\in\mathbb R$.
A suitable ansatz corresponding to a subalgebra from this family with fixed value of~$\mu$
is $v={\rm e}^{\mu z_2}\varphi(\omega)$ with $\omega=z_1$.
It reduces the system~\eqref{eq:RedSys13Homogen}~to
\begin{gather}\label{eq:Reduction13CoupledCase}
(\p_\omega+\mu\mathsf C)\tilde{\mathsf F}\varphi-\mu\tilde{\mathsf B}\varphi=0,
\end{gather}
where $\tilde{\mathsf F}:=\mathsf F+\mu^2\chi^2\mathsf E$ and $\tilde{\mathsf B}:=\mathop{\rm diag}\mathsf Fc$.
The modified matrix $\tilde{\mathsf F}$ commutes with $\mathsf F$,
it is diagonalizable, its eigenvalues and eigenvectors are
$\lambda_i+\mu^2\chi^2$ and $e_i$, respectively,
where $\lambda_i$ and $e_i$ are eigenvalues and eigenvectors of the matrix $\mathsf F$,
see Section~\ref{sec:PropertiesOfMatrixF} for details.
In the general case, we can assume $\chi_\omega\ne0$, and then
the matrix $\tilde{\mathsf F}$ is invertible as a matrix-valued function.
Substituting $\varphi=\tilde{\mathsf F}^{-1}\tilde\varphi$ into~\eqref{eq:Reduction13CoupledCase},
we obtain a system of homogeneous linear first-order ordinary differential equations with variable coefficients
in the canonical form
\begin{gather}\label{eq:Reduction13CoupledCaseGeneralFinvert}
\tilde\varphi_\omega+\mu(\mathsf C-\tilde{\mathsf B}\tilde{\mathsf F}^{-1})\tilde\varphi=0.
\end{gather}
The corresponding solutions of the reduced system~\eqref{eq:RedSys13Homogen} can be represented as
\begin{gather}\label{eq:Reduction13CoupledGeneralCaseSolution}
\solutionRedEq
v={\rm e}^{\mu z_2}\tilde{\mathsf F}^{-1}\tilde\varphi,
\end{gather}
where $\tilde\varphi=\tilde\varphi(z_1)$ is an arbitrary solution
of~\eqref{eq:Reduction13CoupledCaseGeneralFinvert},
$c$ is an arbitrary constant $m$-tuple with $c\in\im\mathsf F\setminus\{0\}$,
$\mu$ is an arbitrary constant,
$\chi$ and $\kappa$ are arbitrary smooth functions of~$t$, and
$\tilde{\mathsf F}:=\mathsf F+\mu^2\chi^2\mathsf E$.
Substituting~\eqref{eq:Reduction13CoupledGeneralCaseSolution} into~\eqref{eq:Ansatz13Mod},
we obtain a solution family of~\eqref{eq:mLaysMod}.

\paragraph{Scale-invariant case.}
For $\chi=\alpha z_1$,
a complete list of inequivalent one-dimensional subalgebras
of the maximal Lie invariance algebra of~\eqref{eq:RedSys13Homogen}
in addition includes the subalgebra family $\langle z_1\p_{z_1}+z_2\p_{z_2}+\mu v^j\p_{v^j}\rangle$,
where $\mu\in\mathbb R$.
Using such a subalgebra with fixed~$\mu$, we construct an ansatz $v=|z_1|^\mu\varphi(\omega)$ with $\omega:=z_2/z_1$,
which leads to the reduced system
\begin{gather}\label{eq:RedSys13CoupledScaleInv}
\alpha^2\hat{\mathsf C}\varphi_{\omega\omega\omega}+\mu\alpha^2\varphi_{\omega\omega}+\hat{\mathsf B}\varphi_\omega+\mu\mathsf F\varphi=0,
\end{gather}
where $\hat{\mathsf C}=\mathsf C-\omega\mathsf E$ and $\hat{\mathsf B}:=\hat{\mathsf C}\mathsf F-\mathop{\rm diag}\mathsf Fc$.
The corresponding solutions of the reduced system~\eqref{eq:RedSys13Homogen} can be represented as
\begin{gather}\label{eq:Reduction13CoupledScaleCaseSolution}
\solutionRedEq
v=|z_1|^\mu\varphi,
\end{gather}
where $\varphi=\varphi(z_2/z_1)$ is an arbitrary solution
of~\eqref{eq:RedSys13CoupledScaleInv},
$c$ is an arbitrary constant $m$-tuple with $c\in\im\mathsf F\setminus\{0\}$,
$\mu$ is an arbitrary constant,
$\chi$ and $\kappa$ are arbitrary smooth functions of~$t$.
Substituting~\eqref{eq:Reduction13CoupledScaleCaseSolution} into~\eqref{eq:Ansatz13Mod},
we obtain the associated representation of solutions of~\eqref{eq:mLaysMod}.

\paragraph{Shift-invariant case.}
When $\chi$ is an arbitrary nonzero constant,
a complete list of inequivalent one-dimensional subalgebras of the maximal Lie invariance algebra
of the system~\eqref{eq:RedSys13Homogen}
that are appropriate for Lie reductions contains at least the subalgebra families
$\langle \p_{z_1}+\nu\p_{z_2}+\mu v^j\p_{v^j}\rangle$
and $\langle\p_{z_2}+\mu v^j\p_{v^j}\rangle$,
where $\nu,\mu\in\mathbb R$.

Any subalgebra from the first family $\langle \p_{z_1}+\nu\p_{z_2}+\mu v^j\p_{v^j}\rangle$
gives rise to the ansatz $v={\rm e}^{\mu z_1}\varphi(\omega)$ with $\omega=z_2-\nu z_1$,
which reduces the system~\eqref{eq:RedSys13Homogen} to
\begin{gather}\label{eq:RedSys13CoupledShiftInv}
\chi^2\hat{\mathsf C}\varphi_{\omega\omega\omega}+\chi^2\mu\varphi_{\omega\omega}
+\hat{\mathsf B}\varphi_\omega+\mu\mathsf F\varphi=0,
\end{gather}
where $\hat{\mathsf C}=\mathsf C-\nu\mathsf E$ and $\hat{\mathsf B}:=\hat{\mathsf C}\mathsf F-\mathop{\rm diag}\mathsf Fc$.
The derived system~\eqref{eq:RedSys13CoupledShiftInv} is a homogeneous linear system
of third-order ordinary differential equations.
Since finding its general solution depends on solving an eigenvalue problem for a certain $3m$ by $3m$
matrix constituted by components of $\chi^2\hat{\mathsf C}$, $\hat{\mathsf B}$, $\chi^2\mu\mathsf E$ and $\mu\mathsf F$,
we will not further consider this system.
However, the corresponding solution of~\eqref{eq:RedSys13Homogen} is given by
\begin{gather}\label{eq:Reduction13CoupledCaseShiftInvASol}
\solutionRedEq
v={\rm e}^{\mu z_1}\varphi,
\end{gather}
where $\varphi=\varphi(z_2-\nu z_1)$ is an arbitrary solution of~\eqref{eq:RedSys13CoupledShiftInv},
$c$ is an arbitrary constant $m$-tuple with $c\in\im\mathsf F\setminus\{0\}$,
$\mu$, $\chi$ and $\nu$ are arbitrary constants with $\chi\ne0$.

Reduction using the subalgebra family $\langle\p_{z_2}+\mu v^j\p_{v^j}\rangle$
leads to~\eqref{eq:Reduction13CoupledCase},
which is now a homogeneous linear system of first-order ordinary differential equations with constant coefficients.
Thus, its general solution can be presented in terms of the matrix exponential.
Its explicit representation depends on whether the modified matrix $\tilde{\mathsf F}=\mathsf F+\mu^2\chi^2\mathsf E$
is invertible or not.

The matrix $\tilde{\mathsf F}$ is invertible if and only if $\lambda_i\ne-\mu^2\chi^2$ for all $i=1,\dots,m$.
In this case, the solution~\eqref{eq:Reduction13CoupledGeneralCaseSolution} takes the form
\begin{gather}\label{eq:Reduction13CoupledCaseShiftInvBSol}
\solutionRedEq
v={\rm e}^{\mu z_1}\tilde{\mathsf F}^{-1}\exp\big(-\mu(z_1-\nu z_2)(\mathsf C-\tilde{\mathsf B}\tilde{\mathsf F}^{-1})\big)A,
\end{gather}
where $\tilde{\mathsf F}:=\mathsf F+\mu^2\chi^2\mathsf E$,
$\tilde{\mathsf B}:=\mathop{\rm diag}\mathsf Fc$,
$\mathsf C:=\mathop{\rm diag}c$,
$c$ and $A$ are tuples of arbitrary constants with $c\in\im\mathsf F\setminus\{0\}$,
$\mu$ and $\chi$ are arbitrary constants.

Since eigenvalues of $\tilde{\mathsf F}$ are pairwise different,
the matrix $\tilde{\mathsf F}$ is degenerate if and only if
there exists~$i$ such that the eigenvector $e_i$ of~$\tilde{\mathsf F}$ spans the kernel of $\tilde{\mathsf F}$,
$\ker\tilde{\mathsf F}=\langle e_i\rangle$.
This allows us to write down the corresponding solutions of the system~\eqref{eq:RedSys13Homogen} as follows:
\begin{gather}\label{eq:Reduction13CoupledCaseShiftInvCSol}
\solutionRedEq
v={\rm e}^{\mu z_1}\tilde{\mathsf F}^+
\exp\big(\mu(z_2-\nu z_1)(\mathsf C-\tilde{\mathsf B}\tilde{\mathsf F}^+)\big)A+g(t)e_i,
\end{gather}
where $\tilde{\mathsf F}:=\mathsf F+\mu^2\chi^2\mathsf E$,
$\tilde{\mathsf B}:=\mathop{\rm diag}\mathsf Fc$,
$\mathsf C:=\mathop{\rm diag}c$,
$c$ and $A$ are tuples of arbitrary constants with $c\in\im\mathsf F\setminus\{0\}$,
$\mu$ and $\chi$ are arbitrary constants,
$g(t)$ is an arbitrary smooth function of~$t$ if $\tilde{\mathsf B}e_i=0$ and
$g=0$ otherwise,
and the Moore--Penrose inverse~$\tilde{\mathsf F}^+$ of $\tilde{\mathsf F}$
is defined in the same way as that for~$\mathsf F$ in~\eqref{eq:MNInverse}.
Substituting the solutions~\eqref{eq:Reduction13CoupledCaseShiftInvASol}, \eqref{eq:Reduction13CoupledCaseShiftInvBSol}
and \eqref{eq:Reduction13CoupledCaseShiftInvCSol} into~\eqref{eq:Ansatz13Mod},
we obtain the associated representation of solutions of the original system~\eqref{eq:mLaysMod}.

\section{Codimension-two Lie reductions}\label{sec:CodimTwoLieReds}

Among the subalgebras listed in Theorem~\ref{thm:2D-subalgebras},
only the first, the second, the fourth and the sixth families
are appropriate for the Lie reduction procedure.
The corresponding reductions remain important to consider
despite the fact that they can be interpreted as two-step Lie reductions of the system~\eqref{eq:mLaysMod}
with involving induced Lie symmetries,
where the intermediate codimension-one reduced systems are
linear systems of partial differential equations
studied in Sections~\ref{sec:Reduction12} and~\ref{sec:Reduction13}.\looseness=-1

More specifically, the two-dimensional abelian subalgebras
$\mathfrak s_{2.1}^{abc}$,
$\mathfrak s_{2.2}^{abc}$ and
$\mathfrak s_{2.6}^{\chi bc\kappa}$ of $\mathfrak g$
contain one-dimensional subalgebras
$\mathfrak s_{1.2}^{ac0}$,
$\mathfrak s_{1.3}^{1c0}$ and
$\mathfrak s_{1.3}^{1 c\kappa}$.
This means that Lie reductions with respect to these two-dimensional subalgebras
are two-step reductions of the system~\eqref{eq:mLaysMod}
with the intermediate codimension-one reduced systems~\eqref{eq:RedSystSubalgS12MatrixHomog},
\eqref{eq:RedSys13Homogen} and~\eqref{eq:RedSys13Homogen}
and the final reduced systems~%
\eqref{eq:Reduction12CoupledCaseShiftInvA} with $\chi=a$, $\mu=\nu=0$,
\eqref{eq:RedSys13CoupledShiftInv} with $\chi=1$, $\mu=0$, $\nu=a$ and~%
\eqref{eq:Reduction13CoupledCase} with $\mu=0$, respectively.
However, since we have exhaustively integrated not all codimension-one reduced systems,
the consideration of the above codimension-two reductions is relevant,
leading to the derivation of new closed-form exact solutions of~\eqref{eq:mLaysMod}
supplementing those presented in Section~\ref{sec:CodimOneLieReds}.

The Lie reduction with respect to the subalgebra~$\mathfrak s_{2.4}^{ab\sigma}$
is of a different nature.
Although this subalgebra is nonabelian,
it contains the one-dimensional ideal $\mathfrak s_{1.3}^{\chi0\kappa}$
with $\chi={\rm e}^{\sigma t}$ and $\kappa=at{\rm e}^{\sigma t}$.
Therefore, the associated Lie reduction is just a two-step reduction of the system~\eqref{eq:mLaysMod},
where the intermediate codimension-one reduced system is the completely decoupled system~\eqref{eq:RedSys13Homogen},
which has been exhaustively integrated in Section~\ref{sec:Reduction13DecoupledCase}.
Nevertheless, for the sake of completeness of the presentation,
we examine this codimension-two reduction of~\eqref{eq:mLaysMod} as well
and construct the corresponding closed-form solutions of the system~\eqref{eq:mLaysMod}.

In this section,~$\varphi^1$, \dots, $\varphi^m$ denote the new dependent variables
of the new independent variable $\omega$, $\varphi:=(\varphi^1,\dots,\varphi^m)^{\mathsf T}$.

\subsection{Subalgebra family 2.1}\label{sec:ReductionS21}

An ansatz associated with the two-dimensional subalgebra
$\mathfrak s_{2.1}^{abc}:=\langle\mathcal P^t+b_k\mathcal J^k,\,\mathcal P^y+\mathcal P^x(a)+c_k\mathcal J^k\rangle$
of~$\mathfrak g$, where $a$, $b:=(b_1,\dots,b_m)^{\mathsf T}$, $c:=(c_1,\dots,c_m)^{\mathsf T}$
are arbitrary constants with $b\in\mathop{\rm im}\mathsf F$,
is $\psi^i=\varphi^i(\omega)+c_iy+b_it$
or, in the vector form,
\[
\psi=\varphi(\omega)+yc+tb\quad\mbox{with}\quad\omega:=x-ay.
\]
This ansatz reduces the system~\eqref{eq:mLaysMod} to
\begin{gather*}
\begin{split}
&c_i(1+a^2)\varphi^i_{\omega\omega\omega}
-f_{i,i-1}(c_{i-1}\varphi^i_\omega-c_i\varphi^{i-1}_\omega-b_i+b_{i-1})
\\
&\quad{}-f_{i,i+1}(c_{i+1}\varphi^i_\omega-c_i\varphi^{i+1}_\omega-b_i+b_{i+1})
-\beta\varphi^i_\omega=0,
\end{split}
\end{gather*}
which can be represented using the matrix notation
$\mathsf C:=\mathop{\rm diag}c$
and $\hat{\mathsf C}:=\mathsf C\mathsf F-\mathop{\rm diag}(\mathsf Fc+\beta\bar1)$ as
\begin{gather}\label{eq:RedSyst21Matrix}
(1+a^2)\mathsf C\varphi_{\omega\omega\omega}+\hat{\mathsf C}\varphi_\omega-\mathsf Fb=0.
\end{gather}

Let $K$ and $I$ denote the kernel and the image of the matrix $\mathsf C$ viewed as a linear operator
on the space $\mathbb R^m$, respectively,
$K:=\ker\mathsf C$ and $I:=\mathop{\rm im}\mathsf C$.
Since $\mathsf C$ is a diagonal matrix, the description of subspaces $K$ and $I$ is evident
and we have the decomposition $\mathbb R^m=K\oplus I$.
Denote by~$\pi_K$ and~$\pi_I$ the natural projections onto subspaces $K$ and $I$
relative to that decomposition, respectively.
We have that $\hat{\mathsf C}I\subset I$, i.e., $I$ is an invariant subspace of $\hat{\mathsf C}$.
At the same time, $\hat{\mathsf C}K\cap I\ne\{0\}$ if $K\ne\{0\}$.
By the definition of $\hat{\mathsf C}$ we have
$\pi_K\hat{\mathsf C}=-\pi_K\mathop{\rm diag}(\mathsf Fc+\beta\bar1)$.
This allows us to split the system~\eqref{eq:RedSyst21Matrix}
into system of equations on components $\varphi^0:=\pi_K(\varphi)$ and $\varphi^1:=\pi_I(\varphi)$
of $\varphi$,
\begin{gather}\label{eq:Red21Kernel}
\mathop{\rm diag}(\mathsf Fc+\beta\bar1)\varphi^0_\omega=-\pi_K\mathsf Fb,
\\\label{eq:Red21Image}
(1+a^2)\mathsf C\varphi^1_{\omega\omega\omega}+\hat{\mathsf C}\varphi^1_\omega
=\pi_I(\mathsf Fb-\hat{\mathsf C}\varphi^0_\omega).
\end{gather}
By the Rouch\'e--Capelli theorem, the system~\eqref{eq:Red21Kernel} is consistent
if and only if the rank of its coefficient matrix $\mathop{\rm diag}(\mathsf Fc+\beta\bar1)$
is equal to the rank of its augmented matrix
${\big(\mathop{\rm diag}(\mathsf Fc+\beta\bar1)\mid\pi_K\mathsf Fb\big)}$.
In this case, the system~\eqref{eq:Red21Kernel} easily integrates,
\begin{gather}\label{eq:Red21SolutionKernel}
\solutionRedEq
\varphi^0(\omega)=-\omega\big(\mathop{\rm diag}(\mathsf Fc+\beta\bar1)\big)^+\pi_K\mathsf Fb+g(\omega)+B,
\end{gather}
where $g(\omega)$ is a tuple of arbitrary functions that takes values in
$K\cap\ker\mathop{\rm diag}(\mathsf Fc+\beta\bar1)$,
$B$ is a tuple of integration constants that can be assumed to belong to
$K\cap\mathop{\rm im}\mathop{\rm diag}(\mathsf Fc+\beta\bar1)$
and \smash{$\big(\mathop{\rm diag}(\mathsf Fc+\beta\bar1)\big)^+$}
is the Moore--Penrose inverse of $\mathop{\rm diag}(\mathsf Fc+\beta\bar1)$.

Let $\mathsf C^+$ be the Moore--Penrose inverse of $\mathsf C$.
The matrix of the operator $(1+a^2)^{-1}\mathsf C^+\hat{\mathsf C}|_I$
in the restriction of the standard basis $\mathbb R^m$ to~$I$ is represented by a block-diagonal matrix,
and each of its blocks is a tridiagonal matrix with positive off-diagonal entries.
Therefore, this matrix is diagonalizable.
Denote by $\nu_1$, \dots, $\nu_l$ its eigenvalues, by $\tilde e_1$, \dots,~$\tilde e_l$ the corresponding eigenvectors, where $l:=\dim I$,
and by~$\mathsf R$ the transition matrix from the above standard basis of the subspace~$I$
to the eigenbasis $(\tilde e_1,\dots,\tilde e_l)$.
Consider the operators
\[
\mathsf H:=(1+a^2)^{-1}\mathsf R^{-1}\mathsf C^+\pi_I\quad\mbox{and}\quad
\mathsf B:=\mathsf F+\hat{\mathsf C}\big(\mathop{\rm diag}(\mathsf Fc+\beta\bar1)\big)^+\pi_K\mathsf F
\]
and define $\tilde h:=\mathsf H(\mathsf Bb+\hat{\mathsf C}g_\omega)$.
Using the change of dependent variables $\tilde\varphi^1=\mathsf R^{-1}\varphi^1$,
we decouple the system~\eqref{eq:Red21Image} as follows:
\[
\tilde\varphi^{1s}_{\omega\omega\omega}+\nu_s\tilde\varphi^{1s}_\omega=\tilde h^s,\quad s\in\{1,\dots,l\}.
\]
The general solution of each equation from this system
depends on the sign of $\tilde\nu_i$ and it can be written as follows:
\begin{gather}\label{eq:Red21SolutionImage}
\begin{split}
\solutionRedEq\nu_s=0\colon\quad\tilde\varphi^{1s}(\omega)
={}&A_{0s}+A_{1s}\omega+A_{2s}\omega^2+\frac{\omega^3}6\mathsf H\mathsf Bb+\mathsf H\hat{\mathsf C}G,
\\[1.5ex]
\nu_s<0\colon\quad\tilde\varphi^{1s}(\omega)
={}&A_{0s}+A_{1s}{\rm e}^{\sqrt{-\nu_s}\omega}+A_{2s}{\rm e}^{-\sqrt{-\nu_s}\omega}\\
&{}+\frac\omega{\nu_s}\mathsf H\mathsf Bb
+\frac1{\nu_s}\mathsf H\hat{\mathsf C}
\int_{\omega_0}^\omega\big(1-\cosh(\sqrt{-\nu_s}(\omega-\varsigma))\big)g_\varsigma(\varsigma)\,{\rm d}\varsigma,
\\[1.8ex]
\nu_s>0\colon\quad\tilde\varphi^{1s}(\omega)
={}&A_{0s}+A_{1s}\sin(\sqrt{\nu_s}\omega)+A_{2s}\cos(\sqrt{\nu_s}\omega)\\
&{}+\frac\omega{\nu_s}\mathsf H\mathsf Bb
+\frac1{\nu_s}\mathsf H\hat{\mathsf C}
\int_{\omega_0}^\omega\big(1-\cos(\sqrt{\nu_s}(\omega-\varsigma))\big)g_\varsigma(\varsigma)\,{\rm d}\varsigma,
\end{split}
\end{gather}
where $A_{0s}$, $A_{1s}$ and $A_{2s}$ are arbitrary constants,
and $G$ is a tuple of second antiderivatives of the tuple~$g$, $G_{\omega\omega}=g$.
Pulling the obtained solution back, we obtain the solution of the original system~\eqref{eq:mLaysMod}
\begin{gather*}
\solution
\psi=\sum_{s=1}^l\tilde\varphi^{1s}(x-ay)\tilde e_s+\varphi^0(x-ay)+tb+yc,
\end{gather*}
where
the functions  $\varphi^0$ and $\varphi^{1s}$ are defined in~\eqref{eq:Red21SolutionKernel} and~\eqref{eq:Red21SolutionImage}, respectively,
$\tilde e_1$, \dots,~$\tilde e_l$ are eigenvectors of the matrix $(1+a^2)^{-1}\mathsf C^+\hat{\mathsf C}|_I$
corresponding to its eigenvalues $\nu_1$, \dots, $\nu_l$,
$\mathsf C^+$ is the Moore--Penrose inverse of $\mathsf C:=\mathop{\rm diag}c$, $I:=\im\mathsf C$,
$\hat{\mathsf C}:=\mathsf C\mathsf F-\mathop{\rm diag}(\mathsf Fc+\beta\bar1)$,
$b=(b_1,\dots,b_m)^{\mathsf T}$ and $c=(c_1,\dots,c_m)^{\mathsf T}$
are arbitrary constant tuples with $b\in\im\mathsf F$.

\subsection{Subalgebra family 2.2}

As associated with the two-dimensional subalgebra
$\mathfrak s_{2.2}^{abc}:=\langle
\mathcal P^t+a\mathcal P^y+b_k\mathcal J^k,\,
\mathcal P^x(1)+c_k\mathcal J^k
\rangle$ of~$\mathfrak g$, where
$b:=(b_1,\dots,b_m)^{\mathsf T}$, $c:=(c_1,\dots,c_m)^{\mathsf T}$
and~$a$ are arbitrary constants with $\sigma\ne0$ and $b\in\mathop{\rm im}\mathsf F$,
we choose the ansatz $\psi^i=\varphi^i(\omega)+c_ix+b_it$ with $\omega=y-at$.
The corresponding reduced system then is
\begin{gather*}
(c_i-a)\varphi^i_{\omega\omega\omega}
+f_{i,i-1}((c_i-a)\varphi^{i-1}_\omega-(c_{i-1}-a)\varphi^i_\omega+b_{i-1}-b_i)
\\
\qquad{}-f_{i,i+1}((c_{i+1}-a)\varphi^i_\omega-(c_i-a)\varphi^{i+1}_\omega+b_i-b_{i+1})
+c_i\beta=0.
\end{gather*}
In the vector notation, the ansatz and the reduced system respectively take the form
\begin{gather}\nonumber
\psi=\varphi(\omega)+xc+tb\quad\mbox{with}\quad\omega:=y-at,
\\\label{eq:RedSystSubalgS22MatrixForm}
\mathsf C\varphi_{\omega\omega\omega}+\hat{\mathsf C}\varphi_\omega+\mathsf Fb+\beta c=0,
\end{gather}
where $\mathsf C:=\mathop{\rm diag}(c-a\bar 1)$ and $\hat{\mathsf C}:=\mathsf C\mathsf F-\mathop{\rm diag}\mathsf Fc$.
The form of the reduced system is similar to that of the system~\eqref{eq:RedSyst21Matrix},
therefore, to find the general solution of~\eqref{eq:RedSystSubalgS22MatrixForm} we use the approach from Section~\ref{sec:ReductionS21}.

Since the matrix $\mathsf C$ is diagonal,
we have the direct sum decomposition $\mathbb R^m=K\oplus I$,
where $K:=\ker\mathsf C$ and $I:=\im\mathsf C$.
Denote by $\pi_K$ and $\pi_I$ the natural projections from~$\mathbb R^m$ onto~$K$ and~$I$
according to this decomposition, respectively.
It follows from the definition of the matrix $\hat{\mathsf C}$ that
$\hat{\mathsf C}I\subset I$, $\hat{\mathsf C}K\cap I\ne\{0\}$ if $K\ne\{0\}$,
and $\pi_K\hat{\mathsf C}=-\pi_K\mathop{\rm diag}\mathsf Fc$,
which allows us to decompose the reduced system~\eqref{eq:RedSystSubalgS22MatrixForm}
according to the above decomposition of~$\mathbb R^m$,
\begin{gather}
\label{eq:Red22Kernel}
\mathop{\rm diag}\mathsf Fc\varphi^0_\omega=\pi_K(\mathsf Fb+\beta c),
\\\label{eq:Red22Image}
\mathsf C\varphi^1_{\omega\omega\omega}+\hat{\mathsf C}\varphi^1_\omega=-\pi_I(\mathsf Fb+\beta c+\hat{\mathsf C}\varphi^0_\omega),
\end{gather}
where $\varphi^0:=\pi_K(\varphi)$ and $\varphi^1:=\pi_I(\varphi)$.
The system~\eqref{eq:Red22Kernel} is compatible if and only if
\[\rank\mathop{\rm diag}\mathsf Fc=\rank\big(\mathop{\rm diag}\mathsf Fc\mid\pi_K(\mathsf Fb+\beta c)\big),\]
where $\big(\mathop{\rm diag}\mathsf Fc\mid\pi_K(\mathsf Fb+\beta c)\big)$
is the augmented matrix of the system~\eqref{eq:Red22Kernel}.
In this case, the general solution of~\eqref{eq:Red22Kernel} is given by
\begin{gather}\label{eq:Red22SolutionKernel}
\solutionRedEq
\varphi^0=\omega\big(\mathop{\rm diag}\mathsf Fc\big)^+\pi_K(\mathsf Fb+\beta c)+B+g(\omega),
\end{gather}
where $g(\omega)$ is a tuple of arbitrary functions that takes values in
$K\cap\ker\mathop{\rm diag}\mathsf Fc$,
$B$ is a tuple of integration constants that can be assumed to belong to
$K\cap\mathop{\rm im}\mathop{\rm diag}\mathsf Fc$
and \smash{$\big(\mathop{\rm diag}\mathsf Fc\big)^+$} is the Moore--Penrose inverse of $\mathop{\rm diag}\mathsf Fc$.

It can be shown using the arguments as in Section~\ref{sec:ReductionS21}
that the matrix $\mathsf C^+\hat{\mathsf C}|_I$, where $\mathsf C^+$ is the Moore--Penrose inverse of $\mathsf C$,
is diagonalizable.
Denote by $\nu_1$, \dots, $\nu_l$ its eigenvalues, where $l:=\dim I$,
by $\tilde e_1$,~\dots,~$\tilde e_l$ the corresponding eigenvectors
and by~$\mathsf R$ the transition matrix from the standard basis of the subspace~$I$
to the eigenbasis $(\tilde e_1,\dots,\tilde e_l)$.
Consider the operators
\[
\mathsf H:=\mathsf C^+\pi_I\quad\mbox{and}\quad
\mathsf B:=\mathsf F+\hat{\mathsf C}\big(\mathop{\rm diag}\mathsf Fc\big)^+\pi_K\mathsf F
\]
and define $\tilde h:=-\mathsf H(\mathsf Bb+\beta c+\hat{\mathsf C}g_\omega)$.
The substitution $\varphi^1:=\mathsf R\tilde\varphi^1$ enables decoupling the system~\eqref{eq:Red22Image}
to the system
\[
\tilde\varphi^{1s}_{\omega\omega\omega}+\nu_s\tilde\varphi^{1s}_\omega=\tilde h^s,\quad s\in\{1,\dots,l\}.
\]
The general solution of this system, depending on the sign of $\nu_s$,
is given by
\begin{gather}\label{eq:Red22SolutionImage}
\begin{split}
\solutionRedEq\nu_s=0\colon\quad\tilde\varphi^{1s}(\omega)
={}&A_{0s}+A_{1s}\omega+A_{2s}\omega^2-\frac{\omega^3}6\mathsf H(\mathsf Bb+\beta c)-\mathsf H\hat{\mathsf C}G,
\\[1.5ex]
\nu_s<0\colon\quad\tilde\varphi^{1s}(\omega)
={}&A_{0s}+A_{1s}{\rm e}^{\sqrt{-\nu_s}\omega}+A_{2s}{\rm e}^{-\sqrt{-\nu_s}\omega}\\
&{}-\frac\omega{\nu_s}\mathsf H(\mathsf Bb+\beta c)
-\frac1{\nu_s}\mathsf H\hat{\mathsf C}
\int_{\omega_0}^\omega\big(1-\cosh(\sqrt{-\nu_s}(\omega-\varsigma))\big)g_\varsigma(\varsigma)\,{\rm d}\varsigma,
\\[1.8ex]
\nu_s>0\colon\quad\tilde\varphi^{1s}(\omega)
={}&A_{0s}+A_{1s}\sin(\sqrt{\nu_s}\omega)+A_{2s}\cos(\sqrt{\nu_s}\omega)\\
&{}-\frac\omega{\nu_s}\mathsf H(\mathsf Bb+\beta c)
-\frac1{\nu_s}\mathsf H\hat{\mathsf C}
\int_{\omega_0}^\omega\big(1-\cos(\sqrt{\nu_s}(\omega-\varsigma))\big)g_\varsigma(\varsigma)\,{\rm d}\varsigma,
\end{split}
\end{gather}
where $A_{0s}$, $A_{1s}$ and $A_{2s}$ are arbitrary constants,
and $G$ is a tuple of second antiderivatives of the tuple~$g$, $G_{\omega\omega}=g$.
Pulling the obtained solution back, we construct the solution of the original system~\eqref{eq:mLaysMod}
\begin{gather*}
\solution
\psi=\sum_{s=1}^l\tilde\varphi^{1s}(y-at)\tilde e_s+\varphi^0(y-at)+xc+tb,
\end{gather*}
where
the functions  $\varphi^0$ and $\varphi^{1s}$ are defined in~\eqref{eq:Red22SolutionKernel} and~\eqref{eq:Red22SolutionImage}, respectively,
$\tilde e_1$, \dots,~$\tilde e_l$ are eigenvectors of the matrix $\mathsf C^+\hat{\mathsf C}|_I$
corresponding to its eigenvalues $\nu_1$,~\dots,~$\nu_l$,
$\mathsf C^+$ is the Moore--Penrose inverse of $\mathsf C:=\mathop{\rm diag}(c-a\bar 1)$, $I:=\im\mathsf C$,
$\hat{\mathsf C}:=\mathsf C\mathsf F-\mathop{\rm diag}\mathsf Fc$,
$b=(b_1,\dots,b_m)^{\mathsf T}$ and $c=(c_1,\dots,c_m)^{\mathsf T}$
are arbitrary constant tuples with $b\in\im\mathsf F$.

\subsection{Subalgebra family 2.4}

Consider the two-dimensional subalgebra
$\mathfrak s_{2.4}^{ab\sigma}:=\langle
\mathcal P^t+a\mathcal P^y+b_k\mathcal J^k,\,
\mathcal P^x({\rm e}^{\sigma t})+\mathcal Z(a\sigma t{\rm e}^{\sigma t})
\rangle$  of~$\mathfrak g$,
where $b:=(b_1,\dots,b_m)^{\mathsf T}$, $a$ and~$\sigma$ are arbitrary constants
with $b\in\mathop{\rm im}\mathsf F$ and $\sigma\ne0$.
With this subalgebra, we construct
the ansatz $\psi^i=\varphi^i(\omega)+b_it-\sigma(y-at)x$
or, in the vector form,
\[
\psi=\varphi(\omega)+tb-\sigma(y-at)x\bar1,\quad\mbox{where}\quad \omega:=y-at,
\]
and reduce the original system~\eqref{eq:mLaysMod} to
\begin{gather*} (\sigma\omega+a)\varphi^i_{\omega\omega\omega}
+f_{i,i-1}((\sigma\omega+a)\varphi^{i-1}_\omega-(\sigma\omega+a)\varphi^i_\omega-b_{i-1}+b_i)
\\
\qquad{}-f_{i,i+1}((\sigma\omega+a)\varphi^i_\omega-(\sigma\omega+a)
\varphi^{i+1}_\omega-b_i+b_{i+1})
+\beta\sigma\omega=0,
\end{gather*}
which can also be conveniently represented in the matrix form
\begin{gather}\label{eq:ReductionS24matrix}
\varphi_{\omega\omega\omega}+\mathsf F\varphi_\omega=(\sigma\omega+a)^{-1}(-\sigma\beta\omega\bar1+\mathsf Fb).
\end{gather}
This is an inhomogeneous linear system of constant-coefficient third-order differential equations,
which can be completely decoupled.
More specifically, denote
\begin{gather}\label{eq:RedSys24PartOfPartSol}
\check\varphi(\omega):=\beta\left(-\frac{\omega^3}6
+\frac12\frac a\sigma\left(\omega+\frac a\sigma\right)^2\left(\ln\left(\omega+\frac a\sigma\right)-\frac32\right)
\right)\bar1.
\end{gather}
By the change of the dependent variables $\tilde\varphi=\mathsf P^{-1}(\varphi-\check\varphi)$,
see the end of Section~\ref{sec:PropertiesOfMatrixF} for the notation,
we reduce the system~\eqref{eq:ReductionS24matrix}
to the decoupled system of linear inhomogeneous constant-coefficient third-order differential equations
\[
\tilde\varphi_{\omega\omega\omega}+\mathsf\Lambda\tilde\varphi_\omega=(\sigma\omega+a)^{-1}\mathsf\Lambda\tilde b,
\]
where $\tilde b:=\mathsf P^{-1}b$.
Recall that $\mathsf\Lambda:=\mathsf P^{-1}\mathsf F\mathsf P=\mathop{\rm diag}(\lambda_1,\dots,\lambda_m)$,
$\lambda_1$, \dots, $\lambda_m$ are the eigenvalues of~$\mathsf F$ with $\lambda_1<\dots<\lambda_m$,
$\lambda_m=0$ and thus all other~$\lambda_i$ are negative.
The general solution of the decoupled system is given by
\begin{gather}\label{eq:RedSys24DecoupledSolA}
\begin{split}\solutionRedEq
\tilde\varphi^i(\omega)=&
 \frac{\lambda_i\tilde b_i}2{\rm e}^{-\kappa_i\omega}
\int\frac{{\rm e}^{\kappa_i\omega}}{\sigma\omega+a}{\rm d}\omega
-\frac{\lambda_i\tilde b_i}2{\rm e}^{\kappa_i\omega}
\int\frac{{\rm e}^{-\kappa_i\omega}}{\sigma\omega+a}{\rm d}\omega
+\frac{\lambda_i\tilde b_i}\sigma\ln|\sigma\omega+a|
\\&{}
+A_i{\rm e}^{-\kappa_i\omega}+B_i{\rm e}^{\kappa_i\omega}+C_i,
\qquad \kappa_i:=\sqrt{-\lambda_i},\qquad i=1,\dots,m-1,
\end{split}
\\\label{eq:RedSys24DecoupledSolB}
\tilde\varphi^m(\omega)=A_m\omega^2+B_m\omega+C_m.
\end{gather}
The corresponding solutions of the original system~\eqref{eq:mLaysMod} are of the form
\[
\solution
\psi=\sum_{i=1}^{m-1}\tilde\varphi^i(y-at)e_i+\tilde\varphi^m(y-at)\bar1+\check\varphi(y-at)+tb-\sigma(y-at)x\bar1,
\]
where $e_1$, \dots, $e_{m-1}$, $e_m:=\bar1$ are eigenvectors of the matrix~$\mathsf F$
for its eigenvalues~$\lambda_1$, \dots, $\lambda_{m-1}$, $\lambda_m=0$,
which are the columns of the matrix~$\mathsf P$,
the functions~$\check\varphi$, $\tilde\varphi^i$, $i=1,\dots,m-1$, and~$\tilde\varphi^m$
are defined in~\eqref{eq:RedSys24PartOfPartSol},
\eqref{eq:RedSys24DecoupledSolA} and~\eqref{eq:RedSys24DecoupledSolB}, respectively,
$b=(b_1,\dots,b_m)$, $\sigma$ and~$a$ are arbitrary constants with $\sigma\ne0$,
and $\tilde b:=\mathsf P^{-1}b$.

\subsection{Subalgebra family 2.6}\label{sec:ReductionS26}

An ansatz constructed with the subalgebra
$\mathfrak s_{2.6}^{\chi bc\kappa}:=\langle
\mathcal P^y+\mathcal P^x(\chi)+b_k\mathcal J^k,\,
\mathcal P^x(1)+c_k\mathcal J^k+\mathcal Z(\kappa)\rangle$
with arbitrary constant tuples
$b:=(b_1,\dots,b_m)^{\mathsf T}$ and $c:=(c_1,\dots,c_m)^{\mathsf T}$ from $\mathop{\rm im}\mathsf F$
and arbitrary smooth functions~$\chi$ and~$\kappa$ of $t$,
is $\psi^i=\varphi^i(\omega)+yb_i+(x-\chi y)c_i+\kappa(x-\chi y)-\frac12\chi_ty^2$
or, in the vector notation,
\[
\psi=\varphi(\omega)+yb+(x-\chi y)c+\big(\kappa(x-\chi y)-\tfrac12\chi_ty^2\big)\bar 1,
\quad\mbox{where}\quad\omega:=t.
\]
The corresponding reduced system is
\begin{gather}\label{eq:RedSyst26}
\begin{split}
&f_{i,i-1}(\varphi^{i-1}_\omega-\varphi^i_\omega+\kappa(b_{i-1}-b_i)+b_{i-1}c_i-b_ic_{i+1})
\\
&-f_{i,i+1}(\varphi^i_\omega-\varphi^{i+1}_\omega+\kappa(b_i-b_{i+1})+b_ic_{i+1}-b_{i+1}c_i)
+\beta\kappa+\beta c_i-\chi_{tt}=0.
\end{split}
\end{gather}
It can also be more conveniently represented in the matrix form
\begin{gather*}
\mathsf F\varphi_\omega+(\mathsf C+\kappa\mathsf E)\mathsf Fb-\mathsf B\mathsf Fc+\beta c
+(\beta\kappa-\chi_{\omega\omega})\bar 1=0,
\end{gather*}
where $\mathsf B:=\mathop{\rm diag}b$ and $\mathsf C:=\mathop{\rm diag}c$.
This is a system of linear first-order ordinary differential equations
with the degenerate coefficient matrix~$\mathsf F$ of the first derivative tuple~$\varphi_\omega$.
Taking the $\mathsf W$-weighted inner product of the system with~$\bar 1$,
we derive $(\beta\kappa-\chi_{\omega\omega})(\bar1,\bar1)_{\mathsf W}^{}=0$
since $\im\mathsf F\perp_{\mathsf W}^{}\langle\bar 1\rangle$ and
$(\bar1,\mathsf C\mathsf Fb-\mathsf B\mathsf Fc+\beta c)_{\mathsf W}^{}=0$.
Hence the reduced system is consistent if and only if $\beta\kappa-\chi_{\omega\omega}=0$, i.e.,
$\kappa=\beta^{-1}\chi_{\omega\omega}$, and its general solution is
\[
\varphi=-\beta^{-1}\chi_\omega b-\omega\mathsf F^+(\mathsf C\mathsf Fb-\mathsf B\mathsf Fc+\beta c)+\varsigma+g\bar1,
\]
where
$\varsigma$ is an arbitrary constant tuple from $\mathop{\rm im}\mathsf F$,
$g$ is an arbitrary smooth function of~$\omega$ and
the Moore--Penrose inverse~$\mathsf F^+$ of~$\mathsf F$ is given by~\eqref{eq:MNInverse}.
Pulling this solution back to a solution of the original system~\eqref{eq:mLaysMod} by the above ansatz,
we obtain
\begin{gather}\label{eq:SolObtainedFromRed26}
\begin{split}
\solution
\psi={}&(x-\chi y)c+(y-\beta^{-1}\chi_t)b-t\mathsf F^+(\mathsf C\mathsf Fb-\mathsf B\mathsf Fc+\beta c)+\varsigma
\\&
+\left(\frac{\chi_{tt}}{\beta}(x-\chi y)-\frac{\chi_t}2y^2+g\right)\bar 1,
\end{split}
\end{gather}
where
$b$, $c$ and~$\varsigma$ are arbitrary constant tuples from $\im\mathsf F$,
$\mathsf B:=\mathop{\rm diag}b$ and $\mathsf C:=\mathop{\rm diag}c$,
$\chi$ and~$g$ are arbitrary smooth functions of $t$
and the Moore--Penrose inverse~$\mathsf F^+$ of~$\mathsf F$ is given by~\eqref{eq:MNInverse}.
The presence of the arbitrary function~$g$ and the arbitrary constant tuple~$\varsigma$ from $\im\mathsf F$
is explained by the invariance of the system~\eqref{eq:mLaysMod}
with respect to the point transformations~$\mathscr Z(g)$ and $\mathscr J^i(\varsigma_i)$,
and thus these function and constant tuple can be set to zero up to the $G$-equivalence.

\begin{remark}
A solution of the multi-layer problem~\eqref{eq:mLaysMod}
gives rise to the velocity fields that depend only on~$t$ in all layers
if and only if it is of the form
\begin{gather}\label{eq:MultilayerLinInXY}
\psi=(x-\zeta^2)c+(y+\zeta^1)b+(\zeta^1_tx+\zeta^2_ty+g)\bar1
-t\mathsf F^+(\mathsf C\mathsf Fb-\mathsf B\mathsf Fc+\beta c),
\end{gather}
where the constant tuples $b$ and $c$ are arbitrary elements of $\im\mathsf F$,
$\mathsf B:=\mathop{\rm diag}b$ and $\mathsf C:=\mathop{\rm diag}c$,
the Moore--Penrose inverse~$\mathsf F^+$ of~$\mathsf F$ is given by~\eqref{eq:MNInverse},
and $\zeta^1$, $\zeta^2$ and $g$ are arbitrary functions of~$t$.
The functions $\zeta^2$ and $g$ can be set to be equal to zero up to the $G$-equivalence, i.e.,
more specifically, using the pushforwards by the transformations~$\mathscr P^x(\zeta^2)$ and~$\mathscr Z(-g)$.
The vector fields associated with the baroclinic components of each solution
of the form~\eqref{eq:MultilayerLinInXY} are constant in all layers.
The intersection of the families~\eqref{eq:SolObtainedFromRed26} and~\eqref{eq:MultilayerLinInXY}
is singled out from them by the constraints $\chi_t=0$ and $\zeta^1_t=\zeta^2_t=0$, respectively,
and it consists of the solutions of the multi-layer problem~\eqref{eq:mLaysMod}
with constant velocity fields in all layers.
\end{remark}

\section{Conclusion}\label{sec:Conclusion}

In the present paper, we have systematically carried out symmetry analysis
of the multi-layer quasi-geostrophic problem~\eqref{eq:mLaysMod},
significantly extending and generalizing results of~\cite{bihl2011a,moro2023a}.
In the course of this study, we have successfully addressed two primary challenges.
Firstly, the system~\eqref{eq:mLaysMod} consists of an arbitrary number $m\in\mathbb N$ of equations,
which makes it impossible to apply specialized computer algebra packages
to find the maximal Lie invariance algebra~$\mathfrak g$ of this system.
Secondly, the system~\eqref{eq:mLaysMod} is nontrivially coupled,
and this coupling is described by the tridiagonal vertical coupling matrix~$\mathsf F$.
This in turn complicates studying Lie reductions
and constructing closed-form exact solutions of~\eqref{eq:mLaysMod}.
Below, we summarize the results of this paper as well as our approaches to resolving the above challenges.

The multi-layer quasi-geostrophic model can be represented in
the componentwise form~\eqref{eq:mLaysMod} and the vector form~\eqref{eq:mLaysModMatrixForm}.
Each representation has its own application scope:
the componentwise form is more convenient for computing the algebra~$\mathfrak g$
and the point-symmetry pseudogroup~$G$ of this model,
while the vector form facilitates the study of conservation laws, Hamiltonian structure and Lie reductions
of~\eqref{eq:mLaysMod}.
Spectral properties of the coupling matrix $\mathsf F$ play an important role
in the course of this study.
It turns out that the matrix $\mathsf F$ is diagonalizable and is of rank $m-1$,
its nonzero eigenvalues are negative real
and the eigenbasis of $\mathsf F$  allows us to distinguish barotropic and baroclinic modes.

In Section~\ref{sec:HamiltonianStructure}
we have extended the well-known results on local conservation laws and a Hamiltonian structure
of the $\beta$-plane vorticity equation~\cite{shep1990a} to the general case of the system~\eqref{eq:mLaysMod},
which results in the first correct presentation of families of its local conservation laws
and its Hamiltonian structure, including a Hamiltonian, a Hamiltonian operator
and the Casimir (distinguished) functionals for this operator.
All the presented conserved functionals have a natural physical or mathematical interpretation.
Details on the construction of the above object will be presented in another paper jointly with the proof
that the listed conservation-law characteristics span the entire space
of conservation-law characteristics of~\eqref{eq:mLaysMod} up to order two.
Note that this construction is essentially based on using the weighted inner product
with the weight matrix~$\mathsf W$ related to the vertical coupling matrix~$\mathsf F$;
in this inner product, the matrix~$\mathsf F$ is symmetric.
We also conjecture that the system~\eqref{eq:mLaysMod}
admits no other independent local conservation laws.

It was evident that deriving the determining equations for the Lie-symmetry vector fields
of the system~\eqref{eq:mLaysMod} via the direct application of the Lie invariance criteria
would involve tedious and cumbersome computations.
This is why we have used two tricks to overcome this challenge.
The first trick consists in the formal replacement of the spatial independent variables $(x,y)$
by the complex conjugate variables $z=x+{\rm i}y$ and $\bar z=x-{\rm i}y$,
which maps the Laplacian $\psi^i_{xx}+\psi^i_{yy}$ to $4\psi^i_{z\bar z}$
and the original system~\eqref{eq:mLaysMod} to the system~\eqref{eq:mLaysModCompl}.
It has allowed us to significantly reduce the size of expressions in the course of the computation and
to canonically distinguish the jet variables $\psi^i_{tz\bar z}$ as the leading ones in the latter system.
The second trick is to introduce the superclasses~$\mathcal V$ and $\hat{\mathcal V}$
of systems of the form~\eqref{eq:GenVorClass} and~\eqref{eq:HatGenVorClass}, respectively,
$\mathcal V\supset\hat{\mathcal V}\supset\mathcal M^{\rm c}$,
where $\mathcal M^{\rm c}$ is the class of systems of the form~\eqref{eq:mLaysModCompl}.
Applying the Lie invariance criterion to systems from the widest class $\mathcal V$,
we have derived the general form of Lie-symmetry vector fields
for systems from this class in Lemma~\ref{lem:LieSymVecFieldClassV}.
We have successively specified this form for the systems
from the classes~$\hat{\mathcal V}$ and~$\mathcal M^{\rm c}$,
which has led to Lemmas~\ref{lem:LieSymVecFieldClassHatV} and~\ref{lem:LieInvAlgMLaysModelCompl},
respectively.
Note that the maximal Lie invariance algebra of any system from the class~$\mathcal M^{\rm c}$
coincides with the algebra~$\mathfrak g^{\rm c}$ obtained in the latter lemma.
Finally, by pulling back the algebra~$\mathfrak g^{\rm c}$
with respect to the transformation $(x,y)\mapsto(z,\bar z)$,
we have found the Lie invariance algebra~$\mathfrak g$
of the multi-layer quasi-geostrophic problem~\eqref{eq:mLaysMod} in Theorem~\ref{thm:mLaysLieInvAlg}.
In addition, we have singled out the subalgebra of Hamiltonian Lie symmetries in~$\mathfrak g$.

After studying the structure of the Lie invariance algebra $\mathfrak g$ and constructing its megaideals,
we have applied the megaideal-based version of the algebraic method
to construct the complete point-symmetry pseudogroup $G$ of~\eqref{eq:mLaysMod}
in Theorem~\ref{thm:mLaysPointSymGroup} and singled out its canonical discrete elements.
Based on this result,
we have computed the equivalence groupoid, the usual equivalence group and the equivalence algebra
of the class~$\mathcal M$ of systems of the form~\eqref{eq:mLaysMod}
and carried out its group classification
in Theorems~\ref{eq:ClassMEquivGroupoid} and \ref{thm:ClassMEquivGroup}, Corollary~\ref{cor:ClassMEquivAlgebra}
and Theorem~\ref{thm:ClassMGroupClassification}, respectively.
The performed analysis has revealed that the kernel Lie invariance algebra of the class~$\mathcal M$
coincides with the algebra~$\mathfrak g$
and that there is no Lie symmetry extension within this class.
We have also described the generalized equivalence group of~$\mathcal M$
and its effective counterpart in Remark~\ref{rem:GeneralizedEquivGroup}.

In Theorems~\ref{thm:1D-subalgebras} and~\ref{thm:2D-subalgebras},
we have shown that there are three and eight families
of $G$-inequivalent one- and two-dimensional subalgebras of the algebra~$\mathfrak g$, respectively.
While each family of one-dimensional subalgebras is appropriate for Lie reduction,
only four out of eight two-dimensional subalgebra families have this property.
The rest of the paper has been devoted to the systematic and comprehensive study of
$G$-inequivalent Lie reductions of the multi-layer quasi-geostrophic problem~\eqref{eq:mLaysMod}.
As mentioned above, for the purpose of studying Lie reductions, the vector form~\eqref{eq:mLaysModMatrixForm}
is more convenient than the original system~\eqref{eq:mLaysMod}.

In Section~\ref{sec:ReductionS11},
using the first subalgebra family from Theorem~\ref{thm:1D-subalgebras},
we have reduced the system~\eqref{eq:mLaysModMatrixForm}
to the nonlinear systems of the form~\eqref{eq:RedSystSubalgS11MatrixForm}.
Their maximal Lie invariance algebras coincide with each other for all admitted values of involved parameters
and are entirely induced by the Lie invariance algebra $\mathfrak g$ of the original system~\eqref{eq:mLaysMod},
see Theorem~\ref{thm:mLaysLieInvAlgRedSystem1.1}.
As a particular case of the corresponding invariant solutions,
which is especially convenient for the study and the richest in the sense of possible physical interpretation,
one naturally singles out the stationary solutions
such that on each, $i$th, layer,
the potential vorticity~$q^i$ affinely depends on the stream function~$\psi^i$.
Fixing this affine dependence leads to a coupled linear system of the form~\eqref{eq:ReducitonS11LinearModons}.
Finding a particular solution of this system
and diagonalizing its coupling matrix $\mathsf F-\mathsf B$, where the matrix~$\mathsf B$ is diagonal,
we reduce the system~\eqref{eq:ReducitonS11LinearModons}
to the decoupled homogeneous system~\eqref{eq:ReducitonS11LinearModonsDecoupled}
of the modified Helmholtz, Laplace or Helmholtz equations.
Among the solutions of the original system~\eqref{eq:mLaysMod} that have been constructed in the above way,
we single out those with velocity fields that are defined on the entire $(x,y)$-plane and bounded
and write down their representation~\eqref{eq:Reduction12BoundedSolutions}
in terms of generalized Herglotz wave functions.
This representation has allowed us to recover
baroclinic Rossby waves, coherent baroclinic eddies and hetons as the simplest typical flow patterns.
It is worth to emphasize that the exact solutions of~\eqref{eq:mLaysMod}
obtained under the same fixed affine dependence
can be linearly superposed, thus generating even larger solution families with more complex behavior,
see, e.g., Figure~\ref{fig:Red11GenLinSuperposition}.
Merging solutions associated with different systems of the form~\eqref{eq:ReducitonS11LinearModons}
on subdomains by imposing interface boundary conditions between these subdomains,
we have also extended the construction of dipolar vortices (modons) from~\cite{lari1976a}
to the multi-layer case.
The above physically relevant solutions have been illustrated using real-world geophysical data given in Section~\ref{sec:NumericalExample} for a three-layered ocean model.

The codimension-one Lie reductions of~\eqref{eq:mLaysModMatrixForm}
with respect to subalgebras from the second family presented in Theorem~\ref{thm:1D-subalgebras}
have led to the homogeneous linear systems of partial differential equations
of the form~\eqref{eq:RedSystSubalgS12MatrixHomog}; see Section~\ref{sec:Reduction12}.
We have carried out the group classification of these reduced systems
and described associated hidden symmetries of the original (nonlinear) system~\eqref{eq:mLaysModMatrixForm}.
It turns out that besides the most valuable transformations of linear superposition of solutions,
there are other nontrivial symmetries of this kind.
On top of that, the group classification has also allowed us to distinguish two important cases
for values of parameters appearing in~\eqref{eq:RedSystSubalgS12MatrixHomog}:
the completely decoupled case~\eqref{eq:RedSystSubalgS12Decoupled} and the generically coupled case.
These two cases are further split into the general, scale- and shift-invariant subcases
according to the presence and the kind of submodel Lie symmetries involving~$t$.
In the completely decoupled case, we were able to find wide families of solutions
in terms of elementary functions, integrals of Whittaker functions, Bessel functions
and the general solutions of Klein--Gordon and linearized Benjamin--Bona--Mahony equations.
The integration of the generically coupled case is more involved,
but we have constructed wide families of exact solutions~\eqref{eq:Reduction12CoupledCaseSolutionFromZ2Invertible}
for the shift-invariant subcase.\looseness=-1

The analysis of Lie reductions with respect to the subalgebras
from the third family given in Theorem~\ref{thm:1D-subalgebras}
follows the pattern of those for the second one
since the structure of the obtained reduced systems~\eqref{eq:RedSys13Homogen}
is similar to the structure of the systems~\eqref{eq:RedSystSubalgS12MatrixHomog}.
The difference worth mentioning has arisen only for the completely decoupled systems
of the form~\eqref{eq:RedSys13Homogen}.
By integrating with respect to the first invariant independent variable~$z_1$,
each of them has been reduced to a system of ordinary differential equations with $z_2$ as its independent variable,
whereas $z_1$ plays the role of a parameter,
and has thus been completely integrated in quadratures.

The main benefit of Lie reductions with respect to one-dimensional subalgebras of $\mathfrak g$
is that
most of the associated reduced systems of partial differential equations in two independent variables
are linear or at least reduce to linear systems under certain differential constraints.
In other words, these systems admit the linear superposition of solutions in total
or for large subsets of their solutions.
However, integrating them may be a challenging problem.
Codimension-two Lie reductions of~\eqref{eq:mLaysModMatrixForm}
lead to systems of ordinary differential equations of order at most three with constant coefficients,
whose integration reduces to solving certain eigenvalue problems.
We have shown that each codimension-two Lie reduction of~\eqref{eq:mLaysModMatrixForm}
is a two-step reduction with a codimension-one Lie reduction as the first step.
At the same time, three out of four families of codimension-two Lie reductions
are more convenient for deriving exact solutions of~\eqref{eq:mLaysModMatrixForm}
than their two-step counterparts from Section~\ref{sec:CodimOneLieReds}.

We would like to emphasize that the presented families of exact solutions
can be further extended by acting with point symmetries of the original model~\eqref{eq:mLaysModMatrixForm},
with point symmetries of related submodels and, in case of linear submodels,
with their linear recursion operators or, more specifically,
with Lie-symmetry operators, see, e.g., \cite{kova2025a,kova2024a,popo2025a}.

The results of the present paper open several avenues for further research.
In particular, while we have constructed families of conservation laws and a Hamiltonian structure
of the multi-layer quasi-geostrophic problem~\eqref{eq:mLaysMod} in Section~\ref{sec:HamiltonianStructure},
their comprehensive description with all required proofs remains an open problem.
A related direction for the study is
to extend results of~\cite{moro2023a} on Lax pairs of the two-layer model to an arbitrary number of layers.
One can also consider generalized and potential symmetries and recursion operators of~\eqref{eq:mLaysMod}.
It is also of interest to apply non-Lie methods of finding exact solutions to~\eqref{eq:mLaysMod},
including conditional symmetries~\cite[Chapter~5]{fush1993A},
the method of differential constraints~\cite{olve1994a} (or ``side conditions''~\cite{olve1986b})
and the ansatz method~\cite{fush1995a}.

\section*{Acknowledgments}
This research was undertaken thanks to funding from the Canada Research Chairs program, the NSERC Discovery Grant program
and the AARMS graduate scholarship.
This research was also supported in part by the Ministry of Education, Youth and Sports of the Czech Republic (M\v SMT \v CR) under RVO funding for I\v C47813059.
ROP expresses his gratitude for the hospitality shown by the University of Vienna during his long staying at the university.
The authors express their deepest thanks to the Armed Forces of Ukraine and the civil Ukrainian people
for their bravery and courage in defense of peace and freedom in Europe and in the entire world from russism.\looseness=-1


\begin{thebibliography}{10}\footnotesize\itemsep=-.3ex

\bibitem{CRChandbook1995V2}
Aksenov A.V., Baikov V.A., Chugunov V.A., Gazizov R.K. and Meshkov A.G.,
{\it CRC handbook of Lie group analysis of differential equations. Vol. 2. Applications in engineering and physical sciences},
CRC Press, Boca Raton, FL, 1995.

\bibitem{anco2002a}
Anco S. and Bluman G.,
Direct construction method for conservation laws of partial differential equations. Part I: Examples of conservation law classifications,
{\it Eur.~J. App. Math.} {\bf 13} (2002), 545--566, arXiv:math-ph/0108023.

\bibitem{anco2002b}
Anco S. and Bluman G.,
Direct construction method for conservation laws of partial differential equations. Part II: General treatment,
{\it Eur.~J. App. Math.} {\bf 13} (2002), 567--585, arXiv:math-ph/0108024.
	
\bibitem{andr1998A}
Andreev V.K., Kaptsov O.V., Pukhnachov V.V. and Rodionov A.A.,
{\it Applications of group-theoretical methods in hydrodynamics},
Kluwer Academic Publishers, Dordrecht, 1998.

\bibitem{bihl2015a}
Bihlo A., Dos Santos Cardoso-Bihlo E. and Popovych R.O.,
Algebraic method for finding equivalence groups,
{\it J.~Phys. Conf. Ser.} {\bf 621} (2015) 012001, arXiv:1503.06487.
	
\bibitem{bihl2020a}
Bihlo A., Poltavets N. and Popovych R.O.,
Lie symmetries of two-dimensional shallow water equations with variable bottom topography,
{\it Chaos} {\bf 30} (2020), 073132, arXiv:1911.02097.

\bibitem{bihl2011a}
Bihlo A. and Popovych R.~O.
Lie symmetry analysis and exact solutions of the quasigeostrophic two-layer problem,
{\it J.~Math. Phys.} {\bf 52} (2011), 033103, arXiv:1010.1542.

\bibitem{bihl2012a}
Bihlo A. and Popovych R.O., Point symmetry group of the barotropic vorticity equation,
in {\it Proceedings of 5th Workshop ``Group Analysis of Differential Equations \& Integrable Systems''
(June 6--10, 2010, Protaras, Cyprus)}, University of Cyprus, Nicosia, 2011 pp. 15--27, arXiv:1009.1523.

\bibitem{bihl2020b}
Bihlo A. and Popovych R.O.,
Zeroth-order conservation laws of two-dimensional shallow water equations with variable bottom topography,
{\it Stud. Appl. Math.} {\bf 145} (2020), 291--321, arXiv:1912.11468.

\bibitem{blum2002A}
Bluman G.W. and Anco S.C.,
\textit{Symmetry and integration methods for differential equations},
Springer, New York, 2002.

\bibitem{blum2010A}
Bluman G.W., Cheviakov A.F. and Anco S.C.,
{\it Applications of symmetry methods to partial differential equations},
Springer, New York, 2010. 

\bibitem{boye1976b}
Boyer C.P., Kalnins E.G. and Miller W., Jr.,
Symmetry and separation of variables for the Helmholtz and Laplace equations,
{\it Nagoya Math.~J.} {\bf 60} (1976), 35--80.
	
\bibitem{boyk2024a}
Boyko V.M., Popovych R.O. and Vinnichenko O.O.,
Point- and contact-symmetry pseudogroups of dispersionless Nizhnik equation,
{\it Commun. Nonlinear Sci. Numer. Simul.} {\bf 132} (2024), 107915,
arXiv:2211.09759.

\bibitem{carm2000a}
Carminati J. and Vu K.,
Symbolic computation and differential equations: Lie symmetries,
{\it J.~Symbolic Comput.}  {\bf 29} (2000), 95--116.

\bibitem{chap2024b}
Chapovskyi Ye.Yu, Koval S.D. and Zhur O.,
Subalgebras of Lie algebras. Example of $\mathfrak{sl}_3(\mathbb R)$ revisited,
33~pp,
arXiv:2403.02554

\bibitem{chev2007a}
Cheviakov A.F.,
GeM software package for computation of symmetries and conservation laws of differential equations,
{\it Comput. Phys. Comm.} {\bf 176} (2007), 48--61.

\bibitem{chev2014a}
Cheviakov A.F.,
Conservation properties and potential systems of vorticity-type equations,
{\it J.~Math. Phys.} {\bf 55} (2014), 033508. 

\bibitem{colt2019A}
Colton D. and Kress R.,
{\it Inverse acoustic and electromagnetic scattering theory},
Springer, Cham, 2019.

\bibitem{cott2020a}
Cotter C., Crisan D., Holm D., Pan W. and Shevchenko I.,
Data assimilation for a quasi-geostrophic model with circulation-preserving stochastic transport noise,
{\it J.~Stat. Phys.} {\bf 179} (2020), 1186--1221,
arXiv:1910.03574.

\bibitem{crow2024a}
Crowe M.N. and Johnson E.R.,
Modon solutions in an N-layer quasi-geostrophic model,
{\it J.~Fluid Mech.} {\bf 994} (2024), R1, arXiv:2404.07718. 

\bibitem{card2013a}
Dos Santos Cardoso-Bihlo E. and Popovych R.O.,
Complete point symmetry group of the barotropic vorticity equation on a rotating sphere,
{\it J.~Engrg. Math.} {\bf 82} (2013), 31--38, arXiv:1206.6919.

\bibitem{card2021a}
Dos Santos Cardoso-Bihlo E. and Popovych R.O.,
On the ineffectiveness of constant rotation in the primitive equations and their symmetry analysis,
{\it Commun. Nonlinear Sci. Numer. Simul.} {\bf 101} (2021), 105885,
arXiv:1503.04168.

\bibitem{flie1980a}
Flierl G.R., Larichev V.D., McWilliams J.C. and Reznik G.M.,
The dynamics of baroclinic and barotropic solitary eddies,
{\it Dyn. Atmos. Oceans} {\bf 5} (1980), 1--41.

\bibitem{fush1995a}
Fushchych W.,
Ansatz'95,
{\it J.~Nonlinear Math. Phys.} {\bf 2} (1995), 216--235.

\bibitem{fush1993A}
Fushchych W.I., Shtelen W.M. and Serov N.I.,
{\it Symmetry analysis and exact solutions of equations of nonlinear mathematical physics},
Dordrecht, Kluwer Academic Publishers, 1993.

\bibitem{hart1961a}
Hartman P. and Wilcox C.,
On solutions of the Helmholtz equation in exterior domains,
{\it Math.~Z.} {\bf 75} (1961), 228--255.

\bibitem{hilg2012A}
Hilgert J. and Neeb K.H.,
\textit{Structure and geometry of Lie groups},
Springer, New York, 2012.
	
\bibitem{hydo2000A}
Hydon P.E.,
{\it Symmetry methods for differential equations. A beginner's guide},
Cambridge University Press, Cambridge, 2000. 
	
\bibitem{hydo2000b}
Hydon P.E.,
How to construct the discrete symmetries of partial differential equations,
{\it Eur. J. Appl. Math.} {\bf 11} (2000), 515--527.

\bibitem{ibra1985A}
Ibragimov N.H.
\emph{Transformation groups applied to mathematical physics},
D. Reidel Publishing, Dordrecht, 1985.

\bibitem{ibra2011A}
Ibragimov N.H. and Ibragimov R.N.,
{\it Applications of Lie group analysis in geophysical fluid dynamics},
World Scientific, Singapore, 2011. 

\bibitem{jaco1962A}
Jacobson N.,
{\it Lie algebras},
Dover Publications, New York, 1962.

\bibitem{kame1986A}
Kamenkovich V.M., Koschlyakov V N. and Monin A.S.,
{\it Synoptic eddies in the ocean},
D.~Reidel, Dordrecht, 1986. 

\bibitem{kame1981a}
Kamenkovich V.M., Larichev V.D. and Kharkov B.V.,
A quasigeostrophic baroclinic model for the analysis of synoptic eddies in the open ocean,
{\it Oceanology}, 21 (1981), 673--679.

\bibitem{kelb2013a}
Kelbin O., Cheviakov A.F. and Oberlack M.,
New conservation laws of helically symmetric, plane and rotationally symmetric viscous and inviscid flows,
{\it J.~Fluid Mech.} {\bf 721} (2013), 340--366.

\bibitem{kizn2000a}
Kizner Z. and Berson D.,
Emergence of modons from collapsing vortex structures on the $\beta$-plane,
{\it J.~Mar. Res.} {\bf 58} (2000), 375--403.

\bibitem{kova2023b}
Koval S.D., Bihlo A. and Popovych R.O.,
Extended symmetry analysis of remarkable (1+2)-dimensional Fokker--Planck equation,
{\it European J. Appl. Math.} {\bf 34} (2023), 1067--1098, arXiv:2205.13526.

\bibitem{kova2025a}
Koval S.D., Dos Santos Cardoso-Bihlo E. and Popovych R.O.,
Surprising symmetry properties and exact solutions of Kolmogorov backward equations with power diffusivity,
{\it Stud. Appl. Math.} {\bf 155} (2025), e70105, arXiv:2407.10356.

\bibitem{kova2024a}
Koval S.D. and Popovych R.O.,
Extended symmetry analysis of (1+2)-dimensional fine Kolmogorov backward equation,
{\it Stud. Appl. Math.} {\bf 153} (2024), e12695, 30 pp., arXiv:2402.08822.

\bibitem{lari1976a}
Larichev V.D. and Reznik G.M.,
Two-dimensional Rossby soliton: an exact solution,
{\it Dokl. Akad. Nauk SSSR} {\bf 231} (1976), no.~5, 12--13.

\bibitem{levi1989a}
Levi D., Nucci M.C., Rogers C. and Winternitz P.,
Group theoretical analysis of a rotating shallow liquid in a rigid container,
{\it J.~Phys.~A} {\bf 22} (1989), 4743--4767.

\bibitem{lore1963a}
Lorenz E.N.,
Deterministic nonperiodic flow,
{\it J.~Atmospheric~Sci.} {\bf 20} (1963), 130--141.
	
\bibitem{mill1977A}
Miller W., Jr.,
\textit{Symmetry and separation of variables},
Addison--Wesley, Reading, Mass.--London--Amsterdam, 1977.
		
\bibitem{moro2023a}
Morozov O., The quasigeostrophic two-layer model: Lax representations and conservation laws,
{\it J.~Geom. Phys.} {\bf 192} (2023), 104954.

\bibitem{nauw2004a}
Nauw J.J., Dijkstra H.A. and Simonnet E.,
Regimes of low-frequency variability in a three-layer quasi-geostrophic ocean model,
{\it J.~Marine Res.} {\bf 62} (2004), 685--720.

\bibitem{olve1994a}
Olver P.J.,
Direct reduction and differential constraints,
{\it Proc. Roy. Soc. London Ser. A} {\bf 444} (1994), 509--523. 
	
\bibitem{olve1993A}
Olver P.J.,
{\it Application of Lie groups to differential equations},
Springer, New York, 2000.

\bibitem{olve1986b}
Olver P.J. and Rosenau P.,
The construction of special solutions to partial differential equations,
{\it Phys. Lett.~A} {\bf 114} (1986), 107--112.

\bibitem{opan2017a}
Opanasenko S., Bihlo A. and Popovych R.O.,
Group analysis of general Burgers--Korteweg--de Vries equations,
{\it J.~Math. Phys.} {\bf 58} (2017), 081511, arXiv:1703.06932.

\bibitem{opan2020a}
Opanasenko S., Bihlo A., Popovych R.O. and Sergyeyev A.,
Extended symmetry analysis of isothermal no-slip drift flux model,
{\it Phys.~D} {\bf 402} (2020), 132188, arXiv:1705.09277.

\bibitem{opan2020b}
Opanasenko S., Boyko V. and Popovych R.O.,
Enhanced group classification of nonlinear diffusion-reaction equations with gradient-dependent diffusion,
{\it J.~Math. Anal. Appl.} {\bf 484} (2020), 123739, arXiv:1804.08776.
	
\bibitem{opan2020e}
Opanasenko S. and Popovych R.O.,
Generalized symmetries and conservation laws of (1+1)-dimensional Klein--Gordon equation,
{\it J.~Math. Phys.} {\bf 61} (2020), 101515, arXiv:1810.12434.
	
\bibitem{opan2022a}
Opanasenko S. and Popovych R.O.,
Mapping method of group classification,
{\it J.~Math. Anal. Appl.} {\bf 513} (2022), 126209, arXiv:2109.11490. 
	
\bibitem{ovsi1982A}
Ovsiannikov L.V.,
{\it Group analysis of differential equations},
Academic Press, New York, 1982.

\bibitem{pate1975a}
Patera J., Winternitz P. and Zassenhaus H.,
Continuous subgroups of the fundamental groups of physics. I. General method and the Poincar\'e group,
{\it J.~Math. Phys.} {\bf 16} (1975), 1597--1614.
	
\bibitem{pedo1987A}
Pedlosky J.,
\textit{Geophysical Fluid Dynamics},
Springer, New York, 1987.

\bibitem{poly2002A}
Polyanin A.D.,
{\it Handbook of linear partial differential equations for engineers and scientists},
Chapman \& Hall/CRC, Boca Raton, 2002.

\bibitem{popo2025a}
Popovych D.R., Koval S.D. and Popovych R.O.,
Generalized symmetries of remarkable (1+2)-dimensional Fokker--Planck equation,
{\it European J. Appl. Math.} (2025),
doi:\href{https://doi.org/10.1017/S0956792525100107}{10.1017/S0956792525100107},
arXiv:2409.10348.

\bibitem{popo2002b}
Popovych H.V.,
Lie, partially invariant, and nonclassical submodels of Euler equations,
{\it Proc. Inst. Math. NAS Ukraine} {\bf 43} (2002), 178--183.

\bibitem{popo2003a}
Popovych R.O., Boyko V.M., Nesterenko M.O. and Lutfullin M.W.,
Realizations of real low-dimensional Lie algebras,
arXiv:math-ph/0301029v7 (2005) (extended and revised version of paper {\it J.~Phys.~A} {\bf 36} (2003), 7337--7360).

\bibitem{rein2018a}
Reinaud J.N., Sokolovskiy M.A. and Carton X.,
Hetonic quartets in a two-layer quasi-geostrophic flow: V-states and stability,
{\it Phys. Fluids} {\bf 30} (2018), 056602.
	
\bibitem{roth1952a}
Roth W.E.,
The equations $AX-YB=C$ and $AX-XB=C$ in matrices,
{\it Proc. Amer. Math. Soc.} {\bf 3} (1952), 392--396.

\bibitem{shap1995a}
Shapovalov A.V. and Shirokov I.V.,
Noncommutative integration of linear differential equations,
\textit{Theoret. and Math. Phys.} \textbf{104} (1995), 921--934.

\bibitem{shep1990a}
Shepherd T.G.,
Symmetries, conservation laws, and Hamiltonian structure in geophysical fluid dynamics,
{\it Adv. Geophys.} {\bf 32} (1990), 287--338.

\bibitem{ster1975a}
Stern M.E.,
Minimal properties of planetary eddies,
{\it J. Mar. Res.} {\bf 33} (1975), 1--13.

\bibitem{szat2014a}
Szatmari S. and Bihlo A.,
Symmetry analysis of a system of modified shallow-water equations,
{\it Commun. Nonlinear Sci. Numer. Simul.} {\bf 19} (2014), 530--537,
arXiv:1212.5823.
	
\bibitem{tesc2025A}
Teschl G.,
{\it Partial Differential Equations},
Lecture Notes, 2025,
\url{https://www.mat.univie.ac.at/~gerald/ftp/book-pde/pde.pdf}.
	
\bibitem{vall2006A}
Vallis G.K.,
{\it Atmospheric and Oceanic Fluid Dynamics},
Cambridge University Press, Cambridge, 2006.

\bibitem{vane2020b}
Vaneeva O.O., Bihlo A. and Popovych R.O.,
Generalization of the algebraic method of group classification with application to nonlinear wave and elliptic equations,
{\it Commun. Nonlinear Sci. Numer. Simul.} {\bf91} (2020), 105419, arXiv:2002.08939.

\bibitem{vinn2024a}
Vinnichenko O.O., Boyko V.M. and Popovych R.O.,
Lie reductions and exact solutions of dispersionless Nizhnik equation,
{\it Anal. Math. Phys.} {\bf 14} (2024), 82, arXiv:2308.03744.

\end{thebibliography}
\end{document}